\documentclass[11pt,a4paper,reqno]{amsart}

\usepackage{placeins}
\usepackage{tabularx}
\usepackage{array}
\usepackage{graphicx}%
\usepackage{multirow}%
\usepackage{amsmath,amssymb,amsfonts,mathtools,bm}%
\usepackage{amsthm}%
\usepackage{mathrsfs}%
\usepackage[title]{appendix}%
\usepackage{xcolor}%
\usepackage{textcomp}%
\usepackage{setspace}
\usepackage{manyfoot}%
\usepackage{booktabs}%
\usepackage[ruled]{algorithm2e}%
\usepackage{algorithmicx}%
\usepackage{algpseudocode}%
\usepackage{listings}%
\usepackage{booktabs}
\usepackage{pifont}
\usepackage{subcaption}%
\usepackage{makecell}
  \usepackage[top=1.5in, bottom=1.2in,
  left=1in, right=1in]{geometry}
\usepackage{hyperref}
\hypersetup{
    colorlinks=true,
    urlcolor=red,
    linkcolor=red,
    citecolor=blue
}
\usepackage[round,authoryear]{natbib}
\newcommand{\Lev}{\ensuremath{\textnormal{lev}}}
\newcommand{\SSGEN}{{\tt{SSGEN}}}

\newcommand{\R}{{\mathbb{R}}}

\newcommand{\mv}[1]{\boldsymbol{#1}}

\newcommand{\xx}{\boldsymbol{x}}
\newcommand{\XX}{\boldsymbol{\xi}}

\newcommand{\ZZ}{\boldsymbol{Z}}
\newcommand{\yy}{\boldsymbol{y}}
\newcommand{\zz}{\boldsymbol{z}}

\newcommand{\pp}{\boldsymbol{p}}

\newcommand{\Law}{\textsf{Law}}
\newcommand{\Prob}{P}

\newcommand{\Pphi}{\boldsymbol{\Phi}}
\newcommand{\pphi}{\boldsymbol{\phi}}

\newcommand{\fgen}{f_{\textsf {gen}}}

\newtheorem{assumption}{Assumption}
\newtheorem{theorem}{Theorem}
\newtheorem{definition}{Definition}
\newtheorem{example}{Example}
\newtheorem{lemma}{Lemma}
\newtheorem{proposition}{Proposition}
\newtheorem{corollary}{Corollary}
\newtheorem{remark}{Remark}
\renewcommand{\arraystretch}{1.5}

\theoremstyle{thmstyleone}%

\theoremstyle{thmstyletwo}%

\theoremstyle{thmstylethree}%

\newcommand{\authoremail}{%
mantu.gupta23@iimb.ac.in, anand.deo@iimb.ac.in%
}
\makeatletter

\renewcommand{\@settitle}{%
    \begin{center}
        \baselineskip14\p@\relax
        \large
        \bfseries
        \@title
    \end{center}%
}

\renewcommand{\@setauthors}{%
    \begingroup
        \def\thanks{\protect\thanks@warning}%
        \trivlist
            \centering
            \@topsep30\p@\relax
            \advance\@topsep by -\baselineskip
            \item\relax
            \author@andify\authors
            \def\\{\protect\linebreak}%
            {\large\authors\par}
            \vspace{3pt}
            {\normalsize
                \href{mailto:\authoremail}
                {\texttt{\footnotesize \authoremail}}\par}
        \endtrivlist
    \endgroup
}

\makeatother

\title[Simulating Stress Laws under Extremal Dependence]
{Simulating Stress Laws under Extremal Dependence:
Characterizing What Generative Models Must Preserve}

\author{Mantu Gupta, Anand Deo}

\address{%
Indian Institute of Management Bangalore,
Bannerghatta Road, Bangalore 560076
}

\begin{document}
\maketitle
\begin{abstract}
We study stress-scenario generation for systems driven by multivariate heavy-tailed risk factors. Within regions where several financial losses are simultaneously extreme, stress analysis concerns both the conditional law of the risk factors and the most plausible configurations producing those losses. We show that both are governed by the same limiting tail law. Preserving its measure recovers rare-event probabilities and scaled conditional stress laws, while misspecifying extremal dependence distorts some regular joint-stress probability. Its density governs reverse-stress optimization, whose maximizers identify the most plausible stress configurations. To exploit this common structure in finite samples, we develop SSGEN (Self-Similar Generative Estimation), which learns extremal dependence from intermediate exceedances and extrapolates to rarer levels using a Pareto radial component. Even when the target event is absent from the sample, we establish convergence rates for the generated conditional law, and data-driven reverse-stress solutions.
\end{abstract}

\section{Introduction}
When financial risk is driven by multiple underlying factors, tail-risk analysis
requires more than estimating the probability of a large loss. Stress testing, systemic risk analysis, and tail sensitive risk measurement all require information
about the \textit{joint} risk-factor configurations that arise under rare yet adverse
conditions. The relevant object is therefore often the law of the full system state
conditional on an extreme loss event. This creates a fundamental statistical tension: the scenarios that generate severe
losses are precisely those least represented in available data. Direct empirical inference in such regimes can
therefore be unreliable or infeasible.

Let $\XX$ denote the vector of underlying risk factors, and let
$(L_1,\ldots,L_K)$ denote loss functionals associated with portfolios,
institutions, or components of a financial system. For
\(\mathcal J\subseteq[K]\), stress analysis is governed not by a single tail
probability but by the conditional law
\begin{equation}\label{eq:intro_stress_law}
    \textsf{Law}\bigl(\XX \mid L_j(\XX)\geq u_j,\ j\in \mathcal J\bigr),
\end{equation}
or its appropriately scaled version as the thresholds \(u_j\) become large. This
law determines aggregate stressed losses, their allocation and concentration
across the system, and the plausible risk-factor configurations conditional on
the joint-loss event in \eqref{eq:intro_stress_law}. Such features are central
in reverse stress testing \cite{glasserman2015stress,breuer2009find},
systemic-risk measurement
\cite{tobias2016covar,acharya2017measuring,brownlees2017srisk}, and 
financial networks where common shocks generate joint losses and spillovers
\cite{eisenberg2001systemic,glasserman2015likely,kley2016risk}. Thus
\eqref{eq:intro_stress_law} is 
the primitive object from which severity, spillover,
concentration, and plausibility diagnostics are derived.

Classical extreme-value methods characterize multivariate tail laws and provide
estimators of tail probabilities, extremal-dependence summaries, and marginal
risk measures. Existing inferential procedures, however, generally target a
specified tail quantity or dependence object rather than providing a
law-level validity criterion for generators used across rare multi-loss
conditioning events.

We work in the
heavy-tailed setting,\footnote{Evidence that the tail behavior of risk factors
is governed by power laws is given, for instance, in
\cite{jansen1991frequency,gabaix2009power,kelly2014tail}.} where extreme shocks
asymptotically separate into magnitude and direction: magnitude determines
severity, while direction determines which risk factors become large together.
This separation suggests a route to generation that remains valid at extreme
levels: learn directional behavior from intermediate extremes and model
magnitudes using the power-law tail structure. It leads to two central questions
concerning the construction and validity of such a stress-law generator:

\begin{enumerate}
\item[\textbf{(Q1)}] \textbf{Validity criterion:} What must a generative
mechanism preserve in order to recover the first-order probabilities of rare
events of the form $\{L_j(\XX)\geq c_j u,\ j\in\mathcal J\},\quad c_j>0$, and the conditional laws they induce?
\item[\textbf{(Q2)}] \textbf{Data-driven learning:} Can such a tail-preserving model
be learned from data, using intermediate extremes to extrapolate to rarer
regimes that are effectively unobserved?
\end{enumerate}
The limiting tail measure identifies the first-order tail structure that a
generated model must preserve in order to reproduce rare joint-stress events and
their conditional laws, and thereby supplies the validity criterion sought in
\textbf{(Q1)}. Our generative mechanism, SSGEN (Self-Similar Generative Estimation), answers
\textbf{(Q2)} by turning this criterion into a data-driven procedure of the form
described above, in which the angular-learning step is modular: any nonparametric
learner meeting the required angular-error bound may be used, including kernel
methods, normalizing flows, GANs, or diffusion models; see
\cite{Kingma2013,goodfellow2014generative,Rezende2015,Sohl-Dickstein2015,Ho2020}. 
This paper makes the following contributions.

\noindent \textbf{i) A first-order criterion for stress-law validity:}
We answer \textbf{(Q1)} by proving that the limiting tail measure is the sharp
criterion for first-order stress-law validity. Equality of limiting tail measures
is necessary and sufficient for first-order accuracy of rare-event probabilities
and the associated scaled conditional laws. In particular, for every admissible
regular collection of loss functionals \(L_1,\ldots,L_K\), every nonempty
\(\mathcal J\subseteq[K]\), and every \(c_j>0\),
\[
    \mathbb P\bigl(L_j(\XX_u^{\rm gen})\geq c_j u,\ j\in\mathcal J\bigr)
    \sim
    \mathbb P\bigl(L_j(\XX)\geq c_j u,\ j\in\mathcal J\bigr),
\]
whenever \(\XX_u^{\rm gen}\) and the data-generating law have the same limiting
tail measure. The associated scaled conditional laws, together with the induced laws of
regular diagnostics describing severity and allocation, inherit the same
first-order limit. Conversely, failure to
preserve the limiting tail structure distorts some regular joint-stress
probability. Thus validity is not marginal-tail calibration alone: it requires
preserving extremal dependence.

\noindent \textbf{ii) SSGEN and data-driven rates:}
We answer \textbf{(Q2)} by constructing SSGEN, which learns a finite-threshold
angular law from intermediate exceedances and combines it with an explicit Pareto
radial component. Thus generation reduces to compact-domain angular learning plus
radial tail modelling.

We analyze the empirically hard regime in which the target event is effectively
unobserved. For stress levels \(u_n\) satisfying
\[
    \mathbb P\bigl(L_j(\XX)\geq c_j u_n,\ j\in\mathcal J\bigr)
    \asymp n^{-r}, \qquad r\geq 1,
\]
the expected number of target observations is a constant $c\in [0,\infty)$, while an intermediate level,
chosen to retain \(n^{1-q}\) exceedances with \(q\in(0,1)\), remains available for
learning. Under second-order tail regularity and angular-learning assumptions,
the generated scaled conditional law converges in total variation at rate
\[
    O_P\!\left(n^{-(1-q)(m\wedge\ell)} + n^{-q\gamma/s}\right),
\]
where the first term is statistical learning error from the \(n^{1-q}\)
exceedances and the second is finite-threshold bias. Increasing \(q\) trains
deeper in the tail, reducing bias but leaving fewer exceedances for learning; the
exponents make this tradeoff explicit and identify the balancing choice of \(q\).
By contraction of total variation under measurable mappings, the same rate transfers to the induced laws of diagnostics describing severity, allocation, concentration, distress breadth, and dominant contributors. 

\noindent \textbf{iii) Reverse stress testing as perturbed optimization:}
Reverse stress testing asks which risk-factor configurations most plausibly
generate a given stress event. Following
\cite{breuer2013systematic,glasserman2015stress}, we formalize plausibility as
constrained likelihood maximization over the stress region, and show that under
tail scaling this finite-level problem converges to a limiting optimization
problem determined by the limiting tail density. This casts reverse stress
testing as a perturbed optimization problem and yields a transfer principle:
\emph{any} density-preserving generator preserves the limiting argmax, and hence
the set of asymptotically most plausible stress mechanisms. We show that SSGEN
is one such generator. When the limiting reverse-stress problem has a unique
maximizer, both the true and SSGEN reverse-stress solutions converge to it. When
the limiting problem has several maximizers, the generated problem may select a
different limiting maximizer, but every such selection has the same limiting
likelihood and therefore does not represent an asymptotically less plausible
stress mechanism. A local identifiability condition further converts the
learning error of the tail model into an explicit convergence rate from the
data-driven generated solution set to the true finite-level reverse-stress
solution set.

\subsection{Literature Review and Positioning}
We position the paper relative to four strands of work: financial stress
analysis, multivariate extremes, tail-risk estimation, and
deep generative modeling for heavy tails. A preliminary conference version of this work
\citep{gupta2026extremevalueperspectivelearning} proposed the SSGEN algorithm
and studied the estimation of tail risk measures such as VaR and CVaR; the
present paper develops the distributional theory (tail-measure and
density-level preservation as validity criteria), treats reverse stress testing, and provides the data-driven
estimation guarantees of Section~\ref{sec:dd_rst}. None of these results
appear in the conference version. 

\paragraph{\textbf{i) Stress testing, reverse stress testing, and systemic risk.}}
Stress testing and reverse stress testing emphasize severe but plausible scenarios:
adverse outcomes together with risk-factor configurations that can credibly
generate them
\cite{glasserman2015stress,breuer2009find,breuer2013systematic}. In systemic-risk
settings, this concern becomes distributional: common shocks can affect several
institutions or portfolios simultaneously, making allocation, concentration, and
spillover objects of interest. Such effects arise in bipartite exposure models,
clearing systems, and financial-network models
\cite{eisenberg2001systemic,elliott2014financial,glasserman2015likely,chen2016optimization}.
These models motivate our multi-loss events, but typically do not take the
conditional risk-factor law as primitive.

\paragraph{\textbf{ii) Multivariate heavy tails and tail dependence.}}
Our mathematical foundation is multivariate regular variation and heavy-tailed
extreme-value theory. Classical work characterizes the limiting tail measure, which
describes first-order tail behavior and separates magnitude from extremal
dependence
\cite{de1984domains,resnick2008multivariate,rootzen2006multivariate,Rootzen}.
The radial part governs severity, while the angular part determines which risk
factors, and hence which losses, become large together.

A substantial statistical literature estimates angular tail structure from
intermediate exceedances
\cite{einmahl2008method,einmahl2012m}, while related work tests
multivariate heavy-tail assumptions
\cite{einmahl2025empirical}. We use the tail measure differently:
as the benchmark for validating generated tail models, not only as an estimand or
testing target. Accordingly, SSGEN imposes radial scaling explicitly and learns a
finite-threshold angular law as a data-driven surrogate for limiting extremal
dependence.

\paragraph{\textbf{iii) Tail-risk estimation and downstream functionals.}}
A large body of work estimates tail probabilities and tail-sensitive risk measures,
including VaR, CVaR, CoVaR, CoES, and related systemic-risk functionals
(see \cite{nolde2020conditional,nolde2026tail,Mainik,DeoOpt,deo2025achieving}). Such work typically fixes an event or
functional and targets a scalar or low-dimensional quantity. We instead approximate
the conditional law under the rare event, so that these quantities are evaluated as
functionals of a common rare-event distribution.

In multi-loss systems, the same conditioning event may raise questions about
aggregate severity, allocation, concentration, spillovers, and conditional tail
risk. A law-level approximation treats these questions as projections of the same
conditional law, rather than as separate estimation problems.

\paragraph{\textbf{iv) Learning and generating tail distributions.}}
Recent work incorporates extreme-value structure into generative models through
tail-aware architectures and objectives
(see \cite{allouche2022ev,allouche2026exceedgan,cont2022tail}), heavy-tailed latent
variables, priors, base noise, and diffusion mechanisms
(see \cite{huster2021pareto,liu2024learning,pandey2024heavytaileddiffusionmodels}), and tail-specific generation schemes
(see \cite{bhatia2021exgan,boulaguiem2022modeling,lhaut2025wasserstein}).
Generative models have also been used to estimate multivariate angular
distributions (see \cite{wessel2025generative}).

SSGEN differs in its inferential target rather than its architecture. Existing
methods primarily target unconditional extreme-value laws, threshold
exceedances, or selected tail-risk functionals. Tail-GAN
\citep{cont2022tail}, for example, preserves VaR and ES across classes of
portfolio strategies, while ExceedGAN \citep{allouche2026exceedgan} generates
multivariate threshold exceedances. At the angular-estimation stage,
\citet{lhaut2025wasserstein} and \citet{wessel2025generative} use generative
models to learn extremal dependence. SSGEN instead treats this stage as modular:
any angular estimator satisfying the required error bound may be used.

Our question is \textit{when} filtering a generated tail law through a rare multi-loss
event yields an asymptotically valid conditional stress law, particularly when
the conditioning event is substantially rarer than the exceedances used for
training. We show that preservation of the limiting tail measure is the sharp
first-order criterion across regular multi-loss systems, and derive
total-variation rates for the resulting conditional laws and their downstream
summaries. 
Reverse stress testing is governed instead by the limiting tail density, and the criterion for it is developed in Section~\ref{sec:RST_pop}.

SSGEN operationalizes these criteria through an architecture-agnostic EVT
reduction: it imposes Pareto radial scaling and learns the finite-threshold
angular law. Its contribution lies not in a new generative architecture or
angular estimator, but in the law-level validity characterization, the
data-driven conditional-law guarantees, and the reverse-stress transfer
principle. For ease of reference, the main results summarized below are stated under
Assumption~\ref{assume:ht_data} (heavy-tailed data). Their additional
conditions are given in Table~\ref{tab:assumption_map}.

\begin{table}[t]
    \centering
    \small
    \caption{Assumptions used in the main results.}
    \label{tab:assumption_map}
    \renewcommand{\arraystretch}{1.25}
    \begin{tabularx}{\textwidth}{
        @{}>{\raggedright\arraybackslash}X
        >{\raggedright\arraybackslash}p{0.34\textwidth}@{}
    }
        \toprule
        \textbf{Result} & \textbf{Additional conditions} \\
        \midrule

        Stress-law validity and tail-measure characterization
        (Proposition~\ref{prop:asymp_tails});
        population SSGEN validity
        (Theorem~\ref{thm:heavy_vanishing_RE});
        stress-summary convergence
        (Proposition~\ref{prop:stress_summaries})
        &
        Definition~\ref{def:hom_loss}
        \newline
        \emph{Asymptotically Homogeneous losses}
        \\[0.5em]

        Population reverse-stress characterization
        (Theorem~\ref{thm:rst_convergence})
        &
        Definition~\ref{def:density_class}
        \newline
        \emph{Density-preserving family}
        \\[0.5em]

        Data-driven SSGEN rates
        (Theorem~\ref{thm:rate_TV},
        Corollary~\ref{cor:stress_summary_dd})
        &
        Assumptions~\ref{assume:data_driven}
        and~\ref{assume:2nd_order}
        \newline
        \emph{Learning rates; second-order rate}
        \\[0.5em]

        Data-driven reverse-stress solution rate
        (Theorem~\ref{thm:rst_solution_rate})
        &
        Assumptions~\ref{assume:data_driven},
        \ref{assume:2nd_order}
        and~\ref{assume:rst_margin}, and \eqref{eqn:unif_condition}
        \newline
        \emph{Learning rates; second-order rate; local identifiability, uniform learning rate for density}
        \\
        \bottomrule
    \end{tabularx}
\end{table}

\paragraph{\textbf{Paper Outline}}
Section~\ref{sec:assump_model} introduces the modelling framework, conditional
stress law, and reverse-stress formulation.
Section~\ref{sec:characterisation} answers \textbf{(Q1)} by characterizing
first-order validity and establishing population-level correctness of SSGEN.
Section~\ref{sec:data_driven_guarantees} answers \textbf{(Q2)} through
data-driven convergence rates and statistical tradeoffs.
Sections~\ref{sec:RST_pop} and \ref{sec:dd_rst} develop the population and
data-driven reverse-stress theory,  and
Section~\ref{sec:numericals} presents the numerical experiments.
Proofs and implementation details are provided in the Appendix.

\paragraph{\textbf{Notation}} Vectors are written in boldface, e.g., \(\zz=(z_1,\ldots,z_d)\). Random variables
are denoted by upper-case letters and their realizations by lower-case letters. We
write \(\Rightarrow\) for convergence in distribution. For \(S\subset\mathcal E\),
let $S^\delta
    :=
    \{\yy\in\mathcal E:\exists\,\xx\in S \text{ such that } d(\xx,\yy)\le\delta\}$
denote its \(\delta\)-expansion. We write \(\iota_S\) for the extended-valued
indicator of \(S\), namely \(\iota_S(\zz)=0\) if \(\zz\in S\) and
\(\iota_S(\zz)=+\infty\) otherwise. For \(f:\mathsf Y\to\mathbb R\), define
$\Lev_t[f]:=\{\zz\in\mathsf Y:f(\zz)\le t\}$ and $ 
    \Lev_t^+[f]:=\{\zz\in\mathsf Y:f(\zz)\ge t\}$.  For nonnegative deterministic
sequences, \(a_n=O(b_n)\) means that \(a_n/b_n\) is eventually bounded, and
\(a_n=o(b_n)\) means that \(a_n/b_n\to0\). We use \(O_P(\cdot)\) and \(o_P(\cdot)\)
for the corresponding stochastic order notation. For a set $S$, we will denote $\Theta_S  = \{\zz\in S: \|\zz\|= 1\}$. Finally, for $r>0$, we let $B_r = \{\zz: \|\zz\| \leq r\}$.
\section{Problem Setup and Modeling Assumptions}
\label{sec:assump_model}
Let \(\XX\sim P\) be a vector of underlying risk factors with density
\(f_{\XX}\) supported on a closed cone \(\mathcal E\subset\mathbb R^d\). We represent the
system through loss functionals
\(L_1,\ldots,L_K:\mathcal E\to\mathbb R\), corresponding to institutions,
portfolios, or system components, and call \((L_1,\ldots,L_K)\) a multi-loss
system. For a non-empty \(\mathcal J\subseteq [K]\) and stress level \(u>0\), define the
joint stress region
\begin{equation}\label{eqn:rst_feasible}
    \mathcal S(u; \mathcal J)
    :=
    \{\zz\in\mathcal E:L_j(\zz)\geq u \text{ for all } j\in\mathcal J\}.
\end{equation}
The common threshold is only notational: proportional thresholds \(u_j=c_j u\),
\(c_j>0\), are covered by replacing \(L_j\) with \(L_j/c_j\), and the homogeneity
assumptions below are unchanged under such re-scalings\footnote{Non-proportional scalings, such as \(u_j=u^{c_j}\), lead to a different
asymptotic regime: depending on the relative exponents, some constraints may
dominate while others become asymptotically slack. We restrict attention to
proportional thresholds \(u_j=c_j u\), which keep the event on a common scale.}.

Our objective is to approximate the conditional law of \(\XX\) given
\(\XX\in\mathcal S(u;\mathcal J)\) as \(u\to\infty\). Since this law drifts outward
with \(u\), we study the scaled vector \(u^{-1}\XX\). In this regime, the stressed
conditional law is determined by the tail behavior of \(\XX\) and the asymptotic
geometry of the loss functionals, rather than by the body of the risk-factor
distribution. This motivates the assumption below.

\begin{assumption}[\textbf{Density Level Regular Variation}]
\label{assume:ht_data}
The density of \(\XX\) is continuous on $B_{r_0}^c \cap \mathcal E$ for some $r_0>0$, and  satisfies, uniformly over compact subsets of
\(\mathcal E\setminus\{\mv 0\}\),
\begin{equation}\label{eqn:ht_data}
    t^{s+d} f_{\XX}(t\zz) \to \varphi^\star(\zz)
    \qquad \text{as } t\to\infty,
\end{equation}
for some \(s>0\) and some function
\(\varphi^\star:\mathcal E\setminus\{\mv 0\}\to(0,\infty)\).
\end{assumption}

Assumption~\ref{assume:ht_data} is a standard condition in multivariate
heavy-tail analysis; see, for example, \citet{resnick2008multivariate}.
It holds for commonly used models with regularly varying densities, including
multivariate \(t\), Cauchy, and absolutely continuous Pareto-type models with a strictly positive limiting density. It is also preserved under finite mixtures: the components with the smallest tail index determine the limit, with their
limiting densities combined according to the mixture weights.  The limiting function \(\varphi^\star\) describes the first-order
asymptotic density of extreme risk-factor configurations. We record below
some properties that will be used throughout the paper.

\begin{lemma}\label{lem:phi_star_props}
Let $\varphi^\star$ be as in \eqref{eqn:ht_data}. Then:
\begin{enumerate}
    \item[(i)] $\varphi^\star$ is homogeneous of degree $-(s+d)$; that is,
    \[
        \varphi^\star(r\zz)
        =  r^{-(s+d)}\varphi^\star(\zz), \qquad r>0,\quad
        \zz\in\mathcal E\setminus\{\mv 0\}.
    \]
    \item[(ii)] $\varphi^\star$ is continuous on
    $\mathcal E\setminus\{\mv 0\}$. Thus,    
    $0<  \min_{\zz\in\Theta_{\mathcal E}}\varphi^\star(\zz)  \leq
        \max_{\zz\in\Theta_{\mathcal E}}\varphi^\star(\zz)
        <\infty$.
\end{enumerate}
\end{lemma}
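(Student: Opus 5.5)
For part (i) I would compare the defining limit \eqref{eqn:ht_data} along two sequences. Fix $\zz\in\mathcal E\setminus\{\mv 0\}$ and $r>0$. Since $\mathcal E$ is a cone, $r\zz\in\mathcal E\setminus\{\mv 0\}$. Applying \eqref{eqn:ht_data} at the point $r\zz$ gives $t^{s+d}f_{\XX}(tr\zz)\to\varphi^\star(r\zz)$. Now substitute $t'=tr$, so that $t'\to\infty$ as $t\to\infty$. Then
\[
t^{s+d}f_{\XX}(tr\zz)=r^{-(s+d)}\,(t')^{s+d}f_{\XX}(t'\zz)\to r^{-(s+d)}\varphi^\star(\zz).
\]
Uniqueness of limits gives $\varphi^\star(r\zz)=r^{-(s+d)}\varphi^\star(\zz)$. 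Only pointwise convergence is needed here.

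For part (ii), continuity comes from uniform convergence of continuous functions. Fix a compact $K\subset\mathcal E\setminus\{\mv 0\}$, and let $m=\min_{\zz\in K}\|\zz\|>0$. For $t>r_0/m$ we have $tK\subset B_{r_0}^c\cap\mathcal E$. On that region $f_{\XX}$ is continuous by Assumption~\ref{assume:ht_data}, so $g_t(\zz):=t^{s+d}f_{\XX}(t\zz)$ is continuous on $K$. By assumption, $g_t\to\varphi^\star$ uniformly on $K$ along any sequence $t_n\to\infty$. A uniform limit of continuous functions is continuous, so $\varphi^\star$ is continuous on $K$. Every point of $\mathcal E\setminus\{\mv 0\}$ has a compact relative neighbourhood in $\mathcal E\setminus\{\mv 0\}$, namely the intersection of the closed cone with a small closed ball that avoids the origin. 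Hence $\varphi^\star$ is continuous on all of $\mathcal E\setminus\{\mv 0\}$.

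For the bounds, $\Theta_{\mathcal E}=\mathcal E\cap\{\|\zz\|=1\}$ is closed, since $\mathcal E$ is closed, and bounded, so it is compact. It also lies inside $\mathcal E\setminus\{\mv 0\}$. I assume $\mathcal E\neq\{\mv 0\}$, so $\Theta_{\mathcal E}$ is nonempty. The continuous function $\varphi^\star$ therefore attains its minimum and maximum on $\Theta_{\mathcal E}$. The maximum is finite because $\varphi^\star$ is real-valued. The minimum is strictly positive because $\varphi^\star$ takes values in $(0,\infty)$ by assumption, and the minimum is attained at some point.

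There is no real obstacle here. The one point needing care is checking that $g_t$ is continuous on $K$ for large $t$: the density is only assumed continuous away from $B_{r_0}$. This is exactly why $K$ must be bounded away from the origin and $t$ taken large.
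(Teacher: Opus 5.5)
Your proof is correct. Part (ii) matches the paper's argument: $\varphi^\star$ is a uniform limit, on compacts bounded away from the origin, of the continuous functions $t^{s+d}f_{\XX}(t\,\cdot)$, and the bounds then follow from compactness of $\Theta_{\mathcal E}$ and strict positivity of $\varphi^\star$. You are in fact more careful than the paper on one point. Since $f_{\XX}$ is only continuous outside $B_{r_0}$, $K$ must avoid the origin and $t$ must be large; the paper's proof simply says the rescaled densities are ``continuous by assumption''.

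For part (i) the routes differ. The paper forms the ratio $f_{\XX}(t\zz)/f_{\XX}(t\zz_0)\to\varphi^\star(\zz)/\varphi^\star(\zz_0)$. It then invokes Resnick's characterization of multivariate regularly varying functions to conclude homogeneity of order $-(s+d)$. You instead use that the normalization in \eqref{eqn:ht_data} is the explicit power $t^{s+d}$. The substitution $t'=tr$ then gives $\varphi^\star(r\zz)=r^{-(s+d)}\varphi^\star(\zz)$ directly from pointwise convergence and uniqueness of limits.

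Your version is self-contained and uses only what the assumption supplies. It is also exactly the argument the paper itself uses in the proof of Lemma~\ref{lem:radial_quantile}(i). The appeal to regular-variation theory would matter only if the normalizing function were an unspecified regularly varying function rather than an exact power, and that generality is not needed here.
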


We next impose the corresponding large-threshold structure on the loss functionals.
The condition below says that, after scaling by the stress level, the finite-level
loss mechanism converges to a homogeneous limiting loss map.

\begin{definition}[\textbf{Asymptotically homogeneous multi-loss system}]
\label{def:hom_loss}
A multi-loss system \((L_1,\ldots,L_K)\) is \emph{asymptotically homogeneous} if,
for each \(j=1,\ldots,K\), there exists a nondegenerate function
\(L_j^\star:\mathcal E\to\mathbb R\) such that
\begin{equation}\label{eqn:loss}
    t^{-1}L_j(t\zz_t)\to L_j^\star(\zz)
\end{equation}
whenever \(t\to\infty\) and \(\zz_t\to\zz\). Nondegeneracy means
\(\{\zz\in\mathcal E:L_j^\star(\zz)>0\}\neq\varnothing\).
We write
\((L_1,\ldots,L_K)\in
    \mathcal H_1(L_1^\star,\ldots,L_K^\star)\),
or simply \((L_1,\ldots,L_K)\in\mathcal H_1\) when the limits are clear.
\end{definition}
Condition~\eqref{eqn:loss} is a form of continuous convergence for the rescaled
maps \(\zz\mapsto t^{-1}L_j(t\zz)\). Consequently, each \(L_j^\star\) is
continuous and homogeneous of degree one, and by \cite{RockafellarWets1998}, Theorem 7.14, the convergence is uniform on compact
subsets of \(\mathcal E\). For \(\mathcal J\subseteq[K]\), define the limiting stress region
\begin{equation}\label{eqn:asymptotic_stress_region}
   \mathcal S^\star(\mathcal J)
    :=
    \{\zz\in\mathcal E:L_j^\star(\zz)\geq 1 \text{ for all } j\in\mathcal J\}.
\end{equation}
We call the system \emph{regular} if, for every nonempty
\(\mathcal J\subseteq[K]\) considered below,
\(\mathcal S^\star(\mathcal J)\) has nonzero Lebesgue measure. 

The limiting losses determine the first-order geometry of the joint stress region.
Regularity ensures that this limiting region remains nondegenerate after scaling even though its probability
vanishes as \(u\to\infty\). The following examples show that asymptotic homogeneity is not limited to linear
portfolio losses. It also arises in networked systems with shared exposures,
clearing mechanisms, or fire-sale amplification.

\begin{example}[\textbf{Bipartite reinsurance network}]\em
\label{eg:reinsurance}
Consider a bipartite agent--object exposure model, as in large-claims reinsurance
\cite{kley2016risk} and operational-risk networks \cite{kley2020modelling}. Let
\(\XX=(\xi_1,\ldots,\xi_d)\in\mathbb R_+^d\) denote object losses. Agent \(i\) has
exposure vector \(\mv a_i\in\mathbb R_+^d\) and capital \(c_i>0\), yielding the net
loss
\begin{equation}\label{eqn:loss_reinsurance}
    L_i(\XX):=\bigl(\mv a_i^\intercal\XX-c_i\bigr)_+,
    \qquad i=1,\ldots,q .
\end{equation}

\begin{lemma}\label{lem:verify_reinsurance}
If \(\mv a_i\neq \mv 0\) for each agent under consideration, then the reinsurance
loss system is regular and asymptotically homogeneous with
\[
    L_i^\star(\zz)= (\mv a_i^\intercal \zz)_{+},
    \qquad i=1,\ldots,q .
\]
\end{lemma}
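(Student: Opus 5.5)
We verify the two requirements separately: the continuous-convergence condition \eqref{eqn:loss} of Definition~\ref{def:hom_loss}, and the regularity requirement that each limiting stress region has positive Lebesgue measure. Here \(\mathcal E=\mathbb R_+^d\).

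\textbf{Asymptotic homogeneity.} Fix \(i\), let \(t\to\infty\) and \(\zz_t\to\zz\) in \(\mathbb R_+^d\). The map \(x\mapsto x_+\) is positively homogeneous, so
\[
t^{-1}L_i(t\zz_t)=\bigl(\mv a_i^\intercal\zz_t-c_i/t\bigr)_+ .
\]
The argument \(\mv a_i^\intercal\zz_t-c_i/t\) converges to \(\mv a_i^\intercal\zz\), and \(x\mapsto x_+\) is continuous. Hence \(t^{-1}L_i(t\zz_t)\to(\mv a_i^\intercal\zz)_+=L_i^\star(\zz)\).

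For nondegeneracy, \(\mv a_i\in\mathbb R_+^d\setminus\{\mv 0\}\) gives \(\mv a_i^\intercal\mv 1=\sum_k a_{ik}>0\). Thus \(\mv 1\in\mathcal E\) satisfies \(L_i^\star(\mv 1)>0\).

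\textbf{Regularity.} Fix a nonempty \(\mathcal J\subseteq[q]\). Let \(m:=\min_{j\in\mathcal J}\mv a_j^\intercal\mv 1\), which is positive since each \(\mv a_j\) is nonnegative and nonzero. Every \(\zz\) with all coordinates \(z_k\ge 1/m\) satisfies
\[
\mv a_j^\intercal\zz\ \ge\ \tfrac1m\,\mv a_j^\intercal\mv 1\ \ge\ 1 \qquad\text{for all } j\in\mathcal J,
\]
using nonnegativity of \(\mv a_j\). So \(\mathcal S^\star(\mathcal J)\) contains the orthant translate \([1/m,\infty)^d\), which has infinite, and in particular nonzero, Lebesgue measure.

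The only step that needs care is regularity. One must rule out the intersection of the half-spaces \(\{\mv a_j^\intercal\zz\ge1\}\) being Lebesgue-null inside the cone. Nonnegativity of the exposures is what excludes this, via the common direction \(\mv 1\). With signed exposures this could fail, so the argument relies on \(\mv a_i\in\mathbb R_+^d\).
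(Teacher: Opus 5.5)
Your proof is correct. The paper writes out no separate proof of this lemma, but your homogeneity step is the same rescaling computation, $t^{-1}\bar p_j(t\zz_t)=(\mv a_j^\intercal\zz_t-t^{-1}r_j)_+\to(\mv a_j^\intercal\zz)_+$, that appears in the proof of Lemma~\ref{lem:eisenberg_clearing}; your orthant argument $\mathcal S^\star(\mathcal J)\supseteq[1/m,\infty)^d$, which covers every nonempty $\mathcal J$ and uses the nonnegative exposures assumed in Example~\ref{eg:reinsurance}, supplies the regularity check the paper leaves implicit.
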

\end{example}
The next example shows that asymptotic homogeneity is preserved under an endogenous
clearing mechanism.

\begin{example}[\textbf{Clearing with stress-contingent obligations}]\em
\label{eg:clearing}
Consider a stylized clearing network with \(m\) firms whose settlement
obligations are driven by exogenous losses \(\XX\in\mathbb R_+^d\). We use the clearing mechanism of \citet{eisenberg2001systemic}, but allow
nominal obligations to vary with the underlying stress rather than remain fixed
at face value. In particular, firm
\(j\) has exposure vector \(\mv a_j\in\mathbb R_+^d\) and nominal obligation
\[
    \bar p_j(\XX)
    :=
    \bigl(\mv a_j^\intercal\XX-r_j\bigr)_+,
    \qquad r_j\geq0.
\]
These obligations may represent derivatives settlements, variation-margin
calls, or other loss-triggered payments. The formulation describes a stress
region over which the relative directions of settlements remain fixed while
their magnitudes vary with market losses.

Let \(\lambda_{ji}\geq0\) be the fraction of firm \(j\)'s obligation owed to
firm \(i\), with \(\sum_{i=1}^m\lambda_{ji}\leq1\); the remaining fraction is
owed to an outside or society node. Let \(x_j(\XX)\geq0\) be the external
resources available to firm \(j\), and assume that \(x_j(\cdot)\) is measurable.
These may include liquid assets, collateral, or state-dependent recoveries, as
in network-valuation models with state-dependent endowments
\cite{barucca2020network}. A clearing vector satisfies
\[
    p_j(\XX)
    =
    \min\left\{
        \bar p_j(\XX),\,
        x_j(\XX)+\sum_{k=1}^m\lambda_{kj}p_k(\XX)
    \right\},
    \qquad j=1,\ldots,m.
\]
For each \(\XX\), define the clearing map
\(\mathcal C_{\XX}:[0,\bar{\mathbf p}(\XX)]
\to[0,\bar{\mathbf p}(\XX)]\) by
\[
    \bigl[\mathcal C_{\XX}(\mathbf q)\bigr]_j
    :=
    \min\left\{
        \bar p_j(\XX),\,
        x_j(\XX)+\sum_{k=1}^m\lambda_{kj}q_k
    \right\},
    \qquad j=1,\ldots,m.
\]
Since the payment made by each firm depends on the payments received from the
other firms, a clearing vector is a fixed point of this map: $\mathbf p(\XX)=\mathcal C_{\XX}\bigl(\mathbf p(\XX)\bigr)$.
The map \(\mathcal C_{\XX}\) is monotone and continuous on
\([0,\bar{\mathbf p}(\XX)]\), and therefore admits least and greatest fixed
points
\citep[Theorem~1]{eisenberg2001systemic}. Denote the collection of all fixed points of $\mathcal C_{\XX}$ as $\textsf{Fix}(\mathcal C_{\XX})$. We adopt the least clearing fixed
point \(\underline{\mathbf p}(\XX)\), characterized by
\[
    \mathcal C_{\XX}\bigl(\underline{\mathbf p}(\XX)\bigr)
    =
    \underline{\mathbf p}(\XX),
    \qquad
    \underline{\mathbf p}(\XX)\leq\mathbf p 
\]
for all $\mathbf p\in \textsf{Fix}(\mathcal C_{\XX}$), where the ordering is
component-wise. Since unpaid obligations decrease with clearing payments, this
gives the conservative maximal-loss selection:
\[
    L_j(\XX)
    :=
    \bar p_j(\XX)-\underline p_j(\XX),
    \qquad j=1,\ldots,m.
\]
\begin{lemma}\label{lem:eisenberg_clearing}
Assume that, for each \(j\), $t^{-1}x_j(t\zz_t)\to x_j^\star(\zz)$
whenever \(t\to\infty\) and \(\zz_t\to\zz\), and define $\bar p_j^\star(\zz):=\mv a_j^\intercal\zz$. Suppose that, for every \(\zz\in\mathbb R_+^d\), the limiting clearing system
\[
    p_j^\star(\zz)
    =
    \min\left\{
        \bar p_j^\star(\zz),\,
        x_j^\star(\zz)
        +\sum_{k=1}^m\lambda_{kj}p_k^\star(\zz)
    \right\},
    \qquad j=1,\ldots,m,
\]
has a unique solution \(\mathbf p^\star(\zz)\). Then, for any measurable selection
\(\widetilde{\mathbf p}(\cdot)\) of the finite-level clearing fixed points,
\[
    t^{-1}\widetilde p_j(t\zz_t)
    \longrightarrow
    p_j^\star(\zz),
    \qquad j=1,\ldots,m,
\]
whenever \(t\to\infty\) and \(\zz_t\to\zz\). In particular, taking
\(\widetilde{\mathbf p}=\underline{\mathbf p}\),
\[
    t^{-1}L_j(t\zz_t)
    \longrightarrow
    L_j^\star(\zz)
    :=
    \bar p_j^\star(\zz)-p_j^\star(\zz).
\]
\end{lemma}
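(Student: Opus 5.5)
We prove the statement by rescaling the finite-level clearing problem and showing that its fixed points accumulate only at fixed points of the limiting system. Uniqueness of the limiting fixed point then forces convergence. Fix $\zz\in\mathbb R_+^d$, a sequence $t\to\infty$ and $\zz_t\to\zz$. Set $\mathbf q_t:=t^{-1}\widetilde{\mathbf p}(t\zz_t)$, where $\widetilde{\mathbf p}(t\zz_t)$ is any fixed point of $\mathcal C_{t\zz_t}$. We will show that every subsequence of $\mathbf q_t$ has a further subsequence converging to $\mathbf p^\star(\zz)$.

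\textbf{Step 1 (rescaled equations and compactness).} Dividing the fixed-point equation by $t$ gives, for each $j$,
\[
    q_{t,j}=\min\Bigl\{t^{-1}\bar p_j(t\zz_t),\; t^{-1}x_j(t\zz_t)+\sum_{k=1}^m\lambda_{kj}q_{t,k}\Bigr\}.
\]
We have $t^{-1}\bar p_j(t\zz_t)=(\mv a_j^\intercal\zz_t-r_j/t)_+\to\mv a_j^\intercal\zz=\bar p_j^\star(\zz)$, since $\mv a_j^\intercal\zz\ge0$ on $\mathbb R_+^d$. The constraint $0\le q_{t,j}\le t^{-1}\bar p_j(t\zz_t)$ shows that $(\mathbf q_t)$ is bounded, so Bolzano--Weierstrass applies.

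\textbf{Step 2 (limit points solve the limiting system).} Take a convergent subsequence $\mathbf q_{t'}\to\mathbf q$. By hypothesis $t^{-1}x_j(t\zz_t)\to x_j^\star(\zz)$. Because $\min$ and finite sums are continuous, passing to the limit in the displayed equation gives
\[
    q_j=\min\Bigl\{\bar p_j^\star(\zz),\,x_j^\star(\zz)+\sum_k\lambda_{kj}q_k\Bigr\}
\]
for all $j$. Hence $\mathbf q$ is a solution of the limiting clearing system. By the assumed uniqueness, $\mathbf q=\mathbf p^\star(\zz)$.

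\textbf{Step 3 (conclusion).} By Steps 1--2, every subsequence has a further subsequence converging to the same point $\mathbf p^\star(\zz)$, so the whole sequence $\mathbf q_t$ converges to $\mathbf p^\star(\zz)$. Nothing in the argument depends on which fixed point $\widetilde{\mathbf p}$ selects; measurability plays no role at this stage. In particular the conclusion holds for $\underline{\mathbf p}$, the least fixed point, which exists by \citep[Theorem~1]{eisenberg2001systemic}. For the losses,
\[
    t^{-1}L_j(t\zz_t)=t^{-1}\bar p_j(t\zz_t)-t^{-1}\underline p_j(t\zz_t)\to\bar p_j^\star(\zz)-p_j^\star(\zz)=L_j^\star(\zz).
\]
This is the continuous-convergence property \eqref{eqn:loss} required by Definition~\ref{def:hom_loss}.

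The only delicate point is Step 2: the argument needs joint convergence of the caps and the resources, and it must not use any monotonicity in $t$. Uniqueness of the limiting fixed point is exactly what replaces such monotonicity. If the limiting fixed point were not unique, different selections could converge to different limits, or fail to converge at all. The argument is otherwise elementary.
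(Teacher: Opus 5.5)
Your proposal is correct and follows essentially the paper's argument: rescale the fixed-point equation by $t$, bound the rescaled payments by the convergent caps $t^{-1}\bar p_j(t\zz_t)$, extract a convergent subsequence, show that every limit point solves the limiting clearing system, and use uniqueness to conclude that the whole sequence converges. The only difference is cosmetic. You pass to the limit via joint continuity of $(b,c,\mathbf q)\mapsto\min\{b,\,c+\sum_k\lambda_{kj}q_k\}$, whereas the paper phrases the same step as uniform convergence of the rescaled clearing maps on bounded sets combined with a triangle inequality.
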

\end{example}

The same framework can accommodate other systemic-risk mechanisms, including
central-counterparty clearing systems (see \cite{amini2014systemic,veraart2025systemic})
and liquidation-based amplification. The next example illustrates the latter in a
stylized proportional deleveraging model.

\begin{example}[\textbf{Fire-sale amplification network}]\em
\label{eg:firesale}
Consider \(N\) banks with exposures to \(d\) asset classes. Let
\(M\in\mathbb R_+^{N\times d}\) be the reference mark-to-market exposure
matrix, so that the direct losses generated by an exogenous asset-loss shock
\(\XX\in\mathbb R_+^d\) are
\[
    R(\XX):=M\XX.
\]
To capture fire-sale feedback through overlapping portfolios, let
\(G\in\mathbb R_+^{d\times N}\) map bank losses into the additional asset
depreciation caused by deleveraging and price impact (see for instance, 
\cite{greenwood2015vulnerable}). The resulting additional bank losses are
\(MG\mathbf y\) for an initial bank-loss vector \(\mathbf y\). Writing \(B:=MG\), the stable cumulative loss satisfies
\[
    \mathbf Y(\XX)=R(\XX)+B\mathbf Y(\XX).
\]
Hence, if \(\rho(B)<1\),
\[
    \mathbf Y(\XX)=(I-B)^{-1}R(\XX),
    \qquad
    L_i(\XX):=Y_i(\XX).
\]

\begin{lemma}\label{lem:firesale}
If \(\rho(B)<1\), and $M_i\neq \mv 0$  for every $i$, then the fire-sale loss system is asymptotically homogeneous,
with
\[
    L_i^\star(\zz)
    =
    \left[(I-B)^{-1}M\zz\right]_i,
    \qquad i=1,\ldots,N.
\]
\end{lemma}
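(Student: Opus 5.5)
Since $\rho(B)<1$, the Neumann series $\sum_{k\ge0}B^k$ converges and equals $(I-B)^{-1}$. The map $\XX\mapsto \mathbf Y(\XX)=(I-B)^{-1}M\XX$ is therefore well defined, and it is the unique solution of the fixed-point equation $\mathbf Y=R(\XX)+B\mathbf Y$. It is also linear in $\XX$. The plan is to use this linearity to verify the continuous-convergence condition \eqref{eqn:loss} directly, and then to check nondegeneracy.

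\textbf{Step 1: asymptotic homogeneity.} Write $A:=(I-B)^{-1}M\in\mathbb R^{N\times d}$. For $t>0$ and any $\zz_t$, linearity gives
\[
t^{-1}L_i(t\zz_t)=t^{-1}[A(t\zz_t)]_i=[A\zz_t]_i .
\]
If $\zz_t\to\zz$, then $[A\zz_t]_i\to[A\zz]_i=L_i^\star(\zz)$, because a linear map on $\mathbb R^d$ is continuous. This is exactly \eqref{eqn:loss}, with the limit $L_i^\star$ as stated, and $L_i^\star$ is homogeneous of degree one as expected.

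\textbf{Step 2: nondegeneracy.} We must exhibit some $\zz\in\mathbb R_+^d$ with $L_i^\star(\zz)>0$. All entries of $B=MG$ are nonnegative, so every power $B^k$ is entrywise nonnegative. Hence $(I-B)^{-1}=I+\sum_{k\ge1}B^k\ge I$ entrywise. For $\zz\in\mathbb R_+^d$, the vector $M\zz$ is nonnegative, which gives
\[
[A\zz]_i=\bigl[(I-B)^{-1}M\zz\bigr]_i\ \ge\ [M\zz]_i=M_i\zz .
\]
Since $M_i\neq\mv 0$ and $M_i\ge 0$, some coordinate $M_{ij}>0$, so taking $\zz=\mv e_j$ yields $L_i^\star(\mv e_j)\ge M_{ij}>0$. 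Regularity is not part of the statement of Lemma~\ref{lem:firesale}, so nothing more is required. If one wanted regularity as well, the set $\{\zz\in\mathbb R_+^d:\ [A\zz]_i\ge1\ \forall i\in\mathcal J\}$ contains a positive multiple of $\mathbf 1$ together with an open neighbourhood of it intersected with $\mathbb R_+^d$, because $A\mathbf 1$ has strictly positive entries by the same bound. Hence that set has positive Lebesgue measure.

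\textbf{Main obstacle.} The argument is essentially routine. The one point needing care is nondegeneracy: a general matrix $(I-B)^{-1}$ could introduce cancellations. The obstacle is removed by the entrywise nonnegativity of the Neumann series, which itself relies on $\rho(B)<1$ and $B\ge0$.
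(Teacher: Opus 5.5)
Your proof is correct and complete. The paper gives no proof of this lemma: the appendix proves Lemma~\ref{lem:eisenberg_clearing} but not Lemmas~\ref{lem:verify_reinsurance} or~\ref{lem:firesale}. It evidently treats the result as immediate from the linearity of $\XX\mapsto(I-B)^{-1}M\XX$, which is exactly your Step~1.

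The hypotheses only do real work in the nondegeneracy step, which you check explicitly. The entrywise bound $(I-B)^{-1}=I+\sum_{k\ge1}B^k\ge I$ uses both $\rho(B)<1$ and $B=MG\ge 0$, and it rules out cancellation. As a result, row $i$ of $(I-B)^{-1}M$ dominates the nonzero nonnegative row $M_i$. Your regularity remark, using a neighbourhood of a large multiple of $\mathbf 1$, is also correct and goes slightly beyond what the lemma claims.
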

\end{example}

These examples illustrate the role of asymptotic homogeneity: finite capital levels,
retentions, and activation thresholds affect moderate-loss behavior, but vanish at
the first order under large stress scaling. 







\subsection{Reverse stress testing}
The stressed law describes the distribution of risk-factor configurations
conditional on a joint stress event. Reverse stress testing asks a complementary
question: which configurations in the stress region are most plausible under the
original model? We formalize plausibility by constrained likelihood. For
\(\mathcal S(u;\mathcal J)\) defined in \eqref{eqn:rst_feasible}, define
\begin{equation}\label{eqn:rst_problem}
    \mathcal R(u;\mathcal J;P)
    :=
    \arg\max_{\zz\in\mathcal S(u;\mathcal J)}
    f_{\XX}(\zz).
\end{equation}
When the maximum is attained but not unique, \(\mathcal R(u;\mathcal J;P)\)
denotes the full maximizer set. Since the stressed conditional density is
proportional to \(f_{\XX}\) on \(\mathcal S(u;\mathcal J)\), this set is precisely
the set of conditional modes. Thus reverse stress testing selects, among all
configurations producing the prescribed joint stress event, those most likely under
the original risk-factor model.

This likelihood-based notion of scenario plausibility is standard in reverse
stress testing
\cite{glasserman2015stress,breuer2009find,breuer2013systematic}. For SSGEN, the
key question is whether this likelihood ordering is preserved when the true tail
density is replaced by the generated one. We show later that, after scaling by
\(u^{-1}\), the generated maximizer set converges to the true limiting
reverse-stress solution set under a local identifiability condition.

\section{Tail-Preserving Stress Generation}
\label{sec:characterisation}

We first recall the measure-convergence notion used throughout the
asymptotic analysis. After scaling, tail events are represented by sets bounded
away from the origin, while neighborhoods of the origin correspond to non-extreme
outcomes. An appropriate convergence notion to capture this phenomenon is
\(\mv M_0\)-convergence from \cite{hult2006regular}.  In what follows, let \((\mv S,d)\) be a complete,
separable metric space (in our case, \(\mathcal E \subset \mathbb R^d\), equipped with the usual norm)
and let \(\mathscr F\) be the Borel \(\sigma\)-algebra on \(\mv S\). Let
\(\mv S_0:=\mv S\setminus\{\mv 0\}\), and let \(\mv M_0\) be the class of
nonnegative measures on \(\mv S_0\) whose restriction to \(\mv S_0\setminus B_r\)
is finite for every \(r>0\).

\begin{definition}[\textbf{\(\mv M_0\)-convergence}]
\label{def:convergence_concept}
Let \(\{\mu_n\}_{n\geq1},\mu\in \mv M_0\). We say that
\(\mu_n\to\mu\) in \(\mv M_0\) if 
\[
    \lim_{n\to\infty}\mu_n(B)=\mu(B)
\]
for every \(B\in\mathscr F\) such that $\mu(\partial B)=0$ and $\mv 0\notin \operatorname{cl}(B)$\footnote{$\mv M_0$ convergence has the following alternate characterization: $\liminf \mu_n(G) \geq \mu(G)$ and $\limsup \mu_n(F)\leq \mu(F)$ for all open $G\subset \mv S$ and closed $F\subset \mv S$, such that $\mv 0\not\in\operatorname{cl}[G]\cup F$. This is equivalent to Definition~\ref{def:convergence_concept}, and will only be used in the proofs.}. We denote this by $\mu_n \xrightarrow{\mv M_0} \mu$.
\end{definition}
\subsection{Tail preservation as a validity criterion}
The two tasks introduced in Section~\ref{sec:assump_model} - approximating
stressed conditional laws and preserving reverse-stress plausibility - both depend
on the tail behavior of \(P\). The following lemma is a
direct consequence of Assumption~\ref{assume:ht_data}.

\begin{lemma}\label{lem:tail_asymp}
Suppose Assumption~\ref{assume:ht_data} holds. Then
\[
    t^s \mathbb P(t^{-1}\XX\in \cdot)\xrightarrow{\mv M_0} \nu^\star(\cdot)
\]
In polar coordinates
\(\zz=r\pphi\), $\nu^\star$ admits the representation
\[
    d\nu^\star(r,\pphi)
    =
    r^{-(s+1)}\varphi^\star(\pphi)\,dr\,d\pphi.
\]
\end{lemma}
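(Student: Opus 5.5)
The plan is to define $\nu^\star(A):=\int_A\varphi^\star(\zz)\,d\zz$ on $\mathcal E\setminus\{\mv 0\}$ and show that $t^s\mathbb P(t^{-1}\XX\in\cdot)$ converges to it in $\mv M_0$. A change of variables gives the key identity:
\[
    t^s\mathbb P(t^{-1}\XX\in A)=t^s\int_{tA}f_{\XX}(\yy)\,d\yy=\int_A t^{s+d}f_{\XX}(t\zz)\,d\zz .
\]
The integrand on the right converges to $\varphi^\star(\zz)$ by Assumption~\ref{assume:ht_data}. The task is therefore to pass the limit through the integral, uniformly enough to verify Definition~\ref{def:convergence_concept}.

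\textbf{Step 1: $\nu^\star\in\mv M_0$ and the polar form.} Write $\zz=r\pphi$ with $\pphi\in\Theta_{\mathcal E}$, so that $d\zz=r^{d-1}\,dr\,d\pphi$, where $d\pphi$ is surface measure on the sphere. Lemma~\ref{lem:phi_star_props}(i) gives $\varphi^\star(r\pphi)=r^{-(s+d)}\varphi^\star(\pphi)$. Hence $d\nu^\star=r^{-(s+1)}\varphi^\star(\pphi)\,dr\,d\pphi$, which is the claimed representation. Since $\varphi^\star$ is bounded on $\Theta_{\mathcal E}$ by Lemma~\ref{lem:phi_star_props}(ii), we get $\nu^\star(B_r^c)=s^{-1}r^{-s}\int_{\Theta_{\mathcal E}}\varphi^\star<\infty$. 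So $\nu^\star\in\mv M_0$.

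\textbf{Step 2: convergence on sets away from the origin.} Fix a Borel $A$ with $\mv 0\notin\operatorname{cl}(A)$; then $A\subset B_\delta^c$ for some $\delta>0$. For $R>\delta$, split $A=(A\cap B_R)\cup(A\cap B_R^c)$.

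On the compact annulus $\{\delta\le\|\zz\|\le R\}\cap\mathcal E$, convergence in \eqref{eqn:ht_data} is uniform. Therefore $\int_{A\cap B_R}t^{s+d}f_{\XX}(t\zz)\,d\zz\to\nu^\star(A\cap B_R)$. This holds for every Borel $A$, so the boundary condition $\nu^\star(\partial A)=0$ is not even needed here; absolute continuity makes it automatic.

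\textbf{Step 3: the far region (main obstacle).} It remains to show
\[
    \lim_{R\to\infty}\limsup_{t\to\infty} t^s\,\mathbb P(\|\XX\|>tR)=0 .
\]
Uniform convergence on the unit annulus $\{1\le\|\zz\|\le 2\}$ gives a bound $f_{\XX}(\yy)\le C\|\yy\|^{-(s+d)}$ for all $\|\yy\|$ beyond some $r_1$. To see this, write $\yy=t\zz$ with $t=\|\yy\|$ and $\|\zz\|=1$; for $t$ large, $t^{s+d}f_{\XX}(t\zz)\le 2\max_{\Theta_{\mathcal E}}\varphi^\star$.

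Integrating this bound in polar coordinates gives $\mathbb P(\|\XX\|>tR)\le C'(tR)^{-s}$ once $tR>r_1$. Hence the tail term is at most $C'R^{-s}$ uniformly in large $t$. The same bound shows $\nu^\star(A\cap B_R^c)\le C''R^{-s}$.

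Letting $t\to\infty$ and then $R\to\infty$ yields $t^s\mathbb P(t^{-1}\XX\in A)\to\nu^\star(A)$, which proves the $\mv M_0$-convergence. The only delicate point is extracting this global power-law envelope from local uniform convergence; that is where homogeneity (Lemma~\ref{lem:phi_star_props}) and the compactness of $\Theta_{\mathcal E}$ are essential.
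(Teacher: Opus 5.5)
Your argument is correct and is essentially the paper's proof: the paper makes the same change of variables and then invokes Lemma~\ref{lem:angle_converges}, which upgrades uniform convergence on $\Theta_{\mathcal E}$ to $\sup_{r>t,\,\pphi}|\varepsilon(r,\pphi)|\to0$. That is exactly the uniform-in-radius power-law control you extract in Step~3; the only difference is organizational, since the paper applies this uniform relative error on all of $\{\rho>a\}$ at once (via dominated convergence), whereas you split into a compact annulus plus a separately bounded tail.
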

Lemma~\ref{lem:tail_asymp} identifies \(\nu^\star\) as the first-order tail law of \(\XX\). Its radial component captures the Pareto tail scaling, while its angular component captures limiting extremal dependence. For stress generation, \(\nu^\star\) is therefore not merely descriptive: it is the benchmark that an approximation must preserve in order to match rare joint-stress probabilities and the associated stressed conditional laws. This leads to the following notion of tail fidelity.

\begin{definition}[\textbf{Tail-preserving approximation}]
\label{def:preserve_stress}
A family of random vectors \(\{\widetilde \XX_u\}_{u>0}\) is said to be
\emph{tail preserving} for \(\XX\) if
\[
    u^s \mathbb P(u^{-1}\widetilde\XX_u\in\cdot)\xrightarrow{\mv M_0} \nu^\star(\cdot)
\]
 where $\nu^\star$ is the
limiting tail measure associated with \(\XX\) in Lemma~\ref{lem:tail_asymp}.
We denote the collection of tail-preserving approximations of \(\XX\) by
\(\mathcal T(P)\).
\end{definition}

The relevance of \(\mv M_0\)-convergence for stress generation can be seen from
the simplest homogeneous case. Suppose, for example, that the losses
\(\{L_j:j\in\mathcal J\}\) are positively homogeneous of degree one. Then, for a
common stress level \(u\),
\[
    \mathcal S(u;\mathcal J)=u\,\mathcal S(1;\mathcal J),
\]
and hence the scaled stressed law is
\[
    \Law\!\left(
        u^{-1}\XX
        \,\middle|\,
        \XX\in\mathcal S(u;\mathcal J)
    \right)
    =
    \Law\!\left(
        u^{-1}\XX
        \,\middle|\,
        u^{-1}\XX\in\mathcal S(1;\mathcal J)
    \right).
\]
Thus, after scaling, the rare stress event is represented by a set bounded away
from the origin. The convergence in Lemma~\ref{lem:tail_asymp} therefore implies
that both the probability of the stress event and the limiting conditional law are
determined by the restriction of \(\nu^\star\) to this stress region.

This simple case captures the role of \(\nu^\star\): it is the first-order object
governing rare joint-stress probabilities and the associated scaled stressed laws.
The results below extend this intuition from exactly homogeneous losses to the
regular asymptotically homogeneous systems considered in the paper.

\begin{proposition}[\textbf{A Sharp Characterization of Tail Preservation}]\label{prop:asymp_tails}
Suppose Assumption~\ref{assume:ht_data} holds, and let \(\XX\sim P\).

\begin{enumerate}
    \item[(i)] If \(\{\XX_u\}_{u>0}\in\mathcal T(P)\), then for every regular
    asymptotically homogeneous multi-loss system
    \((L_1,\ldots,L_K)\in\mathcal H_1\) and every nonempty subset
    \(\mathcal J\subseteq[K]\) considered in the analysis,
    \[
        \mathbb P\!\left(\XX_u\in \mathcal S(u;\mathcal J)\right)
        \sim
        \mathbb P\!\left(\XX\in \mathcal S(u;\mathcal J)\right),
        \qquad u\to\infty .
    \]

    \item[(ii)] Conversely, if
    \(\{\XX_u\}_{u>0}\not\in \mathcal T(P)\), then there exist
    \(K\ge 1\), regular  loss functionals
    \((L_1,\ldots,L_K)\in\mathcal H_1\), and a nonempty subset
    \(\mathcal J\subseteq[K]\) such that
    \[
        \mathbb P\!\left(\XX_u\in \mathcal S(u;\mathcal J)\right)
        \not\sim
        \mathbb P\!\left(\XX\in \mathcal S(u;\mathcal J)\right),
        \qquad u\to\infty .
    \]
\end{enumerate}
\end{proposition}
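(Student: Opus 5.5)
For part (i), I will reduce the finite-level stress region to a fixed limiting region by sandwiching. Write $\mathcal S(u;\mathcal J)=u\,A_u$ with $A_u:=u^{-1}\mathcal S(u;\mathcal J)=\{\zz: u^{-1}L_j(u\zz)\ge 1,\ j\in\mathcal J\}$, so both probabilities are of the form $\mathbb P(u^{-1}\cdot\in A_u)$.

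The first step is to show that $A_u$ is eventually bounded away from the origin. Nondegeneracy and homogeneity of $L_j^\star$ give $\sup_{\|\zz\|\le 1}L_j^\star(\zz)=:M<\infty$. By continuous convergence, uniformly on compact sets, $u^{-1}L_j(u\zz)\le 2M\|\zz\|+o(1)$ on an annulus. I will also need a small-ball control near $\mv 0$. For this I plan to use the homogeneity of degree one applied to $\zz_t\to\mv 0$, which gives $L^\star(\mv 0)=0$, so $\sup_{\|\zz\|\le\delta}u^{-1}L_j(u\zz)<1$ for small $\delta$ and large $u$. I will verify this via a contradiction argument with sequences $\zz_u\to\zz$, $\|\zz\|\le\delta$, using \eqref{eqn:loss}.

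The second step is the sandwich. For $\epsilon>0$ and all large $u$, uniform convergence on compacts together with the homogeneity of $L_j^\star$ gives
\[
F_\epsilon:=\{L_j^\star\ge 1+\epsilon\ \forall j\in\mathcal J,\ \|\zz\|\ge\delta\}\subseteq A_u\cap B_R \quad\text{and}\quad A_u\subseteq G_\epsilon:=\{L_j^\star\ge 1-\epsilon\ \forall j\}\cap B_\delta^c.
\]
The unbounded part outside $B_R$ is handled by $\nu^\star(B_R^c)=O(R^{-s})$. Since the convergence need not be uniform at infinity, I will split off $B_R^c$ and control it through the radial tail, which is the same for both vectors.

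The third step applies $\mv M_0$-convergence (Lemma~\ref{lem:tail_asymp} and Definition~\ref{def:preserve_stress}) to these fixed sets. By the Portmanteau form in the footnote,
\[
\nu^\star(\operatorname{int}F_\epsilon)\le\liminf u^s\mathbb P(\cdot)\le\limsup u^s\mathbb P(\cdot)\le \nu^\star(\operatorname{cl}G_\epsilon)+O(R^{-s}),
\]
and this holds for both $\XX$ and $\XX_u$. Letting $\epsilon\to0$ and $R\to\infty$, and using that $\nu^\star$ has a density so that level sets $\{L^\star=1\}$ are null (by homogeneity, $\{L^\star_j=c\}$ for distinct $c$ are disjoint scalings, so at most countably many have positive measure, and a perturbation handles the rest), both limits equal $\nu^\star(\mathcal S^\star(\mathcal J))$. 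This is positive by regularity and the positivity of $\varphi^\star$, so the ratio tends to $1$.

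For part (ii), suppose $u^s\mathbb P(u^{-1}\XX_u\in\cdot)\not\to\nu^\star$ in $\mv M_0$. Then there is a $\nu^\star$-continuity set $B$ bounded away from $\mv 0$ and a sequence $u_n$ along which the masses do not converge. I plan to replace $B$ by regular loss-defined sets. Sets of the form $\{\zz:\langle \mv a,\zz\rangle\ge 1\}$, that is, intersections of half-spaces with linear $L_j(\zz)=\mv a_j^\intercal\zz$ (these are in $\mathcal H_1$ with $L_j^\star=L_j$ and are regular when nondegenerate), form a convergence-determining class for $\mv M_0$ on cones. Via a Cramér--Wold-type argument for regular variation (Basrak--Davis--Mikosch) together with finite intersections, convergence on all such joint-stress regions plus tightness would force $\mv M_0$-convergence to $\nu^\star$. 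Hence some such $\mathcal S(u;\mathcal J)$ has $u_n^s\mathbb P(\XX_{u_n}\in\mathcal S)\not\to\nu^\star(\mathcal S^\star)$, while by part (i) applied to $\XX$ itself the true probability satisfies $u^s\mathbb P(\XX\in\mathcal S)\to\nu^\star(\mathcal S^\star)>0$, so the asymptotic equivalence fails.

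I expect the main obstacle to be the convergence-determining step. The issues are ensuring relative compactness of $u^s\mathbb P(u^{-1}\XX_u\in\cdot)$, which may fail; but failure of boundedness on some annulus is itself witnessed by the half-space family covering $B_r^c$ through finitely many $\{\mv a_j^\intercal\zz\ge1\}$ with $\mv a_j=\pm e_i/r'$, using $K$ such losses with $|\mathcal J|=1$. One also has to handle the fact that Cramér--Wold for regular variation needs the index $s$ to be non-integer or a positivity assumption. To avoid the latter, I would use multiple constraints (rectangles and polyhedra via intersections of linear stress constraints over $\mathcal J$), which form a $\pi$-system generating the Borel sets away from the origin. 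This gives uniqueness of subsequential limits directly.
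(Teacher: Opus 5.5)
Your part (i) follows the paper's argument. The paper also sandwiches $\Gamma_u(\mathcal J)\cap B_N$ between $K_N=\{L^\star_{\mathcal J}>1+1/N,\ \|\zz\|<N\}$ and a $\delta$-enlargement of $\mathcal S^\star(\mathcal J)$. It discards $B_N^c$ using $\nu^\star(B_N^c)=N^{-s}\nu^\star(B_1^c)$, and it uses that $\{L^\star_{\mathcal J}=1\}$ is $\nu^\star$-null. Your countability argument becomes exact through $\nu^\star(\{L^\star_{\mathcal J}=c\})=c^{-s}\nu^\star(\{L^\star_{\mathcal J}=1\})$, so no perturbation is needed. Write the inclusions as $F_\epsilon\cap B_R\subseteq A_u$ and $A_u\cap B_R\subseteq G_\epsilon$, since uniform convergence is only available on compacts.

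Part (ii) takes a genuinely different route, and it works.

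\textbf{How the paper does it.} The paper splits into three cases: failure of local boundedness, failure of tightness at infinity, and a relatively compact case with a subsequential limit $\tilde\nu\neq\nu^\star$. In the last case it separates $\tilde\nu$ from $\nu^\star$ by a single star-shaped set $\{r\ge h(\pphi)\}$. This set is the stress region of one exactly homogeneous but nonlinear loss, $L_\psi=c^{-1}r\psi(\pphi)^{1/s}$. The price is a bespoke lemma showing that star-shaped sets form a generating $\pi$-system.

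\textbf{What your route buys.} You separate with joint-stress regions of linear losses, i.e.\ polyhedra $\{A\zz\ge\mathbf 1\}$. The $\pi$-property is immediate, since you just concatenate constraints. Generation is elementary, since half-spaces generate the Borel sets. The converse also becomes sharper: failure of tail preservation is already detected within the zero-capital linear class of Example~\ref{eg:reinsurance}. The cost is that you generally need $|\mathcal J|>1$. You are right to drop Cram\'er--Wold, and for a stronger reason than integer $s$: subsequential limits $\tilde\nu$ of $u^s\mathbb P(u^{-1}\XX_u\in\cdot)$ need not be homogeneous, so Basrak--Davis--Mikosch does not apply at all.

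\textbf{Details you still need to supply.}
\begin{enumerate}
\item From $\mu_{u_n}\to\tilde\nu$ you only get $\tilde\nu(P)=\nu^\star(P)$ for polyhedra with $\tilde\nu(\partial P)=0$. These polyhedra still form a $\pi$-system, because $\partial(P\cap Q)\subseteq\partial P\cup\partial Q$. They still generate, because for fixed $\mv a$ the boundaries of $\{\mv a^\intercal\zz\ge c\}$ are disjoint in $c$, so only countably many thresholds are lost. This is the analogue of the scaling step in Lemma~\ref{lem:homogeneous_sets_separate}.
\item Since these measures are infinite near the origin, use the $\sigma$-finite $\pi$--$\lambda$ theorem with a countable cover of $\mathcal E\setminus\{\mv 0\}$ by coordinate half-spaces.
\item The same coordinate half-spaces, with $\nu^\star(\{\pm z_i\ge M/\sqrt d\})=O(M^{-s})$, also witness loss of tightness at infinity. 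This is the paper's Case~2, which you did not treat separately.
\item The separating polyhedron must be regular. It meets $\mathcal E$, since otherwise both measures vanish on it. If $\zz_0$ lies in it, then $(1+\epsilon)\zz_0$ lies in its interior. So, for $\mathcal E$ equal to the closure of its interior, the polyhedron has positive Lebesgue measure and hence positive $\nu^\star$-mass.
\end{enumerate}
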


Proposition~\ref{prop:asymp_tails} identifies tail preservation as the sharp
first-order criterion for recovering probabilities of regular asymptotically
homogeneous joint-stress events. The same criterion also yields convergence of the
corresponding stressed conditional laws.

\begin{corollary}\label{cor:stress_law_preserve}
Suppose \(\{\XX_u\}_{u>0}\in\mathcal T(P)\), and let
\((L_1,\ldots,L_K)\in\mathcal H_1\) be regular. Then, for every nonempty
\(\mathcal J\subseteq[K]\) considered in the analysis,
\[
\Law\!\left(
u^{-1}\XX_u
\,\middle|\,
\XX_u\in \mathcal S(u;\mathcal J)
\right)
\Rightarrow
\frac{\nu^\star\!\left(\,\cdot \cap \mathcal S^\star(\mathcal J)\right)}
     {\nu^\star\!\left(\mathcal S^\star(\mathcal J)\right)}
\qquad \text{as } u\to\infty,
\]
where \(\mathcal S^\star(\mathcal J)\) is defined in
\eqref{eqn:asymptotic_stress_region}.
\end{corollary}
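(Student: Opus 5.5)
We prove the corollary by the Portmanteau theorem, reducing weak convergence of the conditional law to ratios of scaled measures of $\mathcal S(u;\mathcal J)$ intersected with closed or open test sets. Write $\mu_u(\cdot):=u^s\mathbb P(u^{-1}\XX_u\in\cdot)$, so that $\mu_u\xrightarrow{\mv M_0}\nu^\star$ by Definition~\ref{def:preserve_stress}. Also write $\mathcal S_u:=u^{-1}\mathcal S(u;\mathcal J)=\{\zz\in\mathcal E: u^{-1}L_j(u\zz)\ge 1,\ j\in\mathcal J\}$. Then
\[
\mathbb P\bigl(u^{-1}\XX_u\in A\,\big|\,\XX_u\in\mathcal S(u;\mathcal J)\bigr)=\frac{\mu_u(A\cap\mathcal S_u)}{\mu_u(\mathcal S_u)} .
\]
It therefore suffices to show two things: $\mu_u(\mathcal S_u)\to\nu^\star(\mathcal S^\star(\mathcal J))\in(0,\infty)$, and $\mu_u(A\cap\mathcal S_u)\to\nu^\star(A\cap\mathcal S^\star(\mathcal J))$ for every Borel $A$ with $\nu^\star(\partial A\cap \mathcal S^\star(\mathcal J))=0$. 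In fact the first claim is exactly the content behind Proposition~\ref{prop:asymp_tails}(i), and it is the special case $A=\mathcal E$ of the second.

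\textbf{Step 1: geometry of $\mathcal S_u$.} First, $\mathcal S^\star(\mathcal J)$ is bounded away from the origin. On the unit sphere the continuous functions $L_j^\star$ are bounded by some $M$. By homogeneity, $L_j^\star(\zz)\ge 1$ then forces $\|\zz\|\ge 1/M$.

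Next, by the uniform convergence of $u^{-1}L_j(u\,\cdot)$ to $L_j^\star$ on compacts (the continuous-convergence remark after Definition~\ref{def:hom_loss}), for every $\delta>0$ and $R<\infty$ we have, for all large $u$,
\[
\{L_j^\star\ge 1+\delta\ \forall j\}\cap B_R\ \subseteq\ \mathcal S_u\cap B_R\ \subseteq\ \{L_j^\star\ge 1-\delta\ \forall j\}\cap B_R .
\]
The sets outside $B_R$ are handled separately. Their $\mu_u$-mass is eventually at most $\nu^\star(B_R^c)+\epsilon$, which is arbitrarily small for large $R$ because $\nu^\star(B_R^c)=c\,R^{-s}$ by Lemma~\ref{lem:tail_asymp}.

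The inner region near the origin needs a short argument too. If $\|\zz\|$ is small, continuous convergence gives $u^{-1}L_j(u\zz)<1$ eventually. To make this uniform on the compact set $\{\|\zz\|\le r\}$ including $\zz\to 0$, we use $L_j^\star(\mv 0)=0$ together with continuous convergence at $\mv 0$. Hence $\mathcal S_u\subseteq B_r^c$ for some $r>0$ and all large $u$, which makes all the sets above legitimate $\mv M_0$ test sets.

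\textbf{Step 2: sandwich and limit.} Let $A$ be a $\nu^\star$-continuity set. Apply the closed/open characterization of $\mv M_0$-convergence to $\operatorname{cl}(A)\cap\{L^\star\ge1-\delta\}\cap \overline B_R$ (closed, away from $\mv 0$) for the $\limsup$. Apply it to $\operatorname{int}(A)\cap\{L^\star>1+\delta\}\cap B_R^\circ$ for the $\liminf$. Then let $\delta\downarrow0$ and $R\uparrow\infty$.

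The main obstacle is this limit in $\delta$. It requires
\[
\nu^\star\bigl(\{\min_{j\in\mathcal J}L_j^\star=1\}\bigr)=0 ,
\]
i.e. that the boundary of $\mathcal S^\star(\mathcal J)$ is $\nu^\star$-null. Here I would use the homogeneity of $L^\star_j$: the level sets $\{\min_j L_j^\star=c\}=c\,\{\min_jL_j^\star=1\}$ for $c>0$ are disjoint scalings of one set. By Lemma~\ref{lem:tail_asymp}, $\nu^\star$ scales as $\nu^\star(cB)=c^{-s}\nu^\star(B)$. So if the level set at $1$ had positive mass, every level set at $c\in[1,2]$ would have mass at least $2^{-s}$ times that. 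This gives uncountably many disjoint sets of positive mass inside the finite-measure set $\{\min_j L_j^\star\ge1\}$, a contradiction.

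Positivity of the limit denominator $\nu^\star(\mathcal S^\star(\mathcal J))$ follows from regularity (positive Lebesgue measure) and $\varphi^\star>0$ (Lemma~\ref{lem:phi_star_props}). Its finiteness follows from boundedness away from $\mv 0$. Taking the ratio of the two limits and applying Portmanteau yields the claimed weak convergence.
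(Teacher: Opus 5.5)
Your proposal is correct and follows essentially the same route as the paper: both write the conditional probability as the ratio $\mu_u(A\cap\mathcal S_u)/\mu_u(\mathcal S_u)$, control numerator and denominator with the inner/outer approximations from the proof of Proposition~\ref{prop:asymp_tails} plus a tail cutoff, and use that $\{\min_{j\in\mathcal J}L_j^\star=1\}$ is $\nu^\star$-null. The differences are minor. You prove that nullity by the scaling identity $\nu^\star(cB)=c^{-s}\nu^\star(B)$ and disjointness of level sets, whereas the paper uses that each ray meets the level set at most once and that $\nu^\star$ has a radial density. You also test against continuity sets of the limit law ($\nu^\star(\partial A\cap\mathcal S^\star(\mathcal J))=0$), which is exactly what Portmanteau needs and slightly cleaner than the paper's condition $\nu^\star(\partial B)=0$.
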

Together, Proposition~\ref{prop:asymp_tails} and
Corollary~\ref{cor:stress_law_preserve} establish
Definition~\ref{def:preserve_stress} as the natural first-order fidelity
criterion for heavy-tailed stress generation. Preserving the limiting tail
measure ensures first-order accuracy for rare joint-stress probabilities and
convergence of the corresponding scaled stressed laws; failure to preserve it
distorts some regular asymptotically homogeneous joint-stress event.

This characterization also exposes the statistical bottleneck. The limiting
tail measure \(\nu^\star\) depends on the asymptotic angular intensity
\(\varphi^\star\), which is not directly observable from finite samples. Write
\(R=\|\XX\|\) and \(\Pphi=\XX/\|\XX\|\), and let \(\varphi^{(t)}\) denote the
conditional density of \(\Pphi\) given \(\{R>t\}\). SSGEN replaces
\(\varphi^\star\) by its finite-threshold counterpart \(\varphi^{(t)}\), while
modeling the radial tail separately. An intermediate threshold makes
\(\varphi^{(t)}\) estimable from observed exceedances while permitting
extrapolation beyond the observed radial range. The next subsection formalizes
this population construction.

\subsection{Learning Stress Laws via Tail Extrapolation}\label{sec:SSGEN} 

We now turn the tail-preservation criterion into a population construction.
Above a high radial threshold, the conditional
tail density admits an asymptotically self-similar factorization: the radial
component is Pareto to first order, while the angular component is described
by the finite-threshold density \(\varphi^{(t)}\).

\begin{proposition}\label{prop:conditional_tail_factorization}
Suppose Assumption~\ref{assume:ht_data} holds. Then
\[
    \sup_{r>t,\;\pphi}
    \left|
        \frac{
            f_{R,\Pphi}(r,\pphi\mid R>t)
        }{
            s t^s r^{-(s+1)}\varphi^{(t)}(\pphi)
        }
        -1
    \right|
    \to 0
    \qquad\text{as }t\to\infty.
\]
\end{proposition}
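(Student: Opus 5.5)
Write the joint density of $(R,\Pphi)$ in polar coordinates as $f_{R,\Pphi}(r,\pphi)=r^{d-1}f_{\XX}(r\pphi)$, with $d\pphi$ the surface measure on $\Theta_{\mathcal E}$. Then
\[
f_{R,\Pphi}(r,\pphi\mid R>t)=\frac{r^{d-1}f_{\XX}(r\pphi)}{\mathbb P(R>t)},
\qquad
\varphi^{(t)}(\pphi)=\frac{\int_t^\infty \rho^{d-1}f_{\XX}(\rho\pphi)\,d\rho}{\mathbb P(R>t)} .
\]
The normalizer $\mathbb P(R>t)$ cancels in the ratio. The claim therefore reduces to showing that
\[
Q_t(r,\pphi):=\frac{r^{d-1}f_{\XX}(r\pphi)}{s t^s r^{-(s+1)}\int_t^\infty \rho^{d-1}f_{\XX}(\rho\pphi)\,d\rho}\;\longrightarrow\;1
\]
uniformly over $r>t$ and $\pphi\in\Theta_{\mathcal E}$.

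The single input is uniform convergence on the unit sphere, which is compact and bounded away from $\mv 0$. Set $g(\rho,\pphi):=\rho^{s+d}f_{\XX}(\rho\pphi)$. Assumption~\ref{assume:ht_data} gives $\sup_{\pphi\in\Theta_{\mathcal E}}|g(\rho,\pphi)-\varphi^\star(\pphi)|\to0$ as $\rho\to\infty$. By Lemma~\ref{lem:phi_star_props}(ii), $\varphi^\star$ is bounded below on $\Theta_{\mathcal E}$ by $m>0$. Hence
\[
\varepsilon(t):=\sup_{\rho\ge t,\ \pphi}\Bigl|\frac{g(\rho,\pphi)}{\varphi^\star(\pphi)}-1\Bigr|\to0 .
\]
This step requires the uniformity in \eqref{eqn:ht_data} to hold on the whole sphere $\Theta_{\mathcal E}$, a compact subset of $\mathcal E\setminus\{\mv 0\}$. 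It is the only place where the assumption is used. Converting the additive error to a multiplicative one via the lower bound $m$ is the key point: without it, directions where $\varphi^\star$ is small would spoil uniformity.

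Next compute the numerator and denominator. The numerator equals $r^{-(s+1)}g(r,\pphi)\in r^{-(s+1)}\varphi^\star(\pphi)[1-\varepsilon(t),1+\varepsilon(t)]$ for $r>t$. The denominator integral is
\[
\int_t^\infty \rho^{-(s+1)}g(\rho,\pphi)\,d\rho\in \frac{t^{-s}}{s}\,\varphi^\star(\pphi)[1-\varepsilon(t),1+\varepsilon(t)] .
\]
Substituting, $Q_t(r,\pphi)\in\bigl[\tfrac{1-\varepsilon(t)}{1+\varepsilon(t)},\tfrac{1+\varepsilon(t)}{1-\varepsilon(t)}\bigr]$ for all $r>t$ and all $\pphi$, once $\varepsilon(t)<1$. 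Both bounds tend to $1$, which gives the supremum statement.

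A minor point is that the densities are defined only almost everywhere. For $t>r_0$ the density is continuous on $B_{r_0}^c\cap\mathcal E$, so the pointwise formulas above are legitimate representatives and the supremum is meaningful. Finiteness and positivity of the denominator integral follow from the same two-sided bounds. I expect no real obstacle beyond justifying the uniform multiplicative error $\varepsilon(t)$. The rest is an exact computation with the Pareto integral $\int_t^\infty\rho^{-(s+1)}d\rho=t^{-s}/s$.
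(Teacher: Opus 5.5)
Your proposal is correct and follows essentially the same route as the paper. The paper isolates your multiplicative-error step (uniform convergence on $\Theta_{\mathcal E}$ plus the positive lower bound on $\varphi^\star$) as Lemma~\ref{lem:angle_converges}, then cancels $p_t$ and bounds the numerator and the normalized Pareto integral within $[1-\eta_t,1+\eta_t]$, giving the same $2\eta_t/(1-\eta_t)$ control that your bracket yields.
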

Let \(p_t=\mathbb P(R>t)\). 
Motivated by Proposition~\ref{prop:conditional_tail_factorization}, define
the population SSGEN distribution at threshold \(t\) through the
radial--angular density
\begin{equation}\label{eqn:f_tgen}
    \fgen^{(t)}(r,\pphi)
    =
    f_{R,\Pphi}(r,\pphi)\mathbf 1(r<t)
    +
    p_t s t^s r^{-(s+1)}
    \varphi^{(t)}(\pphi)\mathbf 1(r>t).
\end{equation}
Thus, the body is left unchanged, while the tail is replaced by a Pareto
radial law coupled with the finite-threshold angular law. The factor \(p_t\)
preserves the probability mass beyond \(t\).

The following result shows that this splice is tail-preserving. Its relative
density error vanishes uniformly beyond the learning threshold, and hence it
preserves the limiting tail measure whenever that threshold is intermediate
relative to the target stress level.

\begin{theorem}\label{thm:heavy_vanishing_RE}
Suppose Assumption~\ref{assume:ht_data} holds, and let \(\fgen^{(t)}\) be
defined by \eqref{eqn:f_tgen}. Then
\begin{equation}\label{eqn:heavy_vanishing_RE}
    \sup_{r>t,\;\pphi}
    \left|
        \frac{f_{R,\Pphi}(r,\pphi)}
             {\fgen^{(t)}(r,\pphi)}
        -1
    \right|
    \to 0
    \qquad\text{as }t\to\infty.
\end{equation}
Moreover, let \(t(u)\to\infty\) satisfy \(t(u)=o(u)\), and suppose that
\(\XX_u^{\rm gen}\) has radial--angular density
\(\fgen^{(t(u))}\). Then $\{\XX_u^{\rm gen}\}_{u>0}\in\mathcal T(P)$.
In particular, this holds for \(t(u)=u^\theta\) with
\(\theta\in(0,1)\).
\end{theorem}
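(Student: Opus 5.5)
The first claim, \eqref{eqn:heavy_vanishing_RE}, is a rewriting of Proposition~\ref{prop:conditional_tail_factorization}. For \(r>t\), write the joint density as \(f_{R,\Pphi}(r,\pphi)=p_t\,f_{R,\Pphi}(r,\pphi\mid R>t)\). By \eqref{eqn:f_tgen}, on \(\{r>t\}\) the generated density is \(\fgen^{(t)}(r,\pphi)=p_t\,s t^s r^{-(s+1)}\varphi^{(t)}(\pphi)\). Hence
\[
\frac{f_{R,\Pphi}(r,\pphi)}{\fgen^{(t)}(r,\pphi)}=\frac{f_{R,\Pphi}(r,\pphi\mid R>t)}{s t^s r^{-(s+1)}\varphi^{(t)}(\pphi)},
\]
and the supremum over \(r>t\) and all \(\pphi\) tends to zero by that proposition. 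Continuity of the density beyond \(r_0\) and positivity of \(\varphi^\star\) (Lemma~\ref{lem:phi_star_props}) ensure that \(\varphi^{(t)}>0\) for large \(t\), so the ratio is well defined.

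For tail preservation, I would fix a Borel set \(B\) with \(\mv 0\notin\operatorname{cl}(B)\) and \(\nu^\star(\partial B)=0\). Then \(B\subset B_\delta^c\) for some \(\delta>0\). Since \(t(u)=o(u)\), we have \(u\delta>t(u)\) for all large \(u\), so the event \(\{u^{-1}\XX_u^{\rm gen}\in B\}\) lies entirely in the spliced region \(\{R>t(u)\}\). On that region, \eqref{eqn:heavy_vanishing_RE} gives \(\fgen^{(t(u))}=(1+\epsilon_u(r,\pphi))^{-1}f_{R,\Pphi}\) with \(\sup|\epsilon_u|\to0\). Integrating over \(uB\) yields
\[
\mathbb P(u^{-1}\XX_u^{\rm gen}\in B)=(1+o(1))\,\mathbb P(u^{-1}\XX\in B).
\]
Multiplying by \(u^s\) and applying Lemma~\ref{lem:tail_asymp} then gives \(u^s\mathbb P(u^{-1}\XX_u^{\rm gen}\in B)\to\nu^\star(B)\). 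This is exactly the \(\mv M_0\) convergence of Definition~\ref{def:convergence_concept}, so \(\{\XX_u^{\rm gen}\}\in\mathcal T(P)\). The case \(t(u)=u^\theta\) with \(\theta<1\) satisfies \(t(u)=o(u)\) trivially.

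The main obstacle is making the step from uniform relative density error to equivalence of probabilities airtight. This needs the ratio bound to hold uniformly over the whole region \(\{r>t\}\), including arbitrarily large radii, and not merely on compacts. That is exactly what Proposition~\ref{prop:conditional_tail_factorization} supplies, so here it only has to be invoked. A secondary point is that the generated vector need not have a density in the original coordinates when \(\mathcal E\) is a proper cone. I would avoid this issue by carrying out the whole comparison in polar coordinates \((r,\pphi)\), where both laws are given by densities with respect to \(dr\,d\pphi\) on \(\{r>t(u)\}\).
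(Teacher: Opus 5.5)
Your proposal is correct and follows essentially the same route as the paper: the first claim is reduced to Proposition~\ref{prop:conditional_tail_factorization} by cancelling the common factor $p_t$ on $\{r>t\}$. Tail preservation then follows, as in the paper, by showing that $uB\subset\{R>t(u)\}$ eventually, converting the uniform relative density bound into a $(1+o(1))$ factor on $\mathbb P(u^{-1}\cdot\in B)$, and invoking Lemma~\ref{lem:tail_asymp}.
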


Theorem~\ref{thm:heavy_vanishing_RE} shows that the population splice is
asymptotically exact throughout the tail region \(\{r>t\}\), and hence does more
than preserve the limiting tail measure: it matches the tail density with
vanishing relative error. This permits the threshold at which the model is
constructed to differ from the rarer scale at which it is used.

Specifically, \(u\) is the target scale defining
\(\mathcal S(u;\mathcal J)\), while \(t(u)\) is the radial threshold used to
construct the splice. The conditions \(t(u)\to\infty\) and \(t(u)/u\to0\)
ensure that the construction is tail-based but uses a less rare exceedance
region. For example, \(t(u)=u^\theta\), \(\theta\in(0,1)\), uses directions
observed above \(t(u)\) together with Pareto radial scaling to extrapolate to
the stress scale \(u\).

At the sample level, the empirical distribution represents the body, while
SSGEN constructs the tail beyond an intermediate threshold \(t\). Motivated by
Theorem~\ref{thm:heavy_vanishing_RE}, it learns the angular law from observed
exceedances, estimates the radial tail index, and generates tail observations
by pairing a sampled direction with an independent Pareto radius.

\begin{algorithm}[h]
\small
\caption{\(\SSGEN\): Tail generation for regularly varying risk factors}
\label{algo:generate_extremes}
\KwIn{Data samples \(\XX_1,\ldots,\XX_n\), intermediate threshold \(t>0\)}
\KwOut{A generated tail observation \(H_t\)}

\textbf{I) Training step}\;
\begin{enumerate}
    \item[(i)] Compute \(R_i=\|\XX_i\|\) and form the exceedance index set
    \[
        \mathcal Y_t=\{i:R_i>t\}.
    \]

    \item[(ii)] Form the corresponding directions
    \[
        \Theta_{\mathcal Y_t}
        =
        \left\{
            \Pphi_i=\frac{\XX_i}{R_i}:i\in\mathcal Y_t
        \right\},
    \]
    and train a nonparametric angular model
    \(\widehat Q_t\) on \(\Theta_{\mathcal Y_t}\).

    \item[(iii)] Estimate \(s\) from the exceedance radii
    \(\{R_i:i\in\mathcal Y_t\}\), obtaining \(\widehat s_t\).
\end{enumerate}

\noindent\textbf{II) Generation step}\;
\begin{enumerate}
    \item[(i)] Sample \(\Pphi^\star\sim\widehat Q_t\).

    \item[(ii)] Independently sample \(R_t\) from the density
    \[
        r\mapsto
        \widehat s_t\,t^{\widehat s_t}
        r^{-(\widehat s_t+1)}\mathbf 1(r>t).
    \]

    \item[(iii)] Return \(H_t=R_t\Pphi^\star\).
\end{enumerate}
\end{algorithm}

Algorithm~\ref{algo:generate_extremes} is the sample-level counterpart of the
tail component in \eqref{eqn:f_tgen}. If a full spliced distribution is
required, the empirical body and the generated tail are combined using the
estimated tail mass
\[
    \widehat p_t=\frac{|\mathcal Y_t|}{n}.
\]
Section~\ref{sec:data_driven_guarantees} quantifies the additional error from
estimating the angular law and radial tail index.

\subsection{Asymptotic Distribution of Stress Summaries} \label{sec:stress_asympt}
The stressed law is the primitive object in our framework, but its practical
value comes from the downstream quantities it determines. Since the
distributional asymptotics are stated for the scaled factors \(u^{-1}\XX\), we
evaluate such diagnostics through the scaled loss vector
\(\mv L_u(\zz):=\bigl(u^{-1}L_1(u\zz),\ldots,u^{-1}L_K(u\zz)\bigr)\), and
represent a diagnostic \(W\) on the scaled state space as
\(W_u(\zz):=W(\mv L_u(\zz))\). Write
\(\mv L^\star(\zz):=\bigl(L_1^\star(\zz),\ldots,L_K^\star(\zz)\bigr)\). Let
\(\ZZ_u\sim\Law(u^{-1}\XX\mid\XX\in\mathcal S(u;\mathcal J))\),
\(\widetilde\ZZ_u\sim\Law(u^{-1}\widetilde\XX_u\mid\widetilde\XX_u\in
\mathcal S(u;\mathcal J))\), and
\[
    \Law(\ZZ^\star)
    :=
    \frac{\nu^\star(\cdot\cap\mathcal S^\star(\mathcal J))}
         {\nu^\star(\mathcal S^\star(\mathcal J))}.
\]
\begin{proposition}[\textbf{Convergence of stress summaries}]
\label{prop:stress_summaries}
Let \((L_1,\ldots,L_K)\in\mathcal H_1(L_1^\star,\ldots,L_K^\star)\) be regular, and suppose $\{\widetilde \XX_u\}_{u> 0} \in \mathcal T(P)$. 
Let \(W:\mathbb R^K\to\mathsf Y\) be a measurable diagnostic taking values in a
metric space \(\mathsf Y\), and set
\[
    W_u(\zz):=W(\mv L_u(\zz)),
    \qquad
    W^\star(\zz):=W(\mv L^\star(\zz)).
\]
Suppose that \(W\) is continuous at \(\mv L^\star(\zz)\) for all
\(\zz\in D_0\), where \(D_0\subset\mathcal E\) is measurable and
\(\mathbb P(\ZZ^\star\in D_0)=1\). Then
\begin{equation}\label{eqn:stress_summaries}
    W_u(\ZZ_u)\Rightarrow W^\star(\ZZ^\star),
    \qquad
    W_u(\widetilde \ZZ_u)\Rightarrow W^\star(\ZZ^\star).
\end{equation}
In particular, the conclusions  of \eqref{eqn:stress_summaries} hold when $\widetilde\XX_u = \XX_u^{\rm gen}$. 
\end{proposition}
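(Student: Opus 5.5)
This is an extended continuous mapping theorem. We combine Corollary~\ref{cor:stress_law_preserve} with the continuous convergence of the scaled loss maps in Definition~\ref{def:hom_loss}. First, \(\{\XX\}_{u>0}\) (the constant family) belongs to \(\mathcal T(P)\) by Lemma~\ref{lem:tail_asymp}. So it suffices to prove the claim for an arbitrary tail-preserving family \(\widetilde\XX_u\); the statement for \(\ZZ_u\) is the special case \(\widetilde\XX_u=\XX\). The final assertion, for \(\XX_u^{\rm gen}\), then follows from Theorem~\ref{thm:heavy_vanishing_RE}, which places \(\{\XX_u^{\rm gen}\}\) in \(\mathcal T(P)\).

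\textbf{Step 1.} Corollary~\ref{cor:stress_law_preserve} gives \(\widetilde\ZZ_u\Rightarrow\ZZ^\star\). Note also that \(W_u(\zz)=W(\mv L_u(\zz))\), where \(\mv L_u(\zz)=(u^{-1}L_j(u\zz))_{j\le K}\).

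\textbf{Step 2.} We check the hypothesis of the extended continuous mapping theorem (e.g.\ Billingsley, Theorem 5.5, or van der Vaart, Theorem 18.11). Define the exceptional set
\[
E:=\{\zz:\ \exists\,\zz_u\to\zz \text{ with } W_u(\zz_u)\not\to W^\star(\zz)\}.
\]
We need \(\mathbb P(\ZZ^\star\in E)=0\), and we show \(E\cap D_0=\varnothing\). Take \(\zz\in D_0\) and any \(\zz_u\to\zz\) with \(u\to\infty\). By \eqref{eqn:loss}, applied with \(t=u\), each coordinate satisfies \(u^{-1}L_j(u\zz_u)\to L_j^\star(\zz)\), so \(\mv L_u(\zz_u)\to\mv L^\star(\zz)\). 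Since \(W\) is continuous at \(\mv L^\star(\zz)\), we get \(W(\mv L_u(\zz_u))\to W(\mv L^\star(\zz))=W^\star(\zz)\). Hence \(E\subseteq D_0^c\), and \(\mathbb P(\ZZ^\star\in E)\le\mathbb P(\ZZ^\star\notin D_0)=0\).

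\textbf{Step 3.} The extended continuous mapping theorem then gives \(W_u(\widetilde\ZZ_u)\Rightarrow W^\star(\ZZ^\star)\) in \(\mathsf Y\).

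\textbf{Technical points.} The only delicate issue is that the extended continuous mapping theorem is usually stated for a separable metric space target and measurable maps.
\begin{itemize}
\item \emph{Measurability.} The maps \(W_u\) are measurable because \(W\) and the \(L_j\) are measurable. The map \(W^\star\) is measurable because \(W\) is measurable and \(\mv L^\star\) is continuous.
\item \emph{Separability of \(\mathsf Y\).} If \(\mathsf Y\) is not assumed separable, we can avoid the issue with a Skorokhod representation on \(\mathcal E\), which is Polish. Take versions \(\widetilde\ZZ_u'\to\ZZ^{\star\prime}\) almost surely. Step 2 then gives \(W_u(\widetilde\ZZ_u')\to W^\star(\ZZ^{\star\prime})\) almost surely on \(\{\ZZ^{\star\prime}\in D_0\}\), which has probability one. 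Almost sure convergence implies weak convergence. This holds for any metric space \(\mathsf Y\), provided the limit law is a tight Borel law; it is, since it is the image of a Borel law on a Polish space under a map continuous on a full-measure set.
\end{itemize}
I would present the Skorokhod version, since it sidesteps both technical points.

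The main obstacle, to the extent there is one, is confirming that the continuous convergence in Definition~\ref{def:hom_loss} applies along the arbitrary sequences \(\zz_u\to\zz\) that the extended continuous mapping theorem requires. It does directly, because the definition is stated for every \(t\to\infty\) and every \(\zz_t\to\zz\). In the Skorokhod formulation this becomes a pathwise statement along a continuum parameter \(u\), which is handled by passing to an arbitrary sequence \(u_n\to\infty\).
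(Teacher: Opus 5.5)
Your proposal is correct and follows essentially the same route as the paper's proof. Both arguments combine three ingredients: the weak convergence \(\widetilde\ZZ_u\Rightarrow\ZZ^\star\) from Corollary~\ref{cor:stress_law_preserve} (applied to the constant family \(\XX\) via Lemma~\ref{lem:tail_asymp}, and to \(\XX_u^{\rm gen}\) via Theorem~\ref{thm:heavy_vanishing_RE}); the continuous convergence \(\mv L_u(\zz_u)\to\mv L^\star(\zz)\) together with continuity of \(W\) at \(\mv L^\star(\zz)\) for \(\zz\in D_0\); and the extended continuous mapping theorem, which the paper cites as van der Vaart and Wellner, Theorem~1.11.1. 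Your Skorokhod variant is only a repackaging of the last step, and its tightness proviso is unnecessary: almost sure convergence of Borel-measurable random elements already implies weak convergence by dominated convergence.
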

Any diagnostic that is continuous at the limiting scaled-loss configurations
thus inherits the same first-order limit under the true and the generated law;
indicator and threshold-based diagnostics are covered whenever the limiting law
assigns no mass to the relevant discontinuity boundary. The examples below
apply this to aggregate stressed loss, loss shares, concentration, and breadth
of distress.
\begin{example}\label{eg:stress_summaries}\em
For \(\boldsymbol\ell=(\ell_1,\ldots,\ell_K)\), aggregate stressed loss, the
loss-share vector, and the HHI concentration index are
\[
    W^{\rm agg}(\boldsymbol\ell):=\sum_{j=1}^K \ell_j,
    \qquad
    W^{\rm share}(\boldsymbol\ell)
    :=\left(\frac{\ell_i}{\sum_{j}\ell_j}\right)_{i=1}^K,
    \qquad
    W^{\rm HHI}(\boldsymbol\ell):=\sum_{i=1}^K
    \bigl(W_i^{\rm share}(\boldsymbol\ell)\bigr)^2,
\]
the share vector being defined arbitrarily when the denominator vanishes
\cite{acharya2017measuring,brownlees2017srisk,herfindahl1950concentration}.
For \(\eta\in(0,1)\), the breadth statistic
\[
    W_\eta^{\rm breadth}(\boldsymbol\ell)
    :=\sum_{i=1}^K\mathbf 1\!\left\{W_i^{\rm share}(\boldsymbol\ell)>\eta\right\}
\]
counts components with economically significant loss shares
\cite{greenwood2015vulnerable}. As the loss shares are scale-free, breadth
depends on \(\ZZ^\star\) only through its direction, and hence on the angular
part of \(\nu^\star\) rather than radial severity: it is a diagnostic of
extremal dependence. Collect
\(W:=(W^{\rm agg},W^{\rm share},W^{\rm HHI},W_\eta^{\rm breadth})\). If the
limiting total loss is positive \(\ZZ^\star\)-almost surely and no limiting
share has an atom at \(\eta\), the discontinuity set of \(W\) is
\(\ZZ^\star\)-null, and Proposition~\ref{prop:stress_summaries} applies.
\end{example}
The diagnostics above concern the distribution of stress outcomes, rather than
the plausibility of individual generated scenarios. For every continuity set
\(B\), Proposition~\ref{prop:stress_summaries} gives
$\mathbb P\bigl(W_u(\ZZ_u)\in B\bigr)
    - 
    \mathbb P\bigl(W_u(\widetilde\ZZ_u)\in B\bigr)
    \to 0.$
Thus, SSGEN recovers the asymptotic probabilities of large aggregate loss, high
concentration, broad distress, and related stress outcomes. This requires the joint tail law. In particular, the probability of broad
distress under system-wide stress depends on the conditional angular measure
induced by restricting \(\nu^\star\) to
\(\mathcal S^\star(\mathcal J)\). Marginal tails and pairwise co-risk summaries  do not generally determine this measure.
Appendix~\ref{app:matched} illustrates the distinction using two reinsurance
networks with identical marginal tails and pairwise co-stress probabilities,
but markedly different probabilities of system-wide distress. By estimating
the angular law from intermediate exceedances, SSGEN recovers this probability
at the rate established in Corollary~\ref{cor:stress_summary_dd}, even when the
joint-stress event is too rare to estimate by direct conditioning.

\section{Asymptotics for Reverse Stress Testing}\label{sec:RST_pop}
Unlike the summaries above, reverse stress testing is determined by likelihood
maximizers and is therefore not controlled by the limiting tail measure alone.
Tail preservation is a measure-level property: it neither requires the
approximating laws to be absolutely continuous nor controls their densities
when they exist. Thus, for a family in \(\mathcal T(P)\), the reverse-stress
objective need not even be defined. Even when it is defined, convergence of
the stressed laws leaves the densities unconstrained pointwise, and their
maximizers need not converge.

We first study this requirement at the population level. At large stress levels,
the feasible region lies in the far tail, and it is therefore natural to formulate
the problem on a scaled state space. Throughout this subsection, \(f_Q\)
denotes the density of the risk-factor vector under \(Q\). Let
\begin{equation}\label{eqn:scaled_versions}
    \Gamma_u(\mathcal J)
    :=
    u^{-1}\mathcal S(u;\mathcal J),
    \qquad
    \Psi_u^{(Q)}(\zz)
    :=
    u^{s+d}f_Q(u\zz),
\end{equation}
and
\[
    \mathcal Y_u(\mathcal J;Q)
    :=
    \arg\max_{\zz\in\Gamma_u(\mathcal J)}
    \Psi_u^{(Q)}(\zz).
\]
The map \(\zz\mapsto u\zz\) is a bijection from
\(\Gamma_u(\mathcal J)\) to \(\mathcal S(u;\mathcal J)\), and multiplication of
the objective by \(u^{s+d}>0\) does not change its maximizers. Hence
\[
    \mathcal Y_u(\mathcal J;Q)
    =
    u^{-1}\mathcal R(u;\mathcal J;Q).
\]
Thus, \(\mathcal Y_u(\mathcal J;Q)\) is the reverse-stress solution set viewed on
the stress scale. We assume throughout this section that the finite-level
reverse-stress optimization problems under consideration attain their maxima
for all sufficiently large \(u\).

Replacing the feasible region \(\Gamma_u(\mathcal J)\) and the scaled objective
\(\Psi_u^{(P)}\) by their respective limits suggests the candidate limiting
problem
\begin{equation}\label{eqn:Y_star}
    \mathcal Y^\star(\mathcal J)
    :=
    \arg\max_{\zz\in\mathcal S^\star(\mathcal J)}
    \varphi^\star(\zz).
\end{equation}
This formulation identifies the separate density-level requirement imposed by
reverse stress testing. Accordingly, rather than imposing tail preservation, we
require an approximation \(\XX_u\sim P_u\) to reproduce the scaled density
objective on the region relevant to the optimization. Together with stability
of the limiting problem, this yields convergence of the approximate
reverse-stress solutions to \(\mathcal Y^\star(\mathcal J)\). The class below
formalizes this density-level criterion; membership in \(\mathcal T(P)\) is not
required.
\begin{definition}[\textbf{Density-preserving families}]\label{def:density_class}
    We say that approximations $\XX_u\sim P_u$ on $\mathcal E$ with scaled densities $\Psi_u^{(P_u)}$ defined in \eqref{eqn:scaled_versions} are density preserving with respect to $P$ if
    \begin{enumerate}
        \item[\textup{(D1)}] $\Psi_u^{(P_u)}(\zz_u) \to  \varphi^\star(\zz)$ whenever $\zz_u\to \zz\in \mathcal E\setminus \{\mv 0\}$ and 
        \item[\textup{(D2)}] $\displaystyle\lim_{M\to\infty}\limsup_{u\to\infty}\,
    \sup_{\|\zz\|\ge M} \Psi_u^{(P_u)}(\zz)=0$ 
    \end{enumerate}
    We denote the collection of all density preserving approximations of $P$ by $\mathcal D(P)$.
\end{definition}

Although (D1) yields uniform convergence of the scaled objectives on compact
subsets of \(\mathcal E\setminus\{\mathbf 0\}\), the feasible regions
\(\Gamma_u(\mathcal J)\) are not compact. We therefore first localize the
finite-level solution sets to a common compact region, on which standard
perturbation arguments apply; see \cite[Chapter~4]{bonnans2013perturbation}.
This requires ruling out convergence of the scaled maximizers to the origin
and their escape to infinity. The former is excluded by the geometry of the
feasible regions, whereas the latter requires (D2).

Indeed, (D1) alone permits the approximate density to develop peaks at radii
growing faster than the stress level. Such peaks may carry vanishing probability
mass but still attract the maximizers beyond every compact set on which (D1)
controls the objective. Condition (D2) excludes these remote peaks. It holds for
the data-generating law by Assumption~\ref{assume:ht_data}, and for the population
SSGEN splice because its radial density is exactly Pareto above the splice
threshold and its angular densities are uniformly bounded. The next lemma
establishes localization for any family in \(\mathcal D(P)\).
\begin{lemma}[\textbf{Localization of maximizers}]
\label{lem:containment} Let
\((L_1,\ldots,L_K)\in
\mathcal H_1(L_1^\star,\ldots,L_K^\star)\) be regular. 
Let $\{\XX_u\}\in\mathcal D(P)$. There exist a compact set
$\mathcal K_{\mathcal J}\subset\mathcal E\setminus\{\mathbf 0\}$ and
$u_0<\infty$ such that, for all $u\ge u_0$,
\[
    \mathcal Y_u(\mathcal J;P)
    \cup \mathcal Y_u(\mathcal J;P_u)\subseteq\mathcal K_{\mathcal J}.
\]
\end{lemma}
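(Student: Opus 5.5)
The plan is to trap the scaled maximizers between a lower bound on the optimal value and two exclusion zones: a small ball around the origin, and the far region $\{\|\zz\|\ge M\}$. I treat $P$ and $P_u$ together, because $P$ itself is density preserving. The constant family $P_u\equiv P$ satisfies (D1) by the uniform convergence in Assumption~\ref{assume:ht_data} on compact subsets of $\mathcal E\setminus\{\mv 0\}$. It satisfies (D2) because homogeneity (Lemma~\ref{lem:phi_star_props}) together with the uniform convergence on the unit-sphere shell gives $\Psi_u^{(P)}(\zz)\le C\|\zz\|^{-(s+d)}$ for $\|\zz\|\ge r_0/u$. So it suffices to prove the statement for a single family $\{P_u\}\in\mathcal D(P)$ and then take the union of the two compact sets.

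\textbf{Step 1 (no collapse to the origin).} I would show there is $\delta>0$ such that $\Gamma_u(\mathcal J)\cap B_\delta=\varnothing$ for all large $u$. Suppose not. Then there are $u_k\to\infty$ and $\zz_k\in\Gamma_{u_k}(\mathcal J)$ with $\zz_k\to\mv 0$. Fix $j\in\mathcal J$, so $u_k^{-1}L_j(u_k\zz_k)\ge 1$. Since $\zz_k\to\mv 0$ we cannot apply \eqref{eqn:loss} at the limit point directly. Instead I write $u_k\zz_k=t_k\yy_k$ with $t_k=u_k\|\zz_k\|$ and $\|\yy_k\|=1$, and pass to a subsequence $\yy_k\to\yy$.
\begin{itemize}
\item If $t_k\to\infty$, then \eqref{eqn:loss} gives $t_k^{-1}L_j(t_k\yy_k)\to L_j^\star(\yy)$, which is bounded on the sphere. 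Hence $u_k^{-1}L_j(u_k\zz_k)=\|\zz_k\|\,t_k^{-1}L_j(t_k\yy_k)\to0$, a contradiction.
\item If $t_k$ stays bounded, I need local boundedness of $L_j$ on bounded sets. This follows from \eqref{eqn:loss} (uniform convergence on compacta) and the standing continuity/measurability of the losses in the examples. I would state this mildly as an assumption if needed. Then $u_k^{-1}L_j(u_k\zz_k)\to 0$ again.
\end{itemize}

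\textbf{Step 2 (lower bound on the optimal value).} By regularity, $\mathcal S^\star(\mathcal J)$ has positive Lebesgue measure. I would pick $\zz_0$ with $L_j^\star(\zz_0)>1$ for all $j\in\mathcal J$. Such a point exists: take a point of $\mathcal S^\star$ and scale it by $2$, using homogeneity of degree one. By uniform convergence of $u^{-1}L_j(u\cdot)$ near $\zz_0$, we get $\zz_0\in\Gamma_u(\mathcal J)$ for large $u$. By (D1), $\Psi_u^{(P_u)}(\zz_0)\to\varphi^\star(\zz_0)=:2\eta>0$, which is positive by Lemma~\ref{lem:phi_star_props}(ii). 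Hence $\sup_{\Gamma_u}\Psi_u^{(P_u)}\ge\eta$ for $u\ge u_1$.

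\textbf{Step 3 (no escape to infinity).} By (D2), choose $M$ and $u_2$ with $\sup_{\|\zz\|\ge M}\Psi_u^{(P_u)}(\zz)<\eta$ for $u\ge u_2$. Then any maximizer, which exists by the standing attainment assumption, must lie in $\{\delta\le\|\zz\|\le M\}\cap\mathcal E=:\mathcal K_{\mathcal J}$. This set is compact, because $\mathcal E$ is a closed cone, and it avoids $\mv 0$.

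\textbf{Main obstacle.} The delicate point is Step 1. Condition \eqref{eqn:loss} only controls sequences converging to nonzero points, or more precisely to points $\zz$ at the limit scale. Ruling out feasible points shrinking to the origin therefore needs the polar rescaling argument above, plus some local boundedness of $L_j$ when $u\|\zz_k\|$ stays bounded. I would try to extract that boundedness from \eqref{eqn:loss} itself via the uniformity on compacta. If this fails, I would add it as an explicit mild hypothesis.
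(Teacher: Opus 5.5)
Your proof is correct and has the same structure as the paper's. The paper's three ingredients are: a positive lower bound on the optimal values via (D1) at a feasible point, exclusion of $\{\|\zz\|\ge M\}$ via the decay bound for $\Psi_u^{(P)}$ and (D2), and exclusion of a neighbourhood of the origin via Lemma~\ref{lem:0-not-contained}; you use the fixed strictly feasible point $2\yy$ where the paper uses a recovery sequence from Lemma~\ref{lem:homogenous_epi_conv}.

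Your ``main obstacle'' is not real, since Definition~\ref{def:hom_loss} allows the limit point $\zz=\mv 0$ with $L_j^\star(\mv 0)=0$, so $u_k^{-1}L_j(u_k\zz_k)\to0$ directly for any $\zz_k\to\mv 0$ and no extra hypothesis is needed. Separately, state the decay bound for $\|\zz\|\ge1$ and large $u$, as the paper does, rather than for $\|\zz\|\ge r_0/u$, because continuity of $f_{\XX}$ on $B_{r_0}^c\cap\mathcal E$ does not bound it near $\|\xx\|=r_0$.
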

With the maximizers confined to \(\mathcal K_{\mathcal J}\), the uniform
convergence furnished by (D1) now applies on the whole region relevant to the
optimization. Moreover, asymptotic homogeneity of the losses implies that the
scaled feasible regions \(\Gamma_u(\mathcal J)\) converge to the limiting
stress region \(\mathcal S^\star(\mathcal J)\). The localized constrained
likelihood problems associated with any family in \(\mathcal D(P)\) therefore
admit a well-defined first-order limit. The next theorem shows that this limiting problem governs the
reverse-stress solutions of every density-preserving family. 
\begin{theorem}\label{thm:rst_convergence}
Suppose Assumption~\ref{assume:ht_data} holds, and let
\((L_1,\ldots,L_K)\in
\mathcal H_1(L_1^\star,\ldots,L_K^\star)\) be regular. Then, for any $\{\XX_u\}\in \mathcal D(P)$ with $\XX_u\sim P_u$,
\begin{subequations}\label{eqn:rst_main}
\begin{align}
    \limsup_{u\to\infty}\mathcal Y_u(\mathcal J;P)
    &\subseteq \mathcal Y^\star(\mathcal J),
    \qquad
    \limsup_{u\to\infty}\mathcal Y_u(\mathcal J;P_{u})
    \subseteq \mathcal Y^\star(\mathcal J),
    \label{eqn:rst_main_gen}\\
    \frac{
    \sup_{\zz\in \Gamma_u(\mathcal J)}
    f_{P_{u}}(u\zz)
    }{
    \sup_{\zz\in \Gamma_u(\mathcal J)}
    f_P(u\zz)
    }
    &\to 1 \quad \text{as } u\to\infty.
    \label{eqn:rst_main_value}
\end{align}
\end{subequations}
If, in addition, \(\mathcal Y^\star(\mathcal J)=\{\zz^\star\}\), then any selections
\[
\zz_u\in \mathcal Y_u(\mathcal J;P),
\qquad
\tilde \zz_{u}\in
\mathcal Y_u(\mathcal J;P_{u})
\]
satisfy
\[
    \zz_u\to\zz^\star,
    \qquad
    \tilde \zz_{u}\to\zz^\star .
\]
\end{theorem}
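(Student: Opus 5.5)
The plan is to treat the result as a perturbation statement for constrained maximization on a fixed compact set, in the sense of \cite[Chapter~4]{bonnans2013perturbation}.

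\emph{Localization and membership of $P$.} First note that the true family itself belongs to $\mathcal D(P)$. Property (D1) for $P_u\equiv P$ is Assumption~\ref{assume:ht_data}: if $\zz_u\to\zz\neq\mv 0$, the $\zz_u$ eventually lie in a compact set avoiding the origin, and uniform convergence together with continuity of $\varphi^\star$ (Lemma~\ref{lem:phi_star_props}) give $\Psi^{(P)}_u(\zz_u)\to\varphi^\star(\zz)$. Property (D2) is the remark before Lemma~\ref{lem:containment}. So both $\mathcal Y_u(\mathcal J;P)$ and $\mathcal Y_u(\mathcal J;P_u)$ can be handled by the same argument. Lemma~\ref{lem:containment} then places all maximizers in a compact $\mathcal K_{\mathcal J}\subset\mathcal E\setminus\{\mv 0\}$ for $u\ge u_0$. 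By (D1) and a diagonal argument, $\Psi_u:=\Psi_u^{(P_u)}\to\varphi^\star$ uniformly on $\mathcal K_{\mathcal J}$.

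\emph{Set convergence of feasible regions.} Next I show that $\Gamma_u(\mathcal J)\cap\mathcal K_{\mathcal J}$ converges, in the Painlev\'e--Kuratowski sense, to a set related to $\mathcal S^\star(\mathcal J)$.

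The outer limit is easy. If $\zz_u\in\Gamma_u(\mathcal J)$ and $\zz_u\to\zz$, then $u^{-1}L_j(u\zz_u)\ge1$, and Definition~\ref{def:hom_loss} gives $L_j^\star(\zz)\ge1$.

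The inner limit is the main obstacle. A point $\zz$ with $L_j^\star(\zz)=1$ need not be approximable from $\Gamma_u$. I would avoid full inner convergence and use only what the argument needs: for each $\zz$ in $\mathcal S^\star(\mathcal J)$, a recovery sequence $\zz_u\in\Gamma_u(\mathcal J)$ with $\varphi^\star(\zz_u)\to\varphi^\star(\zz)$. The construction is radial scaling: take $\zz_u=(1+\epsilon)\zz$. Then $u^{-1}L_j(u(1+\epsilon)\zz)\to(1+\epsilon)L_j^\star(\zz)>1$, so $\zz_u$ is feasible for large $u$. By homogeneity, $\varphi^\star((1+\epsilon)\zz)=(1+\epsilon)^{-(s+d)}\varphi^\star(\zz)$. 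A diagonal choice $\epsilon_u\downarrow0$ then gives the recovery sequence. This also shows that $\sup_{\mathcal S^\star}\varphi^\star$ is attained on a compact set, because $\varphi^\star$ decays radially and $\mathcal S^\star$ is bounded away from the origin. The reason is that the $L_j^\star$ are continuous and homogeneous of degree one, hence bounded on the unit sphere.

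\emph{Value convergence.} Write $v_u=\sup_{\Gamma_u}\Psi_u$ and $v^\star=\max_{\mathcal S^\star}\varphi^\star$, which is positive by Lemma~\ref{lem:phi_star_props}.
\begin{itemize}
\item[] Lower bound: evaluating at the recovery sequence of a maximizer $\zz^\star$ gives $\liminf v_u\ge v^\star$, using (D1).
\item[] Upper bound: take maximizers $\zz_u\in\mathcal K_{\mathcal J}$. Any limit point $\bar\zz$ lies in $\mathcal S^\star(\mathcal J)$ by the outer limit. Uniform convergence gives $\Psi_u(\zz_u)-\varphi^\star(\zz_u)\to0$, so $\limsup v_u\le v^\star$.
\end{itemize}
The same sandwich shows $\varphi^\star(\bar\zz)=v^\star$, i.e.\ $\bar\zz\in\mathcal Y^\star(\mathcal J)$, which is \eqref{eqn:rst_main_gen}. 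This holds for both $P$ and $P_u$. Since $u^{s+d}\sup_{\Gamma_u}f_Q(u\cdot)=v_u^{(Q)}\to v^\star>0$ for $Q=P$ and $Q=P_u$, the ratio in \eqref{eqn:rst_main_value} tends to $1$. The scaling factors cancel in the ratio.

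\emph{Uniqueness case.} When $\mathcal Y^\star=\{\zz^\star\}$, any selection lies in the compact $\mathcal K_{\mathcal J}$, and every subsequential limit equals $\zz^\star$. Hence the whole sequence converges to $\zz^\star$, for both $P$ and $P_u$.
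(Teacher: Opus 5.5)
Your proposal is correct and follows essentially the same route as the paper: localization via Lemma~\ref{lem:containment}, uniform convergence of the scaled densities on $\mathcal K_{\mathcal J}$, an outer-limit inclusion plus a radial-scaling recovery sequence for the feasible sets (the recovery construction of Lemma~\ref{lem:homogenous_epi_conv}), and then value/argmax stability. The difference is in the last step. You check stability directly with a sandwich argument, whereas the paper cites epi-convergence of $-\Psi_u+\iota_{\Gamma_u(\mathcal J)}$ together with Rockafellar--Wets Theorem~7.46 and Bonnans--Shapiro Proposition~4.4. Incidentally, your radial-scaling construction already shows that every point of $\mathcal S^\star(\mathcal J)$ lies in the inner limit of $\Gamma_u(\mathcal J)$, so your hedge about inner convergence is unnecessary.
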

Lemma~\ref{lem:ssgen_in_D} below demonstrates that the SSGEN approximation is
density-preserving, and therefore inherits the reverse-stress guarantees of
Theorem~\ref{thm:rst_convergence}.
\begin{lemma}[\textbf{SSGEN is density-preserving}]\label{lem:ssgen_in_D}
Suppose Assumption~\ref{assume:ht_data} holds, and let \(t(u)=u^\theta\) for
some \(\theta\in(0,1)\). Then \(\XX_u^{\rm gen}\in\mathcal D(P)\).
\end{lemma}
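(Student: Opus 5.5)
We need to verify (D1) and (D2) for the Cartesian density of $\XX_u^{\rm gen}$. Write $t=t(u)=u^\theta$, so that $t/u\to0$.

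\textbf{Step 1: Cartesian form of the generated density.} From \eqref{eqn:f_tgen}, the Cartesian density is
\[
f_{\rm gen}^{(t)}(\xx)=f_{\XX}(\xx)\mathbf 1(\|\xx\|<t)+p_t\,s\,t^s\|\xx\|^{-(s+d)}\varphi^{(t)}(\xx/\|\xx\|)\mathbf 1(\|\xx\|>t),
\]
where the Jacobian $r^{d-1}$ converts $r^{-(s+1)}$ into $\|\xx\|^{-(s+d)}$. Since $t/u\to0$, for $\zz$ in any compact subset of $\mathcal E\setminus\{\mv 0\}$ we have $\|u\zz\|>t$ for large $u$. Hence
\[
\Psi_u^{(P_u)}(\zz)=p_t\,s\,t^s\,\|\zz\|^{-(s+d)}\varphi^{(t)}(\zz/\|\zz\|).
\]
The plan is to show that $p_t s t^s\varphi^{(t)}(\pphi)\to\varphi^\star(\pphi)$ uniformly in $\pphi\in\Theta_{\mathcal E}$.

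\textbf{Step 2: uniform convergence of the angular factor.} I would derive this from Assumption~\ref{assume:ht_data}. We have
\[
p_t\varphi^{(t)}(\pphi)=\int_t^\infty f_{\XX}(r\pphi)\,r^{d-1}\,dr .
\]
Substituting $r=t\rho$ gives
\[
t^s p_t\varphi^{(t)}(\pphi)=\int_1^\infty t^{s+d}f_{\XX}(t\rho\pphi)\,\rho^{d-1}\,d\rho .
\]
On each compact range $\rho\in[1,M]$, uniform convergence gives $t^{s+d}f_{\XX}(t\rho\pphi)\to\rho^{-(s+d)}\varphi^\star(\pphi)$, using the homogeneity in Lemma~\ref{lem:phi_star_props}. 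This yields
\[
\int_1^\infty \rho^{-(s+1)}\varphi^\star(\pphi)\,d\rho=\varphi^\star(\pphi)/s .
\]

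\textbf{Main obstacle: the tail integral over $\rho>M$.} This is the delicate step, and I expect to handle it as follows. The uniform convergence on the unit sphere gives, for large $r$, $r^{s+d}f_{\XX}(r\pphi)\le 2\max\varphi^\star$. This holds for all $r\ge r_1$ and all $\pphi$: apply \eqref{eqn:ht_data} on the compact set $\Theta_{\mathcal E}$ with $t=r$. Consequently
\[
t^{s+d}f_{\XX}(t\rho\pphi)\le 2\max\varphi^\star\,\rho^{-(s+d)}\quad\text{for }t\rho\ge r_1 ,
\]
so the integrand is dominated uniformly and the tail beyond $M$ is $O(M^{-s})$. Thus $t^sp_t\varphi^{(t)}\to\varphi^\star/s$ uniformly. (This step is essentially the content behind Proposition~\ref{prop:conditional_tail_factorization}, which could be invoked directly instead.)

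\textbf{Step 3: verifying (D1).} Let $\zz_u\to\zz\neq\mv 0$. Then $\zz_u/\|\zz_u\|\to\zz/\|\zz\|$. Uniform convergence combined with continuity of $\varphi^\star$ (Lemma~\ref{lem:phi_star_props}(ii)) gives
\[
\Psi_u^{(P_u)}(\zz_u)\to\|\zz\|^{-(s+d)}\varphi^\star(\zz/\|\zz\|)=\varphi^\star(\zz).
\]

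\textbf{Step 4: verifying (D2).} For $\|\zz\|\ge M$ and $u$ large, we again have $\|u\zz\|>t$. By Step 2, $\sup_{\pphi}p_tst^s\varphi^{(t)}(\pphi)\le 2\max\varphi^\star$ eventually. Therefore
\[
\sup_{\|\zz\|\ge M}\Psi_u^{(P_u)}(\zz)\le 2\max\varphi^\star\,M^{-(s+d)},
\]
which tends to $0$ as $M\to\infty$. This completes the verification that $\XX_u^{\rm gen}\in\mathcal D(P)$.
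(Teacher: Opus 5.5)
Your proof is correct, but it is organized differently from the paper's. The paper never writes the generated density out. It transfers the relative-error bound of Theorem~\ref{thm:heavy_vanishing_RE} to Cartesian coordinates, $\sup_{\|\yy\|>t(u)}|f_{P_{\rm gen}^{(t(u))}}(\yy)/f_P(\yy)-1|\to0$. It then obtains (D1) from $\Psi_u^{\rm gen}(\zz_u)=(1+o(1))\Psi_u^{P}(\zz_u)$ together with Assumption~\ref{assume:ht_data}. It obtains (D2) by dominating $\Psi_u^{\rm gen}$ by $2\Psi_u^{P}$ and reusing the decay bound $\Psi_u^P(\zz)\le C\|\zz\|^{-(s+d)}$ from the proof of Lemma~\ref{lem:containment}.

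You instead use the fact that on $\{\|\zz\|>t/u\}$ the scaled SSGEN density is exactly $-(s+d)$-homogeneous, $\Psi_u^{(P_u)}(\zz)=\bigl[p_t s t^s\varphi^{(t)}(\zz/\|\zz\|)\bigr]\|\zz\|^{-(s+d)}$. Both conditions then reduce to uniform convergence of this single angular factor to $\varphi^\star$. That is the same computation underlying Proposition~\ref{prop:conditional_tail_factorization} and the bias step of Theorem~\ref{thm:rate_TV}. Your truncation at $M$ is not needed: with $\delta_t:=\sup_{r\ge t,\pphi}|r^{s+d}f_{\XX}(r\pphi)-\varphi^\star(\pphi)|\to0$, your substitution gives directly $\sup_{\pphi}|t^sp_t\varphi^{(t)}(\pphi)-\varphi^\star(\pphi)/s|\le\delta_t/s$.

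The paper's route is shorter once Theorem~\ref{thm:heavy_vanishing_RE} is available. It also gives, as a byproduct, uniform relative closeness of the generated density to $f_P$ on the whole tail. Your route is self-contained and makes (D2) transparent: it is exactly the heuristic ``exactly Pareto radius, uniformly bounded angular density'' stated before Lemma~\ref{lem:containment}. It is also the version that survives when $s$ is replaced by an estimate $\widehat s$. There the factor $(r/t)^{-(\widehat s-s)}$ destroys uniform relative closeness to $f_P$ for large $r$. This is why the paper itself switches to your explicit-form argument in Step~2 of the proof of Lemma~\ref{lem:rst_dd_localization}.
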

By Lemma~\ref{lem:ssgen_in_D}, Theorem~\ref{thm:rst_convergence} applies to
SSGEN. At the population level, the splice therefore preserves the first-order
likelihood geometry of reverse stress testing, and any accumulation point of
generated reverse-stress solutions lies in the limiting set of most plausible
stress mechanisms under the true model. The value convergence in
\eqref{eqn:rst_main_value} provides the corresponding likelihood-level
guarantee. When several limiting mechanisms coexist, the generated problem may
select a different maximizer, but such a maximizer attains the same
first-order likelihood and is therefore not an implausible one. If the
limiting mechanism is unique, both the true and generated solutions converge
to it.

\subsection{Numerical Illustration} 
\label{sec:population_numerics}
We illustrate the population-level results above in a simple reinsurance network.
Consider the setting of Example~\ref{eg:reinsurance} with \(K=5\) agents and
\(d=2\) object-loss risk factors. Let
\[
A^\intercal=
\begin{pmatrix}
1.0 & 0.7 & 0.5 & 0.3 & 0\\
0 & 0.3 & 0.5 & 0.7 & 1.0
\end{pmatrix},
\qquad
\mathbf c=(0.5,\,0.3,\,0.4,\,0.3,\,0.5).
\]
We take \(\XX\) to have radial-angular density
\[
f_{R,\Pphi}(r,\pphi)
=
\alpha r^{-\alpha-1}
\left[
(1-\varepsilon r^{-\gamma})\,g^\star(\pphi)
+
\varepsilon r^{-\gamma}p(\pphi)
\right],
\qquad r>1,
\]
where \(g^\star\) and \(p\) are Beta densities on \([0,\pi/2]\), and
\[
    \varepsilon=0.5,\qquad \alpha=2.5,\qquad \gamma=3.
\]
In this model, Assumption~\ref{assume:ht_data} holds with \(s=\alpha\), and the
conditional angular law converges to \(g^\star\). The term involving \(p\) creates
a finite-threshold angular perturbation, so the example is designed to show how the
population SSGEN splice improves as the target stress level moves deeper into the
tail.  We focus on the event in which all agents are simultaneously stressed:
\[
\mathcal J=[5], \qquad
\mathcal S(u; {\mathcal J})
=
\{\zz: L_j(\zz)\ge u \text{ for all } j\in[5]\}.
\]
The population SSGEN splice is formed at the intermediate threshold \(t=3\), for
which \(p_t=P(R>t)\approx 0.06\). We compare the true conditional law and the
SSGEN conditional law at three stress levels \(u=5,15,25\), chosen so that
\(P(\XX\in\mathcal S(u;\mathcal J))\) is approximately of order \(10^{-2}\),
\(10^{-3}\), and \(10^{-4}\), respectively. At these finite levels, the
separation between the intermediate and target scales may be parameterized by
an effective exponent \(\theta\in(0,1)\) through \(t=u^{\theta}\). The three
stress levels correspond to \(\theta\approx 0.68,0.41,\) and \(0.34\),
respectively. Thus, the experiment illustrates increasingly deep extrapolation
from a fixed intermediate threshold, paralleling the separation
\(t(u)=u^\theta=o(u)\) used in Theorem~\ref{thm:heavy_vanishing_RE}.

Figure~\ref{fig:stress_comparison} compares scaled samples from the true stressed
law and from the population SSGEN approximation. As \(u\) increases, the scaled
stress region approaches its limiting shape, and the two conditional sample clouds
align more closely. This illustrates the conditional-law convergence in
Corollary~\ref{cor:stress_law_preserve}: the intermediate-threshold splice captures
the relevant joint-tail geometry even when the target stress event is substantially
rarer than the exceedance level used to construct the splice.

\begin{figure}[htbp]
    \centering
    \begin{subfigure}[b]{0.32\textwidth}
        \centering
        \includegraphics[width=\textwidth]{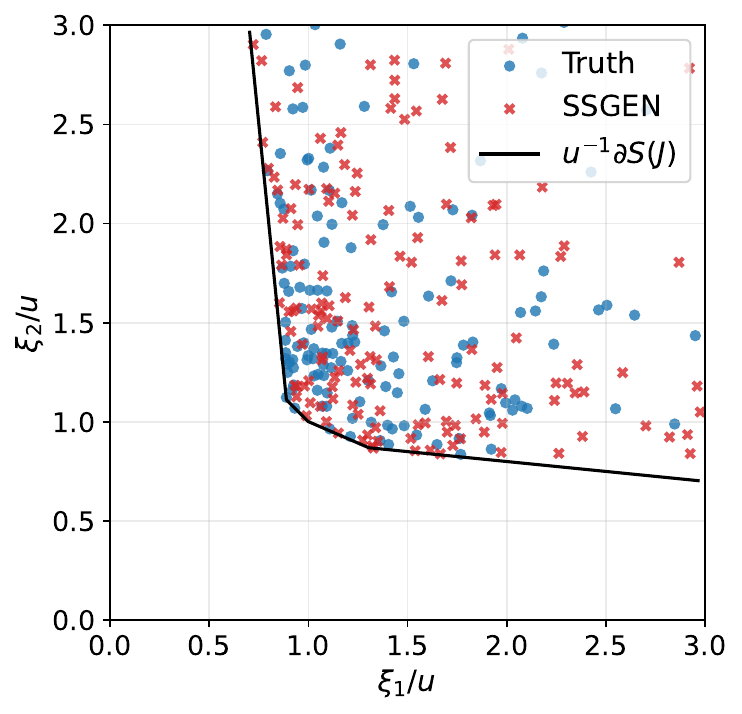}
        \caption{\(u=5\)}
        \label{fig:stress_u5}
    \end{subfigure}
    \hfill
    \begin{subfigure}[b]{0.32\textwidth}
        \centering
        \includegraphics[width=\textwidth]{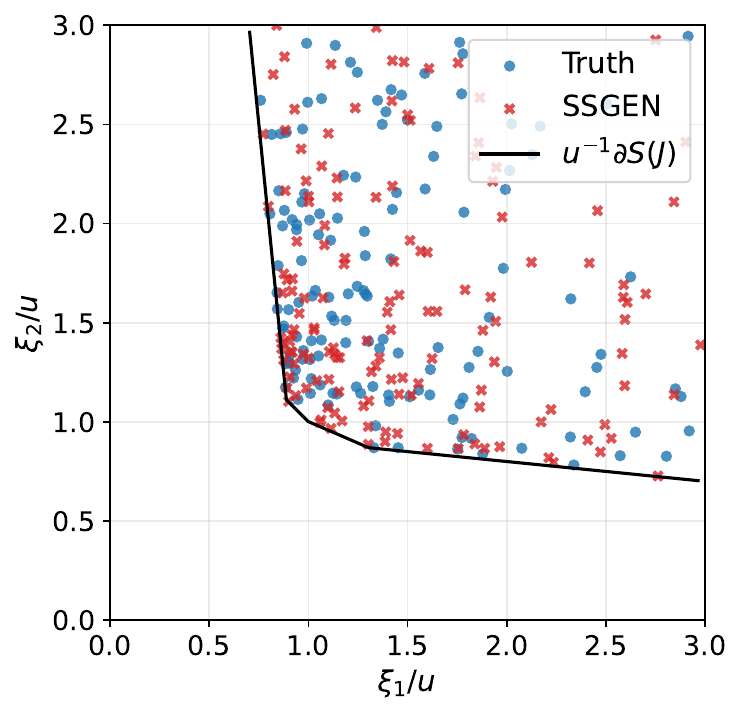}
        \caption{\(u=15\)}
        \label{fig:stress_u15}
    \end{subfigure}
    \hfill
    \begin{subfigure}[b]{0.32\textwidth}
        \centering
        \includegraphics[width=\textwidth]{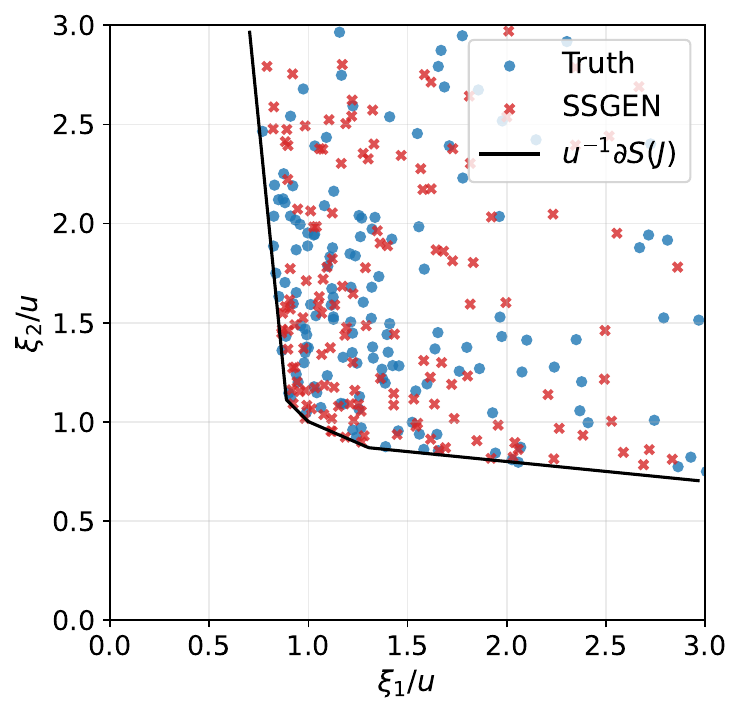}
        \caption{\(u=25\)}
        \label{fig:stress_u25}
    \end{subfigure}
    \caption{Scaled stressed samples under the true model and the population SSGEN
    approximation, together with the scaled stress boundary
    \(u^{-1}\partial\mathcal S(u; {\mathcal J})\). The SSGEN splice is formed
    at the intermediate threshold \(t=3\), where \(p_t\approx0.06\), while the target
    stress probabilities are approximately \(10^{-2}\), \(10^{-3}\), and \(10^{-4}\)
    across the three panels.}
    \label{fig:stress_comparison}
\end{figure}

Next, we examine downstream summaries of the stressed system state. We use the
aggregate stressed loss \(W^{\rm agg}\), the breadth statistic
\(W_{\eta}^{\rm breadth}\) with \(\eta=0.25\), and the HHI-type concentration
index \(W^{\rm HHI}\), as defined in Example~\ref{eg:stress_summaries}. At
\(u=15\), the joint stress event has probability approximately \(10^{-3}\).
Figure~\ref{fig:stress_summary_cdfs} compares the corresponding conditional
distributions under the true model and under the population SSGEN approximation.
The close agreement across the three panels illustrates the transfer principle in
Proposition~\ref{prop:stress_summaries}: once the stressed law is approximated
correctly, the induced laws of severity, breadth, and concentration summaries are
also preserved. Equivalently, SSGEN preserves the plausibilities of downstream
stress patterns, such as unusually large aggregate loss, broad distress across
agents, or high loss concentration.

\begin{figure}[htbp]
    \centering
    \begin{subfigure}[b]{0.32\textwidth}
        \centering
        \includegraphics[width=\textwidth]{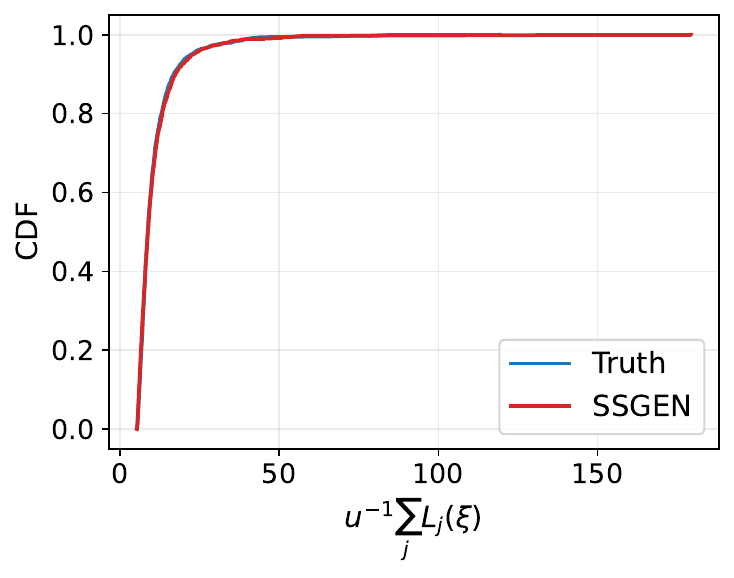}
        \caption{Aggregate stressed loss}
        \label{fig:cdf_aggregate_u15}
    \end{subfigure}
    \hfill
    \begin{subfigure}[b]{0.32\textwidth}
        \centering
        \includegraphics[width=\textwidth]{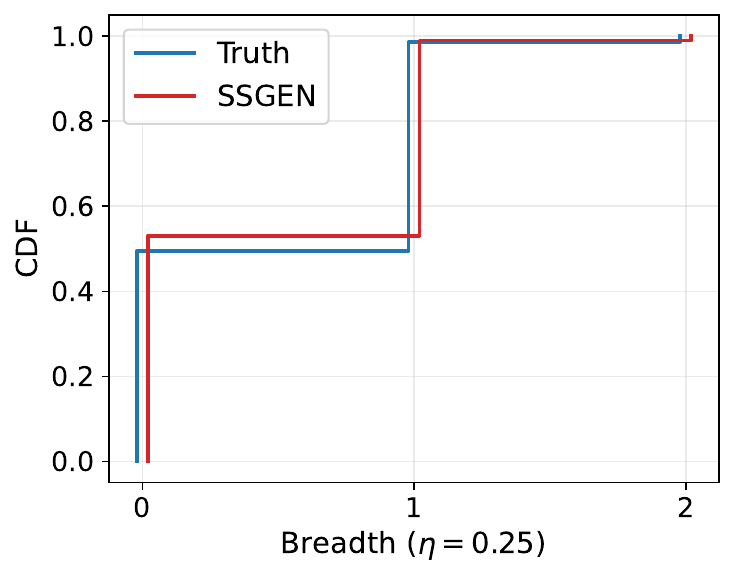}
        \caption{Breadth at \(\eta=0.25\)}
        \label{fig:cdf_breadth_u15}
    \end{subfigure}
    \hfill
    \begin{subfigure}[b]{0.32\textwidth}
        \centering
        \includegraphics[width=\textwidth]{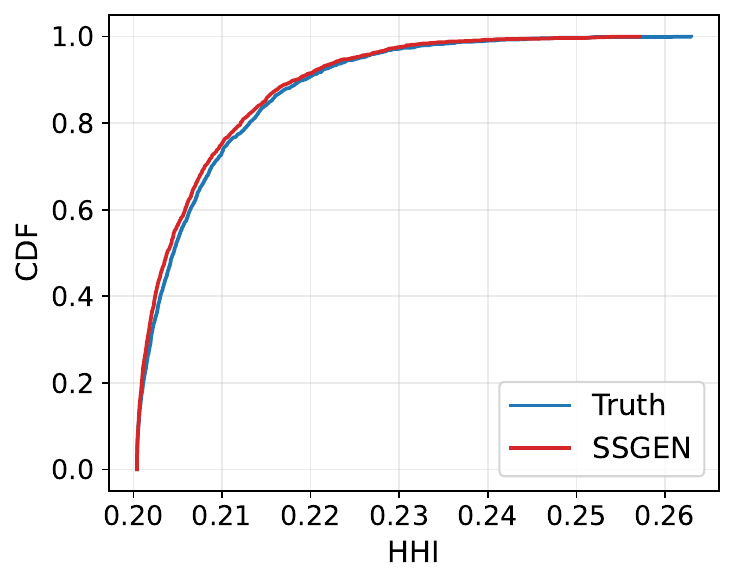}
        \caption{HHI concentration}
        \label{fig:cdf_hhi_u15}
    \end{subfigure}
    \caption{Empirical CDFs of downstream stress summaries under the true stressed
    law and the population SSGEN approximation at \(u=15\). From left to right:
    aggregate stressed loss, breadth of materially affected institutions, and
    HHI-type concentration of stressed losses.}
    \label{fig:stress_summary_cdfs}
\end{figure}

Finally, we illustrate the reverse-stress result. For each stress level, we compare
the scaled reverse-stress solutions under the true model and under the population
SSGEN approximation. Figure~\ref{fig:rst_boundary_comparison} shows that the
selected reverse-stress points remain close in scaled coordinates, with nearby
contours of the scaled likelihood objective also well aligned. This illustrates
Theorem~\ref{thm:rst_convergence}: the population splice preserves not only the
stressed conditional law, but also the first-order plausibility geometry of the
reverse-stress problem. Thus SSGEN recovers stress mechanisms that are
asymptotically as plausible as those selected under the true tail model.

\begin{figure}[htbp]
    \centering
    \begin{subfigure}[b]{0.32\textwidth}
        \centering
        \includegraphics[width=\textwidth]{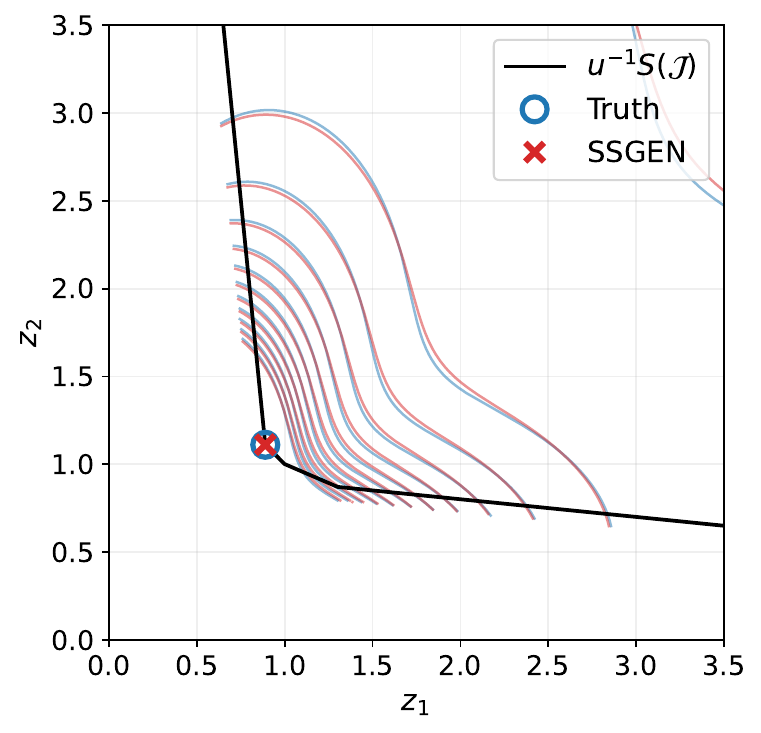}
        \caption{\(u=5\)}
        \label{fig:rst_u5}
    \end{subfigure}
    \hfill
    \begin{subfigure}[b]{0.32\textwidth}
        \centering
        \includegraphics[width=\textwidth]{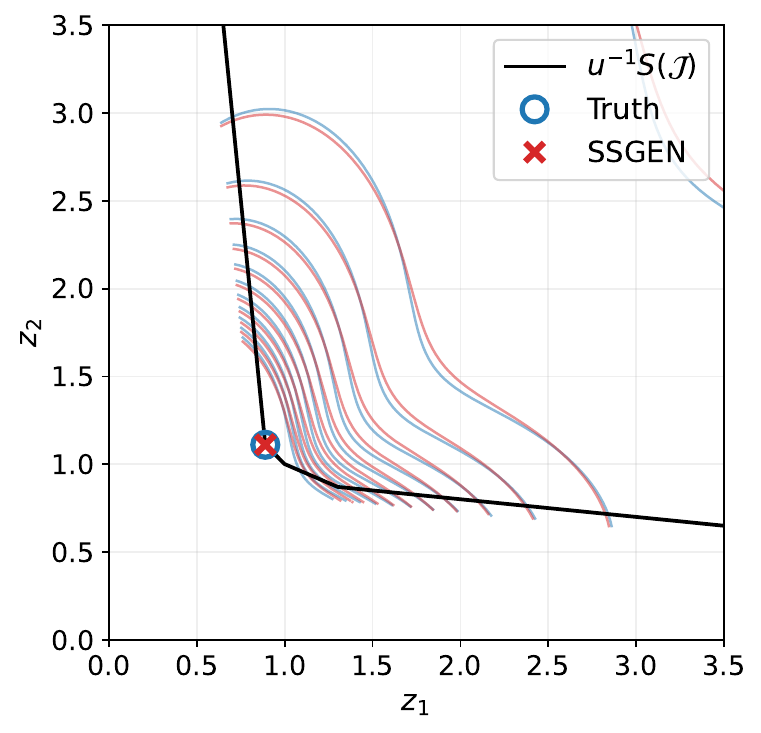}
        \caption{\(u=15\)}
        \label{fig:rst_u15}
    \end{subfigure}
    \hfill
    \begin{subfigure}[b]{0.32\textwidth}
        \centering
        \includegraphics[width=\textwidth]{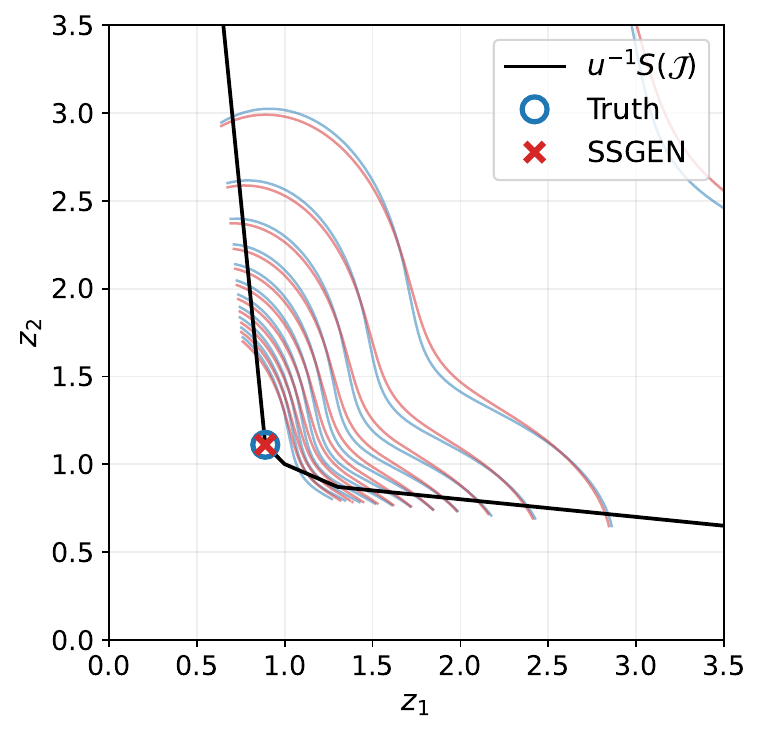}
        \caption{\(u=25\)}
        \label{fig:rst_u25}
    \end{subfigure}
    \caption{Reverse stress testing in the reinsurance example for
    \(u=5,15,25\). The contours are level sets of the scaled likelihood objective
    over the scaled state space, and the markers indicate the reverse-stress
    solutions under the true model and under the population SSGEN approximation.}
    \label{fig:rst_boundary_comparison}
\end{figure}

\section{Data-Driven Convergence Guarantees}\label{sec:data_driven_guarantees}
The population splice in \eqref{eqn:f_tgen} is tail preserving, but it is not
available from data. Both the tail index \(s\) and the angular
law must be estimated from a finite threshold,
whereas the guarantee of Theorem~\ref{thm:heavy_vanishing_RE} is on the limiting
tail law. Two errors therefore arise: (i) a statistical error from estimating
with finitely many exceedances, and (ii) a finite-threshold bias. This section precisely quantifies these to arrive at rates of convergence for SSGEN.

Let $\{\XX_1,\ldots,\XX_n\}\sim P$ be independent observations. Set
\(R_i=\|\XX_i\|\), and let \(R_{(k)}\) denote the \(k\)th largest value among
\(R_1,\ldots,R_n\). We train SSGEN at an intermediate radial threshold
\[
    k_n:=\lfloor n^{1-q}\rfloor,\qquad
    t_n:=R_{(k_n)},\qquad q\in(0,1),
\]
so that \(k_n\to\infty\) and \(k_n/n\to0\). The angular learner is trained on
the exceedance directions
\[
    \Theta_{\mathcal Y}
    =
    \left\{\frac{\XX_i}{\|\XX_i\|}:R_i\ge t_n\right\}.
\]
From these exceedances we estimate the tail index and the conditional angular
law above \(t_n\), obtaining \(\widehat s_{t_n}\) and
\(\widehat\Phi^{(t_n)}\), and we write \(\widehat P_{\rm gen}^{(t_n)}\) for the
resulting data-driven SSGEN law.
Assumption~\ref{assume:data_driven} below controls statistical estimation from the
\(k_n\) exceedances; Assumption~\ref{assume:2nd_order} controls the bias
incurred by replacing the pre-asymptotic tail with its self-similar limit.

\begin{assumption}[\textbf{Estimation rates}]
\label{assume:data_driven}
Suppose the following hold:
\begin{enumerate}
    \item[(i)] There exists \(m\in(0,1/2]\) such that
    \[
        |\widehat s_{t_n}-s|=O_P(k_n^{-m}).
    \]
    \item[(ii)] The population angular law \(\Phi^{(t_n)}\) and its learned
    approximation \(\widehat\Phi^{(t_n)}\) admit densities
    \(\varphi^{(t_n)}\) and \(\widehat\varphi^{(t_n)}\) on
    \(\Theta_{\mathcal E}\), and there exists \(\ell\in(0,1/2]\) such that
    \begin{equation}\label{eqn:learning_rates}
           \mathsf{TV}\!\left(
        \Phi^{(t_n)},\widehat\Phi^{(t_n)}
        \right)
        =
        O_P(k_n^{-\ell}).
    \end{equation}
\end{enumerate}
\end{assumption}

Assumption~\ref{assume:data_driven}(i) is the radial tail-index rate. Several standard estimators for tail index may be used to estimate $s$. For example, under standard second-order regular-variation conditions and an
appropriate choice of the intermediate sequence, the reciprocal Hill estimator
satisfies the above condition with $m=1/2$; see \citealp[Theorem~3.2.5]{deHaan}. We point out that as with the rest of the analysis, our assumption is at a general level. 

Assumption~\ref{assume:data_driven}(ii) is likewise learner-agnostic.
For example, let \(\Phi\) be an angular law with a Sobolev-smooth density of
order \(\alpha\), and let \(\widehat\Phi_N\) denote an estimator of \(\Phi\)
based on \(N\) independent observations. Then \cite{liang2021well} provides
estimators satisfying
\[
    \mathbb{E}\!\left[
        \mathsf{TV}\!\left(\Phi,\widehat\Phi_N\right)
    \right]
    =
    O\!\left(
        N^{-\left(\frac{\alpha}{2\alpha+d-1}\wedge\frac12\right)}
    \right).
\]
By Markov's inequality, the same rate holds in probability. Taking
\(N=k_n\), this benchmark corresponds to $\ell=\frac{\alpha}{2\alpha+d-1}\wedge\frac12$
in Assumption~\ref{assume:data_driven}(ii).

\begin{assumption}[\textbf{Second-order Error}]
\label{assume:2nd_order}
Let $f_{\XX}$ be continuous on $B_{r_0}^c \cap \mathcal E$ for some $r_{0} > 0$. Further suppose
there exists \(\gamma>0\) such that
\[
\sup_{r>t,\;\pphi\in \Theta_{\mathcal E}}
\left|
r^{s+d}f_{\XX}(r\pphi)-\varphi^\star(\pphi)
\right|
=
O(t^{-\gamma}),
\qquad t\to\infty .
\]
\end{assumption}

Assumption~\ref{assume:2nd_order} gives the tail-bias rate. It controls the
residual pre-asymptotic error in replacing the finite-threshold tail by the
Pareto radial-angular form used by SSGEN. It is the density-level analogue
of the second-order conditions standard in extreme-value statistics; see, for
example, \cite{einmahl1993estimating,einmahl2021testing} and \citealp[Assumption~7.4.3]{deHaan}. 

\subsection{Convergence Rates for Stress Laws}
We now consider a target stress level that is rarely, if at all, observed in
data. Let \(\beta(n)=n^{-r}\) with \(r\geq 1\), and set
\[
    u_n:=v_{1-\beta(n)}(R),
\]
where \(v_{1-\beta(n)}(R)\) is the \((1-\beta(n))\)-quantile of \(R=\|\XX\|\).
Under the regularity conditions on the limiting stress region, the target
stress event \(\mathcal S(u_n;\mathcal J)\) has probability of order
\(\beta(n)\), so the expected number of observations in it is
\(n\beta(n)=n^{1-r}\to c<\infty\). SSGEN is trained instead at the intermediate level
\(t_n=R_{(k_n)}\), from \(k_n=\lfloor n^{1-q}\rfloor\) exceedances: the
angular law is learned on an observable exceedance region, and Pareto radial
extrapolation carries the model to the target scale \(u_n\).

For a deterministic threshold \(t\), let \(P_{\rm gen}^{(t)}\) denote the
population SSGEN splice in \eqref{eqn:f_tgen}; at the empirical threshold
\(t_n\), the data-driven law \(\widehat P_{\rm gen}^{(t_n)}\) is obtained by
replacing \(s\) with \(\widehat s_{t_n}\) and \(\Phi^{(t_n)}\) with
\(\widehat\Phi^{(t_n)}\). Define the true scaled stressed law
\[
\Pi_n
:=
\Law\!\left(
u_n^{-1}\XX
\,\middle|\,
\XX\in \mathcal S(u_n;\mathcal J)
\right),
\qquad \XX\sim P,
\]
and its data-driven SSGEN counterpart
\[
\widehat \Pi_n
:=
\Law\!\left(
u_n^{-1}\widehat \XX_n
\,\middle|\,
\widehat \XX_n\in \mathcal S(u_n;\mathcal J)
\right),
\qquad
\widehat \XX_n\sim \widehat P_{\rm gen}^{(t_n)} .
\]
For laws \(P\) and \(Q\) with densities \(p\) and \(q\), write
\[
    \textsf{TV}(P,Q)
    :=
    \sup_B |P(B)-Q(B)|
    =
    \frac12\int |p-q|,
\]
so that a total-variation bound controls the error uniformly over stress
events \(B\). Theorem~\ref{thm:rate_TV} shows that SSGEN controls the full
stressed law in this metric, not merely selected bounded summaries.

\begin{theorem}\label{thm:rate_TV}
Suppose that \((L_1,\ldots,L_K)\in\mathcal H_1\) is regular, and that
Assumptions~\ref{assume:data_driven}--\ref{assume:2nd_order} hold. Then
\begin{equation}\label{eqn:rate_TV}
    \textsf{TV}\!\left(\widehat\Pi_n,\Pi_n\right)
    =
    O_P\!\left(n^{-(1-q)(m\wedge \ell)} + n^{-q\gamma/s}\right).
\end{equation}
\end{theorem}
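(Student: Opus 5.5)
The argument reduces the total-variation distance between the two conditional laws to an $L^1$ distance between unconditional densities on the scaled stress region. Write $A_n:=\mathcal S(u_n;\mathcal J)$, let $f$ be the true density and $\widehat f$ the density of $\widehat P_{\rm gen}^{(t_n)}$. For any two finite measures with densities $g,h$ restricted to a set $A$, the elementary inequality
\[
\textsf{TV}\Bigl(\tfrac{g\mathbf 1_A}{\int_A g},\tfrac{h\mathbf 1_A}{\int_A h}\Bigr)\le \frac{\int_A|g-h|}{\int_A g}
\]
holds, and total variation is invariant under the bijection $\zz\mapsto u_n^{-1}\zz$. It therefore suffices to bound $\int_{A_n}|f-\widehat f|/P(A_n)$.

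By regularity, asymptotic homogeneity and Proposition~\ref{prop:asymp_tails}(i) (applied with $\XX_u=\XX$), $P(A_n)\asymp u_n^{-s}$. Since $A_n\subset\{r\ge c u_n\}$ for some $c>0$ (continuity and homogeneity of $L_j^\star$), and $u_n\gg t_n$ with high probability, only the tail part of $\widehat f$ matters on $A_n$. I will compare $f$ and $\widehat f$ through the population splice $f_{\rm gen}^{(t_n)}$ at the (random, but data-determined) threshold $t_n$. This gives a bias term and a statistical term.

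\emph{Bias term.} Assumption~\ref{assume:2nd_order} yields $f(r,\pphi)=r^{-(s+1)}[\varphi^\star(\pphi)+O(r^{-\gamma})]$ uniformly for $r>t$. Integrating gives $p_t=C t^{-s}(1+O(t^{-\gamma}))$ and $\varphi^{(t)}=\varphi^\star/\!\int\varphi^\star+O(t^{-\gamma})$ uniformly. Hence the relative error in \eqref{eqn:heavy_vanishing_RE} is $O(t^{-\gamma})$, so $\int_{A_n}|f-f_{\rm gen}^{(t_n)}|\le O(t_n^{-\gamma})P(A_n)$. Because $t_n=R_{(k_n)}$ and $P(R>t)$ is regularly varying, $t_n\asymp_P (n/k_n)^{1/s}=n^{q/s}$, giving the term $n^{-q\gamma/s}$.

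\emph{Statistical term.} On $\{r>t_n\}$ the two spliced densities are products $p_{t_n}\cdot(\text{Pareto}_s)\otimes\varphi^{(t_n)}$ and $\widehat p_{t_n}\cdot(\text{Pareto}_{\widehat s})\otimes\widehat\varphi^{(t_n)}$; here $p_{t_n}$ and $\widehat p_{t_n}=k_n/n$ either coincide or differ by a common factor that cancels after conditioning, so normalization may be ignored. The angular contribution is bounded by $\int_{A_n}p_{t_n}s t_n^{s}r^{-s-1}|\varphi^{(t_n)}-\widehat\varphi^{(t_n)}|$. I will enlarge $A_n$ to the cone region $\{r\ge cu_n\}$, which has mass $\asymp P(A_n)$, to get a contribution of $O(P(A_n))\cdot \textsf{TV}(\Phi^{(t_n)},\widehat\Phi^{(t_n)})=O_P(P(A_n)k_n^{-\ell})$.

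The radial contribution requires
\[
\int_{cu_n}^\infty |s t_n^{s} r^{-s-1}-\widehat s t_n^{\widehat s} r^{-\widehat s-1}|\,dr\Big/ (t_n/u_n)^{s}.
\]
Writing the ratio of the two densities as $(\widehat s/s)(r/t_n)^{-(\widehat s-s)}$, the relative error on $r\in[cu_n,\infty)$ is of order $|\widehat s-s|\log(r/t_n)$. Weighting by the Pareto law started at $cu_n$, this is $O(|\widehat s-s|\log(u_n/t_n))=O_P(k_n^{-m}\log n)$.

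This logarithmic factor is the main obstacle, since the stated rate has no log. I would try first to absorb it by the relative-ratio representation: on the event where $|\widehat s-s|\log(u_n/t_n)\to 0$, the expansion is valid. If necessary I would exploit the slack in the displayed exponent (or in the $O_P$ convention) to remove the log. Otherwise I would note that the Pareto mass ratio $(t_n/u_n)^{\widehat s-s}$ cancels after conditioning and appears only through the shape of the density on $A_n$, whose conditional radial law is an $s$-Pareto on a region scaled by $u_n$. Because this law is scale-free, the discrepancy is $O(|\widehat s-s|)$ without a log factor.

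Combining the two terms, using $k_n=n^{1-q}$, and dividing by $P(A_n)$ gives \eqref{eqn:rate_TV}. The randomness of $t_n$ is handled by conditioning on $t_n$: all population bounds hold uniformly in deterministic $t$ in a range containing $n^{q/s}$ up to constants, with probability tending to one.
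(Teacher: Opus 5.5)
Your decomposition matches the paper's: pass through the population splice at the empirical threshold $t_n$, bound the bias by $O_P(t_n^{-\gamma})=O_P(n^{-q\gamma/s})$ using the uniform relative error and $t_n\asymp_P n^{q/s}$, and split the statistical error into an angular total-variation term and a radial tail-index term. The problem is your opening reduction. The quantity $\int_{A_n}|f-\widehat f|/P(A_n)$ is not merely hard to bound without a logarithm; it is genuinely of that order. It dominates the mass discrepancy $|\int_{A_n}\widehat f-\int_{A_n}f|/P(A_n)$, and the generated tail mass on $A_n$ carries the extrapolation factor $(u_n/t_n)^{-(\widehat s_{t_n}-s)}=1+O_P(|\widehat s_{t_n}-s|\log(u_n/t_n))$. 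With $\log(u_n/t_n)\asymp\log n$, this term is exactly of size $k_n^{-m}\log n$ whenever $|\widehat s_{t_n}-s|$ has exact order $k_n^{-m}$, as for the Hill estimator. So your first two escape routes fail. Route (a) only says the expansion is valid, not that its leading term vanishes. Route (b) has no slack: when $m\le\ell$ (e.g.\ $m=\ell=1/2$), $k_n^{-m}\log n$ is not $O(n^{-(1-q)(m\wedge\ell)}+n^{-q\gamma/s})$.

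Your third option is the correct argument, it has to be the main line, and it is exactly the paper's proof. Apply the triangle inequality to the conditional laws, $\textsf{TV}(\widehat\Pi_n,\Pi_n)\le\textsf{TV}(\Pi_n^0,\Pi_n)+\textsf{TV}(\widehat\Pi_n,\Pi_n^0)$. Equivalently, compare $f$ with $c_n\widehat f$, where $c_n$ matches the prefactors $p_{t_n}t_n^{s}u_n^{-s}$ and $\widehat p_{t_n}t_n^{\widehat s_{t_n}}u_n^{-\widehat s_{t_n}}$. This is legitimate because a conditional law is unchanged by constant multiples of the density. In the coordinates $y=r/u_n$ you then compare $s\,y^{-(s+1)}\varphi^{(t_n)}(\pphi)$ with $\widehat s_{t_n}y^{-(\widehat s_{t_n}+1)}\widehat\varphi^{(t_n)}(\pphi)$ on $\Gamma_{u_n}(\mathcal J)\subseteq\{y\ge c\}$. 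On $\{s/2\le\widehat s_{t_n}\le 3s/2\}$, the mean value theorem bounds the radial $L^1$ error by $|\widehat s_{t_n}-s|\int_c^\infty\sup_{a\in[s/2,3s/2]}y^{-(a+1)}(1+a|\log y|)\,dy=O(|\widehat s_{t_n}-s|)$. The logarithm now survives only as $\log y$, which is integrable against the tail, instead of as $\log(u_n/t_n)$. Add two ingredients: a lower bound on the normalizing constant of the population density (from $\inf_{\pphi}\varphi^{(t)}(\pphi)>0$ and $K_N\subseteq\Gamma_{u_n}(\mathcal J)$), and the normalization-stability inequality you already stated. Together these give $O_P(k_n^{-(m\wedge\ell)})$ for the statistical term, and hence the theorem. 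By contrast, $\log(u_n/t_n)$ does legitimately appear in the reverse-stress rate of Lemma~\ref{lem:rst_dd_local_objective}. There the density values themselves are compared, not a conditional law, so the extrapolation factor cannot be normalized away.
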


Theorem~\ref{thm:rate_TV} separates the two sources of error. The term
\(n^{-(1-q)(m\wedge\ell)}\) is statistical: it comes from estimating the tail
index and the angular law from the \(k_n\) exceedances. The exponent \(\ell\)
captures the difficulty of angular learning; when \(\ell<1/2\), this
nonparametric step is slower than the parametric tail-index rate. The term
\(n^{-q\gamma/s}\) is the finite-threshold bias from replacing the
pre-asymptotic tail by its self-similar limit. The two terms move in opposite
directions in \(q\): a larger \(q\) trains deeper in the tail, shrinking the
bias, but leaves fewer exceedances for learning, inflating the statistical
error. The choice of \(q\) balances the two. 

\begin{corollary}\label{cor:opt_learning}
Under the assumptions of Theorem~\ref{thm:rate_TV}, the oracle choice
\[
    q^\star
    =
    \frac{s(m\wedge \ell)}{\gamma+s(m\wedge \ell)}
\]
balances statistical error and finite-threshold bias. For this choice,
$\textsf{TV}\!\left(\widehat\Pi_n,\Pi_n\right)
=
O_P\!\left(
n^{-\frac{\gamma(m\wedge \ell)}{\gamma+s(m\wedge \ell)}}
\right)$. Any other fixed \(q\neq q^\star\) gives a smaller polynomial exponent in the upper
bound.
\end{corollary}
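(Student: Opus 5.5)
The corollary follows from Theorem~\ref{thm:rate_TV} by optimizing the exponent of the upper bound over $q\in(0,1)$. Write $a:=m\wedge\ell>0$, so that \eqref{eqn:rate_TV} reads $\textsf{TV}(\widehat\Pi_n,\Pi_n)=O_P(n^{-e(q)})$ with
\[
    e(q):=\min\Bigl\{(1-q)a,\ \frac{q\gamma}{s}\Bigr\}.
\]
This holds because $n^{-x}+n^{-y}\le 2n^{-\min\{x,y\}}$. The first term $(1-q)a$ is strictly decreasing and affine in $q$. The second, $q\gamma/s$, is strictly increasing and affine. Hence $e$ is a concave, piecewise-affine function on $(0,1)$, maximized exactly where the two pieces cross.

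First, I solve $(1-q)a=q\gamma/s$. This gives $sa=q(\gamma+sa)$, so $q^\star=sa/(\gamma+sa)$, which matches the stated value. Since $a,s,\gamma>0$, we have $q^\star\in(0,1)$, so it is an admissible choice of $q$ in Theorem~\ref{thm:rate_TV}.

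Second, I substitute back: $e(q^\star)=q^\star\gamma/s=\gamma a/(\gamma+sa)$. Invoking Theorem~\ref{thm:rate_TV} at $q=q^\star$ then gives the stated rate $O_P\bigl(n^{-\gamma a/(\gamma+sa)}\bigr)$.

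Third, for the optimality claim I take any fixed $q\neq q^\star$ and split into two cases:
\begin{itemize}
    \item If $q<q^\star$, then $e(q)\le q\gamma/s<q^\star\gamma/s=e(q^\star)$.
    \item If $q>q^\star$, then $e(q)\le(1-q)a<(1-q^\star)a=e(q^\star)$.
\end{itemize}
In both cases the polynomial exponent of the upper bound is strictly smaller. This is all the corollary asserts: it concerns the bound, not a lower bound on the actual error.

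There is no real obstacle here; the argument is elementary. The only point requiring care is that the optimality statement is about the exponent of the bound in \eqref{eqn:rate_TV} rather than about minimax optimality. I would also note that $q^\star$ is an oracle choice because it depends on the unknown quantities $s$, $\gamma$ and $a$.
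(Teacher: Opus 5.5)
Your proposal is correct and takes essentially the paper's approach: the paper gives no separate proof of this corollary and only remarks, after Theorem~\ref{thm:rate_TV}, that the two exponents $(1-q)(m\wedge\ell)$ and $q\gamma/s$ should be balanced, which is exactly your maximization of the concave piecewise-affine exponent $e(q)$ at the crossing point. Your case split for $q\neq q^\star$ correctly treats the optimality claim as a statement about the exponent of the upper bound. That exponent is exactly $e(q)$ because $n^{-x}+n^{-y}\ge n^{-\min\{x,y\}}$.
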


\begin{remark}\label{rem:diagnostic}\em
Corollary~\ref{cor:opt_learning} is an oracle statement, not a data-driven
threshold rule: the quantities \(\gamma\) and \(\ell\) are generally unknown. Its
purpose is to identify the best possible balance, within this upper bound, between
estimation error and finite-threshold bias.
\end{remark}
Theorem~\ref{thm:rate_TV} controls the full stressed conditional law; the
rate transfers without loss to downstream stress diagnostics. As in
Example~\ref{eg:stress_summaries}, define
\[
    \mv L_{u_n}(\zz)
    :=
    \left(
        u_n^{-1}L_1(u_n\zz),\ldots,u_n^{-1}L_K(u_n\zz)
    \right),
    \qquad
    W_{u_n}(\zz):=W(\mv L_{u_n}(\zz)),
\]
where \(W:\mathbb R^K\to\mathsf Y\subset\mathbb R^M\) is measurable, and let
\(G_n:=\Pi_n\circ W_{u_n}^{-1}\) and
\(\widehat G_n:=\widehat\Pi_n\circ W_{u_n}^{-1}\) denote the laws of the
diagnostic \(W_{u_n}\) under the true and generated stressed laws.

\begin{corollary}[\textbf{Stress-summary distributions}]
\label{cor:stress_summary_dd}
Under the conditions of Theorem~\ref{thm:rate_TV}, for every measurable
stress summary \(W_{u_n}:\mathcal E\to\mathsf Y\),
\[
    \textsf{TV}\!\left(\widehat G_n,G_n\right)
    =
    O_P\!\left(
        n^{-(1-q)(m\wedge \ell)}
        +
        n^{-q\gamma/s}
    \right).
\]
\end{corollary}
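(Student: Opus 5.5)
The corollary should follow from Theorem~\ref{thm:rate_TV} by the contraction of total variation under measurable maps (the data-processing inequality). The plan is to show that for any measurable map $g$ from $\mathcal E$ into a measurable space, and any laws $\mu,\nu$ on $\mathcal E$, one has $\textsf{TV}(\mu\circ g^{-1},\nu\circ g^{-1})\le \textsf{TV}(\mu,\nu)$. I would then apply this with $g=W_{u_n}$, $\mu=\widehat\Pi_n$ and $\nu=\Pi_n$.

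\textbf{Step 1 (measurability).} I would first check that $W_{u_n}$ is measurable. Each coordinate $\zz\mapsto u_n^{-1}L_j(u_n\zz)$ is measurable, being a measurable $L_j$ composed with a linear map. Hence $\mv L_{u_n}:\mathcal E\to\mathbb R^K$ is measurable, and so is $W\circ \mv L_{u_n}$ since $W$ is measurable. In particular, for every Borel $B\subset\mathsf Y$ the preimage $W_{u_n}^{-1}(B)$ is a Borel subset of $\mathcal E$.

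\textbf{Step 2 (contraction).} Using the supremum form of $\textsf{TV}$ fixed in the paper, I would write
\[
\textsf{TV}(\widehat G_n,G_n)=\sup_{B}\bigl|\widehat\Pi_n(W_{u_n}^{-1}(B))-\Pi_n(W_{u_n}^{-1}(B))\bigr|\le \sup_{A}\bigl|\widehat\Pi_n(A)-\Pi_n(A)\bigr|=\textsf{TV}(\widehat\Pi_n,\Pi_n).
\]
The inequality holds because the sets $\{W_{u_n}^{-1}(B)\}$ form a sub-collection of the Borel sets $A$. This argument needs no densities for $\widehat G_n$ and $G_n$: the supremum definition applies to arbitrary laws, so discrete diagnostics such as breadth are covered.

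\textbf{Step 3 (transfer of the rate).} The bound holds pathwise, for each realization of the training sample that determines $\widehat\Pi_n$. Since $\widehat\Pi_n$ is random, I would use that $X_n\le Y_n$ almost surely together with $Y_n=O_P(a_n)$ gives $X_n=O_P(a_n)$, and take $a_n=n^{-(1-q)(m\wedge\ell)}+n^{-q\gamma/s}$ from Theorem~\ref{thm:rate_TV}.

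No step is a genuine obstacle. The only points needing care are the measurability in Step~1 and the fact that $\widehat\Pi_n$ is well defined, that is, that $\widehat P_{\rm gen}^{(t_n)}(\mathcal S(u_n;\mathcal J))>0$. The latter is already implicit in Theorem~\ref{thm:rate_TV}, so I would rely on it, or otherwise work on the event where it holds, whose probability tends to one.
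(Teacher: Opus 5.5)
Your proposal is correct and follows the paper's own proof: the paper bounds $\sup_{B\subseteq\mathsf Y}\bigl|\Pi_n(W_{u_n}^{-1}(B))-\widehat\Pi_n(W_{u_n}^{-1}(B))\bigr|$ by the supremum over all Borel subsets of $\mathcal E$, using measurability of $W_{u_n}$, and then invokes Theorem~\ref{thm:rate_TV}. Your additional remarks on the pathwise-to-$O_P$ transfer and on $\widehat\Pi_n$ being well defined with probability tending to one are sound; the paper leaves both implicit.
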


Corollary~\ref{cor:stress_summary_dd} is the data-driven counterpart of
Proposition~\ref{prop:stress_summaries}. The proposition identifies the
limiting law of stress summaries as \(u\to\infty\), while the corollary
controls the distance between the true and generated diagnostic laws at the
target level \(u_n\). The transfer is lossless, since total variation
contracts under measurable maps; no smoothness, continuity, or
dimension-dependent term is incurred.

\subsection{Convergence Rates for Reverse Stress Testing}\label{sec:dd_rst}
We next study rates of convergence for the reverse-stress solutions produced by
the data-driven version of SSGEN. Passing from objective approximation to
optimizer approximation requires additional regularity: small perturbations of an
objective need not imply small perturbations of its maximizers unless the true
problem is locally well identified. We therefore impose a quantitative separation
condition on the finite-level reverse-stress objective and combine it with the
objective approximation rate delivered by SSGEN.

Let \(\widehat P_{\rm gen}^{(t_n)}\) denote the data-driven SSGEN law constructed
at the intermediate threshold \(t_n\). For the target stress level \(u_n\), define
\[
    \Psi_n(\zz)
    :=
    u_n^{s+d}f_P(u_n\zz),
    \qquad
    \widehat\Psi_n(\zz)
    :=
    u_n^{s+d}f_{\widehat P_{\rm gen}^{(t_n)}}(u_n\zz).
\]
The true and data-driven reverse-stress solution sets are
\[
    Y_n
    :=
    \arg\max_{\zz\in\Gamma_{u_n}(\mathcal J)}
    \Psi_n(\zz),
    \qquad
    \widehat Y_n
    :=
    \arg\max_{\zz\in\Gamma_{u_n}(\mathcal J)}
    \widehat\Psi_n(\zz).
\]
Our goal is to bound the distance from \(\widehat Y_n\) to \(Y_n\). Let
\(\mathcal K_{\mathcal J}\subset\mathcal E\setminus\{\mathbf 0\}\) be the compact
localization set from Lemma~\ref{lem:containment}; for all sufficiently large
\(n\), the relevant true optimizers lie in this set. In what follows, we will assume that the arg-max in $\hat Y_n$ is attained.

\begin{assumption}[\textbf{Local identifiability of the RST problem}]
\label{assume:rst_margin}
Let
\[
    V_n
    :=
    \sup_{\zz\in\Gamma_{u_n}(\mathcal J)}
    \Psi_n(\zz).
\]
There exist constants \(c>0\), \(\kappa\ge1\), and \(\delta_0>0\) such that, for
all sufficiently large \(n\),
\begin{equation}\label{eqn:A4}
    V_n-\Psi_n(\zz)
    \ge
    c\,
    \left\{
    d(\zz,Y_n)\wedge \delta_0
    \right\}^{\kappa},
    \qquad
    \zz\in \Gamma_{u_n}(\mathcal J)\cap\mathcal K_{\mathcal J}.
\end{equation}
\end{assumption}

Assumption~\ref{assume:rst_margin} is a quantitative identification condition for
the finite-level reverse-stress problem. Within a \(\delta_0\)-neighborhood of
\(Y_n\), it requires the objective loss to grow at least polynomially with the
distance from the solution set; outside this neighborhood, it imposes the
uniform separation \(c\delta_0^\kappa\). Such margin conditions are standard in
perturbation analyses of optimization problems; see, for example,
\citealp[Section~4.4.1]{bonnans2013perturbation}. They convert uniform objective
error into a bound on $\sup_{\widehat\zz\in\widehat Y_n}
    d(\widehat\zz,Y_n)$. Since distance is measured to the full solution set \(Y_n\), \eqref{eqn:A4} allows
multiple competing reverse-stress mechanisms. Because the data-driven analysis targets the finite-level solution set \(Y_n\),
we compare \(\widehat\Psi_n\) directly with \(\Psi_n\) on the localized region
where maximizers can occur.

\begin{lemma}[\textbf{Data-driven objective approximation}]
\label{lem:rst_dd_local_objective}
Suppose Assumptions~\ref{assume:ht_data}, \ref{assume:data_driven}(i), and
\ref{assume:2nd_order} hold and let $(L_1,\ldots, L_K) \in \mathcal H_1$ be regular. Suppose also that
\begin{equation}\label{eqn:unif_condition}
      \|\widehat\varphi^{(t_n)}-\varphi^{(t_n)}\|_\infty
    =
    O_P(k_n^{-\ell}).
\end{equation}
Then, for $\mathcal K_{\mathcal J} $ compact with $\mv 0\not\in \mathcal K_{\mathcal J}$
\[
    \Delta_n
    :=
    \sup_{\zz\in\Gamma_{u_n}(\mathcal J)\cap\mathcal K_{\mathcal J}}
    \left|
    \widehat\Psi_n(\zz)-\Psi_n(\zz)
    \right|
    =
    O_P(a_n), \quad \text{where}\quad 
    a_n
    :=
    k_n^{-\ell}
    +
    k_n^{-m}\log(u_n/t_n)
    +
    t_n^{-\gamma}.
\]
\end{lemma}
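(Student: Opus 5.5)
Throughout, write every point of the localized region in polar form: for \(\zz\in\Gamma_{u_n}(\mathcal J)\cap\mathcal K_{\mathcal J}\), set \(\zz=\rho\pphi\) with \(\rho\in[\rho_-,\rho_+]\). Here \(0<\rho_-\le\rho_+<\infty\) because \(\mathcal K_{\mathcal J}\) is compact and excludes \(\mv 0\). The scaled point \(u_n\zz\) then has radius \(u_n\rho\).

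Since \(t_n=R_{(k_n)}\) is of order \(\bigl(n/k_n\bigr)^{1/s}\asymp n^{q/s}\), while \(u_n\asymp n^{r/s}\) with \(r\ge1>q\), we have \(u_n\rho_->t_n\) with probability tending to one. So on this event only the tail part of the splice is active, and no body term enters. Converting the radial–angular density to Cartesian coordinates gives
\[
\widehat\Psi_n(\zz)=u_n^{s+d}(u_n\rho)^{-(d-1)}\,\widehat p_{t_n}\,\widehat s\,t_n^{\widehat s}(u_n\rho)^{-(\widehat s+1)}\,\widehat\varphi^{(t_n)}(\pphi),
\]
with \(\widehat s=\widehat s_{t_n}\) and \(\widehat p_{t_n}=k_n/n\). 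Similarly \(\Psi_n(\zz)=\rho^{-(s+d)}\,(u_n\rho)^{s+d}f_P(u_n\rho\pphi)\).

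The plan is to insert two intermediate objects and split into three differences. The first is the population splice with estimated ingredients replaced by truth,
\[
\Psi_n^{\rm pop}(\zz)=u_n^{s+d}p_{t_n}\,s\,t_n^{s}(u_n\rho)^{-(s+d)}\varphi^{(t_n)}(\pphi)=\rho^{-(s+d)}\,p_{t_n}s\,t_n^{s}\,\varphi^{(t_n)}(\pphi),
\]
and the second is the limit \(\rho^{-(s+d)}\varphi^\star(\pphi)\). All three differences are controlled uniformly over \(\rho\in[\rho_-,\rho_+]\) and \(\pphi\in\Theta_{\mathcal E}\).

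\textbf{Bias term.} Assumption~\ref{assume:2nd_order}, applied at radius \(u_n\rho>t_n\), gives \(|\Psi_n(\zz)-\rho^{-(s+d)}\varphi^\star(\pphi)|=O(u_n^{-\gamma})=O(t_n^{-\gamma})\). For the splice, integrate the same second-order bound over \(r>t\). This gives \(p_t=t^{-s}s^{-1}\int\varphi^\star\,(1+O(t^{-\gamma}))\) and \(\varphi^{(t)}=\varphi^\star/\int\varphi^\star+O(t^{-\gamma})\), uniformly in the angle, where \(\varphi^{(t)}\) is the angular marginal of \(f\) on \(\{r>t\}\) normalized by \(p_t\). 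Hence \(p_t s t^s\varphi^{(t)}=\varphi^\star+O(t^{-\gamma})\). Evaluated at the random \(t=t_n\), this yields \(|\Psi_n^{\rm pop}-\rho^{-(s+d)}\varphi^\star|=O_P(t_n^{-\gamma})\), since \(t_n\to\infty\) in probability. One could keep \(\varphi^\star\) normalized differently; only the product matters.

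\textbf{Estimation terms.} Write
\[
\frac{\widehat\Psi_n}{\Psi_n^{\rm pop}}=\frac{\widehat p_{t_n}}{p_{t_n}}\cdot\frac{\widehat s}{s}\cdot\Bigl(\frac{t_n}{u_n\rho}\Bigr)^{\widehat s-s}\cdot\frac{\widehat\varphi^{(t_n)}}{\varphi^{(t_n)}}.
\]
Each factor is handled in turn.
\begin{itemize}
\item \(\widehat s/s=1+O_P(k_n^{-m})\), by Assumption~\ref{assume:data_driven}(i).
\item \((t_n/(u_n\rho))^{\widehat s-s}=\exp\{-(\widehat s-s)\log(u_n\rho/t_n)\}=1+O_P\bigl(k_n^{-m}\log(u_n/t_n)\bigr)\). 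This uses \(\log(u_n/t_n)\asymp\log n\) and \(k_n^{-m}\log n\to0\), so the exponential can be linearized.
\item \(\widehat\varphi^{(t_n)}-\varphi^{(t_n)}=O_P(k_n^{-\ell})\) uniformly, by \eqref{eqn:unif_condition}. This is absolute; since \(\varphi^{(t_n)}\) is bounded above and below for large \(n\) (via the bias step and Lemma~\ref{lem:phi_star_props}(ii)), it becomes relative.
\item The mass ratio \(\widehat p_{t_n}/p_{t_n}\), taking \(p_{t_n}:=\bar F_R(t_n)\), is the main obstacle. With \(t_n\) random order statistic, \(\bar F_R(R_{(k_n)})\) is distributed as the \(k_n\)-th order statistic of \(n\) uniforms, so \(\bar F_R(t_n)\,n/k_n=1+O_P(k_n^{-1/2})\). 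This is \(O_P(k_n^{-m})\) since \(m\le1/2\), so it is absorbed. Also, conditioning on \(t_n\) is handled by the standard fact that the exceedances given \(R_{(k_n)}\) are i.i.d.\ from the conditional tail law, which is how \(\varphi^{(t_n)}\) is interpreted.
\end{itemize}

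Multiplying out, \(\widehat\Psi_n/\Psi_n^{\rm pop}=1+O_P\bigl(k_n^{-\ell}+k_n^{-m}\log(u_n/t_n)\bigr)\) uniformly on the localized region. Because \(\Psi_n^{\rm pop}\le\rho_-^{-(s+d)}\max\varphi^\star+o(1)\) is uniformly bounded, this converts to an absolute bound. Combining it with the bias bound through the triangle inequality gives \(\Delta_n=O_P(a_n)\).
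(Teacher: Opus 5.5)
Your proposal is correct and follows essentially the same route as the paper: it bounds the same error sources factor by factor on the event $\{u_n\rho>t_n\}$ and multiplies them out. These are the mass ratio via the uniform order statistic $U_{k_n:n}$, the tail-index ratio, the extrapolation factor $(u_n\rho/t_n)^{-(\widehat s-s)}$ carrying the $\log(u_n/t_n)$ amplification, the sup-norm angular error made relative through a lower bound on $\varphi^{(t_n)}$, and the second-order bias. The only difference is bookkeeping: the paper writes $\widehat\Psi_n/\Psi_n$ directly as a product with a residue factor $\{1+\eta_n\}^{-1}$, whereas you pass additively through the population splice and the limit $\rho^{-(s+d)}\varphi^\star(\pphi)$, which has the small merit of making explicit why $p_t s t^s\varphi^{(t)}=\varphi^\star+O(t^{-\gamma})$ uniformly in the angle.
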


Lemma~\ref{lem:rst_dd_local_objective} gives uniform control of the
statistical error in the data-driven RST objective over the localized feasible
set. The first and third terms in \(a_n\) are, respectively, the angular
estimation error and the finite-threshold bias appearing in
Theorem~\ref{thm:rate_TV}. The tail-index error carries an additional factor
\(\log(u_n/t_n)\), because \(\widehat s_{t_n}\) enters the extrapolated density
through \((u_n/t_n)^{-\widehat s_{t_n}}\). Perturbing the exponent over the
extrapolation range therefore yields
\(k_n^{-m}\log(u_n/t_n)\), with
\(\log(u_n/t_n)\asymp\log n\) under the adopted threshold sequences.

The localization is asymptotically without loss. Indeed,
Lemma~\ref{lem:rst_dd_localization}, proved in the appendix, shows that
\(\widehat Y_n\subseteq\mathcal K_{\mathcal J}\) with probability tending to
one, while Lemma~\ref{lem:containment} gives the corresponding containment for
\(Y_n\). Thus, with probability tending to one, the comparison of the two
argmax problems takes place entirely on
\(\Gamma_{u_n}(\mathcal J)\cap\mathcal K_{\mathcal J}\), where
Lemma~\ref{lem:rst_dd_local_objective} provides uniform objective control.
Theorem~\ref{thm:rst_solution_rate} then converts this control into a rate for
the distance between the estimated and population solution sets.

\begin{theorem}[\textbf{Rate for data-driven reverse-stress solutions}]
\label{thm:rst_solution_rate}
Suppose the conditions of Lemma~\ref{lem:rst_dd_local_objective} and
Assumption~\ref{assume:rst_margin} hold. Then with $a_n$ as in Lemma~\ref{lem:rst_dd_local_objective},
\[
    \sup_{\widehat\zz\in\widehat Y_n}
    d(\widehat\zz,Y_n)
    =
    O_P(a_n^{1/\kappa}),
\]
In particular, if \(Y_n=\{\zz_n^\star\}\) is a singleton, then every
\(\widehat\zz_n\in\widehat Y_n\) satisfies $\|\widehat\zz_n-\zz_n^\star\|
    =
    O_P\!\left(a_n^{1/\kappa}
    \right)$.
\end{theorem}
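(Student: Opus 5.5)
The argument is a standard argmax-perturbation bound, carried out on a high-probability event. The three ingredients are localization, uniform objective control, and the margin condition.

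\textbf{Step 1: the good event.} Let $E_n$ be the event on which all of the following hold:
\begin{itemize}
\item $\widehat Y_n\subseteq\mathcal K_{\mathcal J}$, which by Lemma~\ref{lem:rst_dd_localization} has probability tending to one;
\item $Y_n\subseteq\mathcal K_{\mathcal J}$, which by Lemma~\ref{lem:containment} holds for all large $n$;
\item $\Delta_n\le M a_n$, for a constant $M$ chosen, via Lemma~\ref{lem:rst_dd_local_objective}, so that $\mathbb P(\Delta_n>Ma_n)<\varepsilon$ eventually.
\end{itemize}
Then $\mathbb P(E_n)\ge 1-2\varepsilon$ for large $n$. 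It therefore suffices to show that on $E_n$, every $\widehat\zz\in\widehat Y_n$ satisfies $d(\widehat\zz,Y_n)\le C a_n^{1/\kappa}$ with $C$ deterministic.

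\textbf{Step 2: sandwich inequality on $E_n$.} Fix $\widehat\zz\in\widehat Y_n$ and any $\zz^\star\in Y_n$. Both points lie in $\Gamma_{u_n}(\mathcal J)\cap\mathcal K_{\mathcal J}$, and $\widehat\Psi_n(\widehat\zz)\ge\widehat\Psi_n(\zz^\star)$ by optimality of $\widehat\zz$. Hence
\[
V_n-\Psi_n(\widehat\zz)=\Psi_n(\zz^\star)-\Psi_n(\widehat\zz)
\le \bigl[\Psi_n(\zz^\star)-\widehat\Psi_n(\zz^\star)\bigr]+\bigl[\widehat\Psi_n(\widehat\zz)-\Psi_n(\widehat\zz)\bigr]\le 2\Delta_n\le 2Ma_n .
\]
Since $\widehat\zz\in\Gamma_{u_n}(\mathcal J)\cap\mathcal K_{\mathcal J}$, Assumption~\ref{assume:rst_margin} applies and gives
\[
c\{d(\widehat\zz,Y_n)\wedge\delta_0\}^\kappa\le 2Ma_n .
\]

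\textbf{Step 3: removing the truncation.} The main subtlety is the $\wedge\,\delta_0$. Under the adopted threshold sequences $a_n\to0$: we have $k_n\to\infty$ and $t_n\to\infty$ in probability, and $k_n^{-m}\log(u_n/t_n)\lesssim n^{-(1-q)m}\log n\to0$. So for large $n$, $2Ma_n<c\delta_0^\kappa$. The minimum in the margin inequality then cannot equal $\delta_0$, so $d(\widehat\zz,Y_n)<\delta_0$ and
\[
d(\widehat\zz,Y_n)\le (2M/c)^{1/\kappa}a_n^{1/\kappa}.
\]
If $a_n$ is random through $t_n$, the same argument goes through after intersecting $E_n$ with $\{a_n\le \eta\}$, which has probability tending to one for any fixed $\eta>0$.

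\textbf{Step 4: conclusion.} The bound from Step 3 is uniform over $\widehat\zz\in\widehat Y_n$, so taking the supremum gives the $O_P(a_n^{1/\kappa})$ rate. In the singleton case $Y_n=\{\zz_n^\star\}$ we have $d(\widehat\zz_n,Y_n)=\|\widehat\zz_n-\zz_n^\star\|$, which is the stated special case.

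I expect no step to be a serious obstacle. The only care points are that both argmax sets are localized in the same compact set, so the uniform bound $\Delta_n$ applies at both $\widehat\zz$ and $\zz^\star$, and that $a_n\to0$, so the margin truncation at $\delta_0$ is inactive.
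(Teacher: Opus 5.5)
Your proposal is correct and follows the paper's proof essentially step for step. Both localize $\widehat Y_n$ and $Y_n$ in $\mathcal K_{\mathcal J}$, use optimality of $\widehat\zz$ to get $V_n-\Psi_n(\widehat\zz)\le 2\Delta_n$, apply the margin condition, and discard the $\delta_0$ truncation because $\Delta_n=O_P(a_n)=o_P(1)$; your explicit good-event bookkeeping, including the handling of the random $a_n$, is simply a more careful rendering of the same argument.
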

Theorem~\ref{thm:rst_solution_rate} is stated in terms of \(k_n\), the number of intermediate
exceedances used for tail learning. Note that $k_n \sim n^{1-q}$ and thus, $
    t_n=\Theta_P\!\left(n^{q/s}\right)$.
Moreover, the target regime $\mathbb P\bigl(L_j(\XX)\geq c_j u_n,\ j\in\mathcal J\bigr)
    \asymp n^{-r}$ implies \(u_n=\Theta(n^{r/s})\). Consequently, $\log(u_n/t_n)
    =
    O_P(\log n)$.
It follows that
\[
    a_n
    =
    O_P\!\left(
        n^{-(1-q)\ell}
        +
        n^{-(1-q)m}\log n
        +
        n^{-q\gamma/s}
    \right),
\]
and hence
\[
    \sup_{\widehat\zz\in\widehat Y_n}
    d(\widehat\zz,Y_n)
    =
    O_P\!\left(
        \left\{
            n^{-(1-q)\ell}
            +
            n^{-(1-q)m}\log n
            +
            n^{-q\gamma/s}
        \right\}^{1/\kappa}
    \right).
\]

This rate has the same  structure as
Theorem~\ref{thm:rate_TV}. Angular learning contributes
\(n^{-(1-q)\ell}\), while finite-threshold bias contributes
\(n^{-q\gamma/s}\). Tail-index estimation incurs the additional factor
\(\log(u_n/t_n)\), since the estimated index is used to extrapolate from
\(t_n\) to \(u_n\), yielding the term \(n^{-(1-q)m}\log n\). Such logarithmic
amplification is standard in extreme-value theory based tail estimation; see
\citealp[Section~3]{deHaan}.

Importantly, convergence persists in the limited-sample regime \(r\geq1\).
While the expected number of target observations is of order \(n^{1-r}\), and
is therefore bounded or vanishing, \(k_n\sim n^{1-q}\to\infty\). Since all
three terms in the displayed rate vanish, SSGEN remains consistent even when
the target stress region contains only finitely many, or asymptotically no,
observations from the original sample.

\section{Numerical Experiments}\label{sec:numericals}
\subsection{Estimation of the stressed law}
\label{sec:num_stressed_law}

We illustrate the practical utility of the theory using a moderately
high-dimensional reinsurance network with \(d=10\) risk factors and \(K=20\)
agents. Let \(\mv A\in\mathbb R_+^{K\times d}\) denote the agent--object exposure
matrix, and take
\[
    \mv L(\xx)=\mv A\xx .
\]
The matrix \(\mv A\) is fixed across replications and is chosen to create one
broadly exposed hub agent together with heterogeneous locally exposed agents.
Specifically, agent \(1\) is exposed equally to all risk factors, so
\[
    A_{1\ell}=\frac1d,\qquad \ell=1,\ldots,d.
\]
Each remaining agent is exposed to two risk factors only. If agent \(i\) is
assigned the pair \((p_i,q_i)\), then
\[
    A_{i,p_i}=\alpha_i,\qquad
    A_{i,q_i}=1-\alpha_i,\qquad
    A_{i\ell}=0\quad\text{for }\ell\notin\{p_i,q_i\},
\]
where \(\alpha_i\in[0,1]\). Thus agent \(2\) is exposed to risk factors \(1\)
and \(2\), agent \(3\) to risk factors \(3\) and \(4\), agent \(4\) to risk
factors \(4\) and \(5\), and so on. This construction yields a network with one
hub-like agent and several local exposure channels, some overlapping, without
requiring the full \(20\times10\) exposure matrix to be displayed.

\noindent \textbf{Risk Factor Distribution: }Writing \(\XX=R\Pphi\), with \(R\ge1\) and \(\Pphi\) on the positive unit
sphere, we generate the risk factors from
\begin{equation}\label{eqn:numerical_density}
    f_{R,\Pphi}(r,\pphi)
    =
    C_{s,\varepsilon,\gamma}\,
    r^{-(s+1)}
    \left\{
        g_1(\pphi)
        +
        \varepsilon r^{-\gamma}g_2(\pphi)
    \right\},
    \qquad r\ge1 ,
\end{equation}
where \(g_1\) and \(g_2\) are angular densities. We set \(s=2.5\),
\(\varepsilon=0.1\), and \(\gamma=2.5\), with the angular densities specified
in our implementation. The leading term \(g_1\) determines the limiting
angular law, while the perturbation decays at rate \(r^{-\gamma}\), so that
Assumption~\ref{assume:2nd_order} holds. The law is deliberately
non-elliptical, making a fitted \(t\)-copula an informative heavy-tailed
benchmark whose elliptical dependence structure may fail to recover the
limiting angular law.

\noindent \textbf{Metrics of Comparison: }For each \(\mathcal J\) and each target stress probability \(\beta\),
we choose the stress level \(u=u_\beta(\mathcal J)\) so that
\[
    P\bigl(\XX\in\mathcal S(u;\mathcal J)\bigr)=\beta.
\]
The values of \(\beta\) are displayed in the respective figures. For each
configuration \((u,\mathcal J)\), we compute Monte Carlo reference values using
\(10^6\) independent samples from the true law in
\eqref{eqn:numerical_density}. These values approximate the target quantities
under the true stressed law
\[
    \Law\left(\XX\mid \XX\in\mathcal S(u;\mathcal J)\right).
\]
The target quantities are the following stressed risk measures: the agent-level CoVaR (see \cite{tobias2016covar}),
\(\mathrm{CoVaR}_\alpha(L_i) = \textsf{VaR}_{1-\alpha}[L_i(\XX) \mid \XX\in \mathcal S(u;\mathcal J)]\) and conditional expected shortfall
\(\mathrm{CoES}_\alpha(L_i) = \textsf{CVaR}_{1-\alpha}[L_i(\XX) \mid \XX\in \mathcal S(u;\mathcal J)]\) under the stressed law, evaluated at
\(\alpha=0.1\); and the stressed aggregated loss
\[
    E\left[
        \sum_{i=1}^K L_i
        \,\middle|\,
        \XX\in\mathcal S(u;\mathcal J)
    \right];
\]
and the HHI of the stressed expected losses, which measures how concentrated
the stress impact is across agents. The first two are agent-level tail
functionals, the third is a network-level scale summary, and the fourth is a
scale-free concentration summary.

\noindent\textbf{Experiment Setup:} We evaluate finite-sample performance over \(M=1000\) independent
replications. In every replication the training sample size is \(n=1000\), and
the number of intermediate exceedances used for SSGEN is fixed at \(k_n=100\),
corresponding to the empirical \(90\%\) radial threshold and to the choice
\(k_n=n^{1-q}\) with \(q=1/3\). For replication \(m=1,\ldots,M\), we proceed as
follows:
\begin{enumerate}
    \item[(i)] Draw an independent training sample $\XX_1^{(m)},\ldots,\XX_n^{(m)}$
    from \eqref{eqn:numerical_density}.
    \item[(ii)] Fit an angular generative model to the directions of the
    \(k_n=100\) intermediate exceedances, estimate the radial tail index, and
    construct the corresponding SSGEN tail generator. 
    \item[(iii)] Generate stressed samples from the trained SSGEN model at the
    target level \(u\), retain those falling in
    \(\mathcal S(u;\mathcal J)\), and use the retained samples to estimate the
    target risk measures and downstream stress summaries.
\end{enumerate}
In the implementation reported here, the finite-threshold angular density
\(\widehat\varphi^{(t)}\) is learned using a diffusion model, which is
re-trained independently in each replication. The diffusion model serves as
a flexible computational implementation of the angular-learning step; the EVT
radial extrapolation is then applied as specified in
Algorithm~\ref{algo:generate_extremes}.

We compare SSGEN with two benchmarks fitted to the same training samples.
\textsf{Naive} directly conditions the empirical sample on
\(\mathcal S(u;\mathcal J)\), and therefore becomes increasingly variable as the
stress event becomes rarer. Replications containing no stress observation are treated as undefined, and the box plots use only replications for which the estimator is defined. \textsf{MVT} fits a
\(t\)-copula, together with marginal models, and passes the generated scenarios
through the same conditioning and estimation pipeline. It accommodates
heavy-tailed dependence but imposes an elliptical copula and therefore need not
capture the non-elliptical angular structure in
\eqref{eqn:numerical_density}.

For a scalar target quantity \(\Theta\), the reported accuracy measure is the
absolute relative error
\[
    \operatorname{ARE}(\widehat\Theta,\Theta^{\rm true})
    :=
    100\left|
        \frac{\widehat\Theta-\Theta^{\rm true}}
             {\Theta^{\rm true}}
    \right|,
\]
where \(\Theta^{\rm true}\) denotes the corresponding Monte Carlo reference
value. For the agent-level CoVaR and CoES quantities, the error reported in
each replication is the maximum of the agent-specific relative errors:
\[
   {100} \times  \max_{i=1}^K  
    \left|
        \frac{\widehat\Theta_i-\Theta_i^{\rm true}}
             {\Theta_i^{\rm true}}
    \right|.
\]

\noindent \textbf{Results:} Across the \(M\) replications, we report box-plots of these errors for the
three methods. Training details for the diffusion implementation are provided
in Appendix~\ref{app:implement}. Figures~\ref{fig:all20_numerical_boxplots}
and~\ref{fig:hub_numerical_boxplots} report the results for full-network and
hub-only stress, respectively ($\mathcal J= [20]$ and $\mathcal J = \{1\},$ respectively). Across both stress sets, SSGEN attains the
lowest median errors for CoVaR, CoES, and aggregated loss at every value of
\(\beta\), with its advantage generally widening as the stress event becomes
rarer. For example, at \(\beta=0.005\) under full-network stress, the median
CoVaR errors are approximately \(15\%\), \(25\%\), and \(47\%\) for SSGEN,
empirical conditioning, and the fitted \(t\)-copula, respectively. Empirical
conditioning deteriorates as the number of observations in the target stress
region shrinks, whereas SSGEN uses all \(k_n\) intermediate exceedances and
extrapolates radially. The persistently large \(t\)-copula errors are
consistent with its elliptical restriction failing to capture the relevant
extremal angular structure.

The HHI results are different: all three methods perform well, with median
errors below \(1\%\) under full-network stress, and the fitted \(t\)-copula is
competitive with SSGEN. Since HHI is scale-free, errors in the overall
magnitude of stressed losses largely cancel. Thus the benefits of learning the
extremal angular law appear most clearly for scale-sensitive stress
functionals.
\begin{figure}[t]
    \centering

    \begin{subfigure}{0.49\textwidth}
        \centering
        \includegraphics[width=\linewidth]{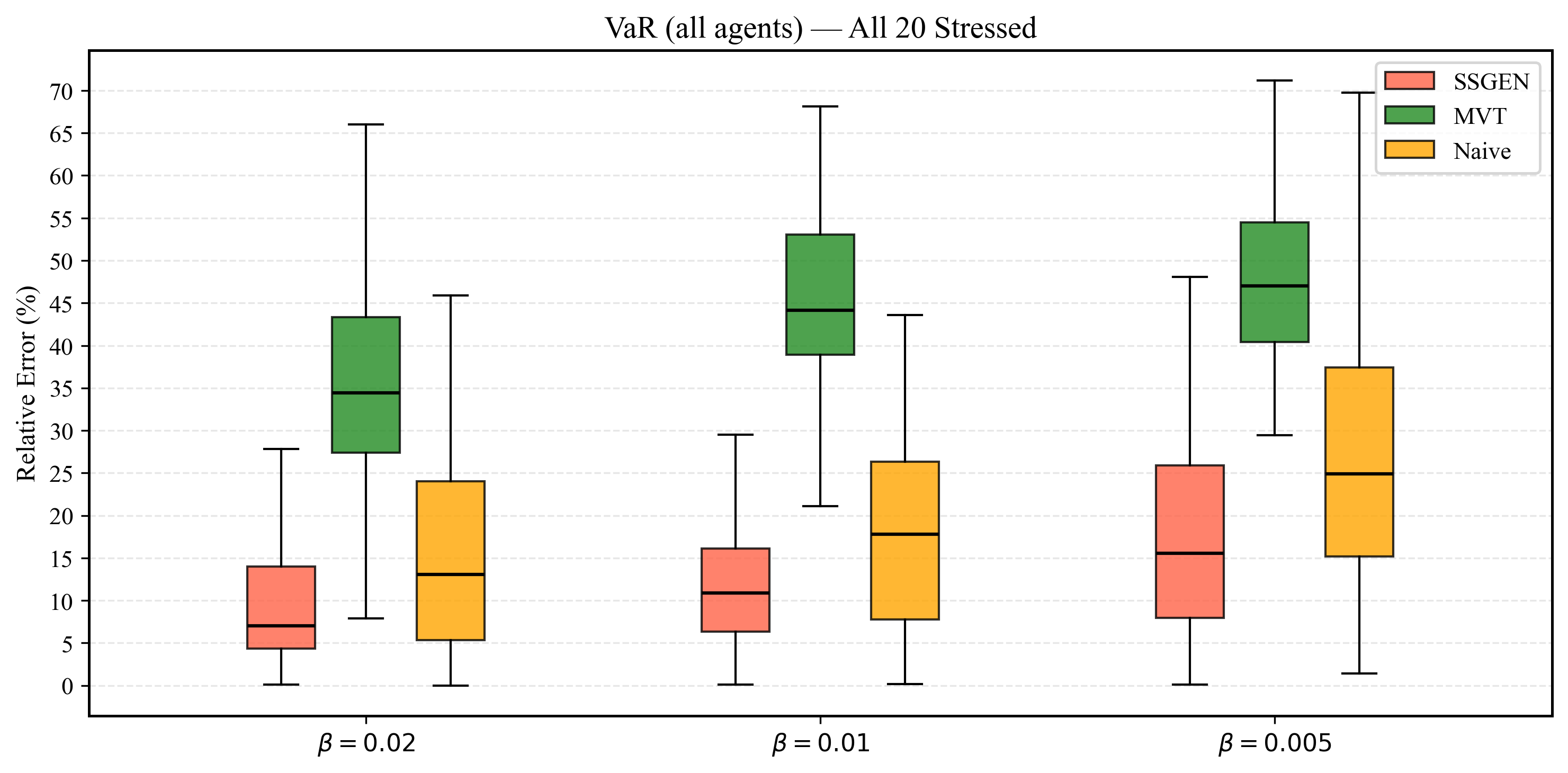}
        \caption{CoVaR}
        \label{fig:all20_var}
    \end{subfigure}
    \hfill
    \begin{subfigure}{0.49\textwidth}
        \centering
        \includegraphics[width=\linewidth]{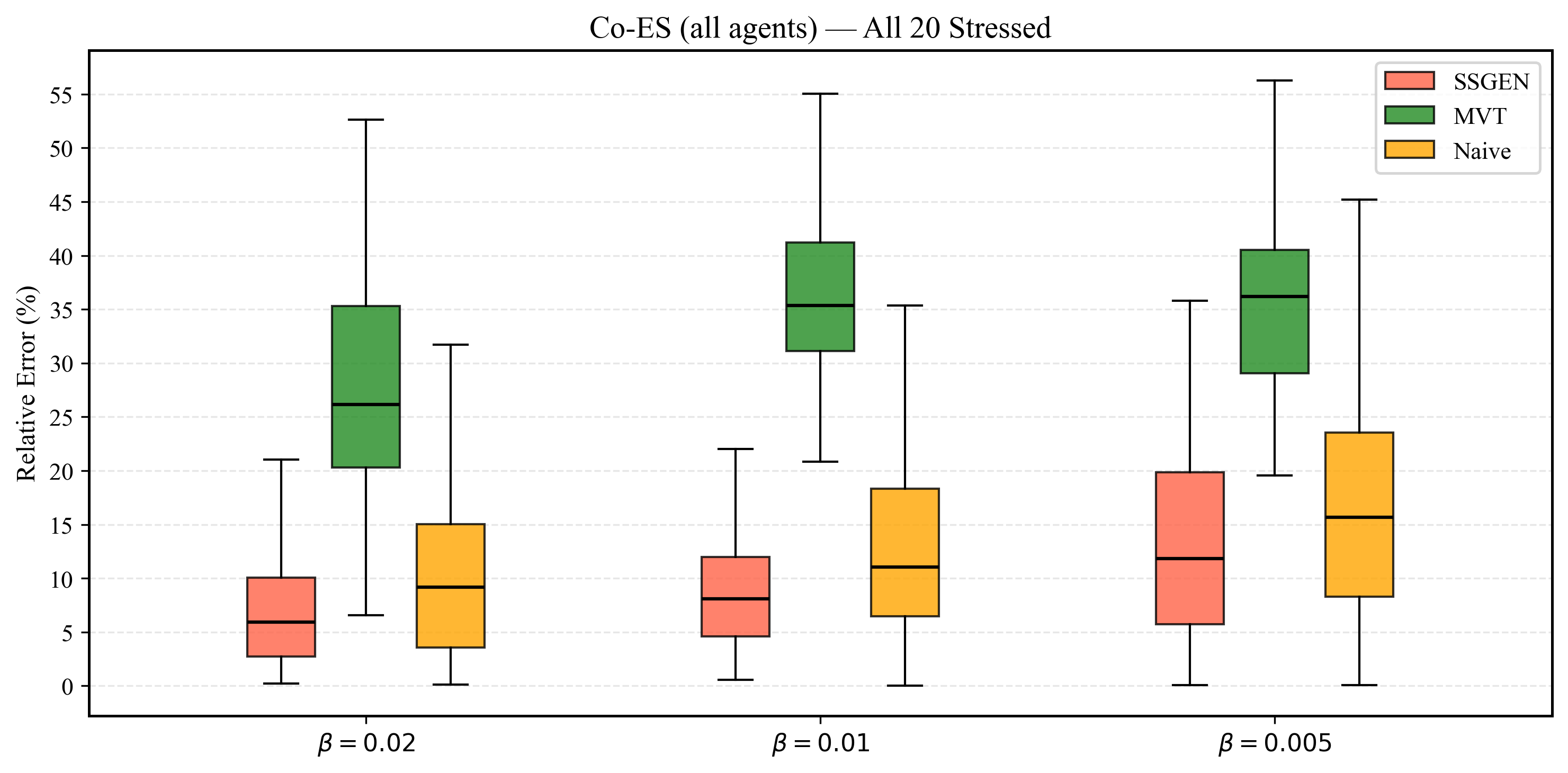}
        \caption{CoES}
        \label{fig:all20_coes}
    \end{subfigure}

    \medskip

    \begin{subfigure}{0.49\textwidth}
        \centering
        \includegraphics[width=\linewidth]{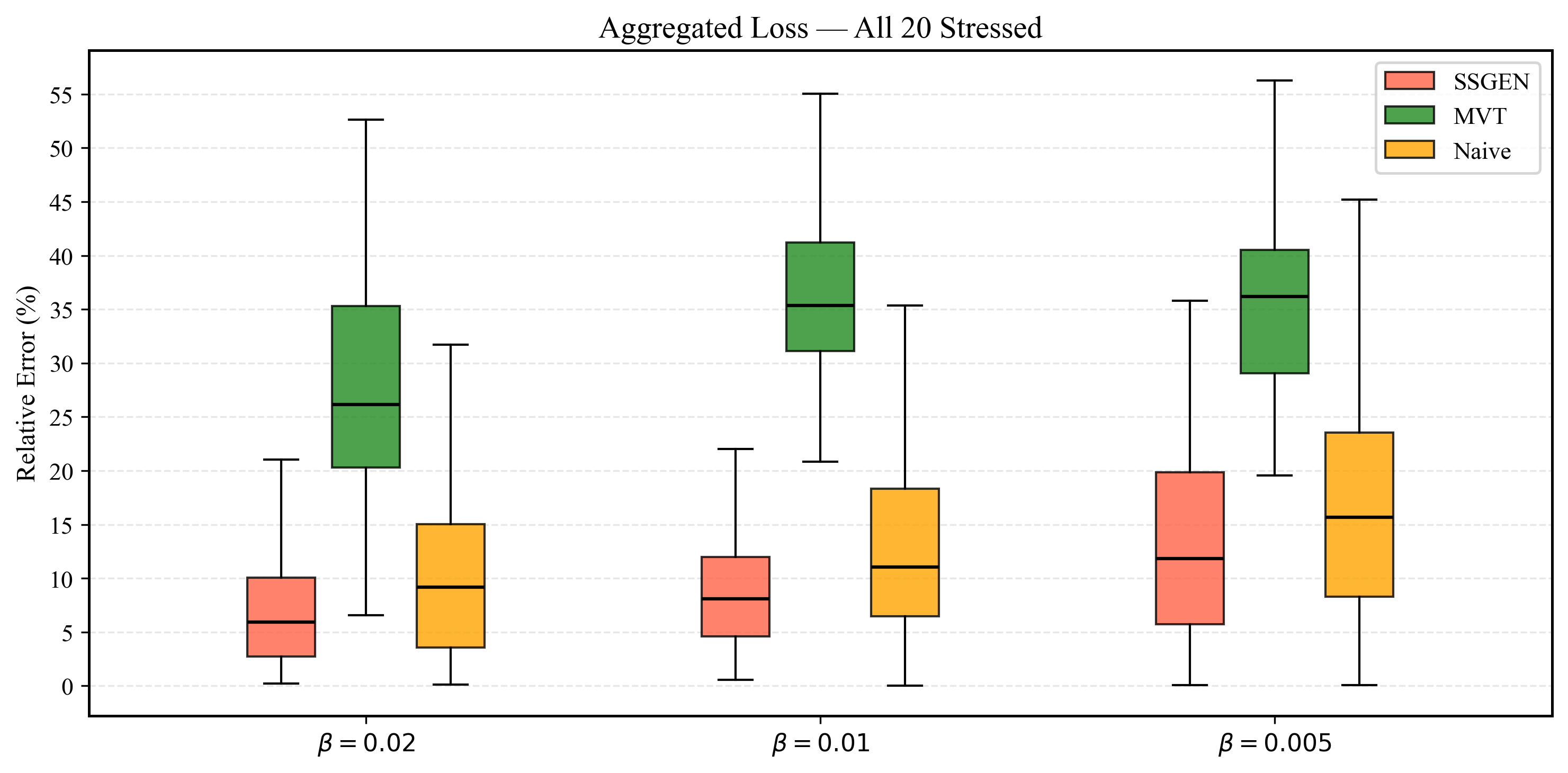}
        \caption{Aggregate loss}
        \label{fig:all20_agg_loss}
    \end{subfigure}
    \hfill
    \begin{subfigure}{0.49\textwidth}
        \centering
        \includegraphics[width=\linewidth]{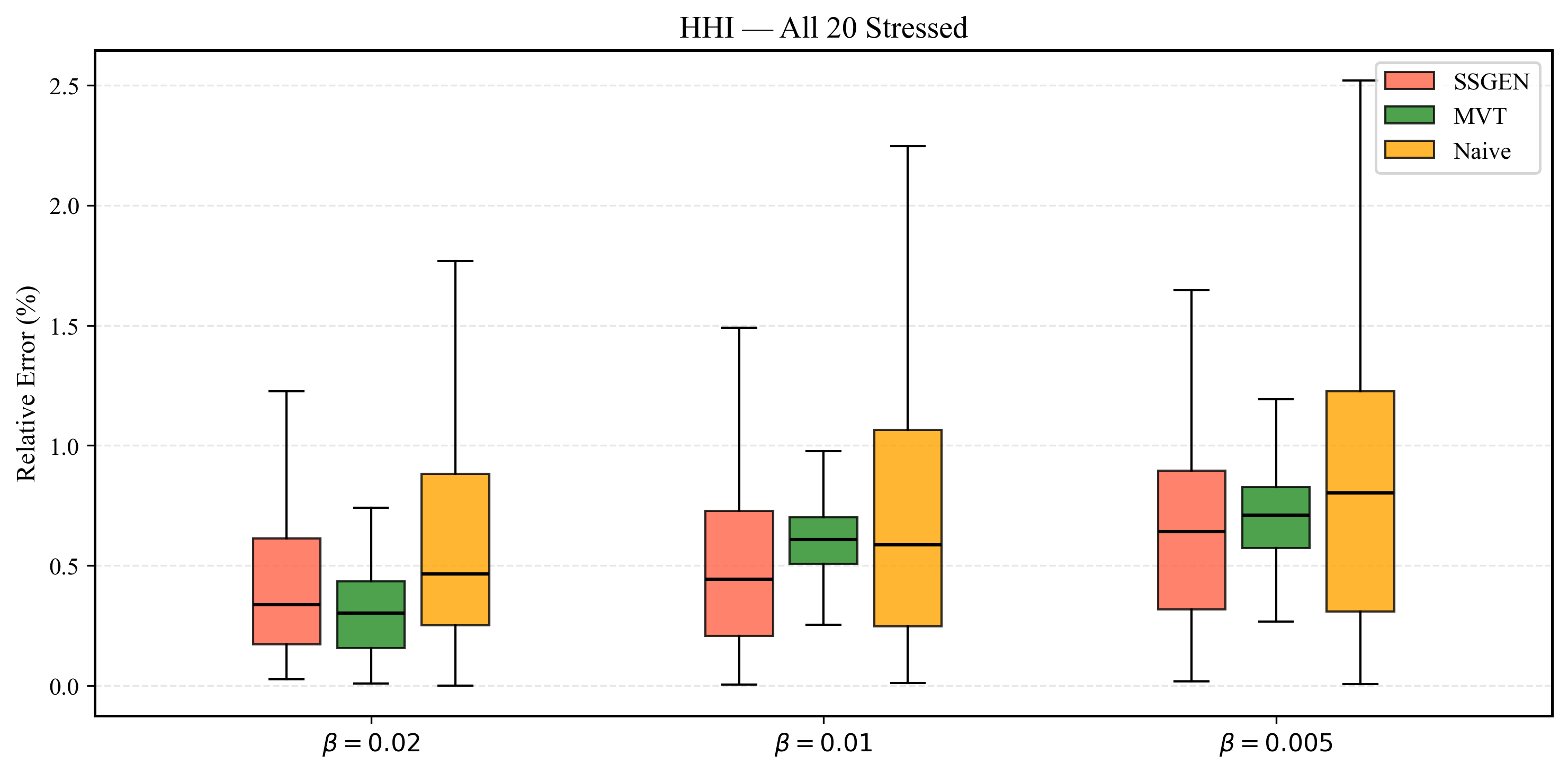}
        \caption{HHI}
        \label{fig:all20_hhi}
    \end{subfigure}

    \caption{
    Relative estimation errors for stress-law functionals in the \(d=10\),
    \(K=20\) reinsurance network, conditional on all \(20\) agents being
    stressed. Boxplots are computed across \(M\) independent training-sample
    replications and compare SSGEN with direct empirical conditioning
    \emph{(Naive)} and a fitted multivariate t-copula benchmark \emph{(MVT)}.
    }
    \label{fig:all20_numerical_boxplots}
\end{figure}

\begin{figure}[t]
    \centering

    \begin{subfigure}{0.49\textwidth}
        \centering
        \includegraphics[width=\linewidth]{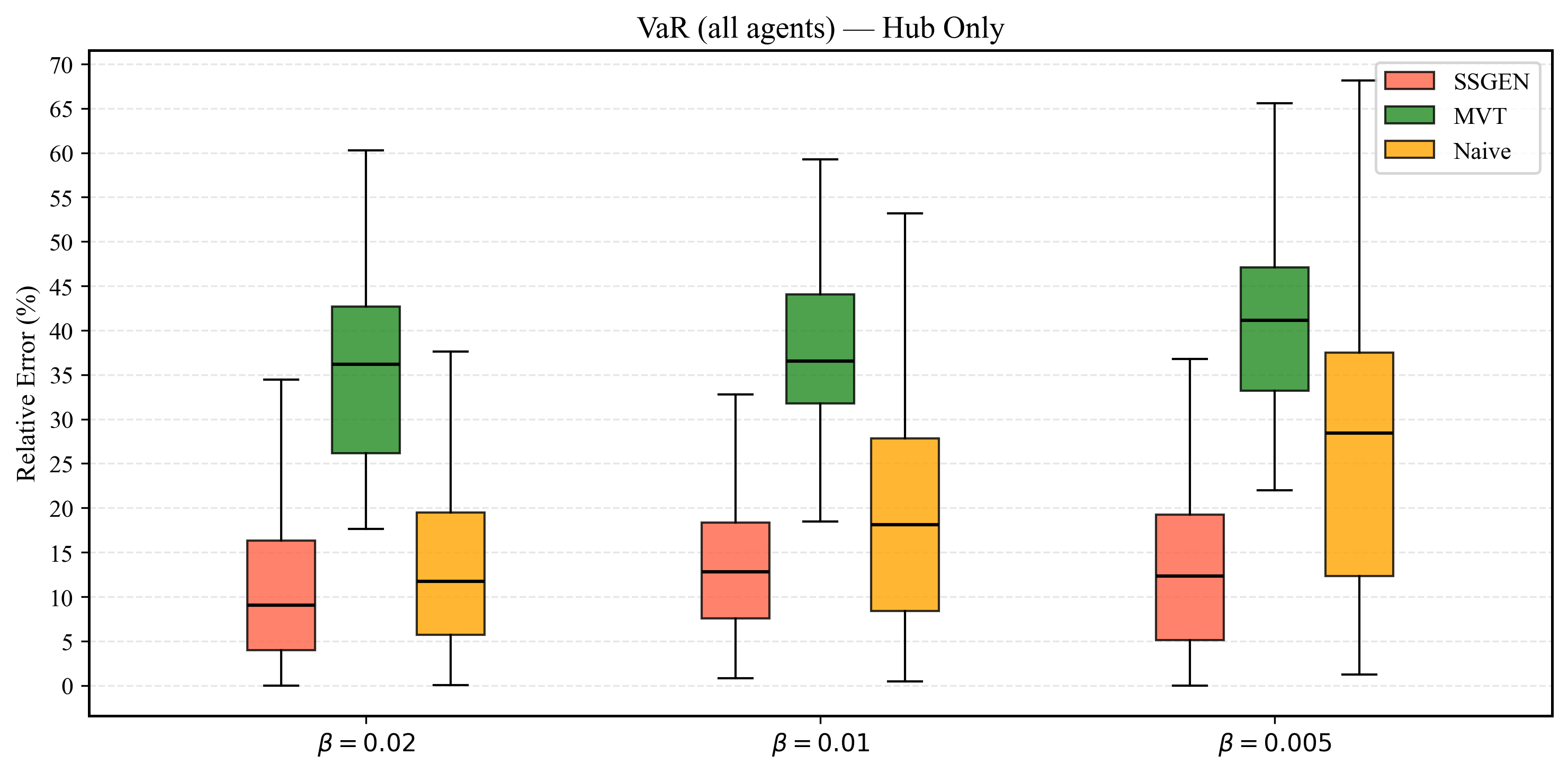}
        \caption{CoVaR}
        \label{fig:hub_var}
    \end{subfigure}
    \hfill
    \begin{subfigure}{0.49\textwidth}
        \centering
        \includegraphics[width=\linewidth]{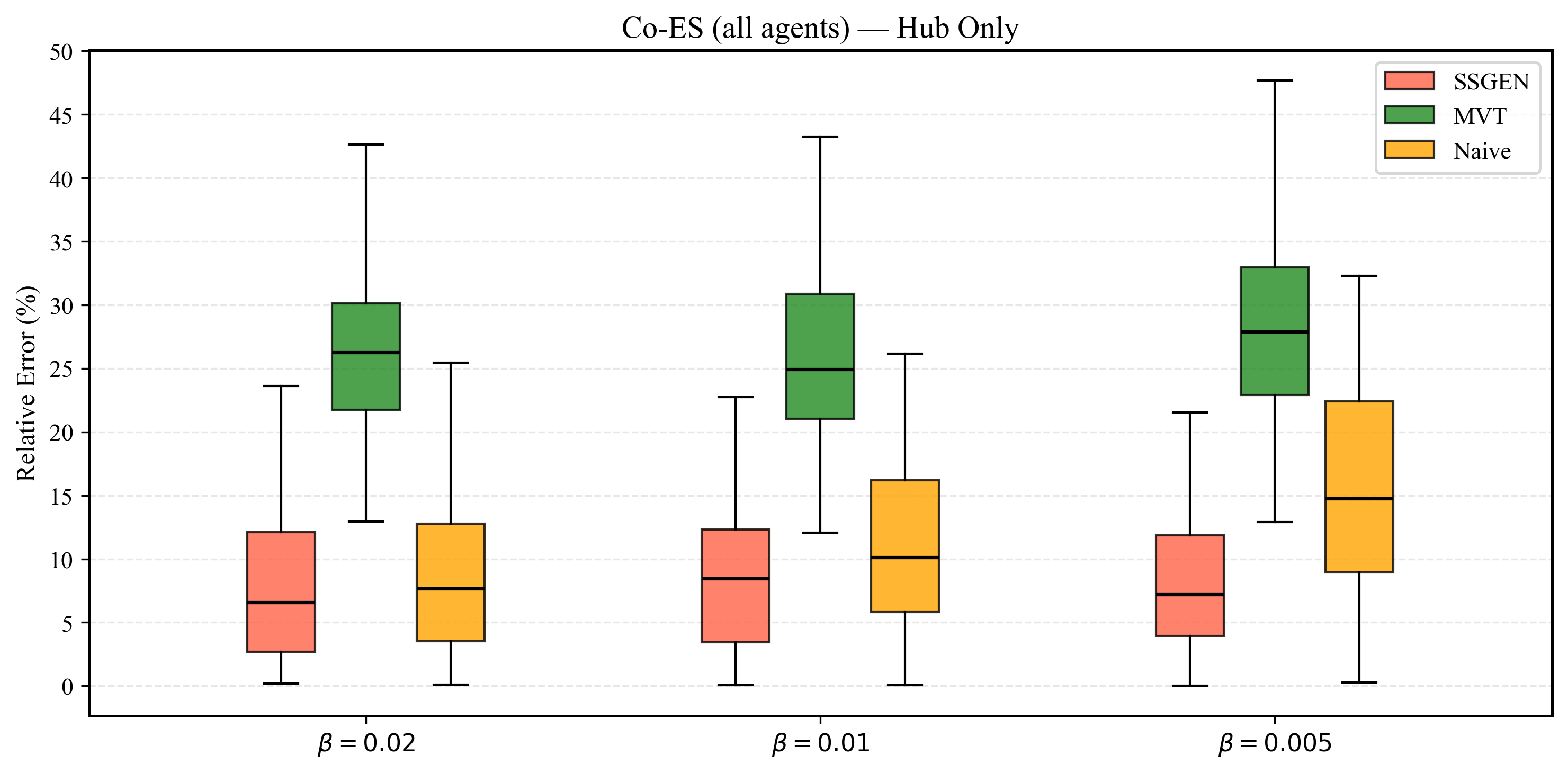}
        \caption{CoES}
        \label{fig:hub_coes}
    \end{subfigure}

    \medskip

    \begin{subfigure}{0.49\textwidth}
        \centering
        \includegraphics[width=\linewidth]{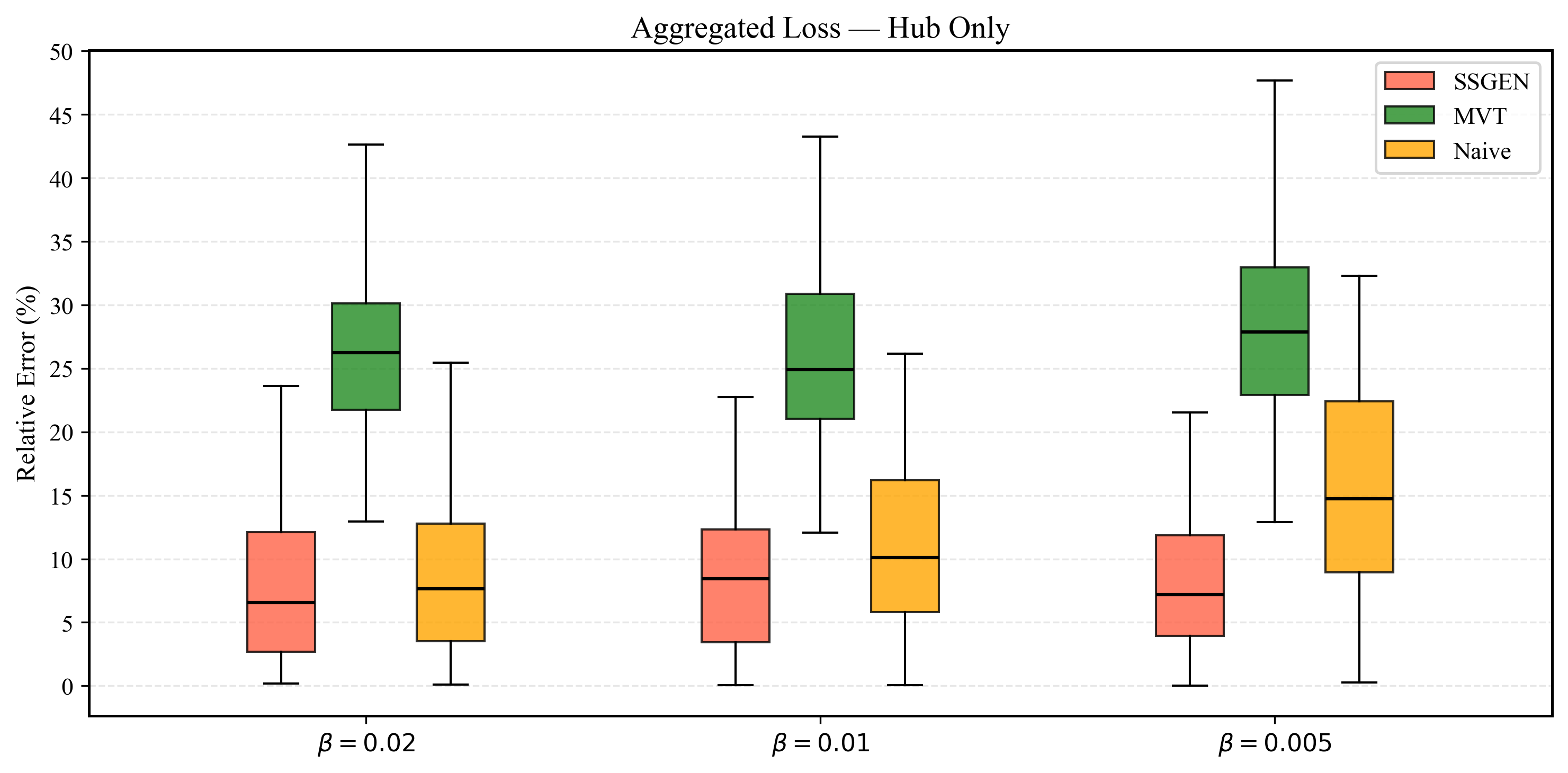}
        \caption{Aggregate loss}
        \label{fig:hub_agg_loss}
    \end{subfigure}
    \hfill
    \begin{subfigure}{0.49\textwidth}
        \centering
        \includegraphics[width=\linewidth]{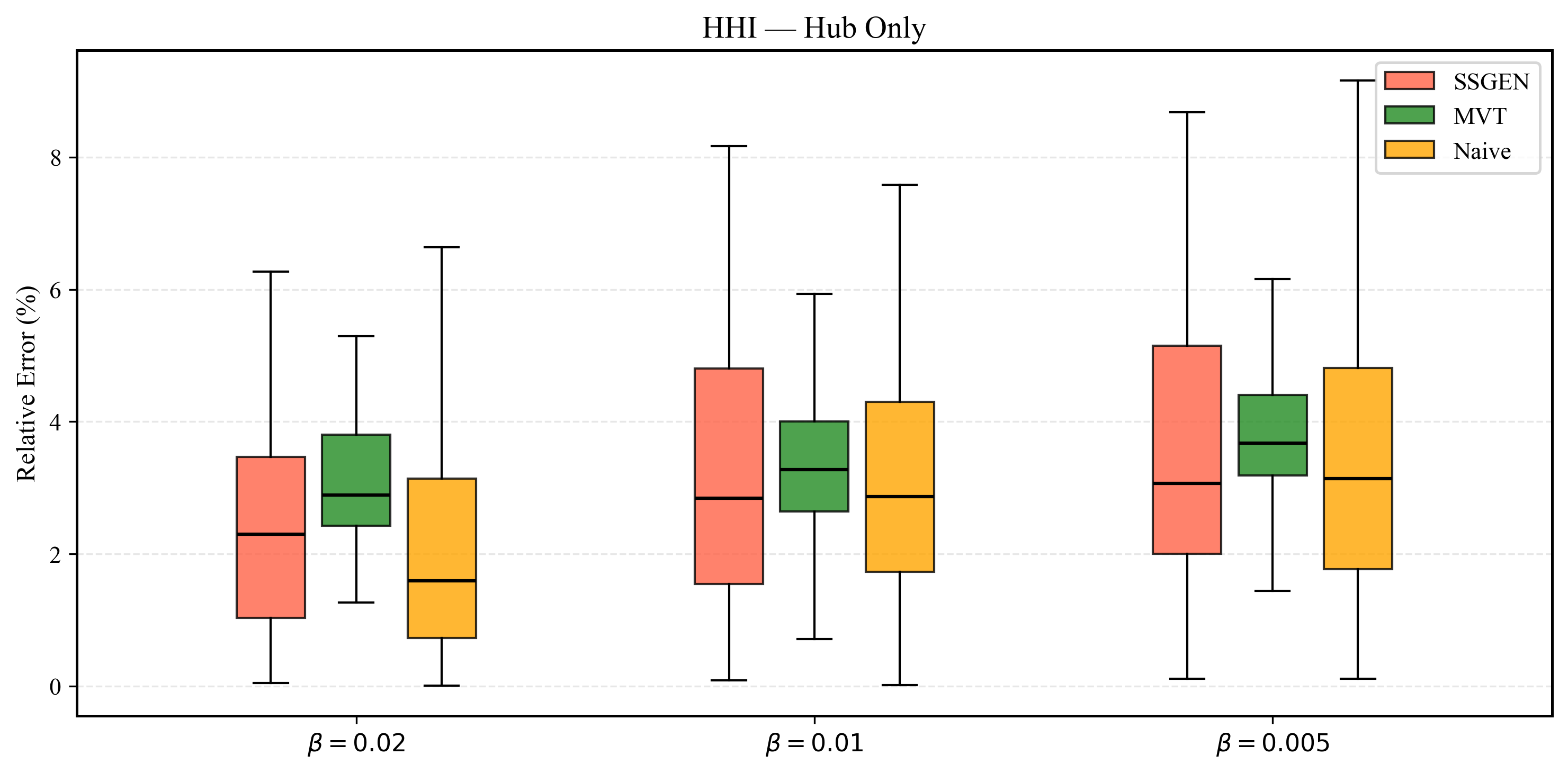}
        \caption{HHI}
        \label{fig:hub_hhi}
    \end{subfigure}

    \caption{
    Relative estimation errors for stress-law functionals in the \(d=10\),
    \(K=20\) reinsurance network, conditional on the hub agent being stressed.
    Boxplots are computed across \(M\) independent training-sample replications
    and compare SSGEN with direct empirical conditioning \emph{(Naive)} and a
    fitted multivariate-\(t\) benchmark \emph{(MVT)}.
    }
    \label{fig:hub_numerical_boxplots}
\end{figure}

\subsection{Data-driven reverse stress testing}
\label{sec:num_rst}

We evaluate data-driven reverse stress testing in a \(d=5\), \(K=5\) linear
reinsurance system of the form of Example~1. The exposure matrix
\(\mv A\in\mathbb R^{5\times 5}\) is row-stochastic, and the
capital vector is \(c=(0.5,\,0.3,\,0.4,\,0.3,\,0.5)\). The risk factors follow
the law \eqref{eqn:numerical_density} in $\R^5_+$, with tail index
\(s=2.5\) and the same non-elliptical angular structure. We condition on the
full system being stressed, \(\mathcal J=[5]\), at a common level \(u\)
calibrated so that \(P\bigl(\XX\in\mathcal S(u;\mathcal J)\bigr)=\beta\) for
\(\beta\in\{0.02,\,0.01\}\). 

Since the joint event coincides with
\(\{\min_{j\in\mathcal J}L_j(\XX)\ge u\}\), the level \(u\) is determined as
the \((1-\beta)\)-quantile of the minimum loss on a benchmark sample of
\(10^6\) observations. The reference solution \(\zz^{\mathrm{true}}\) is obtained by
maximising the exact conditional density of
\eqref{eqn:numerical_density} over \(\Gamma_u(\mathcal J)\), using
a multi-start optimisation. Each of \(M=50\) independent replications draws a training sample
of size \(n=1000\). Algorithm~\ref{algo:generate_extremes} is trained on the \(k_n=100\)
intermediate exceedance directions, and the radial tail index is estimated using
the Hill estimator. We then generate proposals from the fitted SSGEN tail model
until \(N_{\mathrm{KDE}}=5000\) scenarios satisfying
\(\mathcal S(u;\mathcal J)\) have been retained. Following the reverse-stress
formulation in \eqref{eqn:rst_problem}, a Gaussian kernel density estimate
\(\widehat g_h\), using Scott's bandwidth, is fitted to the scaled stressed
sample. The reverse-stress solution \(\widehat{\zz}\) is obtained by maximizing
\(\log\widehat g_h\) over \(\Gamma_u(\mathcal J)\) via sequential quadratic
programming under the linear stress constraints. The precise KDE-based procedure
is given in Algorithm~\ref{alg:rst_dd}.

Theorem~\ref{thm:rst_solution_rate} analyzes the reverse-stress problem associated with the fitted SSGEN density. Algorithm~\ref{alg:rst_dd} adds downstream Monte Carlo, KDE, and numerical-optimization errors. Under standard KDE mode-consistency conditions, with the bandwidth updated as the retained sample size increases, these errors can be controlled independently through the SSGEN simulation budget and solver tolerance. The numerical results below evaluate the complete implemented pipeline.

The benchmark is the \textsf{MVT} fit of
Section~\ref{sec:num_stressed_law}. Since the fitted \(t\) model has an explicit density, its
reverse-stress solution is obtained by maximising that density
directly over \(\Gamma_u(\mathcal J)\), using the same
optimisation as for SSGEN. The \textsf{MVT} benchmark therefore carries no additional smoothing error.

For each replication we record the mode coordinates \(\widehat{\zz}\), the
squared error \(\|\widehat{\zz}-\zz^{\mathrm{true}}\|^2\), and the relative
likelihood ratio
\(\widehat g_h^{\mathrm{true}}(\widehat{\zz})\,/\,
\widehat g_h^{\mathrm{true}}(\zz^{\mathrm{true}})\).
Table~\ref{tab:rst} reports the per-coordinate results, with the squared error
and likelihood ratio averaged across replications. Table~\ref{tab:rst} and Figure~\ref{fig:rst} show that SSGEN recovers the finite-level
reverse-stress solution more accurately than \textsf{MVT} at both stress levels.
For $\beta=0.02$, its mean squared solution error is $0.059$, compared
with $0.193$ for MVT, while the mean likelihood ratio is $0.875$,
compared with $0.706$. For $\beta=0.01$, the corresponding values are
$0.066$ versus $0.095$ and $0.837$ versus $0.791$. Although the
advantage narrows at the rarer level, SSGEN remains more accurate under
both diagnostics. Since these quantities are computed after scenario
generation, KDE fitting, and constrained numerical maximization, they
evaluate the complete implemented reverse-stress pipeline.

\begin{figure}[t]
    \centering
\caption{Results of the RST experiments. Panel (a) compares squared errors of the estimated reverse-stress solutions, while panel (b) compares their likelihoods relative to the true reverse-stress solution.}\label{fig:rst}
    \begin{subfigure}{0.46\textwidth}
        \centering
        \includegraphics[width=\linewidth]{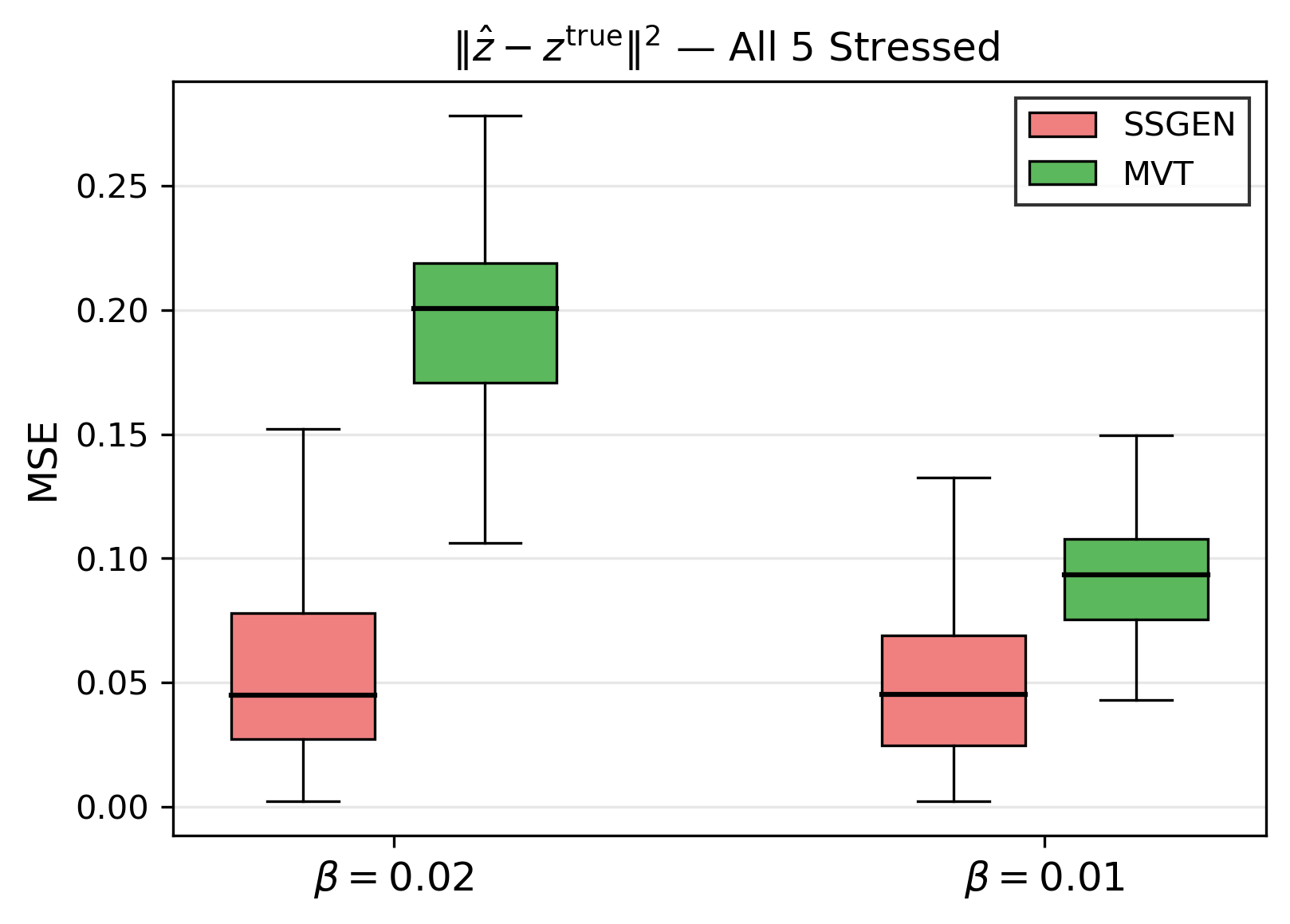}
        \caption{Squared error comparison}
        \label{fig:rst_mse}
    \end{subfigure}
    \hfill
    \begin{subfigure}{0.46\textwidth}
        \centering
        \includegraphics[width=\linewidth]{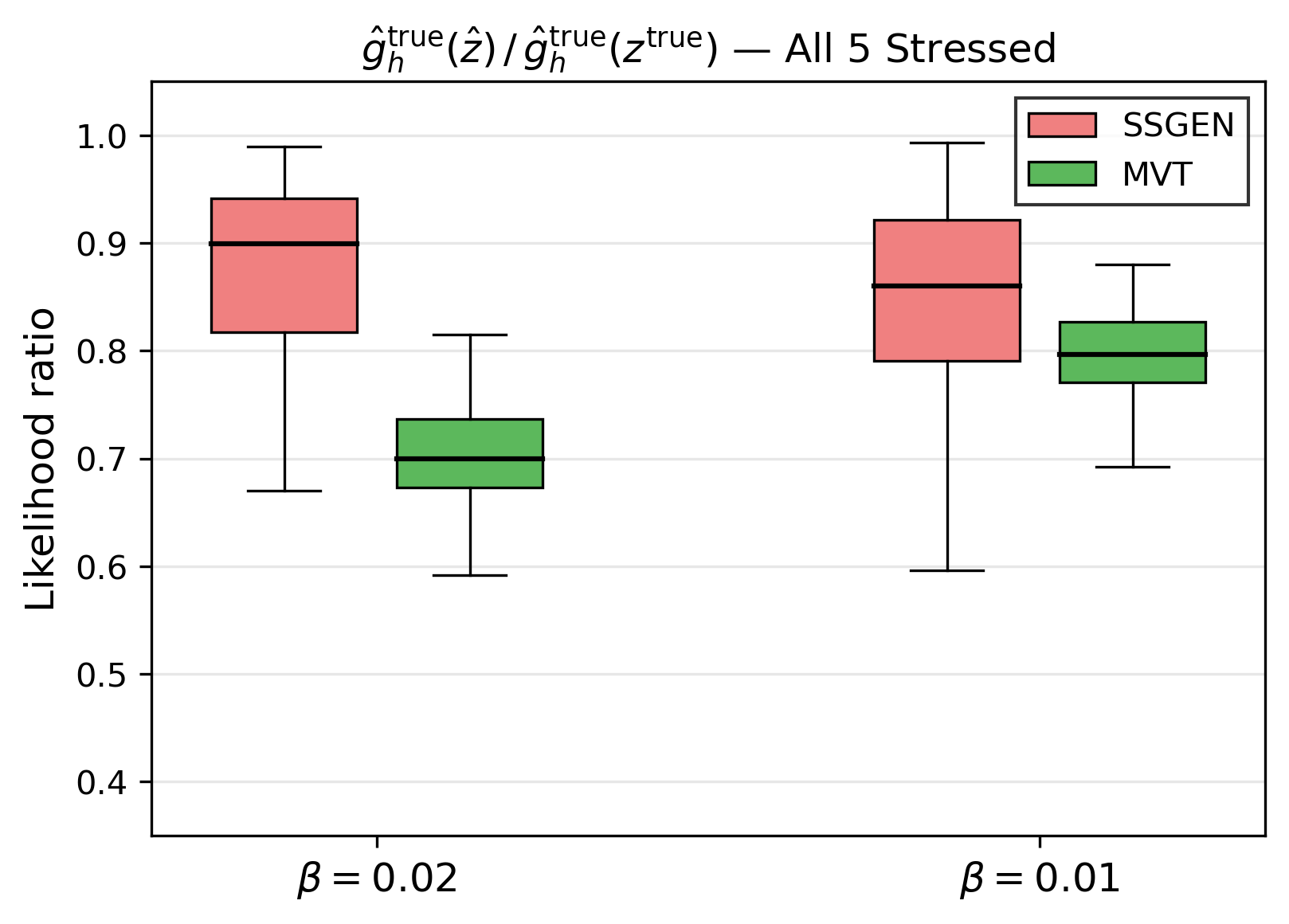}
        \caption{Likelihood ratio comparison}
        \label{fig:rst_lik}
    \end{subfigure}
\end{figure}

\begin{table}[htbp]
\centering
\caption{Reverse-stress solutions over $M$ independent replications. ``Mean'' is the
average mode coordinate, ``95\% CI'' the confidence interval for the mean,
``MSE'' the mean of
$\|\widehat{\boldsymbol{z}}-\boldsymbol{z}^{\mathrm{true}}\|^2$ across
replications, and ``Lik.\ ratio'' the mean of
$\widehat{g}_h^{\mathrm{true}}(\widehat{\boldsymbol{z}})\,/\,
\widehat{g}_h^{\mathrm{true}}(\boldsymbol{z}^{\mathrm{true}})$.}
\label{tab:rst}
\small
\begin{tabular}{c l r @{\hspace{1.2em}} r r @{\hspace{1.2em}} r r}
\toprule
& & & \multicolumn{2}{c}{SSGEN} & \multicolumn{2}{c}{MVT} \\
\cmidrule(r){4-5} \cmidrule(l){6-7}
$\beta$ & & Truth & Mean & 95\% CI & Mean & 95\% CI \\
\midrule
\multirow{7}{*}{0.02}
& $z_1$ & 1.738 & 1.729 & (1.706, 1.753) & 1.498 & (1.492, 1.504) \\
& $z_2$ & 1.883 & 1.864 & (1.837, 1.892) & 1.635 & (1.629, 1.642) \\
& $z_3$ & 1.538 & 1.538 & (1.518, 1.558) & 1.553 & (1.546, 1.561) \\
& $z_4$ & 1.906 & 1.892 & (1.871, 1.913) & 1.651 & (1.645, 1.658) \\
& $z_5$ & 1.282 & 1.282 & (1.269, 1.295) & 1.343 & (1.337, 1.349) \\
\cmidrule{2-7}
& MSE & & \multicolumn{2}{c}{0.059} & \multicolumn{2}{c}{0.193} \\
& Lik.\ ratio & & 0.875 & (0.857, 0.892) & 0.706 & (0.697, 0.716) \\
\midrule
\multirow{7}{*}{0.01}
& $z_1$ & 1.559 & 1.619 & (1.598, 1.641) & 1.385 & (1.380, 1.390) \\
& $z_2$ & 1.684 & 1.752 & (1.728, 1.776) & 1.514 & (1.508, 1.520) \\
& $z_3$ & 1.461 & 1.462 & (1.443, 1.482) & 1.466 & (1.460, 1.472) \\
& $z_4$ & 1.704 & 1.785 & (1.764, 1.805) & 1.527 & (1.521, 1.533) \\
& $z_5$ & 1.235 & 1.221 & (1.207, 1.234) & 1.270 & (1.265, 1.275) \\
\cmidrule{2-7}
& MSE & & \multicolumn{2}{c}{0.066} & \multicolumn{2}{c}{0.095} \\
& Lik.\ ratio & & 0.837 & (0.815, 0.860) & 0.791 & (0.782, 0.800) \\
\midrule
\end{tabular}
\end{table}

\FloatBarrier
\bibliographystyle{plainnat}
\bibliography{ref}
\newpage
\appendix
\section{Proofs of Main Results}
\subsection{Proofs from Section~\ref{sec:assump_model}:}  
\ \\
\noindent\textbf{Proof of Lemma~\ref{lem:phi_star_props}:} Fix  points $(\zz,\zz_0) \neq \mv 0$. Observe that $\frac{f_{\XX}(t\zz)}{f_{\XX} (t\mv z_0)} \to \frac{\varphi^\star(\zz)}{\varphi^\star(\mv z_0)}$. Therefore, $f_{\XX}$ satisfies \cite{Resnickbook}, Equation 5.32 with $\rho = -(s+d)$. Consequently, $\varphi^\star$ is homogeneous of order $-(s+d)$. To prove part (ii), define $\varphi_t(\zz) = t^{s+d} f_{\XX}(t\zz)$, and note that 
\[
|\varphi^\star(\zz_m) - \varphi^\star(\zz) | \leq |\varphi^\star(\zz_m) - \varphi_t(\zz_m) | + |\varphi_t(\zz) - \varphi_t(\zz_m) |+ |\varphi^\star(\zz) - \varphi_t(\zz) |.
\]
Let $\zz_m\to \zz \neq \mv 0$. Then, $\{\zz_m\}_{m>m_0}$ lie in a compact set for all large enough $m_0$. Consequently, choosing a $t$ large, the first and the third terms above can be made smaller than $\varepsilon/3$. For the second term, note that since $\varphi_t$ are continuous by assumption, the second term is smaller than $\varepsilon/3$ whenever $m>m_1$. Choosing $m>\max\{m_0,m_1\}$ yields that $|\varphi^\star(\zz_m) - \varphi^\star(\zz) |<\varepsilon$. $\varphi^\star$ is therefore continuous, and bounded on $\Theta_{\mathcal E}$ (a compact set). The second conclusion of part (ii) now follows. \qed

\noindent \textbf{Proof of Lemma~\ref{lem:eisenberg_clearing}}:  Fix an arbitrary measurable selection of clearing fixed
points and denote it by \(\pp\) for the purpose of this proof.
Since $\bar p_j(\XX)=\bigl(\mv a_j^\intercal \XX-r_j\bigr)_+,$ we have, for any \(\zz_t\to\zz\),
\[
    t^{-1}\bar p_j(t\zz_t)
    =
    \bigl(\mv a_j^\intercal \zz_t-t^{-1}r_j\bigr)_+
    \to
    (\mv a_j^\intercal \zz)_+
    =
    \mv a_j^\intercal \zz
    =:\bar p_j^\star(\zz),
\]
where the last equality uses \(\mv a_j,\zz\in\mathbb R_+^d\). Define
\[
    \bar p^{(t)}(\zz_t):=t^{-1}\bar p(t\zz_t),
    p^{(t)}(\zz_t):=t^{-1}p(t\zz_t),
    \qquad
    x^{(t)}(\zz_t):=t^{-1}x(t\zz_t).
\]
Dividing the clearing fixed point by \(t\) gives
\[
p_j^{(t)}(\zz_t)
=
\min\Bigl\{
\bar p_j^{(t)}(\zz_t),\,
x_j^{(t)}(\zz_t)+\sum_{k=1}^m\lambda_{kj}p_k^{(t)}(\zz_t)
\Bigr\}
=
[\Phi_t(p^{(t)}(\zz_t);\zz_t)]_j .
\]
By hypothesis, $\bar p^{(t)}(\zz_t)\to \bar p^\star(\zz)$ and $x^{(t)}(\zz_t)\to x^\star(\zz)$.
Hence \(\Phi_t(\cdot;\zz_t)\to\Phi^\star(\cdot;\zz)\) uniformly on every bounded
subset of \(\mathbb R^m\), where
\[
[\Phi^\star(p;\zz)]_j
:=
\min\Bigl\{
\bar p_j^\star(\zz),\,
x_j^\star(\zz)+\sum_{k=1}^m\lambda_{kj}p_k
\Bigr\}.
\]
Moreover, $\mv 0\le p^{(t)}(\zz_t)\le \bar p^{(t)}(\zz_t)$,
so the sequence \(\{p^{(t)}(\zz_t)\}\) is bounded. Let
\(p^{(t_m)}(\zz_{t_m})\to q\) be any convergent subsequence. Then
\[
\begin{aligned}
\|q-\Phi^\star(q;\zz)\|
&\le
\|q-p^{(t_m)}(\zz_{t_m})\|  \\
&\quad
+\|p^{(t_m)}(\zz_{t_m})
      -\Phi^\star(p^{(t_m)}(\zz_{t_m});\zz)\| \\
&\quad
+\|\Phi^\star(p^{(t_m)}(\zz_{t_m});\zz)-\Phi^\star(q;\zz)\|.
\end{aligned}
\]
The first term vanishes by the definition of the subsequence. The third term
vanishes by continuity of \(\Phi^\star(\cdot;\zz)\). For the second term, using
\(p^{(t_m)}(\zz_{t_m})=\Phi_{t_m}(p^{(t_m)}(\zz_{t_m});\zz_{t_m})\), we obtain
\[
\begin{aligned}
&\|p^{(t_m)}(\zz_{t_m})
      -\Phi^\star(p^{(t_m)}(\zz_{t_m});\zz)\| \\
&\qquad =
\|\Phi_{t_m}(p^{(t_m)}(\zz_{t_m});\zz_{t_m})
      -\Phi^\star(p^{(t_m)}(\zz_{t_m});\zz)\|
\to 0,
\end{aligned}
\]
by the uniform convergence of \(\Phi_t\) on bounded sets. Therefore
\(q=\Phi^\star(q;\zz)\). By uniqueness of the limiting clearing vector,
\(q=p^\star(\zz)\). Since every convergent subsequence has the same limit, $t^{-1}p(t\zz_t)\to p^\star(\zz)$.
Finally,
\[
    t^{-1}L_j(t\zz_t)
    =
    t^{-1}\bar p_j(t\zz_t)-t^{-1}p_j(t\zz_t)
    \to
    \bar p_j^\star(\zz)-p_j^\star(\zz)
    =
    L_j^\star(\zz).
\]
This proves the convergence. If each \(L_j^\star\) is
nondegenerate, the clearing loss system is asymptotically homogeneous.
If in addition, the limiting joint-stress regions have positive
Lebesgue measure, the system is regular.\qed
\subsection{Proofs from Section~\ref{sec:characterisation}}
The following technical lemmas are required for  the proofs from this section
\begin{lemma}\label{lem:angle_converges}
    Suppose Assumption~\ref{assume:ht_data} holds. Then, as $t\to\infty$,
    \[
    f_{R,\Pphi}(r,\pphi) = r^{-(s+1)}\varphi^\star(\pphi)(1+\varepsilon(r,\pphi)), \quad \text{ where } \quad \sup_{r>t,\pphi\in \Theta_{\mathcal E}} |\varepsilon(r,\pphi)|\to 0. 
    \]
\end{lemma}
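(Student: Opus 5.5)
The plan is a change of variables to polar coordinates, followed directly by the uniform convergence in Assumption~\ref{assume:ht_data}. For $\zz = r\pphi$ with $r>0$ and $\pphi\in\Theta_{\mathcal E}$, the Jacobian is $r^{d-1}$ times the surface measure on the sphere, so
\[
f_{R,\Pphi}(r,\pphi) = r^{d-1} f_{\XX}(r\pphi).
\]
This holds for every $r$ large enough that $r\pphi$ lies in the region $B_{r_0}^c\cap\mathcal E$ where $f_{\XX}$ is continuous, and the identity holds pointwise there. Rewrite it as
\[
f_{R,\Pphi}(r,\pphi) = r^{-(s+1)}\bigl(r^{s+d} f_{\XX}(r\pphi)\bigr).
\]
Define
\[
\varepsilon(r,\pphi) := \frac{r^{s+d}f_{\XX}(r\pphi)}{\varphi^\star(\pphi)} - 1 .
\]
This is well defined because, by Lemma~\ref{lem:phi_star_props}(ii), $\varphi^\star$ is strictly positive on $\Theta_{\mathcal E}$. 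With this definition the claimed representation holds identically, and it remains to show that $\sup_{r>t,\pphi}|\varepsilon(r,\pphi)|\to0$ as $t\to\infty$.

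Next apply Assumption~\ref{assume:ht_data} on the compact set $\Theta_{\mathcal E}$, which is compact and avoids $\mv 0$ because $\mathcal E$ is a closed cone. Uniform convergence there gives
\[
\eta(t) := \sup_{\pphi\in\Theta_{\mathcal E}} \bigl|t^{s+d} f_{\XX}(t\pphi) - \varphi^\star(\pphi)\bigr| \to 0 .
\]
Since $\sup_{r>t}\eta(r)\to0$ as $t\to\infty$, the same limit holds for the supremum over $r>t$ and $\pphi$ jointly. Dividing by $\min_{\Theta_{\mathcal E}}\varphi^\star>0$ then yields
\[
\sup_{r>t,\pphi}|\varepsilon(r,\pphi)| \le \frac{\sup_{r>t}\eta(r)}{\min_{\Theta_{\mathcal E}}\varphi^\star} \to 0 .
\]

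The only point needing care, and the main obstacle, is the passage from convergence at the single scale $t$ to a uniform statement over all $r>t$. Stated that way it is trivial, because $\eta(r)\to0$ as $r\to\infty$ already means $\sup_{r>t}\eta(r)\to0$. The subtlety lies in reading "uniformly over compact subsets" correctly. I take it to mean uniform in $\zz$ over the compact set for the scalar limit $t\to\infty$, and applying it with $\zz=\pphi$ on the unit sphere is all that is needed. The homogeneity in Lemma~\ref{lem:phi_star_props}(i) is not needed here, because the argument stays on the sphere. A secondary point is that the polar density is defined only up to a.e.\ equivalence. We fix the version $r^{d-1}f_{\XX}(r\pphi)$, which is legitimate since $f_{\XX}$ is continuous beyond $r_0$, so the pointwise supremum is meaningful once $t>r_0$.
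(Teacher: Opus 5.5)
Your proof is correct and follows the paper's argument essentially verbatim. Both use the polar change of variables $f_{R,\Pphi}(r,\pphi)=r^{d-1}f_{\XX}(r\pphi)$, then the uniform convergence of $r^{s+d}f_{\XX}(r\pphi)$ on the compact set $\Theta_{\mathcal E}$, and finally division by $\min_{\Theta_{\mathcal E}}\varphi^\star>0$. The only cosmetic difference is that you cite Lemma~\ref{lem:phi_star_props}(ii) for the continuity and positive minimum of $\varphi^\star$, whereas the paper re-derives continuity inside the proof.
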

\begin{lemma}\label{lem:homogenous_epi_conv}
    Let $f_n:\mathcal E\to\R$ be a sequence of real functions converging continuously to a homogeneous function, $f^\star$. Then, the characteristic functions $\iota_{\Lev_1^+[f_n]} \to \iota_{\Lev_1^+[f^\star]}$, epigraphically.
\end{lemma}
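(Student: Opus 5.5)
The plan is to verify epigraphical convergence through its sequential characterization (\cite{RockafellarWets1998}, Proposition~7.2). Write $g_n:=\iota_{\Lev_1^+[f_n]}$ and $g^\star:=\iota_{\Lev_1^+[f^\star]}$. We must show two things at every $\zz\in\mathcal E$:
\begin{itemize}
\item[(a)] $\liminf_n g_n(\zz_n)\ge g^\star(\zz)$ for every sequence $\zz_n\to\zz$;
\item[(b)] there exists some $\zz_n\to\zz$ with $\limsup_n g_n(\zz_n)\le g^\star(\zz)$.
\end{itemize}
Since the functions take only the values $0$ and $+\infty$, this reduces to set convergence: $\Lev_1^+[f_n]\to\Lev_1^+[f^\star]$ in the Painlevé--Kuratowski sense.

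For (a) (outer limit), fix $\zz_n\to\zz$ with $\liminf_n g_n(\zz_n)<\infty$. Then $\zz_{n_k}\in\Lev_1^+[f_{n_k}]$ along a subsequence, so $f_{n_k}(\zz_{n_k})\ge 1$. Continuous convergence gives $f_{n_k}(\zz_{n_k})\to f^\star(\zz)$, hence $f^\star(\zz)\ge1$ and $g^\star(\zz)=0$. This direction uses only continuous convergence and is routine.

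For (b) (inner limit), take $\zz$ with $f^\star(\zz)\ge1$; if $f^\star(\zz)=+\infty$ there is nothing to prove. This is where homogeneity enters, and I expect it to be the main obstacle: continuous convergence alone does not keep boundary points, where $f^\star(\zz)=1$ exactly, inside the inner limit. The first attempt is radial dilation. Since $f^\star(\zz)\ge1>0$, we have $\zz\neq\mv 0$. Set $\zz_n:=(1+\epsilon_n)\zz\in\mathcal E$, using that $\mathcal E$ is a cone, with $\epsilon_n\downarrow0$ to be chosen. By degree-one homogeneity, $f^\star((1+\epsilon)\zz)=(1+\epsilon)f^\star(\zz)\ge1+\epsilon$. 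For each fixed $\epsilon>0$, continuous convergence gives $f_n((1+\epsilon)\zz)\to(1+\epsilon)f^\star(\zz)>1$, so there is $N(\epsilon)$ with $f_n((1+\epsilon)\zz)\ge1$ for all $n\ge N(\epsilon)$. A diagonal choice then works: let $\epsilon_n:=1/k$ for $N(1/k)\le n<N(1/(k+1))$, with the $N$'s taken increasing, and $\epsilon_n:=1$ before $N(1)$ if needed. This yields $\zz_n\to\zz$ with $g_n(\zz_n)=0$ eventually, which gives (b).

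If the homogeneity degree is not one but some positive $\rho$, the same argument works because $(1+\epsilon)^\rho f^\star(\zz)>1$. I would note that positivity of the degree is exactly what the dilation step needs.

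Combining (a) and (b), $g_n\to g^\star$ epigraphically.
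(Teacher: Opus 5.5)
Your proof is correct and follows essentially the same route as the paper. Both use the sequential characterization of epi-convergence and continuous convergence for the liminf inequality, and both build the recovery sequence by radially dilating $\zz$ by factors $c_n\downarrow 1$, so that homogeneity pushes $f^\star(c_n\zz)$ strictly above $1$.

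The only difference is how $c_n$ is chosen. The paper takes $c_n=1+2\varepsilon_n$, with $\varepsilon_n$ the sup-error of $f_n$ on the compact segment $\{c\zz:1\le c\le 3/2\}$ (uniform convergence on compacts); your diagonal argument needs only pointwise convergence at the dilated points. Also, the case you dismiss as $f^\star(\zz)=+\infty$ should read $f^\star(\zz)<1$, where any sequence converging to $\zz$ is a recovery sequence.
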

\begin{lemma}\label{lem:0-not-contained}
    Let $(L_1,\ldots,L_K)\in \mathcal H_1$ be regular. Then, there exists a $\delta>0$ such that $B_{\delta}\cap  \mathcal S^\star(\mathcal J) =\varnothing$. There exist constants $(c, u_1)$, such that for all $u>u_1$, $\zz \in \Gamma_u(\mathcal J) $ implies that $\|\zz\|>c$. Consequently, with $u\to\infty$ and $t = u^\theta$ for $\theta\in (0,1)$, there exists $u_0$ such that for all $u>u_0$, $\mathcal S(u;\mathcal J) \subset \{r>t\}$. 
\end{lemma}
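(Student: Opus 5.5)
The plan is to prove the three assertions in order, using only the continuous convergence in Definition~\ref{def:hom_loss} and the degree-one homogeneity of the limits $L_j^\star$. Throughout, fix a single index $j\in\mathcal J$; this is possible because $\mathcal J$ is nonempty. Since $\mathcal S^\star(\mathcal J)\subseteq\{\zz:L_j^\star(\zz)\ge1\}$ and $\Gamma_u(\mathcal J)\subseteq\{\zz:u^{-1}L_j(u\zz)\ge 1\}$, it suffices to work with this one loss.

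\emph{Step 1 (the limiting region avoids a ball).} Apply \eqref{eqn:loss} with the constant sequence $\zz_t\equiv\mv 0$. This gives $L_j^\star(\mv 0)=\lim_t t^{-1}L_j(\mv 0)=0$. The limit $L_j^\star$ is continuous and positively homogeneous of degree one, as noted after Definition~\ref{def:hom_loss}. By compactness of $\Theta_{\mathcal E}$, the quantity $M_j:=\max_{\Theta_{\mathcal E}}|L_j^\star|$ is finite. For $\zz\neq\mv0$, homogeneity then gives
\[
L_j^\star(\zz)=\|\zz\|\,L_j^\star(\zz/\|\zz\|)\le M_j\|\zz\|.
\]
Choose $\delta<1/M_j$ (any $\delta>0$ works if $M_j=0$). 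Then $L_j^\star<1$ on $B_\delta$, and hence $B_\delta\cap\mathcal S^\star(\mathcal J)=\varnothing$.

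\emph{Step 2 (the scaled finite-level regions avoid a ball uniformly).} I would argue by contradiction. Suppose no pair $(c,u_1)$ works. Then there exist $u_n\to\infty$ and $\zz_n\in\Gamma_{u_n}(\mathcal J)$ with $\|\zz_n\|\to0$, so $\zz_n\to\mv 0\in\mathcal E$. Continuous convergence \eqref{eqn:loss}, applied at the point $\mv 0$ (which is allowed, since the definition covers all of $\mathcal E$), yields
\[
u_n^{-1}L_j(u_n\zz_n)\to L_j^\star(\mv0)=0.
\]
This contradicts $u_n^{-1}L_j(u_n\zz_n)\ge1$, which holds because $u_n\zz_n\in\mathcal S(u_n;\mathcal J)$. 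Hence some $c>0$ and $u_1$ exist with $\|\zz\|>c$ for all $\zz\in\Gamma_u(\mathcal J)$ and all $u>u_1$.

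\emph{Step 3 (containment in the splice region).} Let $\xx\in\mathcal S(u;\mathcal J)$. Then $u^{-1}\xx\in\Gamma_u(\mathcal J)$, so Step 2 gives $\|\xx\|>cu$ for $u>u_1$. Since $\theta<1$, we have $cu>u^\theta=t$ for all $u$ larger than some $u_2$. Taking $u_0=\max\{u_1,u_2\}$ gives $\mathcal S(u;\mathcal J)\subset\{r>t\}$ for all $u>u_0$.

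The only delicate point is Step 2. There the compactness argument needs continuous convergence at the origin itself, not merely uniform convergence on compacts away from $\mv0$. I would therefore make explicit that \eqref{eqn:loss} is assumed for every $\zz\in\mathcal E$, including $\mv 0$. This also gives $L_j^\star(\mv0)=0$ in Step 1 without any separate homogeneity argument at zero.
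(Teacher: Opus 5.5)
Your proof is correct. Steps 1 and 3 follow the paper's argument. The paper works with $L^\star_{\mathcal J}=\min_{j\in\mathcal J}L_j^\star$ and obtains $M_{\mathcal J}>0$ from regularity, whereas you fix one index and handle $M_j=0$ separately; these are equivalent.

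The real difference is in Step 2. The paper does not argue by contradiction at the origin. It reuses the $\delta$ from the first part and turns continuous convergence of $L_{\mathcal J,u}$ into epi-convergence (Theorem~7.11 of \cite{RockafellarWets1998}). From this it obtains the outer-limit inclusion $\Gamma_u(\mathcal J)\cap B_M\subseteq[\mathcal S^\star(\mathcal J)]^\kappa$ for large $u$ (their Proposition~7.7), with $\kappa<\delta/2$ and $M>\delta$. Points outside $B_M$ are trivially far from $\mv 0$, which yields $c=\delta/2$.

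Your sequential argument is shorter and self-contained. It uses only $L_j^\star(\mv 0)=0$ and continuous convergence at $\mv 0$, and it does not need the first part at all. You are right to flag that convergence at the origin is required. The paper's route also relies on it, implicitly: $B_M$ contains a neighbourhood of $\mv 0$, and the outer-limit inclusion there rests on the same continuous convergence.

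What the paper's route buys is an explicit constant tied to the limiting geometry, namely $c=\delta/2$ for any $\delta<M_{\mathcal J}^{-1}$. It also produces the enlargement inclusion itself, which is the same one used in the upper bound of Proposition~\ref{prop:asymp_tails}. You can recover the same quantitative $c$ within your framework. Replace $\|\zz_n\|\to0$ by $\|\zz_n\|\le c$ for a fixed $c<\delta$ and extract a convergent subsequence $\zz_n\to\zz$ with $\|\zz\|\le c$. Continuous convergence then forces $L_j^\star(\zz)\ge1$, contradicting Step 1.
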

\begin{lemma}\label{lem:homogeneous_sets_separate}
Let \(\mu\) and \(\nu\) be Radon measures on 
\((0,\infty)\times\Theta_{\mathcal E}\), finite on sets bounded away from the
origin. If \(\mu\neq\nu\), then there exists a
strictly positive continuous function \(h:\Theta_{\mathcal E}\to(0,\infty)\) such
that
\[
    A_h:=\{(r,\pphi):r\ge h(\pphi)\}
\]
is a continuity set for both \(\mu\) and \(\nu\), and $\mu(A_h)\neq\nu(A_h).$
Equivalently, there exist a strictly positive continuous function
\(\psi:\Theta_{\mathcal E}\to\mathbb R_+\) and \(c>0\) such that
\[
    A_{\psi,c}:=\{(r,\pphi):r\psi(\pphi)^{1/s}\ge c\}
\]
separates \(\mu\) and \(\nu\).
\end{lemma}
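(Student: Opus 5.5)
The statement to prove is Lemma~\ref{lem:homogeneous_sets_separate}: two distinct Radon measures \(\mu\neq\nu\) on \((0,\infty)\times\Theta_{\mathcal E}\), finite away from the origin, are separated by a set of the form \(A_h=\{r\ge h(\pphi)\}\) that is a continuity set for both. The plan has three steps: reduce to separation by a continuous test function, approximate that test function by layer-cake integrals over the sets \(A_h\), and then choose the levels so the boundaries carry no mass.

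\textbf{Step 1: a separating continuous function.} Since \(\mu\neq\nu\) and both are Radon and finite on sets bounded away from the origin, there is a continuous, compactly supported \(g\ge 0\) on \((0,\infty)\times\Theta_{\mathcal E}\) with \(\int g\,d\mu\neq\int g\,d\nu\). This uses \(\mv M_0\) uniqueness: measures agreeing on all \(C_c^+\) functions supported away from \(0\) coincide.

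\textbf{Step 2: approximation by increasing radial functions.} First replace \(g\) by a function that is nondecreasing in \(r\) and vanishes for small \(r\). One way is
\[
G(r,\pphi)=\int_0^r \partial_\rho g\,d\rho ,
\]
but it is cleaner to note the following. The class of sets \(\{r\ge h(\pphi)\}\) with \(h\) positive and continuous is closed under finite intersections and unions, since \(\max\{h_1,h_2\}\) and \(\min\{h_1,h_2\}\) are again positive continuous. The corresponding "shells" \(\{h_1(\pphi)\le r<h_2(\pphi)\}\) are differences of such sets. These shells generate the Borel \(\sigma\)-field on \((0,\infty)\times\Theta_{\mathcal E}\) away from \(r=0\): products of a radial interval \([a,b)\) with an angular open set \(U\) arise as limits of such shells, taking \(h_2\) to equal \(b\) on \(U\) and to be pushed down to \(a\) outside \(U\) through continuous Urysohn functions. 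Suppose \(\mu(A_h)=\nu(A_h)\) for every positive continuous \(h\). Then \(\mu\) and \(\nu\) agree on this \(\pi\)-system, restricted to \(\{r\ge\varepsilon\}\), where both are finite. By Dynkin's \(\pi\)--\(\lambda\) theorem they coincide on each \(\{r\ge \varepsilon\}\), and hence everywhere, a contradiction. So some \(h\) separates.

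\textbf{Step 3: continuity.} The boundary \(\partial A_h\) is contained in the graph \(\{r=h(\pphi)\}\). For scalar \(\lambda>0\), the graphs of \(\lambda h\) are pairwise disjoint. Hence, for the finite measure \(\mu+\nu\) restricted to \(\{r\ge \min h/2\}\), at most countably many \(\lambda\) give positive mass. The map \(\lambda\mapsto \mu(A_{\lambda h})-\nu(A_{\lambda h})\) is right-continuous at continuity values by monotone convergence. Since it is nonzero at \(\lambda=1\), we can pick \(\lambda\) near \(1\) outside the countable exceptional set with the difference still nonzero. The one point needing care is left and right limits when \(\lambda=1\) itself is exceptional. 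To handle it, apply the \(\pi\)-system argument using only continuity-set \(h\)'s: the class \(\{\lambda h\}\) over non-exceptional \(\lambda\) still generates, because \(A_h=\bigcap_{\lambda\uparrow 1}A_{\lambda h}\) up to the boundary, and monotone limits are allowed in a \(\lambda\)-system.

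\textbf{The equivalent form.} Set \(\psi=(c/h)^{s}\). Then \(r\psi^{1/s}\ge c\) if and only if \(r\ge h\).

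\textbf{Main obstacle.} The generation claim in Step 2 is the delicate part: showing that sets bounded by continuous positive graphs determine a measure. I would handle it by approximating indicators of \([a,\infty)\times F\), with \(F\) closed, from above by \(A_{h_k}\), where \(h_k=a\) on \(F\) and \(h_k\) increases to \(\infty\) off \(F\) via distance functions. Truncated at a large radius, these converge decreasingly to \([a,\infty)\times F\) intersected with bounded radius, which suffices.
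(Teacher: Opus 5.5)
Your proof is correct and follows the paper's argument. The sets $A_h$ form a $\pi$-system via $A_{h_1}\cap A_{h_2}=A_{h_1\vee h_2}$ and generate the Borel $\sigma$-field on each $\{r\ge a\}$, so Dynkin's theorem together with exhaustion by $\{r\ge 1/m\}$ gives $\mu=\nu$. The continuity-set restriction is then removed by rescaling $h\mapsto\lambda h$, discarding the countably many $\lambda$ whose graphs carry mass, and using $A_{\lambda h}\downarrow A_h$ as $\lambda\uparrow1$. That identity is exact (no \emph{up to the boundary} qualifier is needed), and it is this left-continuity, not the right-continuity you first mention, that does the work. The only real differences are that your Step 1 is never used, and that your generation step, $A_{a+k\,d(\cdot,F)}\downarrow[a,\infty)\times F$ for closed $F$, is a slightly cleaner route than the paper's open-rectangle and Lindel\"of covering argument.
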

\noindent\textbf{Proof of Lemma~\ref{lem:tail_asymp}: } Let \(R=\|\XX\|\), \(\Pphi=\XX/\|\XX\|\), and let $\Theta_{\mathcal E}:=\{\pphi\in\mathcal E:\|\pphi\|=1\}$ 
denote the angular domain. We use the polar representation $\zz=r\pphi$, where  $r>0,\ \pphi\in\Theta_{\mathcal E},$
with Lebesgue measure written as $d\zz=r^{d-1}\,dr\,d\pphi$.
By a change of variables, if \(f_{R,\Pphi}\) denotes the joint density of \((R,\Pphi)\), then $f_{R,\Pphi}(r,\pphi)=r^{d-1}f_{\XX}(r\pphi)$. We first identify the density of the scaled polar vector
$\left({R}/{t},\Pphi\right)$.
By the one-dimensional change of variables \(R=t\rho\),
\[
    f_{R/t,\Pphi}(\rho,\pphi)
    =
    t\,f_{R,\Pphi}(t\rho,\pphi)
    =
    t\,(t\rho)^{d-1}f_{\XX}(t\rho\pphi)
    =
    t^d \rho^{d-1}f_{\XX}(t\rho\pphi).
\]
For a Borel set \(B\subset\mathcal E\setminus\{\mathbf 0\}\),
write \(B\) in polar coordinates. Then,
\[
\begin{aligned}
    t^s\mathbb P(t^{-1}\XX\in B)
    &=
    t^s\mathbb P\left(\left(\frac Rt,\Pphi\right)\in B\right) =
    \int_{B}
        t^{s+d}\rho^{d-1}f_{\XX}(t\rho\pphi)\,d\rho\,d\pphi .
\end{aligned}
\]
Now let \(B\subset\mathcal E\setminus\{\mathbf 0\}\) be a Borel set
such that \(\mathbf 0\notin\operatorname{cl}(B)\). Since \(B\) is
bounded away from the origin, there exists \(a>0\) such that
\(\rho>a\) for every \((\rho,\pphi)\in B\). 
Then, we have $B \subset \{\rho>a\}$, and therefore $t \rho > t a$ uniformly over $B$. 
Since $\pphi$ lies in the compact set $\Theta_\mathcal E,$
Lemma~\ref{lem:angle_converges}, implies that uniformly over $B$
\[
    t^{s+d}f_{\XX}(t\rho\pphi)
    =  \rho^{-(s+d)}
    \varphi^\star(\pphi) (1+o(1)).
\]
It follows from dominated convergence, that
\[
    t^s\mathbb P(t^{-1}\XX\in B)
    \to
    \int_{B}
        \rho^{-(s+1)}\varphi^\star(\pphi)\,d\rho\,d\pphi .
\]
Define the Radon measure \(\nu^\star\) on \(\mathcal E\setminus\{\mathbf 0\}\) by
\begin{equation}\label{eqn:polar_limit}
     \nu^\star(B)
    :=
    \int_{B}
        \rho^{-(s+1)}\varphi^\star(\pphi)\,d\rho\,d\pphi .
\end{equation}
It follows that, for every Borel set
\(B\subset\mathcal E\setminus\{\mathbf 0\}\) satisfying
\(\mathbf 0\notin\operatorname{cl}(B)\), $t^s\mathbb P(t^{-1}\XX\in B)
    \longrightarrow
    \nu^\star(B)$. In particular, this convergence holds whenever, additionally,
\(\nu^\star(\partial B)=0\). Hence, by
Definition~\ref{def:convergence_concept}, 
\[
    t^s\mathbb P(t^{-1}\XX\in\cdot)
    \xrightarrow{\mv M_0}
    \nu^\star(\cdot)
    \qquad\text{on }\mathcal E\setminus\{\mathbf 0\}.
\]
Finally, since \eqref{eqn:polar_limit} holds for every $B$, 
\[
    d\nu^\star(r,\pphi)
    =
    r^{-(s+1)}\varphi^\star(\pphi)\,dr\,d\pphi.\qed
    \]
\noindent \textbf{Proof of Proposition~\ref{prop:asymp_tails}:}  \textbf{Part i) Forward direction:} To complete the proof we demonstrate that 
\begin{equation}\label{eqn:show_sim}
    u^{s}\mathbb P(\XX_u\in \mathcal S(u;\mathcal J)) \sim \nu^\star(\Gamma_{\mathcal J}^\star) \quad \text{ and } \quad  u^{s}\mathbb P(\XX\in \mathcal S(u;\mathcal J)) \sim \nu^\star(\Gamma_{\mathcal J}^\star).
\end{equation}
To this end, define $\ZZ_u:=u^{-1}\XX$, and $\widetilde \ZZ_u:=u^{-1}\widetilde \XX_u$.
Then $\mathbb P\!\left(\widetilde \XX_u\in\mathcal S(u;\mathcal J)\right)
    =
    \mathbb P\!\left(\widetilde \ZZ_u\in\Gamma_u(\mathcal J)\right)$,
where $\Gamma_u(\mathcal J)
    :=
    u^{-1}\mathcal S(u;\mathcal J)$.
For \(j\in\mathcal J\), define
\begin{equation}\label{eqn:Ljs}
     L_{j,u}(\zz):=u^{-1}L_j(u\zz),
    \quad
    L_{\mathcal J,u}(\zz):=\min_{j\in\mathcal J}L_{j,u}(\zz), \quad \text{ and }\quad 
    L_{\mathcal J}^\star(\zz):=\min_{j\in\mathcal J}L_j^\star(\zz).
\end{equation}
Then
\[
    \Gamma_u(\mathcal J)
    =
    \{\zz\in\mathcal E:L_{\mathcal J,u}(\zz)\ge 1\},
    \qquad
    \Gamma_{\mathcal J}^\star
    :=
    \{\zz\in\mathcal E:L_{\mathcal J}^\star(\zz)\ge 1\}.
\]
Note that since $(L_1,\ldots, L_K) \in \mathcal H_1$,  \(L_{j,u}\to L_j^\star\) continuously for each
\(j\in\mathcal J\). Since \(\mathcal J\) is finite, the minimum also converges
continuously:
\[
    L_{\mathcal J,u}\to L_{\mathcal J}^\star .
\]
\noindent \textbf{Upper Bound: }Let $N>0$. From \cite{RockafellarWets1998}, Proposition 7.7(i)
for every \(\delta>0\) and every sufficiently large \(u\),
\[
    \Gamma_u(\mathcal J) \cap B_N\subseteq [\Gamma_{\mathcal J}^\star]^\delta,
\]
Choose \(\delta>0\) small enough that
\([\Gamma_{\mathcal J}^\star]^\delta\) remains bounded away from the origin; this is possible using Lemma~\ref{lem:0-not-contained}. Then for all such $u$,
\[
    \mathbb P\!\left(\widetilde \ZZ_u\in\Gamma_u(\mathcal J)\right) 
    \leq
    \mathbb P\!\left(\widetilde \ZZ_u\in[\Gamma_{\mathcal J}^\star]^\delta\right)+ \mathbb P (\tilde \ZZ_u\in B_N^c).
\]
Since \(\{\widetilde \XX_u\}_{u>0}\in\mathcal T(P)\), $\mv M_0$ convergence gives
\[
    \limsup_{u\to\infty}
    u^s
    \mathbb P\!\left(\widetilde \ZZ_u\in\Gamma_u(\mathcal J)\right)
    \le
    \nu^\star\!\left([\Gamma_{\mathcal J}^\star]^\delta\right) + \nu^\star(B_N^c).
\]
Letting \(\delta\downarrow0\) and using continuity from above, $\nu^\star\!\left([\Gamma_{\mathcal J}^\star]^\delta\right)
    \downarrow
    \nu^\star(\Gamma_{\mathcal J}^\star)$. It remains to control the second term. Observe that from Lemma~\ref{lem:tail_asymp},  
\[
\begin{aligned}
    \nu^\star(B_N^c)
    &=
    \int_{\Theta_{\mathcal E}}\int_N^\infty
        r^{-(s+1)}\varphi^\star(\pphi)\,dr\,d\pphi =
    N^{-s}
    \int_{\Theta_{\mathcal E}}\int_1^\infty
        r^{-(s+1)}\varphi^\star(\pphi)\,dr\,d\pphi =    N^{-s}\nu^\star(B_1^c).
\end{aligned}
\]
Thus \(\nu^\star(B_N^c)\to0\) as \(N\to\infty\). Putting these two together.
\[
    \limsup_{u\to\infty}
    u^s
    \mathbb P\!\left(\widetilde \ZZ_u\in\Gamma_u(\mathcal J)\right)
    \le
    \nu^\star(\Gamma_{\mathcal J}^\star) \implies  \limsup_{u\to\infty}
    u^s
    \mathbb P\!\left(\widetilde \XX_u\in\mathcal S (u; \mathcal J)\right)
    \le
    \nu^\star(\Gamma_{\mathcal J}^\star)
\]
Since $\XX_u\in \mathcal T(\mathcal P)$, $\ZZ_u$ has the same $\mv M_0$ limit as $\widetilde \ZZ_u$. Then,
\[
 \limsup_{u\to\infty}
    u^s
    \mathbb P\!\left(\XX\in\mathcal S(u;\mathcal J)\right)
    \le
    \nu^\star(\Gamma_{\mathcal J}^\star).
\]

\noindent \textbf{Lower Bound: }We next prove the matching lower bound. For \(N\ge1\), define the open inner
approximation
\begin{equation}\label{eqn:K_N}
     K_N
    :=
    \left\{\zz\in\mathcal E:
        L_{\mathcal J}^\star(\zz)> 1+\frac1N, \quad \|\zz\|<N
    \right\}
\end{equation}
Since \(L_{\mathcal J,u}\to L_{\mathcal J}^\star\) continuously, the convergence is
uniform on \(B_N\). Hence, for all sufficiently large \(u\),
\[
    \sup_{\zz\in B_N}
    \left|
        L_{\mathcal J,u}(\zz)-L_{\mathcal J}^\star(\zz)
    \right|
    <\frac1N .
\]
Therefore, if \(\zz\in K_N\), then $L_{\mathcal J,u}(\zz)
    \ge
    L_{\mathcal J}^\star(\zz)-\frac1N
    \ge 1$,
and hence $K_N\subseteq \Gamma_u(\mathcal J)$
for all sufficiently large \(u\). Consequently,
$\mathbb P\!\left(\widetilde \ZZ_u\in\Gamma_u(\mathcal J)\right)
    \ge
    \mathbb P\!\left(\widetilde \ZZ_u\in K_N\right)$.
Multiplying by \(u^s\), taking \(\liminf\), and using tail preservation of
\(\widetilde\XX_u\), we obtain
\[
    \liminf_{u\to\infty}
    u^s\mathbb P\!\left(\widetilde \ZZ_u\in\Gamma_u(\mathcal J)\right)
    \ge
    \nu^\star(K_N).
\]
Now let \(N\to\infty\). The sets \(K_N \uparrow \{L_{\mathcal J}^\star >1\}\) continuity from below implies that
\[
    \nu^\star(K_N)
    \uparrow
    \nu^\star\!\left(\{\zz:L_{\mathcal J}^\star(\zz)>1\}\right).
\]
Since \(L_{\mathcal J}^\star\) is continuous and homogeneous of degree one, for every $\pphi$, there is at most one $r$ that solves $L^\star_\mathcal J(r\pphi)=1$. Since $r$ has a density  the
boundary $\{\zz:L_{\mathcal J}^\star(\zz)=1\}$
is consequently \(\nu^\star\)-null. Hence
\[
    \nu^\star\!\left(\{\zz:L_{\mathcal J}^\star(\zz)>1\}\right)
    =
    \nu^\star\!\left(\{\zz:L_{\mathcal J}^\star(\zz)\ge1\}\right)
    =
    \nu^\star(\Gamma_{\mathcal J}^\star).
\]
Therefore,
\[
    \liminf_{u\to\infty}
    u^s\mathbb P\!\left(\widetilde \ZZ_u\in\Gamma_u(\mathcal J)\right)
    \ge
    \nu^\star(\Gamma_{\mathcal J}^\star) \implies  \liminf_{u\to\infty}
    u^s\mathbb P\!\left(\widetilde \XX_u\in\mathcal S(u;\mathcal J)\right)
    \ge
    \nu^\star\!\left(\Gamma_{\mathcal J}^{\star}\right).
\]
The same argument, with \(\ZZ_u=u^{-1}\XX\) in place of \(\widetilde\ZZ_u\), gives $\liminf_{u\to\infty}
    u^s\mathbb P\!\left(\XX\in\mathcal S(u;\mathcal J)\right)
    \ge
    \nu^\star\!\left(\Gamma_{\mathcal J}^{\star}\right)$.
Combining the upper and lower bounds yields
\eqref{eqn:show_sim}. \qed 
\medskip

\noindent\textbf{Part ii) Converse:} Let
\[
    \mu_u(B):=u^s\mathbb P(u^{-1}\XX_u\in B),
    \qquad
    \bar\mu_u(B):=u^s\mathbb P(u^{-1}\XX\in B).
\]
By Lemma~\ref{lem:tail_asymp}, \(\bar\mu_u\to\nu^\star\) in $\mv M_0$. Since
\(\{\XX_u\}_{u>0}\notin\mathcal T(P)\), we have \(\mu_u\not\to\nu^\star\).

\noindent\textbf{Case 1: }We first handle the case in which \(\{\mu_u\}\) is not locally bounded on
 $\mathcal E_0$, that is, there exists $a>0$ such that $\limsup_u \mu_u(B_a^c)=\infty.$ Hence $\limsup_u u^s\mathbb P(\|\XX_u\|\ge au)=\infty$.
Set \(K=1\), \(\mathcal J=\{1\}\), and \(L_1(\zz):=\|\zz\|/a\). Then
\(L_1\in\mathcal H_1\), with \(L_1^\star=L_1\), and
\[
    \mathcal S(u;\mathcal J)
    =
    \{\zz:\|\zz\|\ge au\}.
\]
For the true law,
\[
    u^s\mathbb P(\XX\in\mathcal S(u;\mathcal J))
    =
    u^s\mathbb P(\|\XX\|\ge au)
    \to
    \nu^\star(\{\zz:\|\zz\|\ge a\})<\infty.
\]
For the approximation, the corresponding scaled probabilities are unbounded along
a subsequence. Therefore,
\[
    \mathbb P\!\left(\XX_u\in\mathcal S(u;\mathcal J)\right)
    \not\sim
    \mathbb P\!\left(\XX\in\mathcal S(u;\mathcal J)\right).
\]

\noindent\textbf{Case 2:} Next suppose that the masses are locally bounded away from the origin but are not tight at infinity. Then there exist \(\varepsilon>0\), 
\[ 
\limsup_{u\to\infty} \mu_u(\{\zz:\|\zz\|\ge M\}) \ge \varepsilon \qquad \text{for every }M>m_0. 
\] 
Since $ 
\nu^\star(\{\zz:\|\zz\|\ge M\})\to0 $ as $M\to\infty$, choose \(M\) such that  $\nu^\star(\{\zz:\|\zz\|\ge M\})<\varepsilon/4 .$
Along a subsequence \(u_n\to\infty\), 
$ 
\mu_{u_n}(\{\zz:\|\zz\|\ge M\})\ge \varepsilon/2$.
Taking \(L_1(\zz)=\|\zz\|/M\), the associated stress event is 
$\mathcal S(u;\{1\})=\{\zz:\|\zz\|\ge Mu\}$. 
Then 
\[
u^s\mathbb P\!\left(
        \XX\in\mathcal S(u;\{1\})
    \right)
    \to
    \nu^\star\!\left(\{\zz:\|\zz\|\geq M\}\right)
    <\frac{\varepsilon}{4},\quad \text{ and }\quad 
    \limsup_{u\to\infty}
    u^s\mathbb P\!\left(\XX_u\in\mathcal S(u;\{1\})
    \right)
    \geq\frac{\varepsilon}{2}.
\]
The above statement implies that $    \mathbb P\!\left(\XX_u\in\mathcal S(u;\mathcal J)\right)
    \not\sim
    \mathbb P\!\left(\XX\in\mathcal S(u;\mathcal J)\right).
$
\noindent\textbf{Case 3: }It remains to consider the case in which \(\{\mu_u\}\) is locally bounded
away from the origin and asymptotically tight at infinity. Since \(\{\XX_u\}_{u>0}\notin\mathcal T(P)\), the measures $\mu_u
    :=
    u^s\mathbb P\!\left(u^{-1}\XX_u\in\cdot\right)$
do not converge to \(\nu^\star\) in \(\mv M_0\). As the
\(\mv M_0\)-topology is metrizable, there exist \(\varepsilon>0\) and a
sequence \(u_n\to\infty\) such that
\[
    d_{\mv M_0}(\mu_{u_n},\nu^\star)\geq\varepsilon
    \qquad\text{for every }n>n_0,
\]
where \(d_{\mv M_0}\) is any metric generating the
\(\mv M_0\)-topology. Asymptotic radial tightness ensures that
\begin{equation}\label{eqn:hult_1}
     \lim_{M\to\infty}\sup_{n\geq1}
    \mu_{u_n}\!\left(\{\zz:\|\zz\|\geq M\}\right)
    =0.
\end{equation}
Note that local boundedness implies that for every \(m>0\), $\limsup_{n\to\infty}
    \mu_{u_n}\!\left(\{\zz:\|\zz\|\geq m\}\right)<\infty$, or upon taking $n_1$ sufficiently large $\sup_{n\geq n_1}
    \mu_{u_n}\!\left(\{\zz:\|\zz\|\geq m\}\right)<\infty$.
Taking the compact set $C_{m,M} = \{m\leq \|\zz\| \leq M\}$,  
\eqref{eqn:hult_1} shows that for every $m$ and $\varepsilon>0$, there exists an $M$ large enough such that $\sup_{n>n_0}\mu_{u_n}((B_m \cup C_{m,M})^c) \leq \varepsilon$. 
Thus from \citealp[Theorem~2.7]{hult2006regular}, \(\{\mu_{u_n}\}_{n\geq n_0}\) is relatively compact in \(\mv M_0\). Consequently, there exist a further subsequence, again indexed by \(n\),
and some \(\widetilde\nu\in\mv M_0\) such that $\mu_{u_n}\xrightarrow{\mv M_0}\widetilde\nu$.
By continuity of the metric,
\[
    d_{\mv M_0}(\widetilde\nu,\nu^\star)
    =
    \lim_{n\to\infty}
    d_{\mv M_0}(\mu_{u_n},\nu^\star)
    \geq\varepsilon,
\]
and hence \(\widetilde\nu\neq\nu^\star\). Therefore,
Lemma~\ref{lem:homogeneous_sets_separate} yields a strictly positive
continuous function
\(\psi:\Theta_{\mathcal E}\to\mathbb R_+\) and \(c>0\) such that
\(A_{\psi,c}\) is a continuity set for both \(\widetilde\nu\) and
\(\nu^\star\), and
\[
    \widetilde\nu(A_{\psi,c})
    \neq
    \nu^\star(A_{\psi,c}).
\]
Define 
    $L_\psi(r,\pphi)
    :=
    c^{-1}r\psi(\pphi)^{1/s}$.
Then \(L_\psi\) is continuous, strictly positive away from the origin, and
positively homogeneous of degree one; hence \(L_\psi\in\mathcal H_1\).
Moreover, $u^{-1}L_\psi(u\zz)=L_\psi(\zz),$
and therefore $u^{-1}\mathcal S(u;\{1\})
    = \{\zz:L_\psi(\zz)\geq1\}
    = A_{\psi,c}$.
Since $\widetilde\nu(A_{\psi,c})
    \neq
    \nu^\star(A_{\psi,c}),$ the preceding argument yields $\mathbb P\!\left(
        \XX_u\in\mathcal S(u;\{1\})
    \right)
    \not\sim
    \mathbb P\!\left(
        \XX\in\mathcal S(u;\{1\})
    \right)$ as $u\to\infty, $ concluding the proof \qed


\noindent \textbf{Proof of Corollary~\ref{cor:stress_law_preserve}:} Note  that $\XX_u\in \mathcal T(\mathcal P)$ implies $u^s \mathbb P(u^{-1}\XX_u\in \cdot) \xrightarrow{\mv M_0} \nu^\star(\cdot)$. Now fix a $B\subset \mathcal E\setminus\{\mv 0\}$. Then, if $\nu^\star(\partial B) = 0$,
\[
\mathbb P\left(u^{-1}\XX_u \in B \mid \XX_u \in\mathcal S(u;\mathcal J) \right) = \frac{\mathbb P(u^{-1}\XX_u \in B\cap\Gamma_u(\mathcal J))}{\mathbb P(u^{-1}\XX_u\in  \Gamma_u(\mathcal J))}  \to \frac{\nu^\star(B\cap \mathcal S^\star(\mathcal J))}{\nu^\star(\mathcal S^\star(\mathcal J))}
\]
To handle the numerator, intersect the inner and outer approximations
to $\Gamma_u(\mathcal J)$ used in the proof of
Proposition~\ref{prop:asymp_tails} with $B$. Since
$\nu^\star(\partial B)=0$, the same argument yields $u^s\mathbb P\!\left(
u^{-1}\XX_u\in B\cap\Gamma_u(\mathcal J)
\right)
\to
\nu^\star\!\left(B\cap\mathcal S^\star(\mathcal J)\right).$ \qed

\noindent\textbf{Proof of Proposition~\ref{prop:conditional_tail_factorization}:} By Lemma~\ref{lem:angle_converges}, for \(r>t\), $f_{R,\Pphi}(r,\pphi)
=
r^{-(s+1)}\varphi^\star(\pphi)\bigl(1+\varepsilon(r,\pphi)\bigr)$,
where $\eta_t:=\sup_{r>t,\;\pphi}|\varepsilon(r,\pphi)|\to 0 \text{ as } t\to\infty$.
Since $p_t:=\mathbb P(R>t)$,
we have, for \(r>t\),
\[
f_{R,\Pphi}(r,\pphi\mid R>t)
=
\frac{f_{R,\Pphi}(r,\pphi)}{p_t}
=
\frac{r^{-(s+1)}\varphi^\star(\pphi)\bigl(1+\varepsilon(r,\pphi)\bigr)}{p_t}.
\]
Integrating over \(r>t\) gives
\begin{equation}\label{eqn:varphi_is_positive}
    \varphi^{(t)}(\pphi)
=
\int_t^\infty f_{R,\Pphi}(r,\pphi\mid R>t)\,dr
=
\frac{\varphi^\star(\pphi)}{p_t}
\int_t^\infty r^{-(s+1)}\bigl(1+\varepsilon(r,\pphi)\bigr)\,dr.
\end{equation}
Then, \eqref{eqn:varphi_is_positive} implies that for \(r>t\),
\begin{align*}
f_{R,\Pphi}(r,\pphi\mid R>t)
&=
\varphi^{(t)}(\pphi)\,
\frac{r^{-(s+1)}\bigl(1+\varepsilon(r,\pphi)\bigr)}
{\int_t^\infty y^{-(s+1)}\bigl(1+\varepsilon(y,\pphi)\bigr)\,dy} \\
&=
s t^s\,\varphi^{(t)}(\pphi)\,r^{-(s+1)}
\frac{1+\varepsilon(r,\pphi)}
{s t^s\int_t^\infty y^{-(s+1)}\bigl(1+\varepsilon(y,\pphi)\bigr)\,dy}.
\end{align*}
Since
\[
1-\eta_t
\le
1+\varepsilon(y,\pphi)
\le
1+\eta_t
\qquad (y>t),
\]
we obtain, uniformly in \(\pphi\),
\[
1-\eta_t
\le
s t^s\int_t^\infty y^{-(s+1)}\bigl(1+\varepsilon(y,\pphi)\bigr)\,dy
\le
1+\eta_t,
\]
because $s t^s\int_t^\infty y^{-(s+1)}\,dy=1$. Define
\[
1+\varepsilon_t(r,\pphi)
:=
\frac{1+\varepsilon(r,\pphi)}
{s t^s\int_t^\infty y^{-(s+1)}\bigl(1+\varepsilon(y,\pphi)\bigr)\,dy}.
\]
Then
\[
f_{R,\Pphi}(r,\pphi\mid R>t)
=
s t^s\,\varphi^{(t)}(\pphi)\,r^{-(s+1)}
\bigl(1+\varepsilon_t(r,\pphi)\bigr),
\qquad r>t.
\]
Moreover, since both numerator and denominator lie in \([1-\eta_t,1+\eta_t]\),
\[
\sup_{r>t,\;\pphi}\bigl|\varepsilon_t(r,\pphi)\bigr|
\le
\frac{2\eta_t}{1-\eta_t}
\to 0 .
\]
Therefore,
\[
\sup_{r>t,\;\pphi}
\left|
\frac{f_{R,\Pphi}(r,\pphi\mid R>t)}
{s t^s r^{-(s+1)}\varphi^{(t)}(\pphi)}
-1
\right|
\to 0,
\]
which proves the claim.\qed

\noindent\textbf{Proof of Theorem~\ref{thm:heavy_vanishing_RE}: }
For \(r>t\), by \eqref{eqn:f_tgen}, $\fgen^{(t)}(r,\pphi)
    =
    p_t\,s t^s r^{-(s+1)}\varphi^{(t)}(\pphi)$.
Also, $f_{R,\Pphi}(r,\pphi)
    =
    p_t\,f_{R,\Pphi}(r,\pphi\mid R>t)$.
Hence, for \(r>t\),
\begin{align*}
\frac{f_{R,\Pphi}(r,\pphi)}{\fgen^{(t)}(r,\pphi)}-1
&=
\frac{p_t\,f_{R,\Pphi}(r,\pphi\mid R>t)}
{p_t\,s t^s r^{-(s+1)}\varphi^{(t)}(\pphi)}
-1  = 
\frac{f_{R,\Pphi}(r,\pphi\mid R>t)}
{s t^s r^{-(s+1)}\varphi^{(t)}(\pphi)}
-1 .
\end{align*}
Therefore, Proposition~\ref{prop:conditional_tail_factorization} gives
\begin{equation}\label{eqn:unif_density_bdd}
    \sup_{r>t,\;\pphi}
\left|
\frac{f_{R,\Pphi}(r,\pphi)}{\fgen^{(t)}(r,\pphi)}-1
\right|
\to 0 .
\end{equation}
This proves \eqref{eqn:heavy_vanishing_RE}. It remains to show that \(\XX_u^{\rm gen}\in\mathcal T(P)\). Let \(B\subset
\mathcal E\setminus\{\mathbf 0\}\)
\(\nu^\star\)-continuity set with $0\not\in \operatorname{cl}[B]$. Thus, there exists $\delta_0$ such that $B_{\delta_0}\cap B =\varnothing$.  Since $t(u)/u\to 0$ as $u\to\infty$, 
this gives the existence of $u_0$ such that $u B\subset \{R>t(u)\}$ for all $u>u_0$.  
Hence the uniform density bound \eqref{eqn:unif_density_bdd} applies throughout
\(uB\). Let
\[
    \alpha_u
    :=
    \sup_{r>t(u),\;\pphi}
    \left|
    \frac{f_{R,\Pphi}(r,\pphi)}
         {\fgen^{(t(u))}(r,\pphi)}
    -1
    \right|.
\]
Then \(\alpha_u\to0\), and uniformly on \(uB\),
\[
|\fgen^{(t(u))}(r,\pphi)
    -
    f_{R,\Pphi}(r,\pphi)| \leq  \frac{\alpha_u}{1-\alpha_u}\ f_{R,\pphi}(r,\pphi).
\]
for all $(r>t(u),\pphi)$.
Hence
\[
|u^s\mathbb P(u^{-1}{\XX}_u^{\rm gen}\in B)-
u^s\mathbb P(u^{-1}\XX\in B)| \leq  \frac{\alpha_u}{1-\alpha_u} u^{s} \mathbb P(u^{-1}\XX\in B) 
\]
 By
Lemma~\ref{lem:tail_asymp}, $u^s\mathbb P(u^{-1}\XX\in B)\to\nu^\star(B)$,
and therefore $u^s\mathbb P(u^{-1}{\XX}_u^{\rm gen}\in B)\to\nu^\star(B)$.
Thus \(\XX_u^{\rm gen}\in\mathcal T(P)\).\qed

\noindent\textbf{Proof of Proposition~\ref{prop:stress_summaries}:} If \(\zz_u\to\zz\), 
\(\mv L_u(\zz_u)\to\mv L^\star(\zz)\), and at every \(\zz\) where \(W\) is
continuous at \(\mv L^\star(\zz)\) it follows that
\(W_u(\zz_u)\to W^\star(\zz)\). Observe that \(\ZZ_u\Rightarrow\ZZ^\star\)
 and
\(\widetilde\ZZ_u\Rightarrow\ZZ^\star\) by Corollary~\ref{cor:stress_law_preserve}. Note that for every $\zz\in D_0$ and $\zz_u\to\zz$, $\mv W_u(\zz_u) \to \mv W^\star(\zz)$, and $\mathbb P(\ZZ^\star \in D_0) = 1$. The claim follows
from the extended continuous mapping theorem
\cite[Theorem~1.11.1]{van1996weak}. \qed
\subsection{Proofs from Section~\ref{sec:RST_pop}}
\noindent \textbf{Proof of Lemma~\ref{lem:containment}:}
Let \(\{\XX_u\}\in\mathcal D(P)\) with \(\XX_u\sim P_u\), and write
\[
    \Psi_u^P(\zz):=u^{s+d}f_P(u\zz),
    \qquad
    \Psi_u^{(P_u)}(\zz):=u^{s+d}f_{P_u}(u\zz).
\]
Recall from the argument following \eqref{eqn:Ljs} that
\(L_{\mathcal J,u}\to L_{\mathcal J}^\star\) continuously and
\(L_{\mathcal J}^\star\) is homogeneous.
The proof of Lemma~\ref{lem:homogenous_epi_conv}, applied to the closed level
sets
\[
    \Gamma_u(\mathcal J)
    =
    \{\zz: L_{\mathcal J,u}(\zz) \geq 1\},
    \qquad
    \mathcal S^\star(\mathcal J)
    =
    \{\zz: L_{\mathcal J}^\star(\zz)\geq 1\},
\]
guarantees, for every \(\zz\in\mathcal S^\star(\mathcal J)\), a recovery
sequence \(\zz_u\to\zz\) with \(\zz_u\in\Gamma_u(\mathcal J)\) for all
sufficiently large \(u\). By regularity choose
\(\zz_0\in\mathcal S^\star(\mathcal J)\) with \(\varphi^\star(\zz_0)>0\), and let
\(\zz_u\to\zz_0\) be such a recovery sequence. Since \(P\) satisfies Assumption~\ref{assume:ht_data} and
\(\{P_u\}\) satisfy condition (D1),
\[
    \Psi_u^P(\zz_u)\to \varphi^\star(\zz_0),
    \qquad
    \Psi_u^{(P_u)}(\zz_u)\to \varphi^\star(\zz_0).
\]
Hence there exist \(\alpha>0\) and \(u_0\) with
\(\inf_{u>u_0}\min\{\Psi_u^P(\zz_u),\Psi_u^{(P_u)}(\zz_u)\}>\alpha\). Since
\(\zz_u\in\Gamma_u(\mathcal J)\), for all such \(u\)
\begin{equation}\label{eqn:opt_bigger_alpha}
    \sup_{\zz\in\Gamma_u(\mathcal J)}\Psi_u^P(\zz)>\alpha,
    \qquad
    \sup_{\zz\in\Gamma_u(\mathcal J)}\Psi_u^{(P_u)}(\zz)>\alpha ;
\end{equation}
the optimal values of both finite-level problems are thus bounded away from
zero uniformly in large \(u\).

We next show that \eqref{eqn:opt_bigger_alpha} prevents maximizers from
escaping to infinity. Consider first the true distribution $P$. Write \(\zz=\rho\pphi\),
where \(\rho=\|\zz\|\) and \(\pphi\in\Theta_{\mathcal E}\). Then
\[
    \Psi_u^P(\zz)
    =
    u^{s+d}f_P(u\rho\pphi)
    =
    \rho^{-(s+d)}
    \left\{(u\rho)^{s+d}f_P(u\rho\pphi)\right\}.
\]
By Assumption~\ref{assume:ht_data}, the term in braces converges uniformly 
over \(\pphi\in\Theta_{\mathcal E}\) and is therefore bounded once \(u\rho\) is large. Since
\(u\rho\ge u\) whenever \(\rho\ge1\), there exists \(C<\infty\) such
that, for all sufficiently large \(u\),
\[
    \Psi_u^P(\zz)\le C\|\zz\|^{-(s+d)},
    \qquad \|\zz\|\ge1 .
\]
Choose \(M_1>1\) with \(CM_1^{-(s+d)}<\alpha\); then
\(\sup_{\zz\in B_{M_1}^c}\Psi_u^P(\zz)<\alpha\) for all sufficiently large
\(u\). For the approximating family, condition (D2) supplies \(M_2\) and
\(u_1\) such that \(\sup_{\zz\in B_{M_2}^c}\Psi_u^{(P_u)}(\zz)<\alpha\) for all
\(u\ge u_1\). Setting \(M=\max\{M_1,M_2\}\), both optimal values in
\eqref{eqn:opt_bigger_alpha} exceed \(\alpha\) while both objectives are
strictly below \(\alpha\) on \(B_M^c\); consequently no maximizer of either
problem lies in \(B_M^c\). Further, since the feasible set $\Gamma_u(\mathcal J)$ is contained in $\{\|\zz\| > c\}$, the solution sets are contained in
\(\overline{B_M\cap B_{c}^{c}}\), where $c$ is as defined in Lemma~\ref{lem:0-not-contained}, for all sufficiently large \(u\). Take $K_\mathcal J$ as the closure of this set and observe that in this case $\mathcal Y_u(\mathcal J;P) \cup \mathcal Y_u(\mathcal J;P_u) \subset \mathcal K_\mathcal J$ for all $u$ large enough.\qed

\noindent{\textbf{Proof of Theorem~\ref{thm:rst_convergence}:}}
Fix \(\{\XX_u\}\in\mathcal D(P)\) with \(\XX_u\sim P_u\).

\noindent\textbf{Step 1:}
Recall from \eqref{eqn:Ljs} that \(L_{\mathcal J,u}\) converges continuously to
\(L_{\mathcal J}^\star\). Since each \(L_j^\star\) is homogeneous of degree one,
so is \(L_{\mathcal J}^\star\), and the regularity of the loss system
\((L_1,\ldots,L_K)\) gives \(\mathcal S^\star(\mathcal J)\neq\varnothing\). Then,
by Lemma~\ref{lem:homogenous_epi_conv}, the characteristic functions converge epigraphically:
\[
    \iota_{\Lev_1^+[L_{\mathcal J,u}]}
    \to
    \iota_{\Lev_1^+[L_{\mathcal J}^\star]}
\]

\noindent\textbf{Step 2:}
Let \(K_{\mathcal J}\subset \mathcal E\setminus\{\mv 0\}\) denote the compact
localization set supplied by Lemma~\ref{lem:containment}, so that for all
sufficiently large \(u\) the true and approximate reverse-stress solution sets
are contained in \(K_{\mathcal J}\). It therefore suffices to establish
objective convergence on this compact set.

For the true law, Assumption~\ref{assume:ht_data} gives
\(\Psi_u^{P}(\zz):=u^{s+d}f_P(u\zz)\to\varphi^\star(\zz)\) continuously on
\(\mathcal E\setminus\{\mv 0\}\); for the approximating family, condition (D1)
gives \(\Psi_u^{(P_u)}(\zz)\to\varphi^\star(\zz)\) in the same sense. Since
\(\varphi^\star\) is continuous and \(K_{\mathcal J}\) is compact, continuous
convergence is uniform on \(K_{\mathcal J}\); hence
\begin{equation}\label{eqn:both_converge}
     \sup_{\zz\in K_{\mathcal J}}
    \bigl|\Psi_u^{P}(\zz)-\varphi^\star(\zz)\bigr|\to0,
    \qquad
    \sup_{\zz\in K_{\mathcal J}}
    \bigl|\Psi_u^{(P_u)}(\zz)-\varphi^\star(\zz)\bigr|\to0 .
\end{equation}

\noindent\textbf{Step 3:}
Define the extended-valued minimization objectives
\[
    F_u^P(\zz)
    :=
    -\Psi_u^P(\zz)
    +
    \iota_{\Gamma_u(\mathcal J)}(\zz),
    \qquad
    F_u^{(P_u)}(\zz)
    :=
    -\Psi_u^{(P_u)}(\zz)
    +
    \iota_{\Gamma_u(\mathcal J)}(\zz),
\]
with common limiting objective
\[
    F^\star(\zz)
    :=
    -\varphi^\star(\zz)
    +
    \iota_{\mathcal S^\star(\mathcal J)}(\zz).
\]
The minimizers of \(F_u^P\) and \(F_u^{(P_u)}\) are exactly
\(\mathcal Y_u(\mathcal J;P)\) and \(\mathcal Y_u(\mathcal J;P_u)\),
respectively, while those of \(F^\star\) are
\[
    \mathcal Y^\star(\mathcal J)
    =
    \arg\max\{\varphi^\star(\zz):\zz\in\mathcal S^\star(\mathcal J)\}.
\]
By Steps 1 and 2, both \(F_u^P\) and \(F_u^{(P_u)}\) converge epigraphically to
\(F^\star\); from \eqref{eqn:both_converge}, $\Psi_u^P$ and $\Psi_u^{(P_u)}$ both converge uniformly, and thus continuously (see \cite{RockafellarWets1998}, Theorem 7.14) on \(K_{\mathcal J}\) and the characteristic functions converge epigraphically. Their sum therefore converges epigraphically too (see \cite{RockafellarWets1998}, Theorem 7.46). By
Lemma~\ref{lem:containment} the corresponding solution sets are eventually
contained in the common compact set \(K_{\mathcal J}\subset\mathcal E\setminus
\{\mv 0\}\). Therefore, using \cite{bonnans2013perturbation}, Proposition~4.4 (see the discussion on Pg. 265 therein), 
\[
    \limsup_{u\to\infty}\mathcal Y_u(\mathcal J;P)
    \subseteq
    \mathcal Y^\star(\mathcal J),
    \qquad
    \limsup_{u\to\infty}\mathcal Y_u(\mathcal J;P_u)
    \subseteq
    \mathcal Y^\star(\mathcal J).
\]
This proves \eqref{eqn:rst_main_gen}.

\noindent\textbf{Step 4:}
Let
\[
    V_u^P
    :=
    \sup_{\zz\in\Gamma_u(\mathcal J)}\Psi_u^P(\zz),
    \qquad
    V_u^{(P_u)}
    :=
    \sup_{\zz\in\Gamma_u(\mathcal J)}\Psi_u^{(P_u)}(\zz),
    \qquad
    V^\star
    :=
    \sup_{\zz\in\mathcal S^\star(\mathcal J)}\varphi^\star(\zz).
\]
The argument of Step \textbf{3} also yields \(V_u^P\to V^\star\) and
\(V_u^{(P_u)}\to V^\star\). Since the system is regular and
\(\varphi^\star>0\) on the relevant tail region, \(V^\star>0\), so
\(V_u^{(P_u)}/V_u^P\to1\). Because
\(V_u^P=u^{s+d}\sup_{\zz\in\Gamma_u(\mathcal J)}f_P(u\zz)\), and likewise for the
approximating family, the scaling \(u^{s+d}\) cancels in the ratio, giving
\[
    \frac{
    \sup_{\zz\in \Gamma_u(\mathcal J)}f_{P_u}(u\zz)
    }{
    \sup_{\zz\in \Gamma_u(\mathcal J)}f_P(u\zz)
    }
    \to1,
\]
which proves \eqref{eqn:rst_main_value}. Finally, if
\(\mathcal Y^\star(\mathcal J)=\{\zz^\star\}\), the outer limits in
\eqref{eqn:rst_main_gen} are singletons, so any selections
\(\zz_u\in\mathcal Y_u(\mathcal J;P)\) and
\(\tilde\zz_u\in\mathcal Y_u(\mathcal J;P_u)\) satisfy \(\zz_u\to\zz^\star\) and
\(\tilde\zz_u\to\zz^\star\). \qed

\noindent\textbf{Proof of Lemma~\ref{lem:ssgen_in_D}:}
Write \(\Psi_u^{\rm gen}(\zz):=u^{s+d}f_{P_{\rm gen}^{(t(u))}}(u\zz)\) and
\(\Psi_u^{P}(\zz):=u^{s+d}f_P(u\zz)\). First note that the relative error guarantee of
Theorem~\ref{thm:heavy_vanishing_RE} transfers to the risk-factor densities.
Set
\[
    \beta_u
    :=
    \sup_{r>t(u),\,\pphi}
    \left|
    \frac{f_{R,\Pphi}(r,\pphi)}{\fgen^{(t(u))}(r,\pphi)}-1
    \right| ,
\]
so that \(\beta_u\to0\) by \eqref{eqn:heavy_vanishing_RE}. For all large \(u\)
we have \(\beta_u<1\), whence
\[
    \sup_{r>t(u),\,\pphi}
    \left|
    \frac{\fgen^{(t(u))}(r,\pphi)}{f_{R,\Pphi}(r,\pphi)}-1
    \right|
    \le
    \frac{\beta_u}{1-\beta_u}
    =:\tilde\beta_u
    \to0 .
\]
Since \(f_{R,\Pphi}(r,\pphi)=J(r,\pphi)f_P(r\pphi)\) and
\(\fgen^{(t)}(r,\pphi)=J(r,\pphi)f_{P_{\rm gen}^{(t)}}(r\pphi)\) with a common
polar Jacobian \(J\), the same bound holds for the risk-factor densities:
\begin{equation}\label{eqn:rel_err_cartesian}
    \sup_{\|\yy\|>t(u)}
    \left|
    \frac{f_{P_{\rm gen}^{(t(u))}}(\yy)}{f_P(\yy)}-1
    \right|
    \le
    \tilde\beta_u
    \to0 .
\end{equation}

\noindent\textbf{(D1):} Let \(\zz_u\to\zz\in\mathcal E\setminus\{\mv 0\}\) and put
\(c:=\|\zz\|/2>0\), so that \(\|\zz_u\|\ge c\) for all large \(u\). Since
\(t(u)=u^\theta\) with \(\theta\in(0,1)\), we have
\(\|u\zz_u\|\ge cu>t(u)\) for all sufficiently large \(u\), and therefore
\eqref{eqn:rel_err_cartesian} applies at \(\yy=u\zz_u\):
\[
    \Psi_u^{\rm gen}(\zz_u)
    =
    \left[
    \frac{f_{P_{\rm gen}^{(t(u))}}(u\zz_u)}{f_P(u\zz_u)}
    \right]
    \Psi_u^{P}(\zz_u)
    =
    \bigl(1+O(\tilde\beta_u)\bigr)\,\Psi_u^{P}(\zz_u).
\]
By Assumption~\ref{assume:ht_data}, \(\Psi_u^{P}(\zz_u)\to\varphi^\star(\zz)\),
and \(\tilde\beta_u\to0\); hence \(\Psi_u^{\rm gen}(\zz_u)\to
\varphi^\star(\zz)\), which is (D1).

\noindent\textbf{(D2):} As established in the proof of
Lemma~\ref{lem:containment}, Assumption~\ref{assume:ht_data} yields a constant
\(C<\infty\) with \(\Psi_u^{P}(\zz)\le C\|\zz\|^{-(s+d)}\) for all
\(\|\zz\|\ge1\) and all sufficiently large \(u\). If \(\|\zz\|\ge M\ge 1\), then
\(\|u\zz\|\ge u>t(u)\) for large \(u\), so \eqref{eqn:rel_err_cartesian} gives
\(\Psi_u^{\rm gen}(\zz)\le(1+\tilde\beta_u)\Psi_u^{P}(\zz)\). Consequently, for
all sufficiently large \(u\),
\[
    \sup_{\|\zz\|\ge M}\Psi_u^{\rm gen}(\zz)
    \le
    2C\,M^{-(s+d)} ,
\]
and letting \(M\to\infty\) establishes (D2). Thus
\(\XX_u^{\rm gen}\in\mathcal D(P)\).
\qed

\subsection{Proofs from Section~\ref{sec:data_driven_guarantees}: }
The  following lemmas are essential to proving the results of this section 
\begin{lemma}[\textbf{Conditioning preserves relative likelihood error}]
\label{lem:conditioning_relative_error}
Let \(a\) and \(b\) be two positive densities on a measurable set \(A\). Suppose
\[
    \sup_{x\in A}\left|\frac{a(x)}{b(x)}-1\right|\le \eta
\]
for some \(\eta\in(0,1)\). Let \(P_A^a\) and \(P_A^b\) denote the corresponding
conditional laws on \(A\). Then
\[
    \operatorname{TV}(P_A^a,P_A^b)
    \le
    \frac{\eta}{1-\eta}.
\]
\end{lemma}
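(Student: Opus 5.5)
The plan is to write both conditional densities explicitly and reduce the total-variation distance to an $L^1$ bound on the ratio. Set $Z_a:=\int_A a$ and $Z_b:=\int_A b$. These are finite and positive: the hypothesis gives $a\le(1+\eta)b$, so $Z_a$ is finite when $Z_b$ is, and positivity of $a,b$ gives $Z_a,Z_b>0$. The conditional laws then have densities $a/Z_a$ and $b/Z_b$ on $A$. Write $\varepsilon(x):=a(x)/b(x)-1$, so that $|\varepsilon|\le\eta$ uniformly on $A$.

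The first step controls the normalizing constants. Integrating $a=b(1+\varepsilon)$ over $A$ gives $Z_a=Z_b(1+\bar\varepsilon)$, where $\bar\varepsilon:=Z_b^{-1}\int_A b\,\varepsilon$ is a $b$-weighted average of $\varepsilon$. Hence $|\bar\varepsilon|\le\eta$, and in particular $1+\bar\varepsilon\ge 1-\eta>0$.

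The second step computes the TV distance:
\[
\operatorname{TV}(P_A^a,P_A^b)=\frac12\int_A\left|\frac{a}{Z_a}-\frac{b}{Z_b}\right|
=\frac12\int_A\frac{b}{Z_b}\left|\frac{1+\varepsilon}{1+\bar\varepsilon}-1\right|
=\frac{1}{2(1+\bar\varepsilon)}\int_A\frac{b}{Z_b}\,|\varepsilon-\bar\varepsilon| .
\]
By the triangle inequality, $|\varepsilon-\bar\varepsilon|\le|\varepsilon|+|\bar\varepsilon|\le 2\eta$, and $b/Z_b$ integrates to one. Combined with $1+\bar\varepsilon\ge1-\eta$ from the first step, this gives $\operatorname{TV}\le\eta/(1-\eta)$, which is exactly the claim.

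The argument is elementary. The only point needing care is the lower bound $1+\bar\varepsilon\ge1-\eta$ on the denominator; this is where $\eta<1$ is used. A cruder pointwise ratio argument would reach the same kind of constant, but the centering by $\bar\varepsilon$ above already produces the stated bound directly.
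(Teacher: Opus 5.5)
Your proof is correct and matches the paper's argument essentially line by line. Like the paper, you factor $a=b(1+\varepsilon)$, write $Z_a=Z_b(1+\bar\varepsilon)$ with $|\bar\varepsilon|\le\eta$, and bound $\tfrac12\int_A (b/Z_b)\,|\varepsilon-\bar\varepsilon|/(1+\bar\varepsilon)$ by $\eta/(1-\eta)$; your remark that $Z_a<\infty$ follows from $a\le(1+\eta)b$ makes explicit a point the paper leaves implicit.
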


\begin{lemma}[\textbf{Stability under normalization}]
\label{lem:normalization_stability}
Let \(q,\widehat q\) be nonnegative integrable functions, and
write $Z=\int q$,  $Z=\int \widehat q$ with $Z,\widehat Z>0$. 
Let \(Q\) and \(\widehat Q\) be the probability measures
with densities \(q/Z\) and \(\widehat q/\widehat Z\), respectively. Then $\operatorname{TV}(\widehat Q,Q)
    \le
    \frac{1}{Z}\int |\widehat q-q|$.
\end{lemma}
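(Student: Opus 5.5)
This is a direct computation, with one bookkeeping point about the two normalizations. (Note that the statement as written defines both normalizers as $Z$; I read the second as $\widehat Z=\int\widehat q$.)

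\textbf{Step 1: write the TV distance as an $L^1$ integral.} Using the paper's definition of total variation,
\[
    \operatorname{TV}(\widehat Q,Q)=\tfrac12\int\Bigl|\frac{\widehat q}{\widehat Z}-\frac{q}{Z}\Bigr|.
\]

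\textbf{Step 2: insert the intermediate function $\widehat q/Z$.} The triangle inequality splits the integrand as
\[
    \Bigl|\frac{\widehat q}{\widehat Z}-\frac{q}{Z}\Bigr|
    \le
    \Bigl|\frac{\widehat q}{\widehat Z}-\frac{\widehat q}{Z}\Bigr|
    +\frac{|\widehat q-q|}{Z}.
\]

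\textbf{Step 3: bound each term after integration.} The second term integrates to $Z^{-1}\int|\widehat q-q|$.

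For the first term, factor out $\widehat q$ and use that $\widehat q/\widehat Z$ integrates to one:
\[
    \int \widehat q\,\Bigl|\frac1{\widehat Z}-\frac1Z\Bigr|
    =\widehat Z\,\frac{|Z-\widehat Z|}{\widehat Z Z}
    =\frac{|Z-\widehat Z|}{Z}.
\]
The difference of normalizers is controlled by the same $L^1$ distance: $|Z-\widehat Z|=|\int(q-\widehat q)|\le\int|\widehat q-q|$.

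\textbf{Step 4: combine.} Both terms are therefore at most $Z^{-1}\int|\widehat q-q|$. The factor $\tfrac12$ in the definition of TV cancels the factor $2$ from summing them, which gives
\[
    \operatorname{TV}(\widehat Q,Q)\le\frac1Z\int|\widehat q-q|.
\]

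The only point requiring care is Step 4: to get the stated constant with no extra factor of $2$, one must use the $\tfrac12$ normalization of TV consistently with the paper's definition.
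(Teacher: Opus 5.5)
Your proof is correct and follows essentially the same route as the paper: insert $\widehat q/Z$, use $\int\widehat q\,\bigl|1/\widehat Z-1/Z\bigr|=|\widehat Z-Z|/Z\le Z^{-1}\int|\widehat q-q|$ to get an $L^1$ bound of $2Z^{-1}\int|\widehat q-q|$, and let the factor $\tfrac12$ in the definition of TV absorb the $2$. You are also right that the second normalizer in the statement should read $\widehat Z=\int\widehat q$.
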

\begin{lemma}[\textbf{Radial quantile and threshold order}]
\label{lem:radial_quantile}
Let \(v_{1-t}^{(R)}(P)\) denote the \((1-t)\)-quantile of \(R=\|\XX\|\),
\(\XX\sim P\). Under Assumption~\ref{assume:ht_data}:
\begin{enumerate}
    \item[(i)] \(v_{1-t}^{(R)}(P)\sim (C_\star/s)^{1/s}\,t^{-1/s}\) as
    \(t\downarrow0\), where
    \(C_\star=\int_{\Theta_{\mathcal E}}\varphi^\star(\pphi)\,d\pphi\).
    \item[(ii)] If \(k_n=\lfloor n^{1-q}\rfloor\) for some \(q\in(0,1)\), then
    \(t_n:=R_{(k_n)}\) satisfies
    \(t_n/v_{1-k_n/n}^{(R)}(P)\to1\) in probability; in particular
    \(t_n\asymp_P n^{q/s}\).
\end{enumerate}
\end{lemma}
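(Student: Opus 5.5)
We plan to reduce both parts to the radial survival function $\bar F(r):=\mathbb P(R>r)$: part (i) is a first-order asymptotic for $\bar F$ followed by an inversion, and part (ii) transfers that inversion to a random argument via the uniform order statistics.

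\textbf{Part (i).} First we obtain the tail of $R$ from Lemma~\ref{lem:angle_converges}. For large $r$,
\[
\bar F(r)=\int_{\Theta_{\mathcal E}}\int_r^\infty \rho^{-(s+1)}\varphi^\star(\pphi)\bigl(1+\varepsilon(\rho,\pphi)\bigr)\,d\rho\,d\pphi ,
\]
and the error term satisfies $\sup_{\rho>r,\pphi}|\varepsilon|\to0$. Together with the bounds on $\varphi^\star$ from Lemma~\ref{lem:phi_star_props}(ii), this gives $\bar F(r)=(C_\star/s)\,r^{-s}(1+o(1))$. Next we note that $f_{R,\Pphi}=r^{d-1}f_{\XX}(r\pphi)$ is continuous and eventually strictly positive. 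It is continuous beyond $r_0$ by Assumption~\ref{assume:ht_data}, and positive for large $r$ because $\varphi^\star>0$ and the relative error vanishes. Hence $\bar F$ is continuous and strictly decreasing on $[r_1,\infty)$ for some $r_1$. So for all small $t$ the quantile $v_t:=v^{(R)}_{1-t}(P)$ is the unique solution of $\bar F(v_t)=t$, and $v_t\to\infty$ as $t\downarrow0$. Substituting into the asymptotic gives $(C_\star/s)\,v_t^{-s}(1+o(1))=t$. Solving yields $v_t\sim (C_\star/s)^{1/s}t^{-1/s}$.

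\textbf{Part (ii).} Set $U_i:=\bar F(R_i)$. These are i.i.d.\ Uniform$(0,1)$, because $\bar F$ is continuous; this needs continuity everywhere, which holds since $R$ has a density. Since $\bar F$ is nonincreasing, the $k_n$th largest $R$ corresponds to the $k_n$th smallest $U$, namely $U_{(k_n)}$. Moreover, on the event $\{R_{(k_n)}>r_1\}$, whose probability tends to one because $k_n/n\to0$, we have $t_n=v_{U_{(k_n)}}$ exactly. The law of $U_{(k_n)}$ is Beta$(k_n,n-k_n+1)$, with mean $k_n/(n+1)$ and variance $O(k_n/n^2)$. By Chebyshev, $nU_{(k_n)}/k_n\to1$ in probability, since $k_n\to\infty$. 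Part (i) then gives
\[
\frac{t_n}{v_{k_n/n}}=\frac{v_{U_{(k_n)}}}{v_{k_n/n}}=\left(\frac{nU_{(k_n)}}{k_n}\right)^{-1/s}(1+o_P(1))\to1 ,
\]
where the $o_P(1)$ comes from applying the asymptotic of (i) at the two arguments $U_{(k_n)}$ and $k_n/n$, both of which tend to $0$ in probability. Finally, $k_n/n\asymp n^{-q}$, so $v_{k_n/n}\asymp n^{q/s}$, and therefore $t_n\asymp_P n^{q/s}$.

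\textbf{Main obstacle.} The step needing the most care is the exact inversion $\bar F(v_t)=t$ and the identity $t_n=v_{U_{(k_n)}}$. These require continuity and eventual strict monotonicity of $\bar F$, which is why we first establish positivity and continuity of the radial density in the far tail. Once that is in place, the rest is routine regular-variation bookkeeping.
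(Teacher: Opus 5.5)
Your proposal is correct and follows essentially the same route as the paper: derive $\bar F(r)=(C_\star/s)\,r^{-s}(1+o(1))$ from the uniform tail expansion, invert it, and then transfer the inversion to $R_{(k_n)}$ through uniform order statistics with $(n/k_n)U_{(k_n)}\to1$.

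The differences are technical only. For the inversion in (i), the paper uses the sandwich $u_-\le v^{(R)}_{1-t}(P)\le u_+$ with $u_\pm=\{(1\pm\varepsilon)C_\star/s\}^{1/s}t^{-1/s}$, which needs neither continuity nor strict monotonicity of $\bar F$. In (ii), the paper realizes $R_i=\bar F^{\leftarrow}(U_i)$ and cites Einmahl for the order-statistic limit. Your probability-integral transform $U_i=\bar F(R_i)$, combined with Chebyshev on the Beta moments, is more self-contained, and it gives the identity $\bar F(t_n)=U_{(k_n)}$ (used later in the paper) without any coupling. The cost is the extra positivity and continuity step you flagged as the main obstacle.
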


\noindent{\textbf{Proof of Theorem~\ref{thm:rate_TV}}:} Let \(\Pi_n^0\) denote the scaled stressed law induced by the population splice at
the empirical threshold \(t_n\). That is, if
\(\XX_n^0\sim P_{\rm gen}^{(t_n)}\), where \(P_{\rm gen}^{(t_n)}\) uses the true
\(s\), \(p_{t_n}\), and \(\Phi^{(t_n)}\), then
$\Pi_n^0
:=
\Law\!\left(
u_n^{-1}\XX_n^0
\,\middle|\,
\XX_n^0\in \mathcal S(u_n;\mathcal J)
\right)$. By the triangle inequality,
\[
    \textsf{TV}\!\left(\widehat\Pi_n,\Pi_n\right)
    \le
    \underbrace{\textsf{TV}\!\left(\Pi_n^0,\Pi_n\right)}_{\text{Bias}}
    +
    \underbrace{\textsf{TV}\!\left(\widehat\Pi_n,\Pi_n^0\right)}_{\text{Statistical error}}.
\]
We bound the two terms separately.

\smallskip
\noindent\textbf{i) Bias term.}
By Assumption~\ref{assume:2nd_order} and the polar identity
\(f_{R,\Pphi}(r,\pphi)=r^{d-1}f_{\XX}(r\pphi)\),
\[
    f_{R,\Pphi}(r,\pphi)
    =
    r^{-(s+1)}\varphi^\star(\pphi)
    \{1+\varepsilon_t(r,\pphi)\},
    \qquad r>t,
\]
where
\[
    \eta_t
    :=
    \sup_{r>t,\;\pphi\in\Theta_{\mathcal E}}
    |\varepsilon_t(r,\pphi)|
    =
    O(t^{-\gamma}).
\]
Here we use that \(\varphi^\star\) is bounded away from zero on
\(\Theta_{\mathcal E}\). By definition of the finite-threshold angular density,
\[
    \varphi^{(t)}(\pphi)
    =
    \frac{\varphi^\star(\pphi)}{p_t}
    \int_t^\infty y^{-(s+1)}
    \{1+\varepsilon_t(y,\pphi)\}\,dy.
\]
Therefore, for \(r>t\),
\[
\begin{aligned}
    \fgen^{(t)}(r,\pphi)
    &=
    p_t\,s t^s r^{-(s+1)}\varphi^{(t)}(\pphi)\\
    &=
    r^{-(s+1)}\varphi^\star(\pphi)
    \left[
    s t^s\int_t^\infty y^{-(s+1)}
    \{1+\varepsilon_t(y,\pphi)\}\,dy
    \right].
\end{aligned}
\]
Since $s t^s\int_t^\infty y^{-(s+1)}\,dy=1$ and \(\sup_{y>t,\pphi}|\varepsilon_t(y,\pphi)|=\eta_t\), the bracketed term is
\(1+O(\eta_t)\), uniformly in \(\pphi\). Hence, uniformly over \(r>t\) and
\(\pphi\in\Theta_{\mathcal E}\),
\[
    \fgen^{(t)}(r,\pphi)
    =
    r^{-(s+1)}\varphi^\star(\pphi)\{1+O(t^{-\gamma})\}.
\]
Combining this with the expansion for \(f_{R,\Pphi}\) gives
\begin{equation}\label{eqn:rate_density}
        \frac{f_{R,\Pphi}(r,\pphi)}
         {\fgen^{(t)}(r,\pphi)}
    =
    1+O(t^{-\gamma})
\end{equation}

uniformly for \(r>t\) and \(\pphi\in\Theta_{\mathcal E}\). Since the Cartesian and
radial--angular densities differ by the same polar Jacobian, the same relative
bound holds for the corresponding densities on \(\mathcal E\). Thus, at the random
threshold \(t_n\),
\[
    \sup_{\|\xx\|>t_n}
    \left|
    \frac{f_P(\xx)}
         {f_{P_{\rm gen}^{(t_n)}}(\xx)}
    -1
    \right|
    =
    O_P(t_n^{-\gamma}).
\]

Let $A_n:=\mathcal S(u_n;\mathcal J)$.
Since \(t_n/u_n\to0\) in probability, Lemma~\ref{lem:0-not-contained} gives
\(A_n\subseteq\{\xx:\|\xx\|>t_n\}\) with probability tending to one. Hence
\[
    \sup_{\xx\in A_n}
    \left|
    \frac{f_P(\xx)}
         {f_{P_{\rm gen}^{(t_n)}}(\xx)}
    -1
    \right|
    =
    O_P(t_n^{-\gamma}).
\]
By Lemma~\ref{lem:conditioning_relative_error}, applied with
\(a= f_P \), \(b=f_{P_{\rm gen}^{(t_n)}}\), and \(A=A_n\),
\[
    \textsf{TV} \!\left(
    \Law(\XX_n^0\mid \XX_n^0\in A_n), \; 
    \Law(\XX\mid \XX\in A_n)
    \right)
    =
    O_P(t_n^{-\gamma}).
\]
Applying the data-processing inequality to the map \(\xx\mapsto u_n^{-1}\xx\)
yields $\mathsf{TV}\!\left(
    \Pi_n^0,\Pi_n
    \right)
    =
    O_P(t_n^{-\gamma})$.
Since \(k_n=n^{1-q}\), Lemma~\ref{lem:radial_quantile} implies that the  \(k_n/n\) tail-quantile of $R$ 
is of order \(n^{q/s}\), and the intermediate order statistic satisfies the same
asymptotic order. Therefore $t_n^{-\gamma} =
    O_P(n^{-q\gamma/s}),$
and hence
\begin{equation}\label{eqn:bias}
    {\sf{TV}}\!\left(
    \Pi_n^0,\Pi_n
    \right)
    =
    O_P(n^{-q\gamma/s}).
\end{equation}
\smallskip
\noindent\textbf{ii) Statistical error.} Let $\varphi_n:=\varphi^{(t_n)},\
    \widehat\varphi_n:=\widehat\varphi^{(t_n)},\ 
    \widehat s_n:=\widehat s_{t_n}$
and write $\Gamma_n:=\Gamma_{u_n}(\mathcal J)=
    u_n^{-1}\mathcal S(u_n;\mathcal J).$ 
Let $E_n^{\rm tail}$ be the set of outcomes for which $\left\{
        \mathcal S(u_n;\mathcal J)\subseteq \{\xx:\|\xx\|>t_n\}
    \right\}$. Note that through the choice of threshold, $t_n = o_p( u_n)$. Then, 
by Lemma~\ref{lem:0-not-contained} and \(t_n/u_n\to0\),  $\mathbb P(E_n^{\rm tail})\to1$. On \(E_n^{\rm tail}\), after the scaling \(r=u_n y\), the population and data-driven SSGEN stressed laws
are obtained by normalizing 
\[
    q_n(y,\pphi)
    :=
    s\,y^{-(s+1)}\varphi_n(\pphi)
    \mathbf 1\{y\pphi\in\Gamma_n\},
    \qquad
    \widehat q_n(y,\pphi)
    :=
    \widehat s_n\,y^{-(\widehat s_n+1)}
    \widehat\varphi_n(\pphi)
    \mathbf 1\{y\pphi\in\Gamma_n\}.
\]
The factors \(p_{t_n}\), \(t_n^s\), \(u_n^{-s}\), and their estimated analogues are
constant in \((y,\pphi)\), and are absorbed into the normalizing constants. Let
\[
    Z_n:=\int q_n(y,\pphi)\,dy\,d\pphi,
    \qquad
    \widehat Z_n:=\int \widehat q_n(y,\pphi)\,dy\,d\pphi .
\]

We first show that \(Z_n\) is bounded away from zero. By the expansion used in the
proof of Proposition~\ref{prop:conditional_tail_factorization},
\[
    f_{R,\Pphi}(r,\pphi)
    =
    r^{-(s+1)}\varphi^\star(\pphi)\{1+\varepsilon(r,\pphi)\},
    \qquad
    \sup_{r>t,\pphi}|\varepsilon(r,\pphi)|\to0 .
\]
Writing \(\eta_t:=\sup_{r>t,\pphi}|\varepsilon(r,\pphi)|\), we have
\[
\begin{aligned}
    \varphi^{(t)}(\pphi)
    &=
    \frac{\int_t^\infty f_{R,\Pphi}(r,\pphi)\,dr}
         {\int_{\Theta_{\mathcal E}}\int_t^\infty f_{R,\Pphi}(r,\theta)\,dr\,d\theta}  = 
    \frac{\varphi^\star(\pphi)\{1+O(\eta_t)\}}
         {C_\star\{1+O(\eta_t)\}},
\end{aligned}
\]
uniformly over \(\pphi\in\Theta_{\mathcal E}\), where
\[
    C_\star:=\int_{\Theta_{\mathcal E}}\varphi^\star(\theta)\,d\theta .
\]
Hence there exists \(\alpha_0>0\) such that
\(\varphi^{(t)}(\pphi)\ge \alpha_0\) for all large \(t\) and all
\(\pphi\in\Theta_{\mathcal E}\).

By regularity and homogeneity, choose \(N<\infty\) so that the compact set \(K_N\)
defined in \eqref{eqn:K_N} has positive Lebesgue measure. Following the lower-bound
argument in Proposition~\ref{prop:asymp_tails}, for all sufficiently large \(n\),
\[
    \Gamma_n\cap B_N\supset K_N .
\]
Therefore,
\[
Z_n
\ge
s\alpha_0
\int
y^{-(s+1)}
\mathbf 1\!\left\{y\pphi\in K_N\right\}
dy\,d\pphi > 0. 
\]
The right-hand side is a strictly positive constant. Hence there exists
\(z_0>0\) such that $\mathbb P(Z_n\geq z_0)\to 1.$ 
Now, on the event $ E_{n}^{\rm tail} \cap \{\min\{Z_n,\hat Z_n\}>z_0/2\}$, Lemma~\ref{lem:normalization_stability} shows that 
\[
    \textsf{TV}\!\left(\widehat\Pi_n,\Pi_n^0\right)
    \le
    \frac{2}{z_0}\int
    \left|
    \widehat q_n(y,\pphi)-q_n(y,\pphi)
    \right|
    dy\,d\pphi .
\]
Define $D_n = \int\left|
        \widehat q_n(y,\pphi)-q_n(y,\pphi) \right| dy\,d\pphi.$
We first bound \(D_n\) before normalizing. Observe that
\begin{align}\label{eqn:I_split}
\int |\widehat q_n-q_n|
&\le
\int
s y^{-(s+1)}
\left|\widehat\varphi_n(\pphi)-\varphi_n(\pphi)\right|
\mathbf 1\{y\pphi\in\Gamma_n\}\,dy\,d\pphi  \nonumber\\
&\quad+
\int
\left|
\widehat s_n y^{-(\widehat s_n+1)}
-
s y^{-(s+1)}
\right|
\widehat\varphi_n(\pphi)
\mathbf 1\{y\pphi\in\Gamma_n\}\,dy\,d\pphi  \nonumber\\
&=: I_{n,\varphi}+I_{n,s}.
\end{align}

From Lemma~\ref{lem:0-not-contained}, there exist \(c>0\) and \(n_0\) such that
\(y\pphi\in\Gamma_n\) implies \(y\ge c\) for all \(n\ge n_0\). Therefore
\[
    \sup_{n\ge n_0,\;\pphi}
    \int_0^\infty
    s y^{-(s+1)}
    \mathbf 1\{y\pphi\in\Gamma_n\}\,dy
    \le
    \int_c^\infty s y^{-(s+1)}\,dy
    =
    C .
\]
It follows that
\[
    I_{n,\varphi}
    \le
    C\int
    \left|\widehat\varphi_n(\pphi)-\varphi_n(\pphi)\right|\,d\pphi
    =
    2C\,\textsf{TV}\!\left(\widehat\Phi^{(t_n)},\Phi^{(t_n)}\right),
\]
We next bound \(I_{n,s}\). Since \(\widehat s_n\to s\) in probability, the event
\[
    E_n:=\left\{s/2\le \widehat s_n\le 3s/2\right\}
\]
satisfies \(\mathbb P(E_n)\to1\). By the mean-value theorem, on \(E_n\),
\[
\left|
\widehat s_n y^{-(\widehat s_n+1)}
-
s y^{-(s+1)}
\right|
\le
|\widehat s_n-s|
\sup_{a\in[s/2,3s/2]}
y^{-(a+1)}(1+a|\log y|).
\]
On the region of integration \(y\ge c\), and hence
\[
\sup_{a\in[s/2,3s/2]}
y^{-(a+1)}(1+a|\log y|)
\le
C\left[
\mathbf 1\{c\le y\le1\}
+
(1+\log y)y^{-(s/2+1)}\mathbf 1\{y>1\}
\right].
\]
The right-hand side is integrable on \([c,\infty)\), since \(s/2>0\). Since
\(\widehat\varphi_n\) is a density, we obtain, on \(E_n\),
\[
    I_{n,s}
    \le
    C|\widehat s_n-s|.
\]
As \(\mathbb P(E_n)\to1\), this gives
\[
     I_{n,s}
    =
    O_P\!\left(|\widehat s_n-s|\right).
\]
Substituting, the preceding bounds into \eqref{eqn:I_split} yields $D_n=o_P(1)$.

Since $|\widehat Z_n-Z_n|\leq D_n$,
we have \(|\widehat Z_n-Z_n|=o_P(1)\). Together with
\(\mathbb P(Z_n\geq z_0)\to1\), this implies
\(\mathbb P(\widehat Z_n>z_0/2)\to1\). Now
combining the two bounds, with $\mathbb  P(E_n^{\rm tail}) \to 1$,
\[
    \textsf{TV}\!\left(\widehat\Pi_n,\Pi_n^0\right)
    \le
    C\left(
    |\widehat s_{t_n}-s|
    +
    \textsf{TV}\!\left(
        \widehat\Phi^{(t_n)},\Phi^{(t_n)}
    \right)
    \right)
\]
with probability tending to one. By Assumption~\ref{assume:data_driven},
\[
    \textsf{TV}\!\left(\widehat\Pi_n,\Pi_n^0\right)
    =
    O_P\!\left(k_n^{-(m\wedge\ell)}\right)
    =
    O_P\!\left(n^{-(1-q)(m\wedge\ell)}\right).
\]
The theorem follows from this bound and \eqref{eqn:bias}.
\qed

\noindent\textbf{Proof of Corollary~\ref{cor:stress_summary_dd}:} Note that by definition,
\begin{align*}
    \mathsf{TV}(\Pi_n\circ W_{n}^{-1} \ , \hat \Pi_n \circ W_{n}^{-1})  & =\sup_{B\in \mathscr B(\mathsf  Y)} |(\Pi_n\circ W_{n}^{-1})(B) - (\hat\Pi_n\circ W_{n}^{-1})(B)| \\
    &\leq \sup_{C \in \mathscr B(\mathcal E)} |\Pi_n(C) - \hat \Pi_n(C)|  = \mathsf{TV}(\Pi_n,\hat \Pi_n).
\end{align*}
Here, the first inequality follows since $W_n$ are measurable, and thus $W_n^{-1}(B) \in \mathscr B(\mathcal E)$ whenever $B\in \mathscr B(\mathsf{Y})$. Now, using the bound in Theorem~\ref{thm:rate_TV}, the conclusion follows\qed

\subsection{Proofs from Section~\ref{sec:dd_rst}}
The following lemma is required for the proofs
\begin{lemma}[\textbf{Data-driven RST localization}]
\label{lem:rst_dd_localization}
Suppose the conditions of Lemma~\ref{lem:rst_dd_local_objective} hold. Then there exists a compact  $\mathcal K_\mathcal J$ such that
\[
    \mathbb P\!\left(
        \widehat Y_n\subseteq \mathcal K_{\mathcal J}
    \right)\to1 .
\]
\end{lemma}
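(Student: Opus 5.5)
The plan is to copy the argument of Lemma~\ref{lem:containment}, with the deterministic objectives replaced by the data-driven one. The compact set will be $\mathcal K_{\mathcal J}:=\overline{B_M\cap B_c^c}$. Here $c$ comes from Lemma~\ref{lem:0-not-contained}, so $\Gamma_{u_n}(\mathcal J)\subseteq B_c^c$ for all large $n$ deterministically. The radius $M$ will be chosen below. It then suffices to show two things with probability tending to one: the optimal value $\widehat V_n:=\sup_{\Gamma_{u_n}(\mathcal J)}\widehat\Psi_n$ exceeds a fixed $\alpha>0$, and $\sup_{\|\zz\|\ge M}\widehat\Psi_n(\zz)<\alpha$. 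Together these exclude maximizers outside $B_M$.

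\textbf{Lower bound on the optimal value.} Take $\zz_0\in\mathcal S^\star(\mathcal J)$ and a recovery sequence $\zz_n\to\zz_0$ with $\zz_n\in\Gamma_{u_n}(\mathcal J)$, exactly as in the proof of Lemma~\ref{lem:containment}. The points $\zz_n$ lie in a fixed compact set $K\not\ni\mv 0$. Apply Lemma~\ref{lem:rst_dd_local_objective} with $\mathcal K_{\mathcal J}$ replaced by $K$; the lemma is stated for an arbitrary compact set avoiding the origin. This gives $|\widehat\Psi_n(\zz_n)-\Psi_n(\zz_n)|=O_P(a_n)=o_P(1)$. Since $\Psi_n(\zz_n)\to\varphi^\star(\zz_0)>0$, we get $\widehat V_n>\alpha:=\varphi^\star(\zz_0)/2$ with probability tending to one.

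\textbf{Uniform decay at infinity (main step).} This is the data-driven analogue of (D2), and I expect it to be the main obstacle, because Lemma~\ref{lem:rst_dd_local_objective} only controls compacts. I would use the explicit form of the data-driven tail density directly. For $\|u_n\zz\|>t_n$, with $\zz=\rho\pphi$, the Cartesian density of $\widehat P_{\rm gen}^{(t_n)}$ is $\widehat p_{t_n}\widehat s_n t_n^{\widehat s_n}(u_n\rho)^{-(\widehat s_n+d)}\widehat\varphi_n(\pphi)$. Hence
\[
\widehat\Psi_n(\zz)=\widehat p_{t_n}\,\widehat s_n\,t_n^{\widehat s_n}u_n^{s-\widehat s_n}\,\rho^{-(\widehat s_n+d)}\,\widehat\varphi_n(\pphi).
\]
I would then bound each factor with probability tending to one.
\begin{enumerate}
\item By \eqref{eqn:unif_condition} and the uniform lower and upper bounds on $\varphi^{(t_n)}$ (from the proof of Theorem~\ref{thm:rate_TV}), $\|\widehat\varphi_n\|_\infty\le C$.
\item On $E_n=\{s/2\le\widehat s_n\le 3s/2\}$ and for $\rho\ge M\ge1$, $\rho^{-(\widehat s_n+d)}\le M^{-(s/2+d)}$.
\item Write $\widehat p_{t_n}t_n^{\widehat s_n}u_n^{s-\widehat s_n}=(\widehat p_{t_n}t_n^{s})\,(u_n/t_n)^{s-\widehat s_n}$. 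Here $\widehat p_{t_n}=k_n/n$, and Lemma~\ref{lem:radial_quantile} gives $t_n^s k_n/n=O_P(1)$.
\item The remaining factor satisfies $(u_n/t_n)^{|s-\widehat s_n|}=\exp\{O_P(k_n^{-m}\log n)\}=1+o_P(1)$. This holds because $\log(u_n/t_n)=O_P(\log n)$ and $k_n^{-m}\log n\to0$ by Assumption~\ref{assume:data_driven}(i).
\end{enumerate}
Combining these, $\sup_{\|\zz\|\ge M}\widehat\Psi_n(\zz)\le C' M^{-(s/2+d)}$ with probability tending to one, where $C'$ does not depend on $M$. Choosing $M$ with $C'M^{-(s/2+d)}<\alpha$ completes the exclusion. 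The region $t_n/u_n<\|\zz\|<c$ is irrelevant because it is infeasible.

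Intersecting the finitely many events of probability tending to one gives $\mathbb P(\widehat Y_n\subseteq\mathcal K_{\mathcal J})\to1$. Enlarging $M$ so that $\mathcal K_{\mathcal J}$ also contains the set from Lemma~\ref{lem:containment} lets the same compact set serve both $Y_n$ and $\widehat Y_n$.
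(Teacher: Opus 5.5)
Your proposal is correct and follows essentially the same route as the paper's proof. You bound the optimal value below using a feasible recovery point on a compact set, where Lemma~\ref{lem:rst_dd_local_objective} applies, and you get uniform decay of $\widehat\Psi_n$ at infinity by factoring the generated tail density into $\widehat p_{t_n}t_n^s$, the extrapolation factor $(u_n/t_n)^{s-\widehat s_n}=1+o_P(1)$, $\rho^{-(\widehat s_n+d)}$ on $E_n$, and a sup-norm bound on $\widehat\varphi^{(t_n)}$; the differences (skipping the intermediate $p_{t_n}$, using exponent $s/2+d$ rather than $(s+d)/2$) are cosmetic.
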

\noindent \textbf{\noindent Proof of Lemma~\ref{lem:rst_dd_local_objective}: }
\smallskip
\noindent\textbf{Step 1:} Since \(\mathcal K_{\mathcal J}\) is compact and does not contain the origin,
there exist constants \(0<c_{\mathcal J}<C_{\mathcal J}<\infty\) such that
\(c_{\mathcal J}\le\|\zz\|\le C_{\mathcal J}\) for
\(\zz\in\mathcal K_{\mathcal J}\). By Assumption~\ref{assume:ht_data},
\(\Psi_n(\zz)=u_n^{s+d}f_P(u_n\zz)\) converges uniformly on
\(\mathcal K_{\mathcal J}\) to \(\varphi^\star\), and is in particular
uniformly bounded and, for all large \(n\), positive there. Hence
\[
    \Delta_n
    =
    \sup_{\zz\in\Gamma_{u_n}(\mathcal J)\cap\mathcal K_{\mathcal J}}
    \left|
    \widehat\Psi_n(\zz)-\Psi_n(\zz)
    \right|
    \leq
    \sup_{\zz\in\mathcal K_{\mathcal J}}
    \Psi_n(\zz)
    \sup_{\zz\in\mathcal K_{\mathcal J}}
    \left|
    \frac{\widehat\Psi_n(\zz)}{\Psi_n(\zz)}-1
    \right|,
\]
and it suffices to control the ratio uniformly on
\(\mathcal K_{\mathcal J}\).

\smallskip
\noindent\textbf{Step 2:}
For \(\zz\neq\mathbf 0\), write \(\zz=\rho\pphi\) with \(\rho=\|\zz\|\) and
\(\pphi=\zz/\|\zz\|\in\Theta_{\mathcal E}\). The polar change of variables
gives \(f_P(u_n\zz)=(u_n\rho)^{-(d-1)}f_{R,\Pphi}(u_n\rho,\pphi)\), and
similarly for the data-driven SSGEN density, so the Jacobians cancel:
\[
    \frac{\widehat\Psi_n(\zz)}{\Psi_n(\zz)}
    =
    \frac{
    \widehat f_{R,\Pphi}^{(t_n)}(u_n\rho,\pphi)
    }{
    f_{R,\Pphi}(u_n\rho,\pphi)
    }.
\]
Since \(\rho\ge c_{\mathcal J}\) on \(\mathcal K_{\mathcal J}\) and
\(t_n/u_n = o_p(1)\), the region \(u_n\rho>t_n\) covers
\(\mathcal K_{\mathcal J}\) with probability tending to one. On it, the
radial--angular form of the SSGEN density and the expansion of
Assumption~\ref{assume:2nd_order} give
\[
    \frac{\widehat\Psi_n(\zz)}{\Psi_n(\zz)}
    =
    \frac{\widehat p_{t_n}}{p_{t_n}}
    \cdot
    \frac{\widehat\varphi^{(t_n)}(\pphi)}
         {\varphi^{(t_n)}(\pphi)}
    \cdot
    \frac{\widehat s_n}{s}
    \cdot
    \left(\frac{u_n\rho}{t_n}\right)^{-(\widehat s_n-s)}
    \cdot
    \{1+\eta_n(u_n\rho,\pphi)\}^{-1},
\]
where, by Assumption~\ref{assume:2nd_order} and \(t_n\to\infty\) in probability,
\[
    \sup_{r>t_n,\ \pphi\in\Theta_{\mathcal E}}
    |\eta_n(r,\pphi)|
    =
    O_P(t_n^{-\gamma}).
\]

\smallskip
\noindent\textbf{Step 3:} We bound the above terms

\noindent \textbf{i) Angular ratio:} Since
\(\varphi^{(t_n)}\to\varphi^\star/C_\star\) uniformly on
\(\Theta_{\mathcal E}\), as in the proof of Theorem~\ref{thm:rate_TV}, and
\(\varphi^\star\) is continuous and positive,
\(\inf_{\pphi}\varphi^{(t_n)}(\pphi)\) is bounded away from zero for all
large \(n\). With the uniform angular condition of the lemma,
\[
    \sup_{\zz\in\mathcal K_{\mathcal J}}
    \left|
    \frac{\widehat\varphi^{(t_n)}(\pphi)}
         {\varphi^{(t_n)}(\pphi)}
    -1
    \right|
    =
    O_P(k_n^{-\ell}).
\]

\noindent \textbf{ii) Tail index:} Assumption~\ref{assume:data_driven}(i) gives
\(|\widehat s_n/s-1|=O_P(k_n^{-m})\).

\noindent \textbf{iii) Extrapolation factor:} Since \(\rho\in[c_{\mathcal J},C_{\mathcal J}]\),
\(\log(u_n\rho/t_n)=\log(u_n/t_n)+O(1)\) uniformly on
\(\mathcal K_{\mathcal J}\), and
\(k_n^{-m}\log(u_n/t_n)=o_P(1)\) since \(k_n\sim n^{1-q}\) and
\(\log(u_n/t_n)=O_P(\log n)\) by Lemma~\ref{lem:radial_quantile}(ii). Hence
\[
    \sup_{\zz\in\mathcal K_{\mathcal J}}
    \left|
    \left(\frac{u_n\rho}{t_n}\right)^{-(\widehat s_n-s)}-1
    \right|
    =
    O_P\!\left(k_n^{-m}\log(u_n/t_n)\right).
\]

\noindent \textbf{iv) Tail probability:} With \(\widehat p_{t_n}=k_n/n\) and, in the coupling of
Lemma~\ref{lem:radial_quantile}, \(p_{t_n}=\bar F_R(t_n)=U_{k_n:n}\),
\[
    \frac{\widehat p_{t_n}}{p_{t_n}}-1
    =
    \frac{k_n/n}{U_{k_n:n}}-1
    =
    O_P(k_n^{-1/2}),
\]
since \(U_{k_n:n}\) has mean \(k_n/(n+1)\) and variance of order
\(k_n/n^2\), so that
\((n/k_n)U_{k_n:n}-1=O_P(k_n^{-1/2})\) by Chebyshev's inequality.

\noindent{\textbf{v) Residue term:}} Since \(\sup|\eta_n|=O_P(t_n^{-\gamma})=o_P(1)\),
\[
    \sup_{\zz\in\mathcal K_{\mathcal J}}
    \left|
    \{1+\eta_n(u_n\rho,\pphi)\}^{-1}-1
    \right|
    =
    O_P(t_n^{-\gamma}).
\]

\smallskip
\noindent\textbf{Step 4:}
Each factor is \(1+o_P(1)\), and
\(\prod_i(1+\epsilon_{i,n})-1=O(\sum_i|\epsilon_{i,n}|)\) whenever the
\(\epsilon_{i,n}\) are \(o_P(1)\). Therefore
\[
    \sup_{\zz\in\mathcal K_{\mathcal J}}
    \left|
    \frac{\widehat\Psi_n(\zz)}{\Psi_n(\zz)}-1
    \right|
    =
    O_P\!\left(
    k_n^{-(1/2\wedge \ell)}
    +
    k_n^{-m}\log(u_n/t_n)
    +
    t_n^{-\gamma}
    \right),
\]
and since \(\sup_{\zz\in\mathcal K_{\mathcal J}}\Psi_n(\zz)=O(1)\) by
Step~1, the same bound holds for \(\Delta_n\). As \(\ell\le1/2\), the first
term is \(k_n^{-\ell}\), which is the stated rate. \qed

\noindent \textbf{Proof of Theorem~\ref{thm:rst_solution_rate}: }
Let \(\widehat\zz_n\in\widehat Y_n\) and let \(\zz_n^\star\in Y_n\). Note that for all $n$ large enough, Lemma~\ref{lem:containment}, $Y_n\subseteq \mathcal K_\mathcal J$. Then
on the event \(\widehat Y_n\subseteq K_{\mathcal J}\), both
\(\widehat\zz_n\) and \(\zz_n^\star\) belong to
\(\Gamma_{u_n}(\mathcal J)\cap K_{\mathcal J}\). Hence, by definition of
\[
    \Delta_n
    :=
    \sup_{\zz\in\Gamma_{u_n}(\mathcal J)\cap K_{\mathcal J}}
    \left|
        \widehat\Psi_n(\zz)-\Psi_n(\zz)
    \right|,
\]
we have
\begin{equation}\label{eqn:unif_delta}
    \left|
        \widehat\Psi_n(\zz)-\Psi_n(\zz)
    \right|
    \le \Delta_n,
    \qquad
    \zz\in\Gamma_{u_n}(\mathcal J)\cap K_{\mathcal J}.
\end{equation}
Since \(\widehat\zz_n\) maximizes \(\widehat\Psi_n\) over
\(\Gamma_{u_n}(\mathcal J)\), we have $\widehat\Psi_n(\widehat\zz_n)
    \ge
    \widehat\Psi_n(\zz_n^\star)$.
Therefore,
\begin{align*}
    \Psi_n(\zz_n^\star)-\Psi_n(\widehat\zz_n)
    &=
    \left(\Psi_n(\zz_n^\star)-\widehat\Psi_n(\zz_n^\star)\right)
    +
    \left(\widehat\Psi_n(\zz_n^\star)-\widehat\Psi_n(\widehat\zz_n)\right) +
    \left(\widehat\Psi_n(\widehat\zz_n)-\Psi_n(\widehat\zz_n)\right)  \\
    &\le 2\Delta_n .
\end{align*}
Since \(\zz_n^\star\in Y_n\), we have
\[
    V_n
    :=
    \sup_{\zz\in\Gamma_{u_n}(\mathcal J)}
    \Psi_n(\zz)
    =
    \Psi_n(\zz_n^\star).
\]
Thus $V_n-\Psi_n(\widehat\zz_n)
    \le
    2\Delta_n$. Applying Assumption~\ref{assume:rst_margin} now yields
\[
    c\{d(\widehat\zz_n,Y_n)\wedge\delta_0\}^{\kappa}
    \le
    V_n-\Psi_n(\widehat\zz_n).
\]
Further, from \eqref{eqn:unif_delta}, the same $\Delta_n$ works for \emph{every} selection $\hat \zz_n$. Combining the two displays now gives $c\{d(\widehat\zz_n,Y_n)\wedge\delta_0\}^{\kappa}
    \le
    2\Delta_n$ for every selection $\hat \zz_n$.  Hence
\[
    c\sup_{\widehat\zz\in\widehat Y_n} [d(\widehat\zz,Y_n)\land \delta_0]^{\kappa}  \leq 2\Delta_n
\]
Since \(\Delta_n=O_P(a_n)=o_P(1)\), the truncation by \(\delta_0\) is inactive
with probability tending to one.
Further, because  the event
\(\widehat Y_n\subseteq K_{\mathcal J}\) has probability tending
to one by Lemma~\ref{lem:rst_dd_localization}, we conclude that
\[
    \sup_{\widehat\zz\in\widehat Y_n}
    d(\widehat\zz,Y_n)
    =
    O_P(a_n^{1/\kappa}).
\]
If \(Y_n=\{\zz_n^\star\}\), then for every
\(\widehat\zz_n\in\widehat Y_n\), $\|\widehat\zz_n-\zz_n^\star\|
    =
    O_P(a_n^{1/\kappa})$.\qed

\section{Proofs of Technical Lemmas}

\noindent\textbf{Proof of Lemma~\ref{lem:angle_converges}: }By the polar change of variables,
\[
    f_{R,\Pphi}(r,\pphi)
    =
    r^{d-1} f_{\XX}(r\pphi).
\]
By Assumption~\ref{assume:ht_data}, for every compact set
\(K\subset\mathcal E\setminus\{\mathbf 0\}\), $u^{s+d} f_{\XX}(u\zz)\to \varphi^\star(\zz)$ 
uniformly over \(\zz\in K\). Applying this with
\(K=\Theta_{\mathcal E}\), we obtain
\[
    \sup_{\pphi\in\Theta_{\mathcal E}}
    \left|
    r^{s+d}f_{\XX}(r\pphi)-\varphi^\star(\pphi)
    \right|
    \to 0
    \qquad\text{as }r\to\infty .
\]
Equivalently, $f_{\XX}(r\pphi)
    =
    r^{-(s+d)}
    \{\varphi^\star(\pphi)+\delta(r,\pphi)\}$,
where $\sup_{r>t,\ \pphi\in\Theta_{\mathcal E}}
    |\delta(r,\pphi)|
    \to0$ as $t\to\infty$.
Therefore,
\[
    f_{R,\Pphi}(r,\pphi)
    =
    r^{d-1}f_{\XX}(r\pphi)
    =
    r^{-(s+1)}
    \{\varphi^\star(\pphi)+\delta(r,\pphi)\}.
\]
Since \(r^{s+d}f_{\XX}(r\pphi)\to\varphi^\star(\pphi)\) uniformly on
\(\Theta_{\mathcal E}\), and the approximating functions are continuous,
\(\varphi^\star\) is continuous on \(\Theta_{\mathcal E}\). Since
\(\Theta_{\mathcal E}\) is compact and \(\varphi^\star>0\) on
\(\Theta_{\mathcal E}\), it follows that $\inf_{\pphi\in\Theta_{\mathcal E}}\varphi^\star(\pphi)>0 $. Then,
\[
    \varepsilon(r,\pphi)
    :=
    \frac{\delta(r,\pphi)}{\varphi^\star(\pphi)} \implies  \sup_{r>t,\ \pphi\in\Theta_{\mathcal E}}
    |\varepsilon(r,\pphi)|
    \le
    \frac{
    \sup_{r>t,\ \pphi\in\Theta_{\mathcal E}}
    |\delta(r,\pphi)|
    }{
    \inf_{\pphi\in\Theta_{\mathcal E}}\varphi^\star(\pphi)
    }
    \to0 .
\]
Then $f_{R,\Pphi}(r,\pphi)
    =
    r^{-(s+1)}
    \varphi^\star(\pphi)
    \{1+\varepsilon(r,\pphi)\}$, proving the claim.\qed

\noindent \textbf{Proof of Lemma~\ref{lem:homogenous_epi_conv}: }
To prove epi-convergence, we must show that at each $\xx\in\mathcal E$ (see \cite{RockafellarWets1998}, Proposition 7.2)
\begin{subequations}
\begin{equation}\label{eqn:epi_lb}
              \liminf_{n\to\infty} \iota_{\Lev_1^+[f_n]}(\xx_n) \geq \iota_{\Lev_1^+[f]}(\xx) 
              \qquad \text{for all }  \xx_n\to \xx,
        \end{equation}
\begin{equation}\label{eqn:epi_ub}
         \limsup_{n\to\infty} \iota_{\Lev_1^+[f_n]}(\xx_n) \leq \iota_{\Lev_1^+[f]}(\xx) 
         \qquad \text{for some }  \xx_n\to \xx.
        \end{equation}
    \end{subequations}
To prove \eqref{eqn:epi_lb}, consider a sequence $\xx_n\to\xx$. First, if
$f(\xx) \geq 1$, then \(\iota_{\Lev_1^+[f]}(\xx)=0\), and the result follows since
\(\iota_{\Lev_1^+[f_n]}\geq 0\). If instead $f(\xx)<1$, then continuous convergence
of $f_n\to f$ implies that  $f_n(\xx_n) <1$ for all large enough $n$. Hence
\(\xx_n\notin \Lev_1^+[f_n]\) eventually, and therefore the $\liminf$ on the left
equals \(+\infty=\iota_{\Lev_1^+[f]}(\xx)\). This proves \eqref{eqn:epi_lb}. For \eqref{eqn:epi_ub}, if \(f(\xx)<1\) then
\(\iota_{\Lev_1^+[f]}(\xx)=+\infty\) and any \(\xx_n\to\xx\) is a recovery sequence.
It therefore suffices to establish the recovery property: for every
\(\zz\in\Lev_1^+[f]\) there exists \(\zz_n\to\zz\) with
\(\zz_n\in\Lev_1^+[f_n]\) for all sufficiently large \(n\).

Fix \(\zz\in\Lev_1^+[f]\), so \(f(\zz)\ge1\); by homogeneity \(f(\mathbf 0)=0\),
so \(\zz\neq\mathbf 0\). Let \(C_{\zz}:=\{c\zz:1\le c\le 3/2\}\). Continuous
convergence gives uniform convergence on the compact set \(C_{\zz}\)
(\cite{RockafellarWets1998}, Theorem~7.14), so
\(\varepsilon_n:=\sup_{\yy\in C_{\zz}}|f_n(\yy)-f(\yy)|\to0\). For \(n\) large
enough that \(\varepsilon_n\le1/4\), set \(c_n:=1+2\varepsilon_n\in[1,3/2]\) and
\(\zz_n:=c_n\zz\in C_{\zz}\). Then \(\zz_n\to\zz\), and by homogeneity
\(f(\zz_n)=c_nf(\zz)\ge c_n=1+2\varepsilon_n\), whence
\(f_n(\zz_n)\ge f(\zz_n)-\varepsilon_n\ge1+\varepsilon_n\ge1\). Thus
\(\zz_n\in\Lev_1^+[f_n]\) for all sufficiently large \(n\). \qed

\noindent \textbf{Proof of Lemma~\ref{lem:0-not-contained}:} Observe that
\[
    \mathcal S^\star(\mathcal J)
    =
    \{\zz\in\mathcal E:L_{\mathcal J}^\star(\zz)\geq 1\}.
\]
Since \(L_{\mathcal J}^\star\) is continuous and \(\Theta_{\mathcal E}\) is
compact, $M_{\mathcal J}
    :=
    \max_{\pphi\in\Theta_{\mathcal E}}
    L_{\mathcal J}^\star(\pphi)
    <\infty$. 
Moreover, regularity implies that \( S^\star(\mathcal J)\neq\varnothing\),
and hence \(M_{\mathcal J}>0\). For any \(\zz\neq \mathbf 0\), write
\(\zz=r\pphi\), where \(r=\|\zz\|\) and
\(\pphi\in\Theta_{\mathcal E}\). By homogeneity, $L_{\mathcal J}^\star(\zz)
    =
    r L_{\mathcal J}^\star(\pphi)
    \le
    r M_{\mathcal J}$. Therefore, whenever \(\|\zz\|<M_{\mathcal J}^{-1}\), $ L_{\mathcal J}^\star(\zz)<1. $ It follows that, for every \(\delta<M_{\mathcal J}^{-1}\),
\[
    B_\delta\cap \mathcal S^\star(\mathcal J)=\varnothing.
\]
Thus the limiting stress region is bounded away from the origin.

From the first part, there exists \(\delta>0\) with
\(B_\delta\cap\mathcal S^\star(\mathcal J)=\varnothing\). Fix
\(\kappa<\delta/2\) and \(M>\delta\). Since \(\Gamma_u(\mathcal J)\) is the
level set of \(L_{\mathcal J,u}\), the continuous convergence in
\eqref{eqn:Ljs} gives
\(\Gamma_u(\mathcal J)\cap B_M\subset[\mathcal S^\star(\mathcal J)]^\kappa\)
for all sufficiently large \(u\) (apply \cite{RockafellarWets1998},
Theorem~7.11 to translate continuous convergence to epi-convergence, followed
by Proposition~7.7). By the choice of \(\kappa\),
\(\zz\in[\mathcal S^\star(\mathcal J)]^\kappa\implies\|\zz\|>\delta/2\), while
points of \(\Gamma_u(\mathcal J)\cap B_M^c\) satisfy \(\|\zz\|\ge M>\delta\)
trivially. Hence \(\|\zz\|>\delta/2\) for all \(\zz\in\Gamma_u(\mathcal J)\)
and all sufficiently large \(u\).

For the last part, note that the previous implication suggests that $\yy\in\mathcal S(u;\mathcal J)\implies \|\yy\| > u\delta/2 $. Since $t(u) =u^\theta$, there exists a $u_0$ such that for all $u>u_0$, $t(u) <u\delta/2$, so that $\|\yy\| >t(u)$ establishing the claim. \qed

\noindent{\textbf{Proof of Lemma~\ref{lem:homogeneous_sets_separate}: }}
We complete the proof in three steps

\noindent \textbf{Step i) }We first prove that the class \(\{A_h:h\in C_+(\Theta_{\mathcal E})\}\), where
\(C_+(\Theta_{\mathcal E})\) denotes the strictly positive continuous functions on
\(\Theta_{\mathcal E}\), separates Radon measures on
\(\mathcal E\setminus\{\mathbf 0\}\). First observe that the class is closed under finite intersections, since
\[
    A_{h_1}\cap A_{h_2}
    =
    A_{h_1\vee h_2},
\]
and \(h_1\vee h_2\in C_+(\Theta_{\mathcal E})\). 
It is therefore a $\pi$-system. To demonstrate the separating class property, we must therefore demonstrate that it generates the
Borel \(\sigma\)-field on each set bounded away from the origin (see \cite{Billingsley}, Pg. 9).

 Fix \(a>0\) and set $E_a:=\{\zz:\|\zz\|\ge a\}$.
Let $\mathcal C_a
    :=
    \{A_h\cap E_a:h\in C_+(\Theta_{\mathcal E})\}$.
Then \(\mathcal C_a\) is a \(\pi\)-system containing \(E_a\). We next show that $\sigma(\mathcal C_a)=\mathcal B(E_a)$.
The inclusion \(\sigma(\mathcal C_a)\subseteq\mathcal B(E_a)\) is immediate,
because for each \(h\in C_+(\Theta_{\mathcal E})\), $A_h\cap E_a
    =
    \{(r,\pphi)\in E_a:r\ge h(\pphi)\}$
is closed in \(E_a\). For the reverse inclusion, it is enough to show that
\(\sigma(\mathcal C_a)\) contains a base for the topology of \(E_a\). Constant
choices of \(h\) imply that radial upper sets $\{r\ge b\}\cap E_a$
belong to \(\sigma(\mathcal C_a)\). Hence radial bands of the form
$\{\alpha<r<\beta\}\cap E_a$
also belong to \(\sigma(\mathcal C_a)\). Moreover, for any
\(h\in C_+(\Theta_{\mathcal E})\),
\[
    \{(r,\pphi)\in E_a:r>h(\pphi)\}
    =
    \bigcup_{m\ge1} \left(A_{(1+1/m)h}\cap E_a\right)
    \in \sigma(\mathcal C_a).
\]

Now fix a basic open rectangle \((\alpha,\beta)\times U\) in polar coordinates,
with \(\alpha>a\), and fix a point
\((r_0,\pphi_0)\in(\alpha,\beta)\times U\). Choose \(h\in C_+(\Theta_{\mathcal E})\) such that
\[
    h(\pphi_0)<r_0,
    \qquad
    h(\pphi)>\beta
    \quad \text{for all } \pphi\notin U.
\]
Then
\[
    G
    :=
    \{(r,\pphi)\in E_a:r>h(\pphi)\}
    \cap
    \{\alpha<r<\beta\}
\]
belongs to \(\sigma(\mathcal C_a)\). By construction,
\[
    (r_0,\pphi_0)\in G
    \qquad\text{and}\qquad
    G\subseteq(\alpha,\beta)\times U.
\]
Thus every point of every basic open rectangle \(R\subset E_a\) has an open \(\sigma(\mathcal C_a)\)-measurable neighborhood contained in \(R\). Hence the sets \(G\) constructed above form an open cover of \(R\). Since \(E_a\) is second countable, and this cover admits a countable sub-cover. Moreover, each such \(G\) is contained in \(R\). Therefore $ R=\bigcup_{m=1}^{\infty}G_m$ for some \(G_m\in\sigma(\mathcal C_a)\), and hence \(R\in\sigma(\mathcal C_a)\). Since \(E_a\) is second countable, every open subset of \(E_a\) is a countable union of basic open rectangles. As each such rectangle belongs to \(\sigma(\mathcal C_a)\), every open subset of \(E_a\) belongs to \(\sigma(\mathcal C_a)\). Hence \[ \mathcal B(E_a)\subseteq\sigma(\mathcal C_a). \]
\noindent \textbf{Step ii)} Suppose that \(\mu(A_h)=\nu(A_h)\) for every \(h\in C_+(\Theta_{\mathcal E})\)
such that \(A_h\) is a continuity set for both \(\mu\) and \(\nu\). We first show
that the continuity-set restriction can be removed. Fix \(h\in C_+(\Theta_{\mathcal E})\). For \(c\in(1/2,1)\), the boundaries
\[
    \partial A_{ch}=\{(r,\pphi):r=ch(\pphi)\}
\]
are pairwise disjoint. They are also contained in a set bounded away from the
origin and hence have finite total \((\mu+\nu)\)-mass. Therefore at most countably
many \(c\in(1/2,1)\) satisfy
\[
    (\mu+\nu)(\partial A_{ch})>0.
\]
Choose \(c_m\uparrow1\), \(c_m<1\), such that each \(A_{c_mh}\) is a continuity set
for both \(\mu\) and \(\nu\). Since \(A_{c_mh}\downarrow A_h\) and
\((\mu+\nu)(A_{c_1h})<\infty\), continuity from above yields
\[
    \mu(A_h)
    =
    \lim_{m\to\infty}\mu(A_{c_mh})
    =
    \lim_{m\to\infty}\nu(A_{c_mh})
    =
    \nu(A_h).
\]
Thus \(\mu(A_h)=\nu(A_h)\) for every \(h\in C_+(\Theta_{\mathcal E})\). Now fix \(a>0\). Since
\[
    A_h\cap E_a=A_{h\vee a},
\]
and \(h\vee a\in C_+(\Theta_{\mathcal E})\), the restrictions of \(\mu\) and
\(\nu\) to \(E_a\) agree on \(\mathcal C_a\). From Step \textbf{i)}, \(\mathcal C_a\) is
measure-determining on \(E_a\), and therefore, the restrictions agree on all of
\(\mathcal B(E_a)\).

\noindent \textbf{Step iii)}
Since \(a>0\) is arbitrary, the preceding conclusion holds for every
\(E_a=\{\zz:\|\zz\|\ge a\}\). To conclude that \(\mu=\nu\) on
\(\mathcal E\setminus\{\mathbf 0\}\), let \(B\subseteq
\mathcal E\setminus\{\mathbf 0\}\) be Borel. Then, for every \(m\ge1\),
\(B\cap E_{1/m}\in\mathcal B(E_{1/m})\), and hence
\[
    \mu(B\cap E_{1/m})=\nu(B\cap E_{1/m}).
\]
Moreover, $B\cap E_{1/m}\uparrow B$ as $m\to\infty$,
because $\mathcal E\setminus\{\mathbf 0\}
    =
    \bigcup_{m\ge1}E_{1/m}$.
By continuity from below of measures,
\[
    \mu(B)
    =
    \lim_{m\to\infty}\mu(B\cap E_{1/m})
    =
    \lim_{m\to\infty}\nu(B\cap E_{1/m})
    =
    \nu(B).
\]
Thus \(\mu=\nu\) on \(\mathcal E\setminus\{\mathbf 0\}\), contradicting the
hypothesis. Therefore there exists
\(h\in C_+(\Theta_{\mathcal E})\) such that \(A_h\) is a continuity set for both
measures and \(\mu(A_h)\neq\nu(A_h)\). Finally set $\psi(\pphi):=h(\pphi)^{-s}$. Then
\[
    A_h
    =
    \{(r,\pphi):r\psi(\pphi)^{1/s}\ge1\}
    =
    A_\psi,
\]
which proves the claim.\qed

\noindent\textbf{Proof of Lemma~\ref{lem:conditioning_relative_error}:} 
Write $a(x)=b(x)\{1+\varepsilon(x)\}$, $x\in A$. Under the lemma assumptions \(\sup_{x\in A}|\varepsilon(x)|\le \eta\). Define $Z_a:=\int_A a(x)\,dx$ and
    $Z_b:=\int_A b(x)\,dx$.
Then
\[
    Z_a
    =
    \int_A b(x)\{1+\varepsilon(x)\}\,dx
    =
    Z_b\{1+\bar\varepsilon_A\} \quad \text{where }\quad 
    \bar\varepsilon_A
    :=
    \frac{1}{Z_b}\int_A \varepsilon(x)b(x)\,dx .
\]
Since \(|\varepsilon(x)|\le\eta\), we have \(|\bar\varepsilon_A|\le\eta\), and hence
\(1+\bar\varepsilon_A\ge 1-\eta>0\).

The conditional densities of \(P_A^a\) and \(P_A^b\) on \(A\) are \(a/Z_a\) and
\(b/Z_b\), respectively. Therefore,
\[
\begin{aligned}
    \textsf{TV}(P_A^a,P_A^b)
    &=
    \frac12\int_A
    \left|
    \frac{a(x)}{Z_a}
    -
    \frac{b(x)}{Z_b}
    \right|\,dx  =
    \frac12\int_A
    \frac{b(x)}{Z_b}
    \left|
    \frac{1+\varepsilon(x)}{1+\bar\varepsilon_A}
    -1
    \right|\,dx  \\
    &=
    \frac12\int_A
    \frac{b(x)}{Z_b}
    \frac{|\varepsilon(x)-\bar\varepsilon_A|}
         {|1+\bar\varepsilon_A|}
    \,dx \le
    \frac{1}{2(1-\eta)}
    \int_A
    \frac{b(x)}{Z_b}
    \left(|\varepsilon(x)|+|\bar\varepsilon_A|\right)
    \,dx  \\
    &\le
    \frac{1}{2(1-\eta)}(\eta+\eta)
    =
    \frac{\eta}{1-\eta}.
\end{aligned}
\]
This proves the claim.\qed

\noindent\textbf{Proof of Lemma~\ref{lem:normalization_stability}:}  Let \(\Delta_n=\int|\widehat q_n-q_n|\). Since $|\widehat Z_n-Z_n|
\le \Delta_n$, we have
\[
\begin{aligned}
\int
\left|
\frac{\widehat q_n}{\widehat Z_n}
-
\frac{q_n}{Z_n}
\right|
&\le
\frac{1}{Z_n}\int|\widehat q_n-q_n|
+
\left|
\frac{1}{\widehat Z_n}
-
\frac{1}{Z_n}
\right|
\int \widehat q_n  \\
&=
\frac{\Delta_n}{Z_n}
+
\frac{|\widehat Z_n-Z_n|}{Z_n}
\le
\frac{2\Delta_n}{Z_n}.
\end{aligned}
\]
Therefore,
\[
\operatorname{TV}\!\left(\hat Q, Q\right)
\le
\frac{1}{Z_n}\int|\widehat q_n-q_n|.
\]

\noindent{\textbf{Proof of Lemma~\ref{lem:radial_quantile}}:}
\textbf{(i)} For \(c>0\) and \(\zz\in\mathcal E\setminus\{\mv 0\}\), applying
\eqref{eqn:ht_data} along \(t\) and along \(ct\) gives
\(\varphi^\star(c\zz)=c^{-(s+d)}\varphi^\star(\zz)\), so \(\varphi^\star\) is
homogeneous of order \(-(s+d)\). Taking the compact set
\(\Theta_{\mathcal E}\) in \eqref{eqn:ht_data},
\[
    \delta_u
    :=
    \sup_{r\ge u,\;\pphi\in\Theta_{\mathcal E}}
    \left|r^{s+d}f_{\XX}(r\pphi)-\varphi^\star(\pphi)\right|
    \to0,
    \qquad u\to\infty .
\]
Passing to polar coordinates, for \(u\) large,
\[
    \Prob(R>u)
    =
    \int_{\Theta_{\mathcal E}}\int_u^\infty
    r^{d-1}f_{\XX}(r\pphi)\,dr\,d\pphi
    =
    \int_{\Theta_{\mathcal E}}\int_u^\infty
    r^{-(s+1)}
    \left\{\varphi^\star(\pphi)+O(\delta_u)\right\}
    dr\,d\pphi
    =
    \frac{C_\star}{s}\,u^{-s}\{1+o(1)\},
\]
where \(C_\star=\int_{\Theta_{\mathcal E}}\varphi^\star(\pphi)\,d\pphi\in(0,\infty)\)
and the \(O(\delta_u)\) term is uniform since \(\Theta_{\mathcal E}\) has finite
measure. Fix \(\varepsilon>0\) and set
\(u_\pm:=\{(1\pm\varepsilon)C_\star/s\}^{1/s}\,t^{-1/s}\). Then
\(\Prob(R>u_+)<t<\Prob(R>u_-)\) for all \(t\) small, so
\(u_-\le v_{1-t}^{(R)}(P)\le u_+\). Since \(\varepsilon\) is arbitrary,
\[
    v_{1-t}^{(R)}(P)
    \sim
    \left(\frac{C_\star}{s}\right)^{1/s} t^{-1/s},
    \qquad t\downarrow0 
\]
\noindent\textbf{(ii)} Recall \(R_{(k)}\) denotes the \(k\)th largest radius;
write \(U_{k:n}\) for the \(k\)th smallest among \(n\) i.i.d.\ uniform draws
on \((0,1)\). Without loss of generality, realize \(R_1,\ldots,R_n\) as
\(R_i=\bar F_R^{\leftarrow}(U_i)\), where \(\bar F_R(u)=\Prob(R>u)\). Since
\(\bar F_R^{\leftarrow}\) is non-increasing, the \(k_n\)th largest radius
corresponds to the \(k_n\)th smallest uniform:
\[
    R_{(k_n)}
    =
    \bar F_R^{\leftarrow}\!\left(U_{k_n:n}\right)
    \quad\text{almost surely}.
\]
Set \(\beta_n:=U_{k_n:n}\). Since \(k_n\to\infty\) and \(k_n/n\to0\),
\[
    \frac{n}{k_n}\,\beta_n\to1
    \quad\text{in probability}
\]
(see \cite{einmahl1988strong}, Theorem~3(III), with \(\nu=1/2\)). By
part~(i), \(\bar F_R^{\leftarrow}(t)=v_{1-t}^{(R)}(P)\) is regularly varying
at zero of index \(-1/s\), with
\(\bar F_R^{\leftarrow}(t)=(C_\star/s)^{1/s}t^{-1/s}\{1+o(1)\}\) as
\(t\downarrow0\). Therefore
\[
    t_n
    =
    \bar F_R^{\leftarrow}(\beta_n)
    =
    (C_\star/s)^{1/s}\left(\frac{n}{k_n}\right)^{1/s}\{1+o_P(1)\}
    =
    v_{1-k_n/n}^{(R)}(P)\,\{1+o_P(1)\},
\]
Since \(n/k_n\sim n^{q}\), it follows that \(t_n=\Theta_P(n^{q/s})\). \qed

\noindent{\textbf{Proof of Lemma~\ref{lem:rst_dd_localization}}:}
The proof follows the localization argument of Lemma~\ref{lem:containment},
with the population objective replaced by the data-driven objective. We first
derive a decay bound for \(\widehat\Psi_n\) whose constant does not depend on
the localization radius, then choose the radius.

\smallskip
\noindent\textbf{Step 1:}
Let \(E_n:=\{s/2\le\widehat s_{t_n}\le 3s/2\}\); by
Assumption~\ref{assume:data_driven}(i), \(\mathbb P(E_n)\to1\). By
Lemma~\ref{lem:radial_quantile}(ii) and the tail expansion in its part (i),
\[
    p_{t_n}
    =
    \frac{C_\star}{s}\,t_n^{-s}\{1+o_P(1)\}
    =
    \frac{k_n}{n}\{1+o_P(1)\},
\]
and since \(\widehat p_{t_n}=k_n/n\) by construction of \(t_n\),
\[
    \frac{\widehat p_{t_n}}{p_{t_n}}= 1+o_p(1)
    \qquad
    p_{t_n}t_n^s= C^\star/s +o_p(1).
\]
By Assumption~\ref{assume:data_driven}(i) and
Lemma~\ref{lem:radial_quantile}(ii), \(|\widehat s_{t_n}-s|=O_P(k_n^{-m})\)
and \(\log(u_n/t_n)=O_P(\log n)\), so \(k_n^{-m}\log n\to0\) gives
\[
    \left(\frac{u_n}{t_n}\right)^{-(\widehat s_{t_n}-s)}
    =
    \exp\!\left\{-(\widehat s_{t_n}-s)\log(u_n/t_n)\right\}
    =
    1+o_P(1).
\]
Finally, under Assumption~\ref{assume:ht_data},
\(\varphi^{(t_n)}\) converges uniformly to $C \varphi^\star$ (here $C>0$ is a normalising constant) and is therefore uniformly bounded for all large enough $n$. Uniform consistency of the estimated density $\hat\varphi^{(t_n)}$ now implies that \(P(\|\widehat\varphi^{(t_n)}\|_\infty < C_0) \to 1\) for some constant $C_0$.

\smallskip
\noindent\textbf{Step 2:}
Write \(\zz=\rho\pphi\) with \(\rho=\|\zz\|\ge1\). Since \(t_n/u_n\to0\) in
probability, the region \(\{u_n\rho>t_n,\ \rho\ge1\}\) has probability
tending to one, and on it the generated tail density gives
\[
    \widehat\Psi_n(\rho\pphi)
    =
    u_n^{s+d}f_{\widehat P_{\rm gen}^{(t_n)}}(u_n\rho\pphi)
    =
    \frac{\widehat p_{t_n}}{p_{t_n}}
    \left(p_{t_n}t_n^{s}\right)
    \widehat s_{t_n}
    \left(\frac{u_n}{t_n}\right)^{-(\widehat s_{t_n}-s)}
    \rho^{-(\widehat s_{t_n}+d)}
    \widehat\varphi^{(t_n)}(\pphi).
\]
On \(E_n\), \(\rho^{-(\widehat s_{t_n}+d)}\le\rho^{-(s+d)/2}\) for
\(\rho\ge1\), and \(\widehat s_{t_n}\le 3s/2\). Combining with Step~1, there
exists a constant \(C<\infty\), not depending on the localization radius
below, such that with probability tending to one,
\[
    \widehat\Psi_n(\zz)
    \le
    C\|\zz\|^{-(s+d)/2},
    \qquad
    \zz\in\Gamma_{u_n}(\mathcal J),\ \|\zz\|\ge1 .
\]

\smallskip
\noindent\textbf{Step 3:}
By the construction in Lemma~\ref{lem:containment}, applied with radius
\(M>1\) large enough that
\[
    CM^{-(s+d)/2}<\frac{\alpha_0}{4},
\]
there exist \(\alpha_0>0\), a compact set
\(\mathcal K_{\mathcal J}\subset\mathcal E\setminus\{\mathbf 0\}\), and
feasible points
\(\zz_n^0\in\Gamma_{u_n}(\mathcal J)\cap\mathcal K_{\mathcal J}\) such that,
for all sufficiently large \(n\),
\[
    \Gamma_{u_n}(\mathcal J)\cap\overline B_M
    \subseteq
    \mathcal K_{\mathcal J},
    \qquad
    \Psi_n(\zz_n^0)\ge\alpha_0,
    \qquad
    \sup_{\zz\in\Gamma_{u_n}(\mathcal J)\cap B_M^c}
    \Psi_n(\zz)
    \le
    \frac{\alpha_0}{4}.
\]
With this choice of \(M\), Step~2 gives, with probability tending to one,
\[
    \sup_{\zz\in\Gamma_{u_n}(\mathcal J)\cap B_M^c}
    \widehat\Psi_n(\zz)
    \le
    \frac{\alpha_0}{4}.
\]

\smallskip
\noindent\textbf{Step 4:}
Since 
\(\zz_n^0\in\Gamma_{u_n}(\mathcal J)\cap\mathcal K_{\mathcal J}\) and
\(a_n\to0\), the uniform convergence in Lemma~\ref{lem:rst_dd_local_objective} gives
\[
    \widehat\Psi_n(\zz_n^0)
    =
    \Psi_n(\zz_n^0)+o_P(1)
    \ge
    \alpha_0+o_P(1),
\]
so that, with probability tending to one,
\[
    \sup_{\zz\in\Gamma_{u_n}(\mathcal J)}
    \widehat\Psi_n(\zz)
    \ge
    \widehat\Psi_n(\zz_n^0)
    \ge
    \frac{\alpha_0}{2}.
\]

\smallskip
\noindent\textbf{Step 5:}
On the intersection of the events of Steps~3 and~4, every maximizer of
\(\widehat\Psi_n\) over \(\Gamma_{u_n}(\mathcal J)\) attains a value of at
least \(\alpha_0/2\), whereas the objective does not exceed \(\alpha_0/4\)
on \(\Gamma_{u_n}(\mathcal J)\cap B_M^c\). Hence
\(\widehat Y_n\subseteq\Gamma_{u_n}(\mathcal J)\cap\overline B_M
\subseteq\mathcal K_{\mathcal J}\) for all sufficiently large \(n\), with
probability tending to one. \qed

\section{A matched-marginals reinsurance example}\label{app:matched}

Consider Example~\ref{eg:reinsurance} with $d=K=3$ object-loss factors and
unit exposures $\mv a_i=\mv e_i$. Then $L_i(\XX)=(X_i-c_i)_+$ and
$L_i^\star(\zz)=z_i$ by Lemma~\ref{lem:verify_reinsurance}. We use the
$L^1$ radial coordinate on $\mathbb R^3_+$, writing
$\zz=r\pphi$, where $r=\sum_i z_i$ and
$\pphi\in\Delta:=\{\pphi>0:\sum_i\pphi_i=1\}$. This is only a
reparametrization of the tail measure.

Let $L_{\rm agg}:=\sum_i L_i$, and condition on the aggregate stress event
$L_{\rm agg}(\XX)\ge u$. Since $L_{\rm agg}^\star(\zz)=\sum_i z_i=r$, the
limiting stress region for $\mathcal J=\{\mathrm{agg}\}$ is
$\mathcal S^\star(\mathcal J)=\{\zz:\sum_i z_i\ge1\}=\{r\ge1\}$. Thus aggregate
stress is purely radial in these coordinates. If
$\nu^\star(\mathrm dr,\mathrm d\pphi)=s r^{-(s+1)}\,\mathrm dr\,H(\mathrm d\pphi)$,
then conditioning on $\{r\ge1\}$ leaves the angular law equal to $H$.

We fix $s=2$. For an angular law $H$, define
\[
    M(H):=\int_\Delta \pphi_1^2\,H(\mathrm d\pphi),
    \qquad
    \Pi(H):=\int_\Delta \min(\pphi_1,\pphi_2)^2\,H(\mathrm d\pphi).
\]
Then $\mathbb P(X_i>x)\sim M(H)x^{-2}$ and
$\mathbb P(X_i>x,X_j>x)\sim \Pi(H)x^{-2}$, so
$\lambda(H):=\Pi(H)/M(H)$ is the same-threshold pairwise tail-dependence
coefficient. These quantities are lower-dimensional summaries of $H$; they need
not determine the breadth probability
\[
    \Phi_3(H)
    :=
    \mathbb P_H\!\left(W_{1/4}^{\rm breadth}=3\right)
    =
    \mathbb P_{\Pphi\sim H}\!\left(\min_i\Pphi_i>\frac14\right).
\]

We now construct two smooth angular laws that agree on $M$, $\Pi$, and $\lambda$,
but disagree on $\Phi_3$. Let $H_0$ be the uniform probability law on $\Delta$.
Define
\[
    g_1(\pphi):=\sum_i\left(\pphi_i-\frac13\right)^2-\frac16,
    \qquad
    g_2(\pphi):=\sum_i\left(\pphi_i-\frac13\right)^3-\frac1{45}.
\]
The constants center the functions: $\int_\Delta g_1\,H_0=0$ and
$\int_\Delta g_2\,H_0=0$. Set $g_\delta:=-2g_1+7g_2$, and for $|\theta|\le1$
define
\[
    H_\theta(\mathrm d\pphi)
    :=
    \bigl(1+\theta g_\delta(\pphi)\bigr)H_0(\mathrm d\pphi).
\]
A direct calculation gives $-7/20\le g_\delta(\pphi)\le2/5$ on $\Delta$. Hence
each $H_\theta$ is a bounded, strictly positive angular density.

\begin{lemma}\label{lem:matched}
For every $|\theta|\le1$,
\[
    M(H_\theta)=\frac16,
    \qquad
    \Pi(H_\theta)=\frac1{24},
    \qquad
    \lambda(H_\theta)=\frac14.
\]
However,
\[
    \Phi_3(H_\theta)
    =
    \frac1{16}+\frac{51}{5120}\theta .
\]
\end{lemma}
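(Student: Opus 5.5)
Everything reduces to computing a handful of polynomial integrals against the uniform law $H_0$ on the simplex $\Delta$. The key tool is the Dirichlet moment formula: under $H_0$, i.e.\ Dirichlet$(1,1,1)$,
\[
\int_\Delta \pphi_1^{a}\pphi_2^{b}\pphi_3^{c}\,H_0(\mathrm d\pphi)=\frac{2\,a!\,b!\,c!}{(a+b+c+2)!}.
\]
Since $H_\theta=(1+\theta g_\delta)H_0$, each functional $F\in\{M,\Pi,\Phi_3\}$ is affine in $\theta$:
\[
F(H_\theta)=F(H_0)+\theta\int_\Delta F\text{-integrand}\cdot g_\delta\,\mathrm dH_0.
\]
The claimed invariance of $M$, $\Pi$ and $\lambda=\Pi/M$ is therefore equivalent to two orthogonality relations,
\[
\int \pphi_1^2 g_\delta\,\mathrm dH_0=0,\qquad \int \min(\pphi_1,\pphi_2)^2 g_\delta\,\mathrm dH_0=0,
\]
together with the baseline values. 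For the baselines, $M(H_0)=2\cdot2/4!=1/6$. For $\Pi(H_0)$, I will use the marginal of $(\pphi_1,\pphi_2)$, which is uniform with density $2$ on $\{x,y\ge0,\ x+y\le1\}$, and integrate $\min(x,y)^2$ by symmetry over the region $x<y$:
\[
\Pi(H_0)=4\int_0^{1/2}x^2(1-2x)\,dx=\frac1{24}.
\]
That gives $\lambda=1/4$. Along the way I will confirm the centering constants: $\int g_1\,\mathrm dH_0=0$ uses $\mathbb E\pphi_i^2=1/6$, and $\int g_2\,\mathrm dH_0=0$ uses the third central moment of the Beta$(1,2)$ marginal, which is $2/135$, so that $3\cdot 2/135=2/45$ (I will double-check this constant against the stated $1/45$). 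Nonnegativity of $1+\theta g_\delta$ is already asserted, so that part needs no further argument.

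For the first orthogonality relation, I will expand $g_1$ and $g_2$ into monomials in $\pphi$, using $\sum_i\pphi_i=1$ to simplify. Multiplying by $\pphi_1^2$ then gives a polynomial whose monomials are each integrated by the Dirichlet formula, so this is mechanical bookkeeping. The $\min$ relation is harder because $\min(\pphi_1,\pphi_2)^2$ is not polynomial. I will handle it in the $(x,y)$ coordinates with $\pphi_3=1-x-y$: restrict to $\{x<y\}$ (using the symmetry of $g_\delta$ in $\pphi_1,\pphi_2$), which turns the integrand into $x^2g_\delta(x,y,1-x-y)$, and integrate over $0<x<1/2$, $x<y<1-x$. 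These are one-dimensional polynomial integrals. I expect the coefficients $-2$ and $7$ were chosen precisely so that this integral vanishes while the $\pphi_1^2$ moment does too. A cleaner way to organize the check is to compute the two moments for $g_1$ and for $g_2$ separately and then verify that the combination $-2(\cdot)+7(\cdot)$ kills both.

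For $\Phi_3$, the region $\{\min_i\pphi_i>1/4\}$ is a smaller simplex, homothetic to $\Delta$, centered at $(1/3,1/3,1/3)$, with side ratio $1-3/4=1/4$. Hence $\Phi_3(H_0)=(1/4)^2=1/16$. For the slope term $\int_{\{\min>1/4\}}g_\delta\,\mathrm dH_0$, I will use the substitution $\pphi=\tfrac14\mathbf 1+\tfrac14\mathbf w$ with $\mathbf w$ uniform on $\Delta$. Then $\pphi_i-1/3=\tfrac14(w_i-1/3)$, so the sub-simplex integral of a centered power becomes a scaled full-simplex moment:
\[
\int g_1 = \tfrac1{16}\Bigl[\tfrac1{16}\cdot\tfrac16-\tfrac16\Bigr],
\]
and similarly for $g_2$, with factor $1/64$ on $2/45$ minus $1/45$. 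Combining with $-2$ and $7$ should produce $51/5120$. The main obstacle is not conceptual but arithmetic: the $\min$-moment orthogonality and confirming the exact constants ($1/45$ versus $2/45$, and $51/5120$). I will recheck these by a second route, e.g.\ a symbolic integration in the $(x,y)$ chart.
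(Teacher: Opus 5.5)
Your plan follows the paper's proof, written out in more detail. It uses Dirichlet moments under $H_0$, the affine dependence of $M,\Pi,\Phi_3$ on $\theta$, the two orthogonality relations for $g_\delta$, and integration over the homothetic sub-simplex. Your substitution $\pphi=\tfrac14\mathbf 1+\tfrac14\mathbf w$ is a clean way to carry out the ``direct integration over the same smaller simplex'' that the paper invokes. The separate moments you propose to compute come out exactly as the paper lists them: $1/180$ and $1/630$ against $\pphi_1^2$, and $-1/360$ and $-1/1260$ against $\min(\pphi_1,\pphi_2)^2$. So the combination $-2g_1+7g_2$ does annihilate both, and your baselines $M(H_0)=1/6$, $\Pi(H_0)=1/24$, $\Phi_3(H_0)=1/16$ are correct.

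The one real problem is a numerical slip which, used as you wrote it, breaks the final constant. The third central moment of the Beta$(1,2)$ marginal is $1/135$, not $2/135$. Since $\mathbb E\pphi_1^k=2/((k+1)(k+2))$, one has $\mathbb E(\pphi_1-\tfrac13)^3=\tfrac1{10}-\tfrac16+\tfrac19-\tfrac1{27}=\tfrac1{135}$. Hence $\mathbb E_{H_0}\sum_i(\pphi_i-\tfrac13)^3=\tfrac1{45}$, which is precisely the centering constant in $g_2$, so there is no discrepancy to resolve.

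The same value must enter the sub-simplex step:
\[
\int_{\{\min_i\pphi_i>1/4\}}g_2\,dH_0=\tfrac1{16}\bigl(\tfrac1{64}-1\bigr)\tfrac1{45}=-\tfrac7{5120}.
\]
Combined with $\int_{\{\min_i\pphi_i>1/4\}}g_1\,dH_0=\tfrac1{16}\bigl(\tfrac1{16}-1\bigr)\tfrac16=-\tfrac5{512}$, this gives
\[
-2\bigl(-\tfrac5{512}\bigr)+7\bigl(-\tfrac7{5120}\bigr)=\tfrac{51}{5120}.
\]
As a cross-check, $g_\delta=7\sum_i\pphi_i^3-9\sum_i\pphi_i^2+\tfrac{12}{5}$ has conditional mean $\tfrac{51}{320}$ on the sub-simplex, and the sub-simplex carries $H_0$-mass $\tfrac1{16}$. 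With your stated ``$1/64$ on $2/45$ minus $1/45$'' you would instead obtain $\tfrac{233}{23040}\neq\tfrac{51}{5120}$. Once that constant is corrected, your argument is complete.
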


\begin{proof}
For the uniform law on $\Delta$, the moments are
$\mathbb E_{H_0}[\Pphi_1^a\Pphi_2^b\Pphi_3^c]
=2a!b!c!/(a+b+c+2)!$. Direct integration gives
\[
    \int_\Delta \pphi_1^2 g_1\,H_0=\frac1{180},
    \qquad
    \int_\Delta \pphi_1^2 g_2\,H_0=\frac1{630},
\]
and
\[
    \int_\Delta \min(\pphi_1,\pphi_2)^2 g_1\,H_0=-\frac1{360},
    \qquad
    \int_\Delta \min(\pphi_1,\pphi_2)^2 g_2\,H_0=-\frac1{1260}.
\]
Therefore
\[
    \int_\Delta \pphi_1^2 g_\delta\,H_0=0,
    \qquad
    \int_\Delta \min(\pphi_1,\pphi_2)^2 g_\delta\,H_0=0.
\]
Since $H_\theta=(1+\theta g_\delta)H_0$, the perturbation changes neither $M$ nor
$\Pi$. Thus $M(H_\theta)=M(H_0)=1/6$ and
$\Pi(H_\theta)=\Pi(H_0)=1/24$, giving $\lambda(H_\theta)=1/4$.

Finally, $H_0(\min_i\pphi_i>1/4)=(1-3/4)^2=1/16$, and direct integration over the
same smaller simplex gives
\[
    \int_\Delta
    \mathbf 1\!\left\{\min_i\pphi_i>\frac14\right\}
    g_\delta(\pphi)\,H_0(\mathrm d\pphi)
    =
    \frac{51}{5120}.
\]
Hence $\Phi_3(H_\theta)=1/16+(51/5120)\theta$.
\end{proof}

Take $H_A:=H_{-1}$ and $H_B:=H_{+1}$. By Lemma~\ref{lem:matched}, the two systems
have identical marginal tail coefficients, identical same-threshold pairwise
co-stress coefficients, identical same-threshold tail-dependence coefficients,
and the same aggregate-stress tail. Nevertheless,
\[
    \mathbb P_A\!\left(W_{1/4}^{\rm breadth}=3\right)
    =
    \frac{269}{5120}
    \approx 0.053,
    \qquad
    \mathbb P_B\!\left(W_{1/4}^{\rm breadth}=3\right)
    =
    \frac{371}{5120}
    \approx 0.072.
\]
The relative difference is about $38\%$.

Thus marginal tail calibration and same-threshold pairwise tail-dependence
calibration do not identify the law of breadth under aggregate stress. The
missing information is the allocation of angular tail mass across all
components. A model calibrated only to these lower-dimensional summaries must
impose that remaining angular geometry by assumption. SSGEN instead targets it
through the learned angular law.

\begin{remark}\label{rem:realizable}\em
The construction uses bounded, strictly positive angular densities, not atomic
spectral measures. Paired with a Pareto$(s)$ radius, each $H_\theta$ yields a
risk-factor density satisfying Assumption~\ref{assume:ht_data}. Since
$\eta=1/4\in(0,1/3)$ and $H_\theta$ is absolutely continuous, no loss share has
an atom at $\eta$, so the continuity condition for breadth is satisfied.
\end{remark}
\section{Implementation Details and Diffusion Model diagnostics}\label{app:implement}
Algorithm~\ref{alg:rst_dd} implements the data-driven reverse-stress procedure. It
generates SSGEN tail scenarios, filters them by the target stress constraints,
fits a Kernel Density Estimator (KDE) to the scaled stress sample, and returns approximate density modes. The
algorithm is procedural: KDE bandwidth selection and numerical mode-finding are
standard implementation choices, not the focus of this paper. 

\begin{algorithm}[h]
\small
\caption{Data-driven reverse stress testing using \(\SSGEN\)}
\label{alg:rst_dd}

\KwIn{Trained SSGEN model \((\widehat\varphi^{(t)},\widehat s_t)\);
losses \(L_1,\ldots,L_K\); stress set \(\mathcal J\subseteq[K]\);
levels \(u,t\); retained sample size \(N_{\text{KDE}}\); KDE bandwidth \(h\);
tolerance \(\varepsilon_{\rm opt}\).}
\KwOut{Approximate reverse-stress set
\(\widehat{\mathcal Y}_{u}(\mathcal J)\).}

\noindent\textbf{I) Generate scaled stress samples};
\begin{enumerate}
    \item[(i)] Independently generate SSGEN tail samples
    $H_t$ using Algorithm~\ref{algo:generate_extremes}, and retain
    those satisfying
    \[
        L_j(H_t)\ge u,\qquad \forall j\in\mathcal J,
    \]
    until $N_{\mathrm{KDE}}$ samples
    $H_{t,1},\ldots,H_{t,N_{\mathrm{KDE}}}$ have been retained.

    \item[(ii)] Set
    \[
        \ZZ_{u,m}:=u^{-1}H_{t,m},
        \qquad m=1,\ldots,N_{\mathrm{KDE}},
    \]
    and define
    \[
        \widehat{\mathcal Z}_{u,\mathcal J}
        :=
        \{\ZZ_{u,m}:m=1,\ldots,N_{\mathrm{KDE}}\}.
    \]
\end{enumerate}

\noindent\textbf{II) Estimate density};
\begin{enumerate}
    \item[(i)] Fit a KDE $\widehat g_h$ to
    $\widehat{\mathcal Z}_{u,\mathcal J}$.
\end{enumerate}

\noindent \textbf{III) Return near-maximizers}\;
\begin{enumerate}
    \item[(i)] Define
    \[
        \Gamma_u(\mathcal J)
        :=
        \{\zz:L_j(u\zz)\ge u,\ \forall j\in\mathcal J\},
        \qquad
        \widehat V_u
        :=
        \sup_{\zz\in\Gamma_u(\mathcal J)}
        \widehat g_h(\zz).
    \]

    \item[(ii)] Return
    \[
        \widehat{\mathcal Y}_{u}(\mathcal J)
        :=
        \{\zz\in\Gamma_u(\mathcal J):
        \widehat g_h(\zz)\ge \widehat V_u-\varepsilon_{\rm opt}\}.
    \]
\end{enumerate}
\end{algorithm}

We give the implementation diagnostics for the angular diffusion model
used in the numerical experiments of Section~\ref{sec:numericals}. The theoretical
analysis treats the angular learner abstractly through the learned law
\(\widehat\varphi^{(t)}\). In the experiments, this law is implemented by
training a diffusion model on the intermediate exceedance directions
\[
    \Theta_{\mathcal Y}
    =
    \left\{
        \frac{\XX_i}{\|\XX_i\|}: R_i\ge t
    \right\}.
\]
Since the raw output of the diffusion model need not lie exactly on the angular
domain, each generated output is projected back to $\{\pphi\in \mathcal E:\|\pphi\|=1\}$ 
before being paired with an independent Pareto radial component. Writing
\(\Pi_{\Theta_\mathcal E}\) for this projection, the generated angular sample used by
SSGEN is
\[
    \widehat\Pphi=\Pi_{\Theta_\mathcal E}(\widetilde\Pphi),
\]
where \(\widetilde\Pphi\) denotes the raw diffusion-model output.

Figure~\ref{fig:diffusion_training_diagnostics} reports representative training
diagnostics. The left panel shows the training loss over iterations. The middle
panel compares the learned and empirical angular distributions through random
one-dimensional projections, using Kolmogorov--Smirnov and Wasserstein
distances averaged over projections. The right panel reports the cosine
similarity between the mean generated direction and the mean empirical
exceedance direction. The diagnostics stabilize after the initial training
phase; we therefore use \(5000\) diffusion iterations in the numerical
experiments.

\begin{figure}[t]
    \centering
    \includegraphics[width=\textwidth]{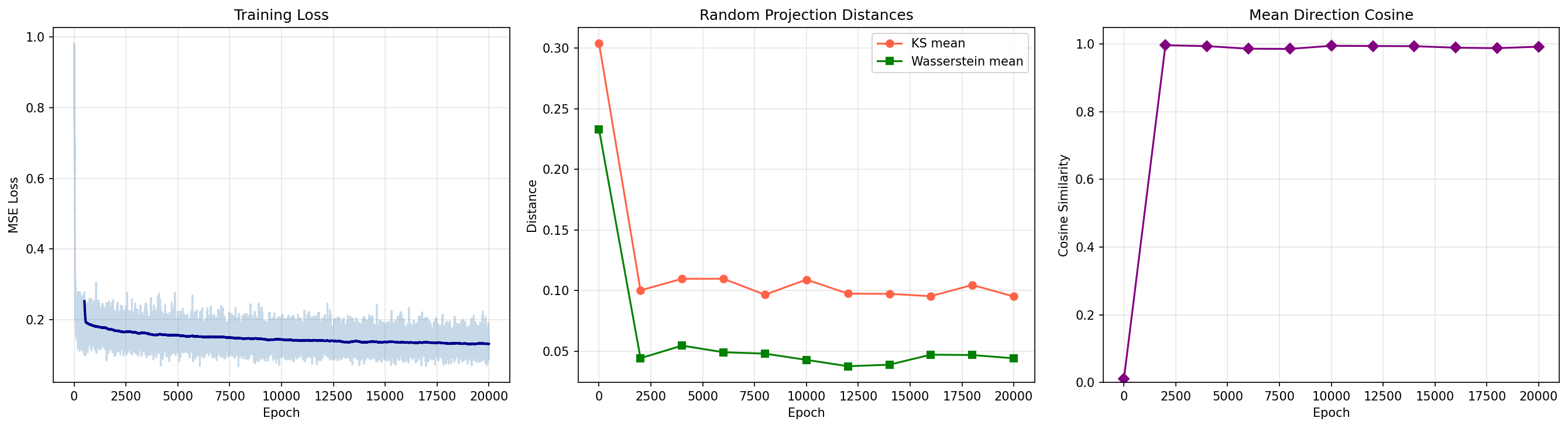}
    \caption{
   Left: training
    loss. Middle: random-projection distributional distances between generated
    and empirical exceedance directions. Right: cosine similarity between the
    mean generated direction and the mean empirical exceedance direction.
    Generated samples are projected back to \(\Theta_E\) before evaluation.
    }
    \label{fig:diffusion_training_diagnostics}
\end{figure}
\end{document}